\newcommand{\beq}{\begin{equation}}
\newcommand{\eeq}{\end{equation}}
\newcommand{\beqnn}{\begin{equation*}}
\newcommand{\eeqnn}{\end{equation*}}
\newcommand{\bea}{\begin{eqnarray}}
\newcommand{\eea}{\end{eqnarray}}
\newcommand{\beann}{\begin{eqnarray*}}
\newcommand{\eeann}{\end{eqnarray*}}
\newcommand{\bes} {\begin{subequations}}
\newcommand{\ees} {\end{subequations}}
\newcommand{\av}[1]{\langle #1\rangle}
\newcommand{\braket}[2]{\langle #1 | #2\rangle}
\newcommand{\ket}[1]{ | #1\rangle}
\newcommand{\bra}[1]{\langle #1 | }
\newcommand{\ketbra}[2]{|#1\rangle\langle #2|}
\newcommand{\mA}{\mathcal{A}}
\newcommand{\mM}{\mathcal{M}}
\newcommand{\mB}{\mathcal{B}}
\newcommand{\mE}{\mathcal{E}}
\newcommand{\mP}{\mathcal{P}}
\newcommand{\mU}{\mathcal{U}}
\newcommand{\mL}{\mathcal{L}}
\newcommand{\mH}{\mathcal{H}}
\newcommand{\mS}{\mathcal{S}}
\newcommand{\ident}{\openone}
\newcommand{\Tr}{\mathrm{Tr}}
\newcommand{\veps}{\varepsilon}
\newcommand{\eps}{\epsilon}
\newcommand{\1}{\openone}
\newcommand{\abs}[1]{\ensuremath{\left| #1 \right|}}
\newcommand{\norm}[1]{\left\| #1 \right\|}
\newcommand{\ignore}[1]{}
\newcommand{\Texp}{\mathrm{Texp}}
\newcommand{\poly}{\mathrm{poly}}
\newtheorem{definition}{Definition}
\newtheorem{theorem}{Theorem}
\newtheorem{lemma}{Lemma}
\newtheorem{claim}{Claim}
\newcommand{\vp}{\vec{p}}
\newcommand{\Or}{\mathcal{O}}
\begin{document}
\title{Adiabatic Quantum Computing}
\author{Tameem Albash}
\affiliation{Information Sciences Institute, University of Southern California, Marina del Rey, CA 90292}
\affiliation{Department of Physics and Astronomy, University of Southern California, Los Angeles, California
90089, USA}
\affiliation{Center for Quantum
  Information Science \& Technology, University of Southern California, Los Angeles, California
90089, USA}

\author{Daniel A. Lidar}
\affiliation{Department of Physics and Astronomy, University of Southern California, Los Angeles, California
90089, USA} 
\affiliation{Center for Quantum
  Information Science \& Technology, University of Southern California, Los Angeles, California 90089, USA}
\affiliation{Department of Electrical Engineering, University of Southern California, Los Angeles, California
90089, USA}
\affiliation{Department of Chemistry, University of Southern California, Los Angeles, California
90089, USA}

\begin{abstract}
Adiabatic quantum computing (AQC) started as an approach to solving optimization problems, and has evolved into an important universal alternative to the  standard circuit model of quantum computing, with deep connections to both classical and quantum complexity theory and condensed matter physics.  In this review we give an account of most of the major theoretical developments in the field, while focusing on the closed-system setting.  The review is organized around a series of topics that are essential to an understanding of the underlying principles of AQC, its algorithmic accomplishments and limitations, and its scope in the more general setting of computational complexity theory.  We present several variants of the adiabatic theorem, the cornerstone of AQC, and we give examples of explicit AQC algorithms that exhibit a quantum speedup.  We give an overview of several proofs of the universality of AQC and related Hamiltonian quantum complexity theory.  We finally devote considerable space to Stoquastic AQC, the setting of most AQC work to date, where we discuss obstructions to success and their possible resolutions.
\end{abstract}
\maketitle

\tableofcontents

\section{Introduction}

Quantum computation (QC) originated with Benioff's proposals for quantum Turing machines  \cite{Benioff:80,Benioff:82} and Feynman's ideas for circumventing the difficulty of simulating quantum mechanics by classical computers \cite{Feynman1}. This led to Deutsch's proposal for universal QC in terms of what has become the ``standard" model: the circuit, or gate model of QC \cite{Deutsch:89}. 
Adiabatic quantum computation (AQC) is based on an idea that is quite distinct from the circuit model. Whereas in the latter a computation may in principle evolve in the entire Hilbert space and is encoded into a series of unitary quantum logic gates, in AQC the computation proceeds from an initial Hamiltonian whose ground state is easy to prepare, to a final Hamiltonian whose ground state encodes the solution to the computational problem. The adiabatic theorem guarantees that the system will track the instantaneous ground state provided the Hamiltonian varies sufficiently slowly. It turns out that this approach to QC has deep connections to condensed matter physics, computational complexity theory, and heuristic algorithms. 

In its first incarnation, the idea of encoding the solution to a computational problem in the ground state of a quantum Hamiltonian appeared as early as 1988, in the context of solving classical combinatorial optimization problems, where it was called quantum stochastic optimization \cite{Apolloni:1989fj}.\footnote{Even though \cite{Apolloni:1989fj} was published in 1989, it was submitted in 1988, before \cite{Apolloni:88}, which referenced it.} It was renamed quantum annealing (QA) in \cite{Apolloni:88} and reinvented several times \cite{Somorjai:1991oq,Amara:1993rt,finnila_quantum_1994,kadowaki_quantum_1998}.\footnote{It was called ``quasi-quantal method" in  \cite{Somorjai:1991oq}, ``imaginary-time algorithm" in \cite{Amara:1993rt}, and quantum annealing in \cite{finnila_quantum_1994,kadowaki_quantum_1998}. The latter term has become widely accepted.} These early papers emphasized that QA was to be understood as an \emph{algorithm} that exploits simulated quantum (rather than thermal) fluctuations and tunneling, thus providing a quantum-inspired version of simulated annealing (SA) \cite{kirkpatrick_optimization_1983}. The first direct comparison between QA and SA \cite{kadowaki_quantum_1998} suggested that QA can be more powerful.

A very different approach was taken via an \emph{experimental} implementation of QA in a disordered quantum ferromagnet \cite{Brooke1999,brooke_tunable_2001}. This provided the impetus to reconsider QA from the perspective of quantum computing, i.e., to consider a dedicated device that solves optimization problems by exploiting quantum evolution. Thus was born the idea of the quantum adiabatic algorithm (QAA) \cite{farhi_quantum_2000,Farhi:01} [also referred to as adiabatic quantum optimization (AQO) \cite{Smelyanskiy:01,Reichardt:2004}], wherein a physical quantum computer solves a combinatorial optimization problem by evolving adiabatically in its ground state. The term adiabatic quantum computation we shall use here was introduced in \cite{Dam:2001fk}, though the context was still optimization.\footnote{The first documented use of the term ``adiabatic quantum computation" was in \cite{Averin:1998gb}, but the context was an adiabatic implementation of a quantum logic gate in the circuit model.} 

Adiabatic quantum algorithms for optimization problems typically use ``stoquastic" Hamiltonians, characterized by having only non-positive off-diagonal elements in the computational basis. Adiabatic quantum computation with non-stoquastic Hamiltonians is as powerful as the circuit model of quantum computation \cite{aharonov_adiabatic_2007}. In other words, non-stoquastic AQC and all other models for universal quantum computation can simulate one another with at most polynomial resource overhead. For this reason the contemporary use of the term AQC typically refers to the general, non-stoquastic setting, thus extending beyond optimization to any computational. When discussing the case of stoquastic Hamiltonians we will use the term ``stoquastic AQC" (StoqAQC).\footnote{Quantum annealing, like StoqAQC, currently usually involves stoquastic Hamiltonians; we differentiate between them in the following sense: we restrict StoqAQC purely to the case of closed system evolutions, whereas QA refers to a (not necessarily adiabatic) evolution in an open system.}

For most of this review we essentially adopt the definition of AQC from \cite{aharonov_adiabatic_2007}, as this definition allows for the proof of the equivalence with the circuit model, and is thus used to establish the universality of AQC. Interestingly, this proof builds on one of the first QC ideas due to Feynman, which was later shown to allow a general purpose quantum computation to be embedded in the ground state of a quantum system \cite{Feynman:1985ul,Kitaev:book}. A related ground state embedding approach was independently pursued in \cite{Mizel:01,Mizel:02}, around the same time as the original development of the QAA. To define AQC, we first need the concept of a $k$-local Hamiltonian, which is a Hermitian matrix $H$ acting on the space of $p$-state particles that can be written as $H = \sum_{i=1}^r H_i$ where each $H_i$ acts non-trivially on at most $k$ particles, i.e., $H_i = h \otimes \ident$ where $h$ is a Hamiltonian on at most $k$ particles, and $\ident$ denotes the identity operator. 

\begin{definition}[Adiabatic Quantum Computation]
\label{def:AQC}
A $k$-local adiabatic quantum computation is specified by two $k$-local Hamiltonians, $H_0$ and $H_1$, acting on $n$ $p$-state particles, $p\geq 2$. The ground state of $H_0$ is unique and is a product state. The output is a state that is $\veps$-close in $\ell_2$-norm to the ground state of $H_1$. Let $s(t):[0,t_f]\mapsto [0,1]$ (the ``schedule") and let $t_f$ be the smallest time such that the final state of an adiabatic evolution generated by $H(s)=(1-s)H_0+sH_1$ for time $t_f$ is $\veps$-close in $\ell_2$-norm to the ground state of $H_1$. 
\end{definition}

Several comments are in order. (1) A uniqueness requirement was imposed on the ground state of $H_1$ in \cite{aharonov_adiabatic_2007}, but this is not necessary. E.g., in the setting where $H_1$ represents a classical optimization problem, multiple final ground states do not pose a problem as any of the final states represents a solution to the optimization problem. 
(2) Sometimes it is beneficial to consider adiabatic quantum computation in an excited state (see, e.g., Sec.~\ref{sec:stoq-QMA-comp}). (3) As already noted in \cite{aharonov_adiabatic_2007}, it is useful to allow for more general ``paths" between $H_0$ and $H_1$, e.g., by introducing an intermediate ``catalyst'' Hamiltonian that vanishes at $s=0,1$ (see, e.g., Sec.~\ref{sec:catalyst}). 

A crucial question that will occupy us throughout this review is the cost of running an algorithm in AQC. In the circuit model the cost is equated with the number of gates, so one cost definition would be to the count the number of gates needed to simulate the equivalent adiabatic process.  This cost definition presupposes that the circuit model is fundamental, which may be unsatisfactory. In AQC one might be tempted to just use the run time $t_f$, but in order for this quantity to be meaningful it is necessary to  define an appropriate energy scale for the Hamiltonian. In \cite{aharonov_adiabatic_2007} the cost of the adiabatic algorithm was defined to be the dimensionless quantity 
\beq \label{eq:cost}
\mathrm{cost} = t_f \max_s\norm{H(s)}  \ , 
\eeq
in order to prevent the cost from being made arbitrarily small by changing the time units, or distorting the scaling of the algorithm by multiplying the Hamiltonians by some size-dependent factor.\footnote{Unless stated otherwise, we shall always use $\norm{\cdot}$ to denote the operator norm for operators: 
\[\Vert A \Vert = \sup \left\{ \Vert A \ket{\psi} \Vert : \ket{\psi} \in \mH \ \mathrm{with} \ \braket{\psi}{\psi} = 1 \right\} \]
(i.e., the largest singular value of the operator $A$),  and the Euclidean vector norm for vectors. Which one is used will be clear by context.}  
From hereon we will focus on the run time $t_f$, which should be compared to the circuit depth of analogous circuit model algorithms, whereas the full cost in Eq.~\eqref{eq:cost} should be compared to the circuit gate count.

The run time $t_f$ of an adiabatic algorithm scales at worst as $1/\Delta^3$, where $\Delta$ is the minimum eigenvalue gap between the ground state and the first excited state of the Hamiltonian of the adiabatic algorithm \cite{Jansen:07}. If the Hamiltonian is varied sufficiently smoothly, one can improve this to $O(1/\Delta^2)$ up to a polylogarithmic factor in $\Delta$ \cite{Elgart:2012fk}. While these are useful sufficient conditions, they involve bounding the minimum eigenvalue gap of a complicated many-body Hamiltonian, a notoriously difficult problem. This is one reason that AQC has generated so much interest among physicists: it has a rich connection to well studied problems in condensed matter physics. For example, because of the dependence of the run time on the gap, the performance of quantum adiabatic algorithms is strongly influenced by the type of quantum phase transition the same system would undergo in the thermodynamic limit \cite{Latorre:04}.

Nevertheless, a number of examples are known where the gap analysis can be carried out. For example, adiabatic quantum computers can perform a process analogous to Grover search \cite{Grover:97a}, and thus provide a quadratic speedup over the best possible classical algorithm for the Grover search problem \cite{Roland:2002ul}. Other examples are known where the gap analysis can be used to demonstrate that AQC provides a speedup over classical computation, including adiabatic versions of some of the keystone algorithms of the circuit model. However, much more common is the scenario where either the gap analysis reveals no speedup over classical computation, or where a clear answer to the speedup question is unavailable. In fact, the least is known about adiabatic quantum speedups in the original setting of solving classical combinatorial optimization problems. This remains an area of very active research, partly due to the original (still unmaterialized) hope that the QAA would deliver quantum speedups for NP-complete problems \cite{Farhi:01}, and partly due the availability of commercial quantum annealing devices such as those manufactured by D-Wave Systems Inc. \cite{Dwave}, designed to solve optimization problems using stoquastic Hamiltonians. 

The goal of this article is to review the field of AQC from its inception, with a focus on the \emph{closed system} case. That is, we omit the fascinating topic of AQC in open systems coupled to an environment. This includes all experimental work on AQC, and all work on quantum error correction and suppression methods for AQC, as these topics deserve a separate review \cite{Albash-Lidar:RMP-colloq} and including them here would limit our ability to do justice to the many years of work on AQC in closed systems, an extremely rich topic with many elegant results. For the same reasons we also omit the blossoming and closely related fields of holonomic QC \cite{HQC}, topological QC \cite{Nayak:2008aa}, and adiabatic state preparation for quantum simulation \cite{Babbush:2014}. To achieve our goal we organized this review around a series of
topics that are essential to an understanding of the underlying principles of AQC, its algorithmic accomplishments and limitations, and its scope in the more general setting of computational complexity theory. 

We begin by reviewing the adiabatic theorem in Sec.~\ref{sec:AdiabaticTheorem}. The adiabatic theorem forms the backbone of AQC: it provides a sufficient condition for the success of the computation, and in doing so provides the run time of a computation in terms of the eigenvalue gap $\Delta$ of the Hamiltonian and the Hamiltonian's time-derivative. In fact there is not one single adiabatic theorem, and we review a number of different variants that provide different run time requirements, under different smoothness and differentiability assumptions about the Hamiltonian.

Next, we review in Sec.~\ref{sec:algorithms} the handful of \emph{explicit} algorithms for which AQC is known to give a speedup over classical computation. The emphasis is on ``explicit", since Sec.~\ref{sec:universality} provides several proofs for the universality of AQC in terms of its ability to efficiently simulate the circuit model, and \textit{vice versa}. This means that every quantum algorithm that provides a speedup in the circuit model [many of which are known \cite{Jordan-algorithm_zoo}] can in principle be implemented with up to polynomial overhead in AQC. That the number of explicit AQC algorithms is still small is therefore likely to be a reflection of the relatively modest amount of effort that has gone into establishing such results compared to the circuit model. However, there is also a real difficulty, in that performing the gap analysis in order to establish the actual scaling (beyond the polynomial-time equivalence) is, as already mentioned above, in many cases highly non-trivial. A second non-trivial aspect of establishing a speedup by AQC is that when such a speedup is polynomial, relying on universality is insufficient, since the polynomial overhead involved in implementing the transformation from the circuit model to AQC can then swamp the speedup. A good example is the case of Grover's algorithm, where a direct use of the equivalence to the circuit model would not suffice; instead, what is required is a careful analysis and choice of the adiabatic schedule $s(t)$ in order to realize the quantum speedup. 

In Sec.~\ref{sec:QMAC} we go beyond universality into Hamiltonian quantum complexity theory. This is an active contemporary research area, that started with the introduction of the complexity class QMA (``quantum Merlin-Arthur") as the natural quantum generalization of the classical complexity classes "nondeterministic polynomial time" (NP) and MA \cite{Kitaev:book}. The theory of QMA-completeness deals with decision problems that are efficiently checkable using quantum computers. 
It turns out that these decision problems can be formulated naturally in terms of $k$-local Hamiltonians, of the same type that appear in the proofs of the universality of AQC. Thus universality and Hamiltonian quantum complexity studies are often pursued hand-in-hand, and a reduction of $k$ as well as the dimensionality $p$ of the particles appearing in these constructions is one of the main goals. For example, already $k=2$ and $p=2$ leads to both universal AQC and QMA-complete Hamiltonians in 2D, while in 1D $p>2$ is needed for both.\footnote{We
say that $H$ is a $d$D ($d$-dimensional) Hamiltonian if the particles are arranged on a $d$-dimensional grid
and the summands of $H$ couple only pairs of nearest neighbor particles. Note that being $d$D implies that the Hamiltonian is $2$-local.}

We turn our attention to StoqAQC in Sec.~\ref{sec:QA}. This is the setting of the vast majority of AQC work to date. The final Hamiltonian $H_1$ is assumed to be a classical Ising model Hamiltonian, typically (but not always) representing a hard optimization problem such as a spin glass. The initial Hamiltonian $H_0$ is typically assumed to be proportional to a transverse field, i.e., $\sum_i \sigma_i^x$, whose ground state is the uniform superposition state in the computational basis. AQC with stoquastic Hamiltonians is probably less powerful than universal quantum computation, but examples can be constructed which show that it may nevertheless be more powerful than classical computation. Moreover, if we relax the definition of AQC to allow for computation using excited states, it turns out that stoquastic Hamiltonians can even be QMA-complete and support universal AQC. To do justice to this mixed and complicated picture, we first review examples where it is known that StoqAQC does not outperform classical computation (essentially because the eigenvalue gap $\Delta$ decreases rapidly with problem size but classical algorithms do not suffer a slowdown), then discuss examples where StoqAQC offers a quantum scaling advantage over simulated annealing in the sense that it outperforms classical simulated annealing but not necessarily other classical algorithms, and finally point out examples where it is currently not known whether StoqAQC offers a quantum speedup, but one might hope that it does. We also discuss the role of potential quantum speedup mechanisms, in particular tunneling and entanglement.

The somewhat bleak picture regarding StoqAQC should not necessarily be a cause for pessimism. Some of the obstacles in the way of a quantum speedup can be overcome or circumvented, as we discuss in Sec.~\ref{sec:fix-AQC}. In all cases this involves modifying some aspect of the Hamiltonian, either by optimizing the schedule $s(t)$, or by adding certain terms to the Hamiltonian such that small gaps are avoided. This can result in a non-stoquastic Hamiltonian whose final ground state is the same as that of the original Hamiltonian, with an exponentially small gap (often corresponding to a first order quantum phase transition) changing into a polynomially small gap (often corresponding to a second order phase transition). Another type of modification is to give up adiabatic evolution itself, and allow for diabatic transitions. While this results in giving up the guarantee of convergence to the ground state provided by the adiabatic theorem, it can be a strategy that results in better run time scaling for the same Hamiltonian than an adiabatic one.

We conclude with an outlook and discussion of future directions in Sec.~\ref{sec:outlook}. Various technical details are provided in the Appendix.


\section{Adiabatic Theorems} 
\label{sec:AdiabaticTheorem}
The origins of the celebrated quantum adiabatic approximation date back
to Einstein's ``Adiabatenhypothese'': ``If a system be affected
in a reversible adiabatic way, allowed motions are transformed into
allowed motions'' \cite{Einstein:adiabatic}. Ehrenfest was the first
to appreciate the importance of adiabatic invariance, guessing---before
the advent of a complete quantum theory--- that quantum laws would
only allow motions which are invariant under adiabatic perturbations
\cite{Ehrenfest:adiabatic}. The more familiar, modern version of the
adiabatic approximation was put forth by Born and Fock already in 1928 for the case of
discrete spectra~\cite{born_beweis_1928}, after the development of the Born-Oppenheimer approximation for the separation of electronic and nuclear degrees of freedom a year earlier \cite{Born:1927xc}. Kato put the approximation on a firm mathematical foundation in 1950 \cite{kato_adiabatic_1950} and arguably proved the first quantum adiabatic theorem. 

The adiabatic approximation states, roughly, that for a system initially prepared in an eigenstate (e.g., the ground state) $|\veps_{0}(0)\rangle $ of a time-dependent
Hamiltonian $H(t)$, the time evolution governed by the Schr\"{o}dinger equation 
\beq
i\frac{\partial |\psi (t)\rangle }{\partial t}=H(t)|\psi
(t)\rangle 
\eeq 
(we set $\hbar\equiv 1$ from now on) will approximately keep the actual state $|\psi (t)\rangle $ of
the system in the corresponding instantaneous ground state (or other eigenstate) $|\veps_{0}(t)\rangle $ of $H(t)$, provided that $H(t)$ varies ``sufficiently slowly". Quantifying the exact nature of this slow variation is the subject of the Adiabatic Theorem (AT), which exists in many variants. In this section we provide an overview of these variants of the AT, emphasizing aspects that are pertinent to AQC. We discuss the ``folklore" adiabatic condition, that the total evolution time $t_f$ should be large on the timescale set by the square of the inverse gap, and the question of how to ensure a high fidelity between the actual state and the ground state. We then discuss a variety of rigorous versions of the AT, emphasizing different assumptions and consequently different performance guarantees. Throughout this discussion, it is important to keep in mind that ultimately the AT provides only an \emph{upper bound} on the evolution time required to achieve a certain fidelity between the actual state and the target eigenstate of $H(t)$.
 
\subsection{Approximate versions}

Let $|\veps_{j}(t)\rangle $
($j\in \{0,1,2,\ldots \}$) denote the instantaneous eigenstate of $H(t)$
with energy $\veps_{j}(t)$ such that $\veps_{j}(t)\leq \veps_{j+1}(t)$ $\forall j,t$, i.e., $H(t)|\veps_{j}(t)\rangle =\veps_{j}(t)|\veps_{j}(t)\rangle $ and $j=0$ denotes the (possibly degenerate) ground state. Assume that the initial state is prepared in one of the eigenstates $\ket{\veps_j(0)}$.

The simplest as well as one of the oldest traditional versions of
the adiabatic approximation states that a system initialized in an eigenstate $\ket{\veps_j(0)}$ will remain in the same instantaneous eigenstate $\ket{\veps_j(t)}$ (up to a global phase) for all $t\in [0,t_f]$, where $t_f$ denotes the final time, provided \cite{messiah}:
\beq
\max_{t\in [0,t_f]} \frac{|\langle \veps_i\ket{\partial_t \veps_j}|}{|\veps_i-\veps_j|} = \max_{t\in [0,t_f]} \frac{|\bra{\veps_i}\partial_t H\ket{\veps_j}|}{|\veps_i-\veps_j|^2} \ll 1 \,\,\, \forall i\neq j\ .
\label{eq:AT-folklore}
\eeq
This version has been critiqued \cite{PhysRevLett.93.160408,tong_quantitative_2005,du:060403,Wu:2008aa} on the basis of arguments and examples involving a separate, independent timescale. Indeed, if the Hamiltonian includes an oscillatory driving term then the eigenstate population will oscillate with a timescale determined by this term, that is independent of $t_f$, even if the adiabatic criterion \eqref{eq:AT-folklore} is satisfied.\footnote{For example, it is easily checked that when $H(t) = a \sigma^z + b \sin(\omega t) \sigma^x$, the adiabatic condition \eqref{eq:AT-folklore} reduces to $|b\omega| \ll a^2$. However, even if this condition is satisfied the population can oscillate between the two eigenstates: at resonance (when $\omega \approx 2a$) the system undergoes Rabi oscillations with period $\pi/|b|$, a timescale that is independent of $t_f$.} 

A more careful statement of the adiabatic condition that excludes such additional timescales is thus required. The first step is to assume that the Hamiltonian $H_{t_f}(t)$ in the Schr\"{o}dinger equation $\partial \ket{\psi_{t_f}(t)}/\partial t = -i H_{t_f}(t)\ket{\psi_{t_f}(t)}$ can be written as $H_{t_f}(s t_f) = H(s)$, where $s\equiv t/t_f\in [0,1]$ is the dimensionless time, and $H(s)$ is $t_f$-independent. This includes the ``interpolating" Hamiltonians of the type often considered in AQC, i.e., $H(s) = A(s)H_0 + B(s)H_1$ [where $A(s)$ and $B(s)$ are monotonically decreasing and increasing, respectively] and excludes cases with multiple timescales.\footnote{For example, a case such as $H(t) = a \sigma^z + b \sin(\omega t) \sigma^x$ is now excluded since after a change of variables we have $H(s) = a \sigma^z + b \sin(\omega t_f s) \sigma^x$ and evidently $H(s)$ still depends on $t_f$.} The Schr\"{o}dinger equation then becomes 
\beq
\frac{1}{t_f}\frac{\partial \ket{\psi_{t_f}(s)}}{\partial s} = -i H(s)\ket{\psi_{t_f}(s)}\ ,
\eeq
which is the starting point for all rigorous adiabatic theorems.

A more careful adiabatic condition subject to this formulation is given by \cite{Amin:09}:
\beq
\frac{1}{t_f} \max_{s \in [0,1]} \frac{|\langle \veps_i(s)|\partial_s{H}(s)|\veps_j(s)\rangle |}{|\veps_i(s)-\veps_j(s)|^2} \ll 1\,\, \forall j\neq i \ .
\label{eq:AT-folklore2}
\eeq
The conditions \eqref{eq:AT-folklore} and \eqref{eq:AT-folklore2} give rise to the widely used criterion that the total adiabatic evolution time should be large on the timescale set by the minimum of the square of the inverse spectral gap  $
\Delta_{ij}(s) = \veps_{i}(s)-\veps_{j}(s)$. In most cases one is interested in the ground state, so that $\Delta_{ij}(s)$ is replaced by 
\beq
\Delta \equiv \min_{s\in [0,1]} \Delta(s) = \min_{s\in [0,1]} \veps_{1}(s)-\veps_{0}(s)\ .
\label{eq:Delta}
\eeq 
However, arguments such as those leading to Eqs.~\eqref{eq:AT-folklore} and \eqref{eq:AT-folklore2} are approximate, in the sense that they do not result in strict inequalities and do not result in bounds on the closeness between the actual time-evolved state and the desired eigenstate. We discuss this next.

\subsection{Rigorous versions}

The first rigorous adiabatic condition is due to Kato \cite{kato_adiabatic_1950}, and was followed by numerous alternative derivations and improvements giving tighter bounds under various assumptions, e.g., \cite{Teufel:book,Nenciu:93,Avron:99,Hagedorn:2002kx,Reichardt:2004,Ambainis:04,Jansen:07,lidar:102106,PhysRevA.77.042319,Cheung:2011aa,Elgart:2012fk,Ge:2015wo}. All these rigorous results are more
severe in the gap condition than the traditional criterion, and they involve
a power of the norm of time derivatives of the Hamiltonian, rather
than a transition matrix element.

We summarize a few of these results here, and refer the reader to the original literature for their proofs. For simplicity we always assume that the system is initialized in its ground state and that the gap is the ground state gap \eqref{eq:Delta}. We also assume that for all $s \in [0, 1]$ the Hamiltonian $H(s)$ has an eigenprojector $P(s)$ with eigenenergy $\veps_0(s)$, and that the gap never vanishes, i.e., $\Delta > 0$.\footnote{There is a weaker form of the AT, where one does not require a non-vanishing gap \cite{Avron:99}. In this case, as in Theorem~\ref{th:AT1}, the estimate on the error term is $o(1)$ as $t_f\to\infty$.} The ground state, and hence the projector $P(s)$, is allowed to be (even infinitely) degenerate. $P(s)$ represents the ``ideal" adiabatic evolution.

Let $P_{t_f}(s) = \ketbra{\psi_{t_f}(s)}{\psi_{t_f}(s)}$. This is the projector onto the time-evolved solution of the Schr\"{o}dinger equation, i.e., the ``actual" state. Adiabatic theorems are usually statements about the ``instantaneous adiabatic distance" $\| P_{t_f}(s) - P(s)\|$ between the projectors associated with the actual and ideal evolutions, or the ``final-time adiabatic distance" $\| P_{t_f}(1) - P(1)\|$. 
Typically, adiabatic theorems give a bound of the form $O(1/t_f)$ for the instantaneous case, and a bound of the form $O(1/t_f^n)$ for any $n\in\mathbb{N}$ for the final-time case. After squaring, these projector-distance bounds immediately become bounds on the transition probability, defined as $|\braket{\psi^\perp_{t_f}(s)}{\psi_{t_f}(s)}|^2$, where $\ket{\psi^\perp_{t_f}(s)} = Q_{t_f}(s)\ket{\psi_{t_f}(s)}$, with $Q=I-P$.

\subsubsection{Inverse cubic gap dependence with generic $H(s)$}

Kato's work on the perturbation theory of linear operators \cite{kato_adiabatic_1950} introduced techniques based on resolvents and complex analysis that have been widely used in subsequent work. Jansen, Ruskai, and Seiler (JRS) proved several versions of the AT that build upon these techniques \cite{Jansen:07}, and that rigorously establish the gap dependence of $t_f$, without any strong assumptions on the smoothness of $H(s)$. 
Their essential assumption is that the spectrum of $H(s)$ has a band associated with the spectral projection $P(s)$ which is separated by a non-vanishing gap $\Delta(s)$  from the rest. 
Here we present one their theorems:

\begin{theorem}
\label{th:AT0}
Suppose that the spectrum of $H(s)$ restricted to $P(s)$ consists of $m(s)$ eigenvalues separated by a gap $\Delta(s)=\veps_1(s)-\veps_0(s)>0$  from the rest of the
spectrum of $H(s)$, and that $H(s)$ is twice continuously differentiable. Assume that $H$, $H^{(1)}$, and $H^{(2)}$ are bounded operators, an assumption that is always fulfilled in finite-dimensional spaces.\footnote{We use the notation $H^{(k)}(s) \equiv \left(\frac{\partial}{\partial x}\right)^k H(x)|_s$ throughout.} Then for any $s\in [0,1]$,
\begin{align}
\label{eq:JRS}
&\norm{P_{t_f}(s) - P(s)} \leq \frac{m(0) \norm{H^{(1)}(0)}}{t_f \Delta^2(0)}+\frac{m(s) \norm{H^{(1)}(s)}}{t_f \Delta^2(s)}  \notag \\
&\,\, + \frac{1}{t_f}\int_0^s \left( \frac{m\norm{H^{(2)}}}{\Delta^2} + \frac{7m\sqrt{m}\norm{H^{(1)}}^2}{\Delta^3} \right)dx
\end{align}
\end{theorem}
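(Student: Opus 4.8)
The plan is to follow Kato's resolvent-based strategy, which underlies the JRS approach. First I would introduce the \emph{adiabatic intertwiner} $U_A(s)$, the unique unitary solving $\frac{1}{t_f}\partial_s U_A = [\dot P, P] U_A$ with $U_A(0)=I$, which by Kato's lemma exactly intertwines the spectral projectors, $U_A(s) P(0) U_A^\dagger(s) = P(s)$. The goal is then to compare the true propagator $U_{t_f}(s)$ (solving the rescaled Schr\"odinger equation $\frac{1}{t_f}\partial_s U = -iH U$) with $U_A(s)$, since $\norm{P_{t_f}(s)-P(s)} = \norm{U_{t_f}(s)P(0)U_{t_f}^\dagger(s) - U_A(s)P(0)U_A^\dagger(s)}$ is controlled by $\norm{(U_{t_f}^\dagger U_A - I)P(0)}$ up to a factor of $2$. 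Differentiating $U_{t_f}^\dagger U_A$ and inserting the equations of motion produces a source term proportional to $\frac{1}{t_f}[\dot P,P]$ acting between the two evolutions.

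Next I would perform the key \emph{integration by parts} that converts the naive $O(1/t_f \cdot \Delta^{-1})$-type bound into one with an explicit $1/t_f$ prefactor. The obstacle term $U_A^\dagger [\dot P, P] U_A$ is, on the range of $P(0)$, off-diagonal with respect to the band structure, so one writes it as $\frac{1}{t_f}$ times a commutator $[H, X]$ where $X$ is built from the reduced resolvent (Kato's $S(s) = \lim_{z\to\veps_0}(H(s)-z)^{-1}Q(s)$); schematically $[\dot P,P] = [H,\, S \dot H P - P \dot H S]$ when restricted appropriately. Because $[H,X]$ conjugated by the true evolution equals $-i t_f \partial_s(U^\dagger X U) + U^\dagger \dot X U$ up to the rescaling, integrating over $s$ trades one power of $t_f$ for boundary terms $U^\dagger X U\big|_0^s$ and a bulk integral of $\dot X$. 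The boundary terms give the first two terms on the right of \eqref{eq:JRS}, since $\norm{X}\lesssim m\norm{H^{(1)}}/\Delta^2$ (one factor $1/\Delta$ from $S$, one from the eigenvalue denominators inside, and $\sqrt{m}$ from operator-norm-vs-Hilbert-Schmidt bookkeeping on the $m$-dimensional band), and $\dot X$ contributes both an $H^{(2)}/\Delta^2$ piece and an $(H^{(1)})^2/\Delta^3$ piece — the latter with the combinatorial constant $7m\sqrt m$ coming from carefully counting the terms produced when differentiating $S$ (which itself satisfies $\dot S = -S\dot H S + S^2 \dot H P + P \dot H S^2$-type identities) and the projectors.

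The main obstacle I anticipate is \emph{bookkeeping the constants and the $m$-dependence sharply} rather than any conceptual difficulty: one must track how the reduced resolvent's derivative expands, bound each summand's operator norm using $\norm{S}\le 1/\Delta$ and $\norm{\dot P}\le c\,m\norm{H^{(1)}}/\Delta$ type estimates (the latter again requiring the contour-integral representation $P = \frac{1}{2\pi i}\oint (z-H)^{-1}dz$ and a judicious choice of contour of circumference $O(\Delta)$), and assemble them so that the coefficient of the $\norm{H^{(1)}}^2/\Delta^3$ term is no worse than $7m\sqrt m$. A secondary technical point is justifying the integration by parts and the differentiability of $S(s)$, which is where the hypothesis that $H$ is twice continuously differentiable with bounded $H, H^{(1)}, H^{(2)}$ is used; boundedness makes all the resolvent manipulations and the existence/uniqueness of $U_A$ and $U_{t_f}$ routine, and the finite-dimensional case is then immediate. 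I would not expect to need the gap to be constant in $s$ — only $\Delta(s)>0$ pointwise — which is exactly why the bound appears as an integral of $s$-dependent quantities rather than in terms of the global minimum $\Delta$.
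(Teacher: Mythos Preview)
The paper does not actually prove this theorem: immediately before stating it the authors write ``We summarize a few of these results here, and refer the reader to the original literature for their proofs,'' and Theorem~\ref{th:AT0} is attributed to \cite{Jansen:07}. So there is no in-paper proof to compare against. Your outline is essentially the JRS argument itself --- Kato's adiabatic intertwiner, the contour-integral representation of $P(s)$, rewriting the off-diagonal obstruction as a commutator with the reduced resolvent, and the integration-by-parts that trades a factor of $t_f$ for boundary terms plus a bulk integral of $\dot X$ --- so methodologically you are on the right track and aligned with the cited source.

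One small slip worth flagging: the adiabatic intertwiner should satisfy $\partial_s U_A = [\dot P,P]\,U_A$ (or equivalently $i\partial_s U_A = t_f\bigl(H + \tfrac{i}{t_f}[\dot P,P]\bigr)U_A$ in the JRS ``adiabatic Hamiltonian'' form), \emph{without} the $1/t_f$ prefactor you wrote; the $1/t_f$ appears only in front of the physical $H$ in the rescaled Schr\"odinger equation, not in front of the purely geometric Kato generator. This matters because the whole point of the argument is that the discrepancy between $U_{t_f}$ and $U_A$ is driven by a term of size $O(1/t_f)$, and it is precisely the mismatch in how $t_f$ enters the two generators that produces it. Otherwise your identification of the boundary contributions with the first two terms of \eqref{eq:JRS}, the $H^{(2)}/\Delta^2$ and $(H^{(1)})^2/\Delta^3$ pieces from differentiating the reduced resolvent, and the $m$-dependent bookkeeping as the main labor, all match the structure of the JRS proof.
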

The numerator depends on the norm of the first or second time derivative of $H(s)$, rather than the matrix element that appears in the traditional versions of the adiabatic condition.

Ignoring the $m$-dependence for simplicity, this result shows that the adiabatic limit can be approached arbitrarily closely if (but not only if)
\begin{eqnarray}
t_f &\gg& \max \left\{ \max_{s\in[0,1]} \frac{\norm{H^{(2)}(s)}}{\Delta^2(s)} , \max_{s\in[0,1]} \frac{\norm{H^{(1)}(s)}^2}{\Delta^3(s)} , \right. \nonumber \\
&& \left. \max_{s \in [0,1]} \frac{\|H^{(1)}(s)\|}{\Delta^2(s)} \right\}  \ .
\label{eq:tf-JRS}
\end{eqnarray}

Similar techniques based on Kato's approach can be used to prove a rigorous adiabatic theorem for open quantum systems, where the evolution is generated by a non-Hermitian Liouvillian instead of a Hamiltonian \cite{PhysRevA.93.032118}.

\subsubsection{Rigorous inverse gap squared}
A version of the AT that yields a scaling of $t_f$ with the inverse of the gap squared (up to a logarithmic correction) was given in \cite{Elgart:2012fk}. All other rigorous AT versions to date have a worse gap dependence (cubic or higher). The proof introduces assumptions on $H(s)$ that go beyond those of Theorem~\ref{th:AT0}.  
Namely, it is assumed that $H(s)$ is bounded and infinitely differentiable, and the higher derivatives cannot have a magnitude that is too large, or more specifically, that $H(s)$ belongs to the Gevrey class $G^\alpha$: 

\begin{definition}[Gevrey class]
$H(s)\in G^\alpha$ if $d{H}(s)/ds\neq 0$ $\forall s\in [0,1]$ and there 
exist constants $C,R>0$, 
such that for all $k\geq 1$,
\beq
\max_{s\in[0,1]} \norm{H^{(k)}(s)} \leq C R^k k^{\alpha k}\ .
\eeq
\end{definition}

An example is $H(s) = [1-A(s)]H_0 + A(s) H_1$, where $A(s) = c \int_{-\infty}^s \exp[-1/(x-x^2)] dx$ if $s\in (0,1)$, and $A(s)=0$ if $s\notin [0,1]$. The constant $c$ is chosen so that $A(1)=1$. For this family $\norm{H^{(k)}(s)} = \left| A^{(k)}(s) \right| \|H_1-H_0\| \leq C k^{2k}$, so that $H(s) \in G^2$.

The AT due to \cite{Elgart:2012fk} can now be stated as follows: 

\begin{theorem}
\label{th:AT1}
Assume that $H(s)$ is bounded and belongs to the Gevrey class $G^\alpha$ with $\alpha > 1$, and that $\Delta \ll h$, where $h\equiv \| H(0)\|=\|H(1)\|$. 
If 
\beq
t_f \geq \frac{K}{\Delta^2} |\ln(\Delta/h)|^{6\alpha} 
\eeq
for some $\Delta$-independent constant $K > 0$ (with units of energy),
then the distance 
$\| P_{t_f}(s) - P(s)\|$ is $o(1)$ 
$\forall s\in[0,1]$.
\end{theorem}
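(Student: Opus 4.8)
The plan is to establish the theorem by the iterative ``super-adiabatic'' construction of Kato and Nenciu \cite{Nenciu:93,Teufel:book}, carried to arbitrarily high order and then truncated optimally against the Gevrey bounds, as in \cite{Elgart:2012fk} --- the same machinery that underlies Theorem~\ref{th:AT0}, but now exploiting infinite differentiability rather than stopping at the second derivative. First I would pass to the dimensionless Schr\"odinger equation $\tfrac{1}{t_f}\partial_s U_{t_f}(s) = -i H(s) U_{t_f}(s)$ and represent the band projector as a Riesz integral, $P(s) = \tfrac{1}{2\pi i}\oint_\Gamma \big(z - H(s)\big)^{-1}\, dz$, over a contour $\Gamma$ encircling $\veps_0(s)$ at distance of order $\Delta$, so that $\norm{(z-H(s))^{-1}} = O(1/\Delta)$ on $\Gamma$. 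Differentiating under the integral, using the resolvent identity and the Leibniz rule together with the Gevrey hypothesis $\norm{H^{(k)}(s)}\le C R^k k^{\alpha k}$, one shows that $P(s)$ is itself infinitely differentiable with $\norm{P^{(k)}(s)}$ controlled by a power of $1/\Delta$ times a Gevrey-type factor in $k$; the reduced resolvent $S(s) = \tfrac{1}{2\pi i}\oint_\Gamma \big(z-H(s)\big)^{-1}(z - \veps_0(s))^{-1}\,dz$ (of norm $O(1/\Delta)$) obeys similar estimates.

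Next I would build the sequence of near-projectors $P_0 = P$, $P_n = P_{n-1} + \tfrac{1}{t_f} K_{n-1}(s)$, where each correction $K_{n-1}$ is assembled from commutators of $\partial_s P_{n-1}$ with $S(s)$ (and its derivatives), chosen so that $P_n$ fails to be exactly intertwined by the true evolution $U_{t_f}$ only at order $t_f^{-(n+1)}$; this is the standard iteration, but here every constant must be tracked explicitly as a function of $\Delta$. Integrating this ``adiabatic defect'' gives, for each truncation order $n$,
\[
\norm{P_{t_f}(s) - P(s)} \;\le\; \norm{P_n(s)-P(s)} \;+\; R_n(s),
\]
where the sum of corrections is dominated by its leading term, $\norm{P_n(s)-P(s)} \lesssim \norm{H^{(1)}}/(t_f\Delta^2) \lesssim 1/(t_f\Delta^2)$ (using $\norm{H^{(1)}}\le CR$ from the Gevrey bound at $k=1$), and $R_n(s) := \norm{P_{t_f}(s)-P_n(s)}$ is the truncation remainder. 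The crux is bounding $R_n$: collecting the Gevrey growth of the $H^{(k)}$, the $O(1/\Delta)$ resolvent factors (which accumulate with $n$), and the factorial factors produced by repeated differentiation, one arrives at a bound of the schematic form $R_n(s)\lesssim \tfrac{1}{t_f}\big(c\, n^{\alpha}/(t_f\Delta^{2})\big)^{n}$, up to further polynomial-in-$n$ and $\ln(1/\Delta)$ prefactors; it is the careful accounting of those prefactors that produces the exponent $6\alpha$ in the statement.

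Finally I would optimize over $n$. A Stirling-type computation shows the bound is smallest near $n^\ast \sim (t_f\Delta^2)^{1/\alpha}$, where it collapses to $R_{n^\ast}(s)\lesssim \tfrac{1}{t_f}\exp\!\big[-c\,(t_f\Delta^2)^{1/\alpha}\big]$ (modulo the logarithmic corrections), so that $\norm{P_{t_f}(s)-P(s)} \lesssim 1/(t_f\Delta^2) + \exp[-c(t_f\Delta^2)^{1/\alpha}]$ uniformly in $s\in[0,1]$. Substituting the hypothesis $t_f \ge (K/\Delta^2)\,|\ln(\Delta/h)|^{6\alpha}$ gives $t_f\Delta^2 \ge K\,|\ln(\Delta/h)|^{6\alpha}$, so the first term is $O(|\ln(\Delta/h)|^{-6\alpha})$ and the exponential term is $O\big(\exp[-cK^{1/\alpha}|\ln(\Delta/h)|^{6}]\big)$; both are $o(1)$ in the regime $\Delta/h\to 0$ (equivalently $t_f\to\infty$), which is the claim. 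The main obstacle is the estimate on $R_n$ in the second paragraph: one must control simultaneously three competing sources of growth --- the Gevrey blow-up of the derivatives, the proliferation of $1/\Delta$ resolvent factors at high iteration order, and the combinatorial factorials from iterated Leibniz expansions --- tightly enough that the optimal truncation still delivers a $\Delta^{-2}$ (rather than cubic or worse) gap dependence. This is precisely where infinite differentiability plus the Gevrey bound are indispensable: they permit $n^\ast$ to be taken large enough to overwhelm the accumulating resolvent factors, something the merely $C^2$ hypothesis of Theorem~\ref{th:AT0} cannot support.
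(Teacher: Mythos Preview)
The paper does not supply its own proof of this theorem: it is a review article, and the surrounding text explicitly states ``We summarize a few of these results here, and refer the reader to the original literature for their proofs,'' citing \cite{Elgart:2012fk} as the source. So there is no in-paper argument to compare against.

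That said, your sketch tracks the Elgart--Hagedorn strategy faithfully at the level of architecture: resolvent representation of $P(s)$, Nenciu-type super-adiabatic iteration to arbitrary order, Gevrey control of the iterated derivatives, and optimal truncation in $n$. The one place where your outline is too loose to be a proof is the remainder bound $R_n(s)\lesssim \tfrac{1}{t_f}\big(c\, n^{\alpha}/(t_f\Delta^{2})\big)^{n}$. As you yourself flag, the naive bookkeeping accumulates a factor of $1/\Delta$ at \emph{each} iteration (one reduced resolvent per order), which would give $(\cdot/\Delta)^n$ rather than $(\cdot/\Delta^2)^n$ inside the parenthesis and would spoil the $\Delta^{-2}$ scaling after optimization. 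The actual work in \cite{Elgart:2012fk} is a careful reorganization of the recursion so that the $\Delta$-dependence does not compound order-by-order in the naive way; your sketch names this obstacle but does not indicate the mechanism that resolves it. If you want this to stand as a proof rather than a plan, that is the step you need to supply explicitly.
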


This result is remarkable in that it rigorously gives an inverse gap squared dependence, which is essentially tight due to existence of a lower bound of the form $t_f = O(\Delta^{-2}/|\ln \Delta|)$ for Hamiltonians satisfying ${\rm rank}\,H(1) \ll \dim(\mathcal{H})$ \cite{Cao:2012si}. However, the error bound is not tight, and we address this next.

\subsubsection{Arbitrarily small error}
\label{sec:arb-small-err}

Building on work originating with \cite{Nenciu:93} [see also \cite{Hagedorn:2002kx}],
\cite{Ge:2015wo} proved a version of the AT that results in an exponentially small error bound in $t_f$. The inverse gap dependence is cubic. 

Assume for simplicity that $\veps_0(s)=0$ and choose the phase of $\ket{\veps_0(s)}$ so that $\braket{\dot{\varepsilon}_0(s)}{\veps_0(s)}=0$, where the dot denotes $\partial_s$.

\begin{theorem}
\label{th:AT2}
Assume that all derivatives of the Hamiltonian $H(s)$ vanish at $s=0,1$, and moreover that it satisfies the following Gevrey condition: there exist constants $C,R,\alpha>0$ such that for all $k\geq 1$,
\beq
\max_{s\in[0,1]} \norm{H^{(k)}(s)} \leq C R^k \frac{(k!)^{1+\alpha}}{(k+1)^2}\ .
\eeq
Then the adiabatic error is bounded as
\begin{align}
\min_{\theta} \norm{\ket{\psi_{t_f}(1)}-e^{i\theta}\ket{\veps_0(1)}} \leq c_1 \frac{C}{\Delta} e^{-\left(c_2 \frac{\Delta^3}{C^2}t_f\right)^{\frac{1}{1+\alpha}}}
\end{align}
where $c_1 = e R \left(\frac{8\pi^2}{3}\right)^3$ and $c_2 = \frac{1}{4eR^2} \left(\frac{3}{4\pi^2}\right)^5$.
\end{theorem}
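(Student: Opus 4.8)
\emph{Proof proposal.} The plan is to follow the iterative ``superadiabatic'' construction originating with Nenciu \cite{Nenciu:93} and Garrido, in the sharpened form of \cite{Hagedorn:2002kx,Ge:2015wo}. One replaces the bare spectral projector $P(s)$ by a hierarchy of approximate projectors $P_n(s)$, $n=0,1,2,\dots$, with $P_0=P$, each obtained from the previous one by a correction that cancels the part of $\dot P_n$ that is off-diagonal with respect to $P_n$. Concretely, using the reduced resolvent $S(s)=\bigl(H(s)-\veps_0(s)\bigr)^{-1}\bigl(\1-P(s)\bigr)$ one solves at each stage a commutator equation with $H(s)$ to produce $X_n(s)$ and sets $P_{n+1}=P_n+t_f^{-1}X_n$, so that the transition-generating quantity $[\dot P_{n+1},P_{n+1}]$ is pushed to order $t_f^{-(n+1)}$. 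One then checks that each $P_n(s)$ is a genuine projector up to $O(t_f^{-(n+1)})$ and that the modified Hamiltonian $H_n(s)=H(s)+\tfrac{i}{t_f}[\dot P_n(s),P_n(s)]$ generates a unitary $U_n$ that intertwines $P_n$ exactly, $U_n(s)P_n(0)=P_n(s)U_n(s)$.

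Next I would bound the adiabatic error by comparing the true Schr\"odinger propagator $U$ with $U_n$ via the Duhamel (variation-of-constants) formula: $\norm{\bigl(U(1)-U_n(1)\bigr)P_n(0)}\le t_f\int_0^1\norm{H(s)-H_n(s)}\,ds$ modulo the intertwining, and then integrating by parts once in $s$. Here the hypothesis that \emph{all} derivatives of $H$ vanish at $s=0,1$ is essential: it kills the boundary term, and it forces $P_n(0)=P(0)$, $P_n(1)=P(1)$ since the corrections $X_n$ are built from derivatives of $H$. Hence, unlike the power-law errors of Theorems~\ref{th:AT0} and \ref{th:AT1}, one gets $\min_\theta\norm{\ket{\psi_{t_f}(1)}-e^{i\theta}\ket{\veps_0(1)}}\le a_n/t_f^{\,n}$ for \emph{every} $n$, with $a_n$ an explicit constant and the integrand in $a_n$ a product of $n+1$ factors, each a derivative of $H$ dressed by powers of $S$ (hence of $1/\Delta$). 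The phase $\theta$ is pinned down by the normalizations $\veps_0\equiv0$ (no dynamical phase) and $\braket{\dot\veps_0}{\veps_0}=0$ (parallel-transport gauge, no geometric phase).

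The crucial analytic input is then a Gevrey-type bound on the growth of $a_n$ with $n$. Expanding the recursion for $P_n$ with the Leibniz rule and estimating the derivatives of $S(s)$ by Cauchy integrals over a contour of radius $\sim\Delta$ around $\veps_0(s)$, one aims to prove $a_n\le c_1\,(C/\Delta)\,\bigl(C^2/(c_2\Delta^3)\bigr)^{n}(n!)^{1+\alpha}$; the peculiar normalization $(k!)^{1+\alpha}/(k+1)^2$ in the hypothesis is tailored so that the nested combinatorial sums produced by repeated Leibniz expansions telescope to this clean form with exactly these constants $c_1=eR(8\pi^2/3)^3$, $c_2=\tfrac{1}{4eR^2}(3/4\pi^2)^5$. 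Finally one optimizes the free index: with $n!\le n^n$, minimizing $a_n/t_f^{\,n}$ over $n$ yields the choice $n\approx\bigl(c_2\Delta^3 t_f/C^2\bigr)^{1/(1+\alpha)}$ and the stated bound $c_1(C/\Delta)\exp\!\bigl[-(c_2\Delta^3 t_f/C^2)^{1/(1+\alpha)}\bigr]$.

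The hard part will be the last step: tracking, with fully explicit constants, the nested derivative expansion, since each iteration differentiates the previous correction (Leibniz factors) \emph{and} inserts another reduced resolvent whose $k$-th derivative costs a further $k!/\Delta^{k+1}$-type factor, and one must verify that the accumulated factorial growth is precisely $(n!)^{1+\alpha}$ rather than, say, $(n!)^{2+\alpha}$ --- the latter would ruin the optimization and degrade the exponent. A secondary technical point is to confirm that the intertwining identity for $U_n$ holds without the customary $O(t_f^{-\infty})$ remainder once the endpoint derivatives vanish, and to keep the $m$-independent bookkeeping of projector-vs.-state errors consistent when passing from $\norm{P_{t_f}(1)-P(1)}$ to the vector-norm statement in the theorem.
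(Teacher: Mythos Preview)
The paper does not give its own proof of this theorem: it is a review article that states the result and refers the reader to \cite{Ge:2015wo} (building on \cite{Nenciu:93,Hagedorn:2002kx}) for the argument. Your proposal is precisely the strategy of those references---the Nenciu superadiabatic hierarchy $P_n$, the use of vanishing endpoint derivatives to eliminate boundary contributions and ensure $P_n(0)=P(0)$, $P_n(1)=P(1)$, the Gevrey control of the combinatorial growth of $a_n$, and the final optimization over $n$---so the approach is correct and matches what the paper cites.

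One point to tighten: your Duhamel line $\norm{(U-U_n)P_n(0)}\le t_f\int_0^1\norm{H-H_n}\,ds$ combined with $H-H_n=-\tfrac{i}{t_f}[\dot P_n,P_n]$ already gives $\int_0^1\norm{[\dot P_n,P_n]}\,ds$ without a residual factor of $t_f$; the ``integration by parts once in $s$'' is not what is happening there, and the boundary-derivative hypothesis enters earlier, in making the corrections $X_k$ vanish at $s=0,1$ so that the superadiabatic and bare projectors agree at the endpoints. The genuinely delicate step, as you correctly flag, is showing that the nested Leibniz/resolvent expansion yields growth exactly $(n!)^{1+\alpha}$ and not worse; in \cite{Ge:2015wo} this is handled by a careful inductive estimate on the full derivative tree, and the specific normalization $(k!)^{1+\alpha}/(k+1)^2$ in the hypothesis is chosen to make that induction close with the stated $c_1,c_2$.
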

Thus, as long as $t_f \gg \frac{C^2}{\Delta^3}$, the adiabatic error is exponentially small in $t_f$. 

The idea of using vanishing boundary derivatives dates back at least to \cite{Garrido:62}. It was also used in \cite{lidar:102106} for a different class of functions than the Gevrey class: functions that are analytic in a strip of width $2\gamma$ in the complex time plane and have a finite number $V$ of vanishing boundary derivatives, i.e., $H^{(v)}(0)=H^{(v)}(1)=0$ $\forall v\in[1,V]$. The adiabatic error is then upper-bounded by $(V+1)^{\gamma+1}q^{-V}$ as along as $t_f\geq \frac{q}{\gamma}V \max_s \norm{H^{(1)}_V(s)}^2/\Delta^3$, where $q>1$ is a parameter that can be optimized given knowledge of $\norm{H^{(1)}_V}$. Thus, the adiabatic error can be made arbitrarily small in the number of vanishing derivatives, while the scaling of $t_f$ with $V$ is encoded into $\norm{H^{(1)}_V}$.\footnote{This corrects an omission in \cite{lidar:102106}, where the dependence of $\norm{H^{(1)}}$ on $V$ was ignored since the supremum of $\norm{H^{(1)}(s)}$ was taken over $s\in[0,1]$ instead of over the region of analyticity of $H(s)$, as noted in \cite{Ge:2015wo}.} An example of a function whose first $V$ derivatives vanish at the boundaries $s=0,1$ is the regularized $\beta$ function $A(s)=\frac{\int_0^s x^V (1-x)^V dx}{\int_0^1 x^V (1-x)^V dx}$ \cite{RPL:10}.
It is possible to further reduce the error quadratically in $t_f$ using an interference effect that arises from imposing an additional boundary symmetry condition \cite{Wiebe:12}.

Note that an important difference between Theorems~\ref{th:AT1} and \ref{th:AT2} is that the former applies for all times $s\in[0,1]$ (``instantaneous AT"), while the latter applies only at the final time $s=1$ (``final-time AT"), which typically gives rise to tighter error bounds.

Also note that Landau and Zener already showed that the transition probability out of the ground state is $O(e^{-C \Delta^2 t_f})$ \cite{LZ1-1,Zener:32} [see \cite{Joye:LZ} for a rigorous proof for analytic Hamiltonians], thus combining an inverse gap square dependence with an exponentially small error bound. However, this result only holds for two-level systems. 

\subsubsection{Lower bound}
\label{sec:ATlowerbound}

Let $H(s)$, with $s\in[0,1]$, be a given continuous Hamiltonian path and $\ket{\veps(s)}$ the corresponding non-degenerate eigenstate path (eigenpath). In the so-called black-box model the only assumption is to be
able to evolve with $H[s(t)]$ for some schedule $s(t)$ (here $s$ is allowed to be a general function of $t$), without exploiting  the unknown structure of $H(s)$. Define the path length $L$ as:
\beq
L = \int_0^1 \norm{\ket{\dot{\veps}(s)}} ds \ ,
\label{eq:L}
\eeq
where dot denotes $\partial_s$. Assuming, without loss of generality, that the phase of $\ket{\veps(s)}$ is chosen so that $\braket{\veps(s)}{\dot{\veps}(s)}=0$, then $L$ is the only natural length in projective Hilbert space (up to irrelevant
normalization factors). 

It was shown in \cite{Boixo:2010fk} that there is a lower bound on the time required to prepare $\ket{\veps(1)}$ from $\ket{\veps(0)}$ with bounded precision:
\beq
t_f > O(L/\Delta)\ .
\label{eq:14}
\eeq
Since an upper bound on $L$ is $\max_s\|\dot{H}(s)\|/\Delta$,%
\footnote{See Appendix~\ref{app:boundproof} as well as Appendix G of \cite{boixo_eigenpath_2009}.}
one obtains the estimate $t_f \sim O(\max_s\|\dot{H}(s)\|/\Delta^2)$, reminiscent of the approximate versions of the adiabatic condition [e.g., Eq.~\eqref{eq:AT-folklore2}].
The proof of the lower bound is essentially based on the optimality of the Grover search algorithm.

The lower bound is nearly achievable using a ``digital", non-adiabatic method proposed in \cite{boixo_fast_2010}, that does not require path continuity or differentiability. The time required scales as $O[(L/\Delta) \log(L/\epsilon)]$, where  $\epsilon$ is a
specified bound on the error of the output state $\ket{\veps(1)}$. 
$L$ is the angular length of the path and is suitably defined to generalize Eq.~\eqref{eq:L} to the non-differentiable case.

Armed with an arsenal of adiabatic theorems we are now well equipped to start surveying AQC algorithms.


\section{Algorithms}
\label{sec:algorithms}

In this section we review the algorithms which are known to provide quantum speedups over classical algorithms. However, to make the idea of a quantum speedup precise we need to draw distinctions among different types of speedups, as several such types will arise in the course of this review. Toward this end we adopt a classification of quantum speedup types proposed in \cite{speedup}. The classification is the following, in decreasing order of strength. 
\begin{itemize}
\item A ``provable" quantum speedup is the case where there exists a proof that no classical algorithm can outperform a given quantum algorithm. The best known example is Grover's search algorithm \cite{Grover:97a}, which, in the query complexity setting, exhibits a provable quadratic speedup over the best possible classical algorithm \cite{Bennett:1997lh}. 
\item A ``strong" quantum speedup was originally defined in \cite{Papageorgiou:2013} by comparing a quantum algorithm against the performance of the best classical algorithm, whether such a classical algorithm is explicitly known or not. This aims to capture computational complexity considerations allowing for the existence of yet-to-be discovered classical algorithms. Unfortunately, the performance of the best possible classical algorithm is unknown for many interesting problems (e.g., for factoring). 
\item A ``quantum speedup" (unqualified, without adjectives) is a speedup against the best available classical algorithm [for example Shor's polynomial time factoring algorithm \cite{Shor:94}]. Such a speedup may be tentative, in the sense that a better classical algorithm may eventually be found. 
\item
Finally, a ``limited quantum speedup" is a speedup obtained when compared specifically with classical algorithms that `correspond" to the quantum algorithm in the sense that they implement the same algorithmic approach, but on classical hardware. This definition allows for the existence of other classical algorithms that are already better than the quantum algorithm. The notion of a limited quantum speedup will turn out to be particularly useful in the context of StoqAQC. 
\end{itemize}
A refinement of this classification geared at experimental quantum annealing was given in \cite{2016arXiv160401746M}.

Using this classification, this section collects most of the adiabatic quantum algorithms known to give a provable quantum speedup (Grover, Deutsch-Jozsa,  Bernstein-Vazirani, and glued trees), or just a quantum speedup (PageRank).\footnote{The glued trees case is, strictly, not an adiabatic quantum algorithm, since it explicitly makes use of excited states. Also, in the PageRank case the evidence for a quantum speedup is numerical.}

Many other adiabatic algorithms have been proposed, and we review a large subset of these in Sec.~\ref{sec:QA}. In a few of these cases there is a scaling advantage over classical simulated annealing, while in some cases there are definitely faster classical algorithms.

\subsection{Adiabatic Grover}
\label{sec:Grover}
%
The adiabatic Grover algorithm~\cite{Roland:2002ul} is perhaps the hallmark example of a provable quantum speedup using AQC, so we review it in detail.  As in the circuit model Grover algorithm \cite{Grover:97a}, informally the objective is to find the marked item (or possibly multiple marked items) in an unsorted database of $N$ items by accessing the database as few times as possible. More formally, one is allowed to call a function $f:\{0,1\}^n \mapsto \{0,1\}$ (where $N=2^n$ is the number of bit strings) with the promise that $f(m)=1$ and $f(x)=0$ $\forall x\neq m$, and the goal is to find the unknown index $m$ in the smallest number of calls. This is an oracular problem \cite{nielsen2000quantum}, in that the algorithm can make queries to an oracle that recognizes the marked items.  The oracle remains a black box, i.e., the details of its implementation and its complexity are ignored.  This allows for an uncontroversial determination of the complexity of the algorithm in terms of the number of queries to the oracle.

For a classical algorithm, the only strategy is to query the oracle until the marked item is found.  Whether the classical algorithm uses no memory, i.e., the algorithm does not keep track of items that have already been checked, or uses an exponential amount of memory (in $n$) to store all the items that have been checked, the classical algorithm will have an average number of queries that scales linearly in $N$.

In the AQC algorithm we denote the marked item by the binary representation of $m$.  The oracle is defined in terms of the final Hamiltonian $H_1 = \ident - \ketbra{m}{m}$, where $\ket{m}$ is the marked state associated with the marked item.  In this representation, the binary representations give the eigenvalues under $\sigma^z$, i.e., $\sigma^z \ket{0} = + \ket{0}$ and $\sigma^z \ket{1} = -\ket{1}$.  The marked state is the ground state of this Hamiltonian with energy 0, and all other computational basis states have energy 1.

\subsubsection{Setup for the adiabatic quantum Grover algorithm}
We use the initial Hamiltonian $H_0 = \ident - \ketbra{\phi}{\phi}$, where $\ket{\phi}$ is the uniform superposition state, 
\beq
\ket{\phi} = \frac{1}{\sqrt{N}} \sum_{i=0}^{N-1} \ket{i} = \ket{+}^{\otimes n}\ ,
\label{eq:usup}
\eeq 
where $\ket{\pm} = \frac{1}{\sqrt{2}}(\ket{0}\pm \ket{1})$.  We take the time-dependent Hamiltonian to be an interpolation:
\begin{align} 
\label{eqt:GroverH}
H(s) &= \left[ 1- A(s) \right] H_0 + A(s) H_1  \\
&= \left[ 1- A(s) \right] (\ident - \ketbra{\phi}{\phi}) + A(s)(\ident - \ketbra{m}{m}) \notag \ ,
\end{align}
where $s=t/t_f \in [0,1]$ is the dimensionless time, $t_f$ is the total computation time, and $A(s)$ is a ``schedule" that can be optimized.  For simplicity, we first consider a linear schedule: $A(s)=s$. Note that $H_1$ is $n$-local.  

If the initial state is initialized in the ground state of $H(0)$, i.e., $\ket{\psi(0)} = \ket{\phi}$, then the evolution of the system is restricted to a two-dimensional subspace, defined by the span of $\ket{m}$ and $\ket{m^{\perp}} = \frac{1}{\sqrt{N-1}} \sum_{i \neq m}^{N-1} \ket{i}$.  In this two-dimensional subspace $H(s)$ can be written as:
\beq
\label{eq:Grover-H}
\left[H(s) \right]_{\ket{m},\ket{m^{\perp}}} 
= \frac{1}{2} \ident_{2 \times 2} - \frac{\Delta(s)}{2} \left( \begin{array}{cc}
 \cos \theta(s) & \sin \theta(s) \\
\sin \theta(s) & - \cos \theta(s)
\end{array} \right) \ ,
\eeq
where:
\bes
\begin{align}
\label{eq:gap-Grover}
\Delta(s) & =  \sqrt{ (1-2 s)^2 + \frac{4}{N} s (1-s) } \ , \\
\cos \theta(s) & =  \frac{1}{\Delta(s)} \left[ 1 - 2 (1-s) \left( 1 - \frac{1}{N} \right) \right] \ , \\
\sin \theta(s) & =  \frac{2}{\Delta(s)} \left(1 - s \right)  \frac{1}{\sqrt{N}}  \sqrt{1 - \frac{1}{N}} \ .
\end{align}
\ees
The eigenvalues and eigenvectors in this subspace are then given by:
\bes
\begin{align}
\veps_0(s) &= \frac{1}{2} \left( 1 - \Delta(s) \right) \ , \quad \veps_1(s) =  \frac{1}{2} \left( 1 + \Delta(s) \right) \ , \label{eqt:M=1Eigenvalues}\\
\ket{\veps_0(s)} & =  \cos \frac{\theta(s)}{2} \ket{m} + \sin \frac{\theta(s)}{2} \ket{m^{\perp}} \ , \\
\ket{\veps_1(s)}  &=  -\sin \frac{\theta(s)}{2} \ket{m} + \cos \frac{\theta(s)}{2} \ket{m^{\perp}} \ .
\end{align}
\ees
The remaining $N-2$ eigenstates of $H(s)$ have eigenvalue $1$ throughout the evolution.  The minimum gap occurs at $s = 1/2$ and scales exponentially with $n$:
\beq
\Delta_{\mathrm{min}} = \Delta(s = 1/2) = \frac{1}{\sqrt{N}} = 2^{-n/2}  \ .
\eeq
(This can be viewed as a special case of Lemma~\ref{lemma-projector-gap} below.)

In our discussion of the adiabatic theorem we saw that without special assumptions on $s(t)$ except that it is twice differentiable, the adiabatic condition is inferred from Eq.~\eqref{eq:JRS}, which requires setting $t_f \gg 2\max_s \norm{\partial_s H(s)}/\Delta^{2}(s) + \int_0^1 \norm{\partial_s H(s)}^2/\Delta^{3}(s) ds $, where we have accounted for the boundary conditions and used the positivity of the integrand to extend the upper limit to $1$.\footnote{Whenever we use the $\gg$ symbol we mean that the larger quantity should be larger by some large multiplicative constant, such as $100$.} 
Differentiating Eq.~\eqref{eq:Grover-H} yields
\beq
\label{eq:dH}
\partial_s H(s) = \left( \begin{array}{cc}
- \left( 1- \frac{1}{N} \right) & \frac{1}{\sqrt{N}} \sqrt{ 1 - \frac{1}{N}} \\
 \frac{1}{\sqrt{N}} \sqrt{ 1 - \frac{1}{N}}  & 1 - \frac{1}{N}
\end{array} \right)\ ,
\eeq
which has eigenvalues $\pm \sqrt{1 - \frac{1}{N}}$, so that $\norm{\partial_s H}  \leq 1$. The other integrand in Eq.~\eqref{eq:JRS}, involving $\norm{\partial^2_s H(s)}/\Delta^{2}(s)$, vanishes after differentiating Eq.~\eqref{eq:dH}. The ground state degeneracy $m(s)=1$ throughout. 
Since
$\int_0^s 1/\Delta^{3}(x) dx = \frac{N}{2}-\frac{N^{3/2} (1-2 s)}{2 \sqrt{{N (1-2 s)^2+4 (1-s) s}}}$, which is a monotonically increasing function of $s$ that approaches $N = \Delta_{\min}^{-2}$ as $s\to 1$, the adiabatic condition becomes 
\beq
t_f \gg 2\max_s \frac{1}{\Delta^{2}(s)} +  \int_0^1 ds \frac{1}{\Delta^3(s)} = \frac{3}{\Delta_{\min}^2}\ .
\label{eq:Grover-gap-dependence}
\eeq
This suggests the disappointing conclusion that the quantum adiabatic algorithm scales in the same way as the classical algorithm.

However, by imposing the adiabatic condition globally, i.e., to the entire time interval $t_f$, the evolution rate is constrained throughout the whole computation, while the gap only becomes small around $s=1/2$. Thus, it makes sense to use a schedule $A(s)$ that adapts and slows down near the minimum gap, but speeds up away from it \cite{Dam:2001fk,Roland:2002ul} [this is related to the idea of rapid adiabatic passage, which has a long history in nuclear magnetic resonance \cite{Powles:1958cq}]. By doing so the quadratic quantum speedup can be recovered, as we address next.

\subsubsection{Quadratic quantum speedup}
\label{sec:quadspeedup}
Consider again the adiabatic condition~\eqref{eq:JRS}, which we can rewrite as:
\beq
t_f \gg 2\max_s \frac{\norm{\partial_s H(s)}}{\Delta^{2}(s)} + \int_0^1 \left( \frac{\norm{\partial^2_s H}}{\Delta^2} + \frac{\norm{\partial_s H}^2}{\Delta^{3}} \right) ds\ ,
\label{eq:24}
\eeq 
where now $H$ and $\Delta$ depend on a schedule $A(s)$. 
Let us now use the ansatz \cite{Jansen:07,Roland:2002ul}
\beq \label{eqt:GroverRate}
\partial_s{A} = c\, \Delta^p[A(s)] \ , \quad A(0)=0 \ , \quad p,c>0 \ .
\eeq
This schedule slows down as the gap becomes smaller, as desired. The normalization constant $c =  \int_0^1  \Delta^{-p}[A(s)] \partial_s{A} ds = \int_{A(0)}^{A(1)} \Delta^{-p}(u) du$ [using $u=A(s)$] is chosen to ensure that $A(1)=1$. 

It follows that:
\begin{align}
&\int_0^1 \left(\frac{\norm{\partial^2_s H[A(s)]}}{\Delta^{2}[A(s)]} + \frac{\norm{\partial_s H[A(s)]}^2}{\Delta^{3}[A(s)]} \right)ds \notag \\
&\leq 4c\int_0^1 \Delta^{p-3}(u) du 
\label{eq:10c}
\end{align}
(the proof is given in Appendix~\ref{app:eq:10c}). 
 Finally, the boundary term in Eq.~\eqref{eq:24} yields $2\max_s \frac{\norm{\partial_s H(s)}}{\Delta^{2}(s)} \leq 4c\Delta_{\min}^{p-2}$.

The case $p=2$ serves to illustrate the main point.
In this case the boundary term is $4c$ and evaluating the integrals yields 
\begin{align*}
&c=\int_0^1 \Delta^{-2}(u) du=
\frac{N}{\sqrt{N-1}}\tan^{-1}\sqrt{N-1} \to \frac{\pi}{2}\sqrt{N}\\
&\int_0^1 \Delta^{-1}(u) du = \frac{\log \left[\frac{\sqrt{N-1} \sqrt{N}+N-1}{\sqrt{N-1} \sqrt{N}-(N-1)}\right]}{2 \sqrt{\frac{N-1}{N}}} \to \log(2N)/2\ ,
\end{align*} 
where the asymptotic expressions are for $N\gg 1$. Substituting this into Eq.~\eqref{eq:10c} yields the adiabatic condition
\beq
t_f \gg 2\pi \sqrt{N} [1+\log(2N)]\ ,
\label{eq:t_f-Grover-rigorous}
\eeq
which is a sufficient condition for the smallness of the adiabatic error, and nearly recovers the quadratic speedup expected from Grover's algorithm. 

The appearance of the logarithmic factor latter is actually an artifact of using bounds that are not tight.\footnote{A detailed analysis of the adiabatic Grover algorithm along with tighter error bounds than we have given here was presented in \cite{RPL:10}.} The quadratic speedup, i.e., the scaling of $t_f$ with $\sqrt{N}$, can be fully recovered by solving for the schedule from Eq.~\eqref{eqt:GroverRate} in the $p=2$ case \cite{Roland:2002ul}.  We first rewrite Eq.~\eqref{eqt:GroverRate} in dimensional time units as $\partial_t{A} = c'\, \Delta^2[A(t)]$, with the boundary conditions $A(0)=0$ and $A(t_f)=1$;.
To solve this differential equation we rewrite it as $t = \int_0^{t} dt = \int_{A(0)}^{A(t)} dA/[c'\Delta^2(A)]$. After integration we obtain

\begin{align} 
\label{eqt:GroverOptimized}
t &= \frac{N}{2c'\sqrt{N-1}}  \left[ \tan^{-1} \left( \sqrt{N-1} \left(2 A(t) - 1 \right) \right)  \right. \notag \\
&\quad \left. + \tan^{-1} \sqrt{N-1} \right] \ .
\end{align}
Evaluating Eq.~\eqref{eqt:GroverOptimized} at $t_f$ gives:
\beq
\label{eq:t_f-opt-Grover}
t_f = \frac{N}{c' \sqrt{N-1}} \tan^{-1} \sqrt{N-1} \to \frac{\pi}{2 c'} {\sqrt{N}} \ ,
\eeq

which is the  expected quadratic quantum speedup. 

One may be tempted to conclude that $t_f$ can be made arbitrarily small since so far $c'$ is arbitrary and can be chosen to be large. However, the adiabatic error bound \eqref{eq:t_f-Grover-rigorous} shows that this is not the case: while it is not tight, it suggests that if $t_f$ scales as $\sqrt{N}$ then $c'$ must scale as $1/\log(2N)$ in order to keep the adiabatic error small. Thus, the general conclusion is that increasing $c'$ results in a larger adiabatic error.%
\footnote{Note that the scaling conclusion $t_f \sim \sqrt{N}$ reported in \cite{Roland:2002ul} is based on the interpretation of Eq.~\eqref{eqt:GroverRate} as a heuristic ``local" adiabatic condition and does not constitute a proof that the adiabatic error is small. The evidence that the rigorous bound \eqref{eq:t_f-Grover-rigorous} is not tight and that $t_f \sim \sqrt{N}$ suffices to achieve a small adiabatic error for the schedule \eqref{eqt:GroverOptimalSchedule} is numerical.}

Inverting Eq.~\eqref{eqt:GroverOptimized} for $A(t)$ [or, equivalently, solving Eq.~\eqref{eqt:GroverRate} for $p=2$] gives the locally optimized schedule 
\beq 
\label{eqt:GroverOptimalSchedule}
A(s) = \frac{1}{2} + \frac{1}{2 \sqrt{N-1}} \tan \left[ (2s-1) \tan^{-1} \sqrt{N-1}  \right] \ ,
\eeq
where we replaced $t/t_f$ [with $t_f$ given by Eq.~\eqref{eq:t_f-opt-Grover}] with $s$. As expected, this schedule rises rapidly near $s=0,1$ and is nearly flat around $s=1/2$, i.e., it slows down near the minimum gap.


Since the choice in Eq.~\eqref{eqt:GroverRate} is not unique, we may wonder if there exists a schedule that gives an even better scaling. Given that Grover's algorithm is known to be optimal in the circuit model setting \cite{Bennett:1997lh,zalka1999grover}, this is, unsurprisingly, not the case, and a general argument to that effect which applies to any Hamiltonian quantum computation was given by \cite{FarhiAnalog}.
We review this argument in the AQC setting, in Appendix~\ref{app:lower-bound}.

\subsubsection{Multiple marked states}

The present results generalize easily to the case where we have $M\geq 1$ marked states, for which Grover's algorithm is known to also give a quadratic speedup in the circuit model \cite{Boyer:96,Biham:1999ye}. The final Hamiltonian can be written as:
\beq
H_1 = \ident - \sum_{m \in \mM} \ketbra{m}{m} \ ,
\eeq
where $\mM$ is the index set of the marked states. 
Let:
\beq
\ket{m_{\perp}} = \frac{1}{\sqrt{N-M}} \sum_{i \notin \mM} \ketbra{i}{i} \ .
\eeq
Instead of evolving in a two-dimensional subspace, the system evolves in an $M+1$ dimensional subspace spanned by $\left(\{ \ket{m}\}_{m \in \mM}, \ket{m_{\perp}} \right)$, and instead of Eq.~\eqref{eq:Grover-H}, the Hamiltonian can be written in this basis as:
\begin{widetext}
\beq
H(s) = \left( \begin{array}{ccccc}
(1-s)\left( 1 - \frac{1}{N} \right) & -\frac{1-s}{N} & \dots && -(1-s) \frac{\sqrt{N-M}}{N} \\
-\frac{1-s}{N}  & (1-s)\left( 1 - \frac{1}{N} \right) & -\frac{1-s}{N}  & \dots  & -(1-s) \frac{\sqrt{N-M}}{N}\\
\vdots & & \ddots & & \vdots \\
-(1-s) \frac{\sqrt{N-M}}{N}  & -(1-s) \frac{\sqrt{N-M}}{N}  & \dots && s + (1-s)\left( 1 - \frac{N-M}{N} \right) 
\end{array} \right) \ .
\eeq
\end{widetext}
This Hamiltonian can be easily diagonalized, and one finds that there are $M-1$ eigenvalues equal to $1-s$, and two eigenvalues 
\beq
\lambda_{\pm} = \frac{1}{2} \pm \frac{1}{2}\sqrt{ (1-2 s)^2 + \frac{4 M}{N} s (1-s) } \ ,
\eeq
that determine the relevant minimum gap: $\Delta(s)  =  \sqrt{ (1-2 s)^2 + \frac{4M}{N} s (1-s)}$. The remaining $N-M-1$ eigenvalues of the unrestricted Hamiltonian are equal to $1$, Comparing the $M=1$ case Eq.~\eqref{eq:gap-Grover} to the present case, the only difference is the change from $1/N$ to $M/N$. Therefore our discussion from earlier goes through with only this modification.
 
In closing, we note that an experimentally realizable version of the adiabatic Grover search algorithm using a single bosonic particle placed in an optical lattice was recently proposed in \cite{Hen:2016aa}.

\subsection{Adiabatic Deutsch-Jozsa algorithm}

Given a function $f:\{0,1\}^n \mapsto \{0,1\}$ which is promised to be either
constant or balanced [i.e., $f(x)=0$ on half the inputs and $f(x)=1$ on the other half], the Deutsch-Jozsa problem is to determine which type the
function is. There exists a quantum circuit algorithm that solves the problem in a single $f$-query~\cite{Deutsch:92}. Classically, the problem requires $2^{n-1}+1$ $f$-queries in the worst case, since it is possible that the first $2^{n-1}$ queries return a constant answer, while the function is actually balanced. It is important to note that the quantum advantage requires a deterministic setting, since the classical error probability is exponentially small in the number of queries.

An adiabatic implementation of the Deutsch-Jozsa algorithm using unitary interpolation was given in \cite{PhysRevLett.95.250503} and an implementation using a linear interpolation was given in \cite{Wei:2006dg}. These algorithms match the speedup obtained in the circuit model [for an earlier example where this is not the case see \cite{Das:2002}], and we proceed to review both. We note that, just like the adiabatic Grover's algorithm, the adiabatic Deutsch-Jozsa algorithm requires $n$-local Hamiltonians. We also note that both the unitary interpolation and linear interpolation strategies we describe here are not unique to the Deutsch-Jozsa problem, and apply equally well to any depth-one quantum
circuit. Thus they should be viewed in this more general context, and are used here with a specific algorithm for illustrative purposes.

\subsubsection{Unitary interpolation} 
\label{sec:DJ-unitary}

The initial Hamiltonian is chosen such that its ground state is the uniform superposition state $\ket{\phi}$ [Eq.~\eqref{eq:usup}] and $N=2^n$,
i.e., $H(0)=\omega\sum_{i=1}^{n}\ket{-}_{i}\bra{-}$, where $\omega $ is the energy
scale. The Deutsch-Jozsa problem can be solved by a single computation of the function $f$
through the unitary transformation $U\ket{x} =(-1)^{f(x)}|x\rangle $ ($x\in \{0,1\}^{n}$)~\cite{Collins:98}, so that in the $\{\ket{x} \}$
(computational) basis $U$ is represented by the diagonal matrix 
$U=\mathrm{diag}[(-1)^{f(0)},...,(-1)^{f(2^{n}-1)}]$.
An adiabatic implementation requires a final Hamiltonian $H(1)$ such that
its ground state is $|\psi (1)\rangle =U\ket{\psi (0)}$. This can be accomplished via a unitary transformation of $H(0)$, i.e., $H(1)=UH(0)U^{\dagger }$.
Then the final Hamiltonian encodes the solution of the Deutsch problem in its ground state, which can be extracted via a measurement of the qubits in the
 $\{\ket{+},\ket{-} \}$ basis (note that this is compatible with the definition of AQC, Def.~\ref{def:AQC}, which does not restrict the measurement basis). 
A suitable unitary interpolation between $H(0)$ 
and $H(1)$ can be defined by $H(s)={\tilde{U}}(s)H(0){\tilde{U}}^{\dagger }(s)$, where ${\tilde{U}}(s)=\exp
\left( i\frac{\pi }{2}sU\right)$, for which ${\tilde{U}}(1) = iU$. Since a unitary transformation of $H(0)$ preserves its spectrum, it does not change the ground state gap, which remains $\omega$. 
The run time of the algorithm can be determined
from the adiabatic condition~\eqref{eq:tf-JRS}, and what remains is the numerator: $\norm{H^{(1)}(s)} = \norm{i\frac{\pi}{2}[U,H(s)]} \leq \pi \norm{H(0)} = \pi$ [and similarly for $\norm{H^{(2)}(s)}$]. This yields
$t_f \gg 1 /\omega $. This result is independent of $n$ so the adiabatic run time is $O(1)$.  

\subsubsection{Linear interpolation} 
\label{sec:DJ-lin}
Unitary interpolations, introduced in \cite{Siu:05}, are somewhat less standard. Therefore we also present the standard linear interpolation method as an alternative. Consider the usual initial Hamiltonian $H_0 = \ident - \ketbra{\phi}{\phi}$ over $n$ qubits, where once again $\ket{\phi}$ is the uniform superposition state. Let the final Hamiltonian be $H_1 = \ident - \ket{\psi}_f\bra{\psi}$, where
\beq
\ket{\psi}_f = \frac{\mu_f}{\sqrt{N/2}}  \sum_{i=0}^{N/2-1}\ket{2i} + \frac{1-\mu_f}{\sqrt{N/2}}  \sum_{i=0}^{N/2-1}\ket{2i+1} \ ,
\eeq
and where
\beq
\mu_f = \abs{\frac{1}{N}\sum_{x\in\{0,1\}^n}(-1)^{f(x)}} \ .
\eeq
Clearly, $\mu_f = 1$ or $0$ if $f$ is constant or balanced, respectively. Therefore $\ket{\psi}_f$ is a uniform superposition over all even (odd) index states if $f$ is constant or balanced, respectively, and a measurement of the ground state of $H_1$ in the computational basis reveals whether $f$ is constant or balanced, depending on whether the observed state belongs to the even or odd sector, respectively. However, we note that one may object to the reasonableness of the final Hamiltonian $H_1$. 
Namely, preparing the state $\ket{\psi}_f$ involves precomputing the quantity $\mu_f$, which directly encodes whether $f$ is constant or balanced and so may be thought to represent an oracle that is too powerful.\footnote{The only efficient way known to compute a quantity similar to $\mu_f$ involves running the Deutsch-Jozsa algorithm in the gate model, where $\frac{1}{N}\sum_{x\in\{0,1\}^n}(-1)^{f(x)}$, without the absolute value, appears as the amplitude of the state $\ket{0}^{\otimes n}$ \cite{nielsen2000quantum}.} Indeed, $H_1$ 
in this construction is not of the standard form $\sum_x f(x) \ketbra{x}{x}$, wherein each oracle call $f(x)$ corresponds to a query about a single basis state $\ket{x}$. Therefore, there is no classical analogue to this oracle in the computational basis.   

Setting this concern in the present version of the algorithm due to \cite{Wei:2006dg} aside,
it remains to determine the adiabatic run time for the adiabatic Hamiltonian $H(s) = (1-s)H_0+sH_1$. The following Lemma \cite{AharonovTa-Shma} comes in handy:
\begin{lemma}
\label{lemma-projector-gap}
Let $\ket{\alpha}$ and $\ket{\beta}$ be two states in some subspace of an $N$-dimensional Hilbert space $\mathcal{H}$, and let $H_\alpha = \ident - \ketbra{\alpha}{\alpha}$, $H_\beta = \ident - \ketbra{\beta}{\beta}$. For any convex combination $H_\eta = (1-\eta)H_\alpha+\eta H_\beta$, where $\eta\in[0,1]$, the ground state gap $\Delta(H_\eta) \geq |\braket{\alpha}{\beta}|$.
\end{lemma}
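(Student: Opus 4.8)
The plan is to diagonalize $H_\eta$ essentially by hand, exploiting the fact that it differs from the identity by a rank-at-most-$2$ operator. First I would write $H_\eta = \ident - M$ with $M \equiv (1-\eta)\ketbra{\alpha}{\alpha} + \eta\ketbra{\beta}{\beta}$. Since $\ket{\alpha}$ and $\ket{\beta}$ are (normalized) states, $M$ is positive semidefinite, $\tr M = 1$, and $M$ is supported on $V \equiv \mathrm{span}\{\ket{\alpha},\ket{\beta}\}$, which has dimension $1$ or $2$. Hence the spectrum of $H_\eta$ consists of $\{1-\mu_+,\,1-\mu_-\}$ together with the eigenvalue $1$ with multiplicity $N-\dim V$, where $\mu_+\ge\mu_-\ge 0$ are the eigenvalues of $M$ restricted to $V$. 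Because $\tr M = 1$ and $M\ge 0$, both $\mu_\pm$ lie in $[0,1]$, so $1-\mu_+$ is the ground-state energy, $1-\mu_-$ is the first excited energy, and therefore $\Delta(H_\eta) = \mu_+ - \mu_-$.

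The next step is to evaluate $\mu_+-\mu_-$. If $\ket{\alpha}$ and $\ket{\beta}$ are parallel then $\dim V = 1$, $M|_V = 1$, and $\Delta(H_\eta) = 1 = |\braket{\alpha}{\beta}|$, so the claim is trivial; thus assume $\dim V = 2$ and set $c \equiv \braket{\alpha}{\beta}$. Choosing an orthonormal basis $\{\ket{e_1},\ket{e_2}\}$ of $V$ with $\ket{\alpha} = \ket{e_1}$ and $\ket{\beta} = c\ket{e_1}+\sqrt{1-|c|^2}\,\ket{e_2}$, a one-line computation of the resulting $2\times 2$ matrix for $M|_V$ gives $\det(M|_V) = \eta(1-\eta)(1-|c|^2)$. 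Combined with $\mu_+ + \mu_- = \tr(M|_V) = 1$ this yields
\[
\Delta(H_\eta)^2 = (\mu_+-\mu_-)^2 = 1 - 4\det(M|_V) = (1-2\eta)^2 + 4\eta(1-\eta)|c|^2 .
\]

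Finally, since $|c|\le 1$ we have $(1-2\eta)^2 \ge |c|^2(1-2\eta)^2$, hence
\[
\Delta(H_\eta)^2 \ge |c|^2\big[(1-2\eta)^2 + 4\eta(1-\eta)\big] = |c|^2 ,
\]
using $(1-2\eta)^2 + 4\eta(1-\eta) = 1$; taking square roots gives $\Delta(H_\eta) \ge |\braket{\alpha}{\beta}|$. I do not expect a serious obstacle: the only point needing care is the identification of the first excited level, i.e.\ ruling out that the flat eigenvalue $1$ on $V^\perp$ lies strictly between $1-\mu_+$ and $1-\mu_-$, which is immediate from $\mu_-\ge 0$. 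Everything else is elementary $2\times 2$ linear algebra.
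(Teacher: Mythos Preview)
Your proof is correct and takes essentially the same approach as the paper: both reduce $H_\eta$ to a $2\times 2$ block on $\mathrm{span}\{\ket{\alpha},\ket{\beta}\}$ (plus the identity on the complement) and compute the gap explicitly, arriving at the identical expression $\Delta(H_\eta)^2 = 1 - 4\eta(1-\eta)\big(1-|\braket{\alpha}{\beta}|^2\big)$. The only cosmetic differences are that the paper writes down $H_\eta$ directly in the $\{\ket{\alpha},\ket{\alpha^\perp}\}$ basis and then minimizes the gap formula over $\eta$ (the minimum occurring at $\eta=1/2$), whereas you pass to $M=\ident-H_\eta$ and extract $\mu_+-\mu_-$ via trace and determinant, then bound the resulting expression by $|\braket{\alpha}{\beta}|^2$ using $(1-2\eta)^2\ge |c|^2(1-2\eta)^2$; these are equivalent manipulations of the same quadratic.
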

\begin{proof}
Expand $\ket{\beta} = a\ket{\alpha} + b\ket{\alpha^\perp}$ where $\braket{\alpha}{\alpha^\perp}=0$, and complete $\{\ket{\alpha},\ket{\alpha^\perp}\}$ to an orthonormal basis for $\mathcal{H}$. Writing $H_\eta$ in this basis yields
\beq
H_\eta = \left( \begin{array}{cc}
\eta |b|^2 & \eta a b^* \\
\eta a^* b  & \eta|a|^2+1-\eta
\end{array} \right) \oplus \ident_{(N-2)\times(N-2)} \ .
\eeq
The eigenvalues of this matrix are all $1$ for the identity matrix block and the difference between the eigenvalues in the $2\times 2$ block is $\Delta(H_\eta) = \sqrt{1-4\eta(1-\eta)|b|^2}$. This is minimized for $\eta=1/2$, where it equals $|a|$.
\end{proof}
Applying this lemma, we see that $\Delta[H(s)] \geq |\braket{\phi}{\psi_f}| = 1/\sqrt{2}$. Since $\norm{H^{(1)}(s)} = \norm{H_1-H_0} \leq 2$ and $\norm{H^{(2)}(s)}=0$, it follows from the adiabatic condition~\eqref{eq:tf-JRS} that $t_f$ is independent of $n$, i.e., the adiabatic run time is $O(1)$ as in the circuit model depth.

\subsubsection{Interpretation}
As mentioned above, a classical probabilistic algorithm that simply submits random queries to the oracle will fail with a probability that is exponentially small in the number of queries. One might thus be concerned that the adiabatic algorithms above are no better~\cite{Hen2014}, since they are probabilistic, in the sense that there is a non-zero probability of ending in an excited state. However, for the linear interpolation adiabatic algorithm reviewed, measuring the energy of the final state returns $0$ in the ground state or $1$ in an excited state. In the latter  "inconclusive" case, the algorithm needs to be repeated until an energy of $0$ is found and only then is a computational basis measurement  performed. If an even (or odd) index state is measured, the corresponding constant (or balanced) result is guaranteed to be correct.  Moreover, an excited state outcome can (and should) be made exponentially unlikely using a smooth schedule as per Theorem~\ref{th:AT2}. Thus, the adiabatic algorithms above improve upon a classical probabilistic algorithm in the following sense:
In the adiabatic case, to know with certainty that the function is constant or balanced (an even or odd index measurement result) happens with probability $p=1-q$ where $q\sim e^{-t_f}$, where the run time $t_f$ is independent of $n$. Therefore, the expected number of runs $r$ to certainty in the adiabatic case is 
\begin{align}
\av{r} &= \sum_{r=1}^{\infty} pq^{r-1} r = \frac{1}{1-q} \approx 1+q \sim 1+e^{-t_f}\ .
\end{align}
On the other hand, classically, to know with certainty that the function is constant requires $N=2^n/2+1$ runs or queries (all yielding identical outcomes). 

Finally, we note that there exists a non-adiabatic Hamiltonian quantum algorithm that solves the Deutsch-Jozsa problem in constant time with a deterministic guarantee of ending up with the right answer (i.e., in the ground state)~\cite{Hen2014}. This algorithm is based on finding a fine-tuned schedule $s(t)$.

\subsection{Adiabatic Bernstein-Vazirani algorithm} 
\label{sec:BernsteinVazirani}

The Bernstein-Vazirani problem \cite{Bernstein:93} is to find an unknown binary string $a\in\{0,1\}^n$ with as few queries as possible of the function (or oracle)
\beq 
\label{eqt:BVFunction}
f_a(w) =  w \odot a \in \{0,1\}\ ,
\eeq
where $\odot$ denotes the bitwise inner product modulo $2$, and $w\in\{0,1\}^n$ as well. In the quantum circuit model, it can be shown that $a$ can be determined with $O(1)$ queries \cite{Bernstein:93} whereas classical algorithms require $n$ queries (the classical algorithm tries all $n$ $w$'s with a single $1$ entry to identify each bit of $a$). This is a polynomial quantum speedup. 

Before presenting the adiabatic algorithm we point out the following useful observation.  For an initial state:
\beq
\ket{\Psi(0)} = \sum_{w \in \{0,1\}^n} c_w \ket{w}_A \otimes \ket{\psi_w(0)}_B \ , \quad \sum_{w \in \{0,1\}^n} |c_w|^2 = 1
\label{eq:40}
\eeq
that undergoes an evolution according to the time-dependent Hamiltonian of the form
\beq
H(s) = \sum_{w \in \{0,1\}^n} \ket{w}_A\bra{w} \otimes H_w(s)\ ,
\label{eq:H(s)40}
\eeq 
we have:
\beq
\ket{\Psi(t)} = \sum_{w \in \{0,1\}^n} c_w \ket{w}_A \otimes \ket{\psi_w(t)}_B\ ,
\eeq
where
\begin{align}
\label{eq:43}
\ket{\psi_w(t)}_B &= \ket{\psi_{t_f,w}(s)}_B\\
& = \Texp \left[-i t_f \int_0^s d \sigma H_w(\sigma) \right] \ket{\psi_w(0)}_B \ , \notag
\end{align}
where $\Texp$ denotes the time-ordered exponential. To see this, simply expand the formal solution:
\begin{align}
&\ket{\Psi(t)} = \Texp \left[-i t_f \int_0^s d \sigma H(\sigma) \right] \ket{\Psi(0)} = \\
& \sum_{w \in \{0,1\}^n} c_w \ket{w}_A \otimes \Texp \left[-i t_f \int_0^s d \sigma H_w(\sigma) \right] \ket{\psi_w(0)}_B\ . \nonumber
\end{align}
Thus for each state $\ket{w}$ in subsystem $A$, there is an independently evolving state in subsystem $B$.  In particular, note that adiabaticity in subsystem $B$ does not depend on the size of system $A$.

The adiabatic algorithm \cite{Hen:2013hl} encodes the action of $f_a(w)$ in a Hamiltonian acting on two subsystems $A$ and $B$ comprising $n$ qubits and $1$ qubit respectively:
\bes
\begin{align}
H_1 &= \sum_{w \in \{0,1\}^n} h_w \ , \\
h_w &\equiv -\frac{1}{2}  \ket{w}_A\bra{w} \otimes \left( \ident_B + (-1)^{f_a(w)} \sigma_B^z \right) \ .
\end{align}
\ees
%

The initial Hamiltonian is chosen to be
\begin{eqnarray}
H_0 &=& \frac{1}{2} \left( \ident_A \otimes \left( \ident_B - \sigma_B^x \right) \right) \nonumber \\
&=& \frac{1}{2}  \sum_{w \in \{0,1\}^n} \ket{w}_A \bra{w} \otimes \left( \ident_B - \sigma_B^x \right) \ .
\end{eqnarray}
Any state of the form 
\beq
\ket{\Psi(0)} = \sum_{w \in \{0,1\}^n} c_w \ket{w}_A \otimes \ket{+}_B 
\eeq
is a ground state of $H_0$, with eigenvalue $0$. We assume that the initial state is prepared as the uniform superposition state, i.e., $c_w ={2^{-n/2}}$ $\forall w$.

The total Hamiltonian is thus given by:
\begin{align}
\label{eq:H(s)-BV}
H(s) &= (1-s) H_0 + s H_1 = \sum_{w \in \{0,1\}^n} \ket{w}_A \bra{w} \otimes H_w(s) \ ,
\end{align}
where
\begin{align}
H_w(s) = \frac{1-s}{2} \left( \ident_B - \sigma_B^x \right) - \frac{s}{2} 
\left( \ident_B + (-1)^{f_a(w)} \sigma_B^z \right)\ .
\label{eq:H_w-BV}
\end{align}
The adiabatic algorithm proceeds, after preparation of the initial state, by adiabatic evolution to the final state:
\begin{align}
& \ket{\Psi(t_f)} =  \\
& \frac{1}{2^{n/2}} \sum_{w \in \{0,1\}^n} \ket{w}_A \otimes  \exp \left[ -i t_f \int_0^1  \veps_{0,w}(s) ds \right] \ket{f_a(w)}_B \ , \nonumber
\end{align}
where $\veps_{0,w}(s)$ is the instantaneous ground-state energy of $H_w(s)$, and we have used the general argument from Eqs.~\eqref{eq:40}-\eqref{eq:43}.
Finally an $x$ measurement on subsystem $B$ is performed.  Since we can write 
\beq
\ket{f_a(w)} = \frac{1}{\sqrt{2}} \left( \ket{+} + (-1)^{f_a(w)} \ket{-} \right) \ ,
\eeq
the state collapses to either of the following states with equal probability:
\bes
\begin{align}
\ket{\Psi_+} & = \frac{1}{2^{n/2}}\sum_{w \in \{0,1\}^n} \ket{w}_A \otimes \ket{+}_B =  \ket{\Psi(0)}  \  , \label{eqt:BV+} \\
 \ket{\Psi_-} & = \frac{1}{2^{n/2}} \sum_{w \in \{0,1\}^n} \ket{w}_A \otimes (-1)^{f_a(w)} \ket{-}_B  \ .
\end{align}
\ees
Note that since $f_a(w)$ counts the number of $1$-agreements between $a$ and $w$, we can write:
\beq
\sum_{w \in \{0,1\}^n} (-1)^{f_a(w)} \ket{w}_A = \bigotimes_{k=0}^{n-1}\left( \ket{0}_k + (-1)^{a_k} \ket{1}_k \right)_A \ ,
\eeq
so that 
\beq
\ket{\Psi_-} = \frac{1}{2^{n/2}} \bigotimes_{k=0}^{n-1} \left( \ket{0}_k + (-1)^{a_k} \ket{1}_k \right)_A \otimes \ket{-}_B \ .
\eeq
If the measurement gives $+1$ [i.e., Eq.~\eqref{eqt:BV+}], then the measured state is the initial state and no information is gained and the process must be repeated.  If the measurement gives $-1$, the resulting state in the $A$ subspace encodes all the bits of $a$, since if the $k$-th qubit is in the $\ket{+}_k$ state, then $a_k = 0$ and if it is in the $\ket{-}_k$ state, then $a_k = 1$.  The probability of failure after $m$ tries is $2^{-m}$ so it is exponentially small and $n$-independent.

The run time of the algorithm is also $n$-independent, since only a single qubit (system B) is effectively evolving..

In conclusion, the adiabatic Bernstein Vazirani algorithm finds the unknown binary string $a$ in $O(1)$ time, matching the circuit model depth. Using a similar technique, \cite{Hen:2013hl} presented a  quantum adiabatic version of Simon's exponential-speedup period finding algorithm \cite{Simon:97} [a precursor to Shor's factoring algorithm \cite{Shor:97}], again matching the circuit model depth scaling. An important aspect of these quantum adiabatic constructions is that they go beyond the general-purpose (and hence suboptimal) polynomial-equivalence prescription of universality proofs that map circuit-based algorithms into quantum-adiabatic ones (see Section~\ref{sec:universality}). That equivalence does not necessarily preserve a polynomial quantum speedup, whereas the construction in \cite{Hen:2013hl} discussed here does.

\subsection{The glued trees problem} \label{sec:gluedtrees}

Consider two binary trees, each of depth $n$.  Each tree has $\sum_{j=0}^n {2^j}  = 2^{n+1}-1$ vertices, for a total of $N = 2^{n+2} -2$ vertices, each labelled by a randomly chosen $2n$-bit string.  The two trees are randomly glued as shown in Fig.~\ref{fig:gluedtrees}. More specifically, choose a leaf on the left end at random and connect it to a leaf on the right end chosen at random. Then connect the latter to a leaf on the left chosen randomly among the remaining ones, and so on, until every leaf on the left is connected to two leaves on the right (and vice versa). This creates a random cycle that alternates between the leaves of the two trees. The problem is, starting from the left root, to find a path to the right root in the smallest possible number of steps, while traversing the tree as in a maze. I.e., keeping a record of one's moves is allowed, but at any given vertex one can only see the adjacent vertices. More formally,  an oracle outputs the adjacent vertices of a given input vertex (note that the roots of the trees are the only vertices with adjacency two, so it is easy to check if the right root was found). The problem is, given the name of the left root and access to the oracle, to find the name of the right root in the smallest number of queries. Classical algorithms require at least a sub-exponential in $n$ number of oracle calls, but there exists a polynomial-time quantum algorithm based on quantum walks for solving this problem \cite{Childs:2003}. A polynomial-time quantum almost-adiabatic algorithm was given in \cite{Somma:2012kx}. The qualifier ``almost" is important: the algorithm is not adiabatic during the entire evolution, since it explicitly requires a transition from the ground state to the first excited state and back. We now review the algorithm, which (so far) provides the only example of a (sub-)exponential almost-adiabatic quantum speedup.

Let us denote the bit-string corresponding to the first root by $a_0$ and the second root by $a_{N-1}$.  Define the diagonal (in the computational basis) Hamiltonians
\beq
H_0 = - \ketbra{a_0}{a_{0}} \ , \quad H_1 = - \ketbra{a_{N-1}}{a_{N-1}} \ ,
\eeq
and the states
\beq \label{eqt:col}
\ket{c_j} = \frac{1}{\sqrt{N_j}} \sum_{i \in j\mathrm{-th \ column}} \ket{a_i}\ ,
\eeq
which are a uniform superposition over the vertices in the $j$-th column with $N_j = 2^j$ for $0 \leq j \leq n$ and $N_j = 2^{2n +1 -j}$ for $n+1 \leq j \leq 2n+1$. Note that $\ket{c_0} = \ket{a_0}$ and $\ket{c_{2n+1}} = \ket{a_{N-1}}$.
Let us define the Hamiltonian $A$ associated with the oracle as having the following non-zero matrix elements:
\beq
\bra{c_j} A \ket{c_{j+1}} = \left\{ \begin{array}{cc}
\sqrt{2} & j = n \\
1 & \mathrm{otherwise}
\end{array}\right.
\eeq
\begin{figure}[t] 
   \centering
   \includegraphics[width=0.95\columnwidth]{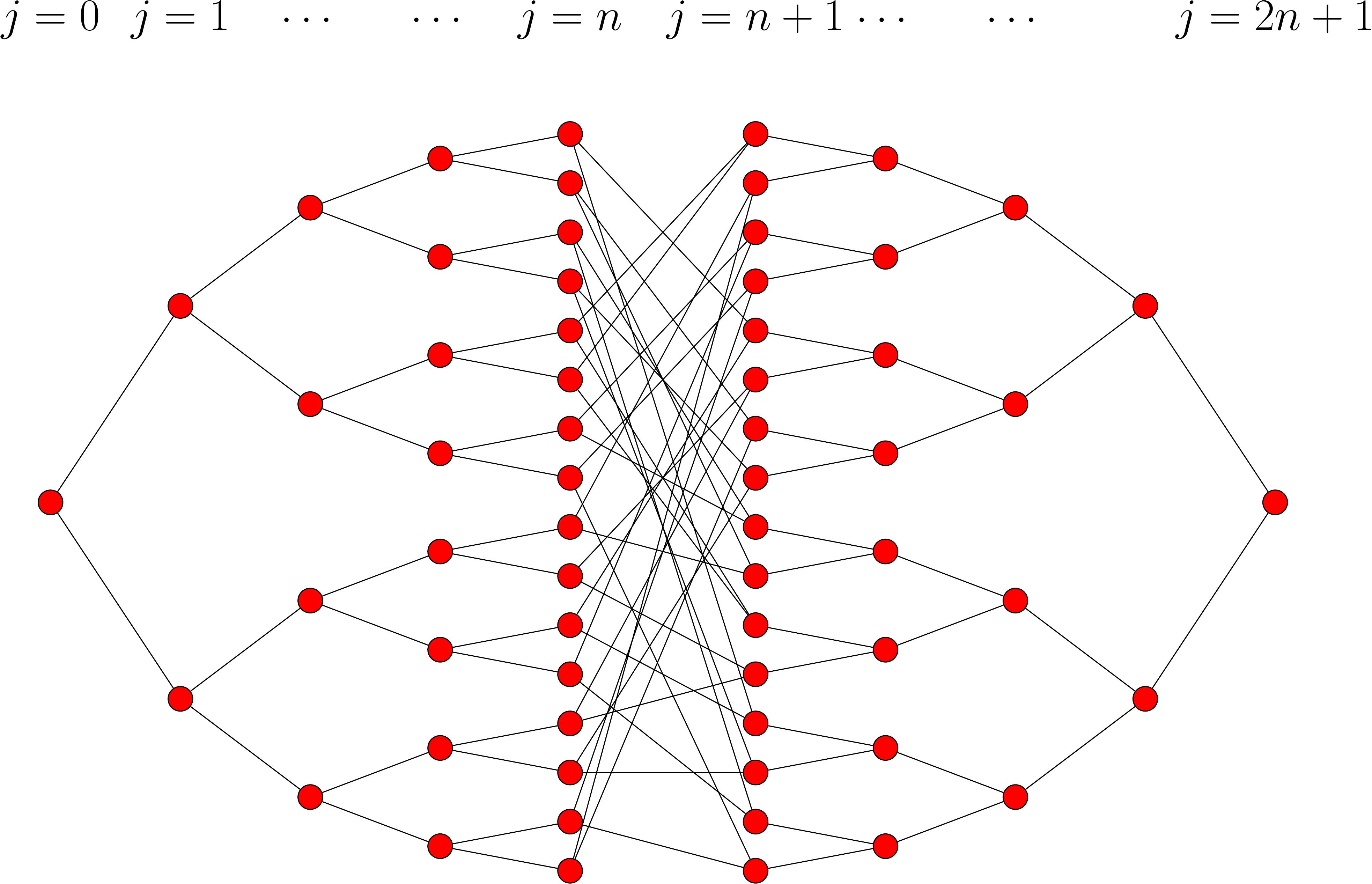} 
   \caption{A glued tree with $n=4$.  The labeling $j$ from Eq.~\eqref{eqt:col} is depicted on top of the tree.}
   \label{fig:gluedtrees}
\end{figure}
We then pick as our interpolating Hamiltonian:
\beq
H(s) = (1-s)\alpha H_0 -s (1-s) A + s \alpha H_1
\label{eq:H-glued}
\eeq
where $\alpha\in (0, 1/2)$ is a constant (independent of $n$) and $s(t)$ is the schedule.  Note that a unitary evolution according to this Hamiltonian will keep a state within the subspace spanned by $\left\{ \ket{c_j} \right\}$ if the state is initially within that subspace.  Since the instantaneous ground state at $s = 0$ ($\ket{a_0}$) is in this subspace, it suffices to only consider this subspace.  Because of the form of the Hamiltonian, the eigenvalue spectrum is symmetric about $s = 1/2$.

In this subspace, at $s_{\times} = \alpha/\sqrt{2}$ (and by symmetry at $1-s_\times$), the energy gap between the ground state and the first excited state closes exponentially in $n$. This is depicted in Fig.~\ref{fig:gluedtrees2}, where $s_1, s_2$ represent the region around $s_\times$ and $s_3, s_4$ represent the region around $1- s_\times$.  In the regions $s\in [0,s_1)$, $s\in [s_2,s_3)$, and $s\in [s_4,1]$, the energy gap between the ground state and first excited state is lower-bounded by $c/n^3$.
The gap between the first and second excited states is lower-bounded by $c'/n^3$ throughout the evolution. Both $c,c'>0$.

The proposed evolution exploits the symmetry and gap structure of the spectrum as follows. 
A schedule is chosen that guarantees adiabaticity only if the energy gap scales as $1/n^3$.
Then, during $s \in [0,s_1)$, the desired evolution is sufficiently adiabatic that it follows the instantaneous ground state. During $s \in [s_1, s_2)$, the evolution is non-adiabatic (since the gap scales as $1/e^n$) and a transition to the first excited state occurs with high probability.  During $s \in [s_2,s_3)$, the  evolution is again sufficiently adiabatic that it follows the instantaneous first excited state.  During $s \in [s_3, s_4)$, the evolution is again non-adiabatic and a transition from the first excited state back to the ground state occurs with high probability.  During $s \in [s_4,1]$, the evolution is again adiabatic and follows the instantaneous ground state. 

Since $t_f = \int_0^1 {ds}({d s}/{dt})^{-1}\sim n^6$, we conclude that $\ket{a_{N-1}}$ can be found in polynomial time.

\begin{figure}[t] 
   \centering
   \includegraphics[width=0.95\columnwidth]{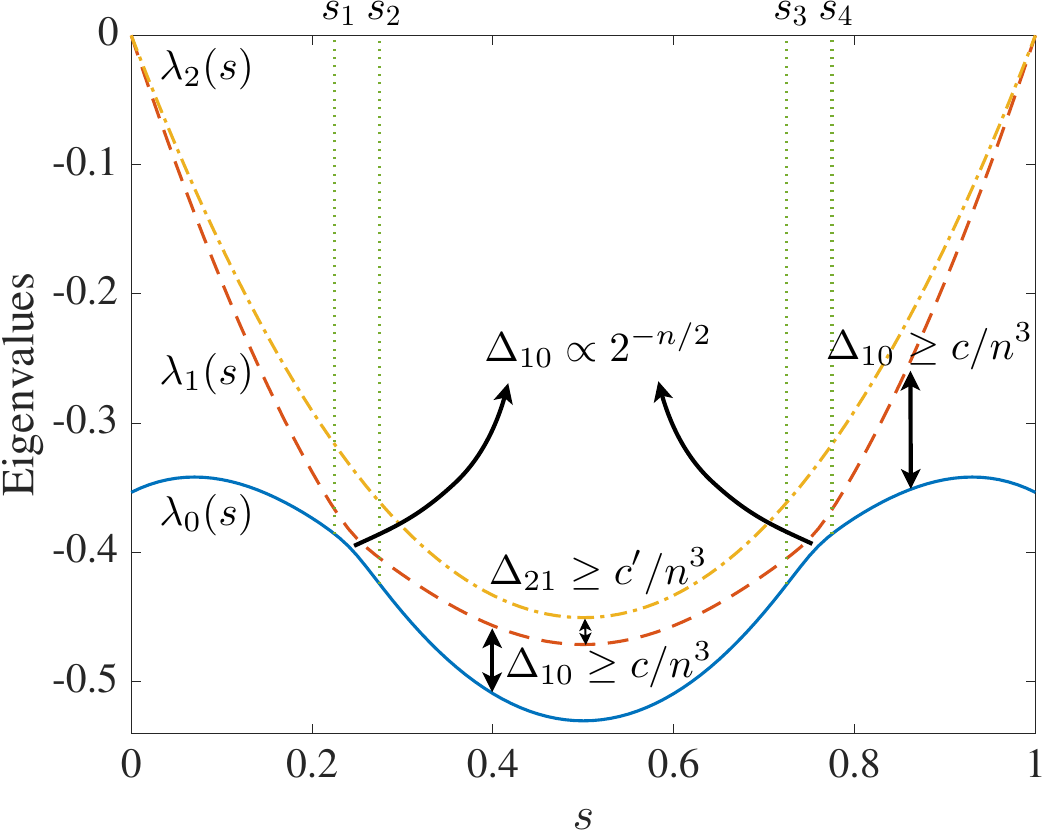} 
   \caption{The ground state ($\lambda_0(s)$, blue solid curve), first excited state ($\lambda_1(s)$, red dashed curve) and second excited state ($\lambda_2(s)$ yellow dot-dashed curve) of the glued-trees Hamiltonian~\eqref{eq:H-glued} for $\alpha = 1/\sqrt{8}$ and $n = 6$. Inside the region $[s_1,s_2]$ and $[s_3,s_4]$, the gap between the ground state and first excited state $\Delta_{10}$ closes exponentially with $n$.  In the region $[s_2,s_3]$, the gap between the ground state and first excited state $\Delta_{10}$ and the gap between the first excited state and second excited state $\Delta_{21}$ are bounded by $n^{-3}$.  Similarly, in the region $[s_4,1]$, the gap between the ground state and first excited state $\Delta_{10}$ is bounded by $n^{-3}$.}
   \label{fig:gluedtrees2}
\end{figure}

\subsection{Adiabatic PageRank algorithm}

We review the adiabatic quantum algorithm from \cite{Garnerone:2012xr} that prepares a state containing the same ranking information as the PageRank vector. The latter is a central tool in data mining and information retrieval, at the heart of the success of the Google search engine \cite{Pagerank}.
Using the adiabatic algorithm, the extraction of the full PageRank vector cannot, in general, be done more efficiently than when using the best classical algorithms known. However, there are particular graph-topologies and specific tasks of relevance in the use of search engines (such as finding just the top-ranked entries) for which the quantum algorithm, combined with other known quantum protocols, may provide a polynomial, or even exponential quantum speedup. 
Note that unlike the previous algorithms we reviewed in this section, which all provided a provable quantum speedup, the current algorithm provides a ``regular" quantum speedup, in the sense that it outperforms all currently known classical algorithms, but better future classical algorithms have not been ruled out.

\subsubsection{Google matrix and PageRank}
\label{sec:Googlematrix}
PageRank can be seen as the stationary distribution 
of a random walker on the web-graph, which spends its time on each page in proportion to the relative importance of that page \cite{citeulike:796239}. 

To model this define the transition matrix $ P_1 $ associated with the (directed) adjacency matrix $ A $ of the graph
\begin{equation}
P_1(i,j)=
\left\{ \begin{array}{ll}
 1/d(i) & \mbox{if $(i,j)$ is an edge of $A$};\\
 0 & \mbox{else},\end{array} \right.
\end{equation}
where $ d(i) $ is the out-degree of the $i$th node. 

The rows having zero matrix elements, corresponding to dangling nodes,
are replaced by the vector $ \vec{e}/n $ whose entries are all $ 1/n $, where $n$ is the number of pages or nodes, i.e., the size of the web-graph.
Call the resulting (right) stochastic matrix $ P_2 $. 
However, there could still be subgraphs with in-links but no out-links. Thus one defines the ``Google matrix" $ G $ as
\begin{equation}
G:=\alpha 
P_2^T + (1-\alpha)E,
\end{equation}
where  $ E \equiv \ket{\vec{v}} \bra{\vec{e}}$.
The ``personalization vector" $ \vec{v} $ is a probability distribution; 
the typical choice is $\vec{v}=\vec{e}/n $.
The parameter $\alpha\in(0,1)$ is the probability that the 
walker follows the link structure of the  web-graph at each step, rather than hop randomly 
between graph nodes according to $\vec{v}$ (Google reportedly uses $ \alpha=0.85 $). 
By construction, $G$ is irreducible and aperiodic,
and hence the Perron-Frobenius theorem \cite{horn2012matrix},\footnote{This theorem states that if all elements of a real symmetric square matrix $A$ are non-negative, then the largest eigenvalue of $A$ is real; furthermore, the components of the corresponding eigenvector can be chosen to be all non-negative.}
ensures the existence of a unique eigenvector with all positive entries
associated to the maximal eigenvalue $1$. This 
eigenvector is precisely the PageRank $ \vp $. Moreover, 
the modulus of the second eigenvalue of $G$ is upper-bounded by $\alpha$ \cite{Nusbaum:03}. 
This is important for the convergence of the power method, the standard computational 
technique employed to evaluate $ \vp $. It uses the fact that for any 
probability vector $ \vp_0 $ 
\begin{equation}
\vec{p}=\lim_{k\rightarrow \infty} G^k \vec{p}_0.
\end{equation}
The power method computes $\vp$ with accuracy  $\nu$ in a time that scales as $O[sn\log(1/\nu)/\log(1/\alpha)]$,
where $s$ is the sparsity of the graph (maximum number of non-zero entries per row of the adjacency matrix). 
The rate of convergence is determined by $\alpha$.

\subsubsection{Hamiltonian and gap}
\label{sec:pagerank-gap}
Consider the following non-local final Hamiltonian associated with
a generic Google matrix $ G $ 
(in this subsection we use $H$ and $h$ for
local and non-local Hamiltonians, respectively):
\begin{equation}
h_1=h(G)\equiv\left(\ident-G \right)^{\dagger} \left( \ident-G \right).
\label{Eq:hp}
\end{equation}
Since $ h(G) $ is positive semi-definite, and $1$ is the maximal
eigenvalue of $ G $ associated with $\vp$, it follows 
that the ground state of $ h(G) $ is given by $ |\pi\rangle \equiv \vp/\norm{\vp}$. 
The initial Hamiltonian has a similar 
form, but it is associated with the Google matrix $ G_c $ of the 
complete graph
\begin{equation}
h_0=h(G_c)\equiv\left(\ident-G_c \right)^{\dagger} \left( \ident-G_c \right).
\label{Eq:hi}
\end{equation}
The ground state of $ h_0$ is the uniform superposition state $ |\psi(0)\rangle = \sum_{j=1}^n|j\rangle/\sqrt{n} $. The basis vectors $|j\rangle $ span the $n$-dimensional Hilbert space of $\log_2 n$ qubits 
The interpolating adiabatic Hamiltonian is
\begin{equation}
h(s)=(1-s)h_0 + s h_1.
\label{Eq:hs}
\end{equation}
Equations~\eqref{Eq:hp}-\eqref{Eq:hs} completely 
characterize the adiabatic quantum PageRank algorithm, apart from the schedule $s(t)$. 

By numerically simulating the dynamics generated by $h(s)$, 
\cite{Garnerone:2012xr} showed that for typical random graph instances generated using the ``preferential attachment model" \cite{BaAl,BoRiSp} and ``copying model" \cite{KlKuRa} (both of which yield sparse random graphs with small-world and scale-free features), 
the typical run time of 
the adiabatic quantum PageRank algorithm scales  as
\begin{equation}
t_f \sim (\log\log n)^{b-1} (\log n)^{b},
\label{Eq:compcomp}
\end{equation}
where $b>0$ is some small integer that depends on the details of the graph parameters. The numerically computed gap scales as $(\log n)^{-b}$, which \cite{Garnerone:2012xr} found to be due to the power law distribution of the out-degree nodes $d(i)$.\footnote{The gap becomes too small for a quantum advantage, i.e., scales as $1/{\rm poly}(n)$, for graphs with only in-degree power-law distribution or when the out-degrees are equal to the in-degrees. This was studied in more detail in \cite{Frees:2013fd}.}

\subsubsection{Speedup}
We next discuss two tasks for which this adiabatic quantum ranking algorithm offers a speedup.

The best currently known classical Markov Chain Monte Carlo {(MCMC)} technique used to evaluate the full PageRank vector 
requires a time (in the bulk synchronous parallel computational model \cite{Valiant:1990}) which scales as $O[\log (n)]$ \cite{Sarma2013}.  The algorithm launches $\log n$ random walks from each node of the graph in parallel (for a total of $n \log (n)$ walkers), with each node communicating $O[\log(n)]$ bits of data to each of its connected neighbors after each step.  After $O[\log (n)]$ steps, the total number of walkers that have visited a node is used to estimate the PageRank of that node.   In the absence of synchronization costs [synchronization and communication are known to be important issues for networks with a large number of processors \cite{Awerbuch1985,Bisseling2004,Rauber2010,Kumar2003}], the classical cost can be taken to be $O[n \log(n)^2]$, i.e., the number of parallel processes multiplied by the duration of each process.\footnote{This analysis improves upon the estimates of the classical cost presented in \cite{Garnerone:2012xr}, and accounts for the critique presented in \cite{Moussa:2013}.}

At the conclusion of the adiabatic evolution generated by the Hamiltonian in Eq.~\eqref{Eq:hs}, the 
PageRank {vector $\vec{p}=\{p_i\}$} is encoded into the {quantum PageRank} state $\ket{\pi}=\sum_{i=1}^n \sqrt{\pi_i}
\ket{i}$ of a $(\log_2 n)$-qubit system, where 
{$ |i\rangle $ denotes the $i$-th node in the graph $G$.}
The probability of measuring node $i$ is $\pi_i = p_i^2/\norm{\vec{p}}^2$. One can estimate ${\pi_i}$ by repeatedly 
sampling the expectation value of the operator $ \sigma^z_i $ in the final state.  
The number of measurements $M$ needed to estimate $\pi_i$ is {given} by the Chernoff-Hoeffding 
bound \cite{citeulike:3392582}, allowing one to approximate {$\pi_i$} with an additive error $e_i$ 
and with $M={\rm poly}(e_i^{-1})$. A nontrivial approximation requires $e_i \leq p_i$ and, these are typically $O(1/n)$.

The fact that the amplitudes of the quantum PageRank state are $\{\sqrt{\pi_i}=p_i/\norm{\vec{p}}\}$, 
rather than $\{\sqrt{p_i}\}$, is a virtue: the number of samples needed to estimate the rank $\pi_i$
with additive error $e_i\sim\pi_i$ scales as $O[n^{2\gamma_i-1}]$, so the total quantum cost is $O[n^{2 \gamma_i -1} \textrm{polylog}(n)]$.\footnote{
It was observed numerically in \cite{Garnerone:2012xr} that ${p_i}
  \propto 1/n^{\gamma_i}$, where $\gamma_i\in (0.6,1]$, and that $\Vert \vec{p}
  \Vert _2^2\propto 1/n$. Let $e_i$ denote the additive error corresponding to
  $\pi_i = p_i^2/\Vert \vec{p} \Vert_2^2 \sim n p_i^2$. It follows from the Chernoff-Hoeffding
  inequality that the number of samples $M(x)$ from the distribution $x$, where
  $x=\pi=\{\pi_i\}$ (output of the quantum algorithm), required for a given, fixed additive estimation error, is proportional to the inverse of the additive error: $M(\pi)\sim 1/e_i$. Assuming $e_i \sim \pi_i$, it
  follows that $M(\pi) \sim 1/\pi_i \sim 1/(n p_i^2) \sim n^{2\gamma_i-1}$. The total cost required to
  prepare the sample in the quantum case is 
  $\Or[\mathrm{polylog}(n)]$.}
 Thus, for the combined task of state preparation and rank estimation, there 
is a polynomial quantum speedup whenever $\gamma_i<1$, namely $O[n^{2\gamma_i-1}\textrm{polylog}(n)]$ \textit{vs}. $O[n \ \textrm{polylog} (n)]$; 
simulations reported in \cite{Garnerone:2012xr} show that this is indeed the case for the top-ranked $\log (n)$ entries, and in applications one is most often interested in the top entries. We emphasize that this holds in the average (not worst) case, and is not a provable speedup; the evidence for the scaling is numerical, and it is unknown whether a classical algorithm for the preparation of $\pi$ rather than $\vec{p}$ may give a similar scaling to the quantum scaling, though if that is the case one could consider quantum preparation of $\{\pi_i^2/\|\pi\|^2\}$, etc.

Another context for useful applications is comparing successive PageRanks, or more generally ``q-sampling" \cite{AharonovTa-Shma}.
Suppose one perturbs the web-graph. 
The adiabatic quantum algorithm 
can provide, in time $O[{\textrm{polylog}}(n)]$, the pre- and post-perturbation
states $\ket{\pi}$ and $\ket{\tilde{\pi}}$ as input to a quantum circuit 
implementing the SWAP-test \cite{PhysRevLett.87.167902}.
To obtain an estimate of the fidelity $|\bra{\pi}\tilde{\pi}\rangle|^2$ one needs to measure an ancilla $O(1)$ times, the number depending only on the desired precision.  In contrast, deciding whether two probability distributions are close classically requires $O[n^{2/3} \log n]$ samples from each \cite{892113}. 
Whenever some relevant perturbation 
of the previous 
quantum PageRank state is observed, one can decide to run 
the classical algorithm again to update the 
classical {PageR\nolinebreak[4]ank}. 


\section{Universality of AQC}
\label{sec:universality}

What is the relation between the computational power of the circuit model and the adiabatic model of quantum computing? It turns out that they are equivalent, up to polynomial overhead. It is well known that the circuit model is universal for quantum computing, i.e., that there exist sets of gates acting on a constant number of qubits each that can efficiently simulate a quantum Turing machine \cite{Deutsch:85,Yao:93}. A set of gates is said to be universal for QC if any unitary operation may be approximated to arbitrary accuracy by a quantum circuit involving only those gates \cite{nielsen2000quantum}. The analog of such a set of gates in AQC is a Hamiltonian. An operational definition of universal AQC is thus to efficiently map any circuit to an adiabatic computation using a sufficiently powerful Hamiltonian. Formally:
\begin{definition}[Universal Adiabatic Quantum Computation]
\label{def:UAQC}
A time-dependent Hamiltonian $H(t)$, $t\in[0,t_f]$, is universal for AQC if, given an arbitrary quantum circuit $U$ operating on an arbitrary initial state $\ket{\psi}$ of $n$ $p$-state particles and having depth $L$, the ground state of $H(t_f)$ is equal to $U\ket{\psi}$ with probability greater than $\epsilon>0$, the number of particles $H(t)$ operates on is $\text{poly}(n)$ $\forall t$, and $t_f=\text{poly}(n,L)$.
\end{definition}
The stipulation that the ground state of $H(t_f)$ is equal to the final state at the end of the circuit ensures that the circuit and the adiabatic computation have the same output. We note that it is possible and useful to relax the ground state requirement and replace it with another eigenstate of $H(t)$ (see, e.g., Sec.~\ref{sec:stoq-QMA-comp}). 
The requirement that the number of particles and time taken by the adiabatic computation are polynomial in $n$ and $L$ ensures that the resources used do not blow up. 
 
We begin, in Sec.~\ref{sec:circuit-to-AQC}, by showing that the circuit model can efficiently simulate AQC. The real challenge is to show the other direction, i.e., that AQC can efficiently simulate the circuit model, which is what we devote the rest of this section to. Along the way, this establishes the universality of AQC.
We present several proofs, starting in Sec.~\ref{sec:history-state} with a detailed review of the history state construction of  \cite{aharonov_adiabatic_2007}, who showed in addition that six-state particles in two dimensions suffice for universal adiabatic quantum
computation. This was improved in \cite{KempeGadget}, using perturbation-theory gadgets, who showed that qubits can be used instead of six-state
particles, and that adiabatic evolution with $2$-local Hamiltonians is quantum universal. A $2$-local model of universal AQC in 2D, which we review in Sec.~\ref{sec:GSE}, was proposed in \cite{MLM:06}, using fermions. Universal AQC using qubits on a two-dimensional grid was accomplished in \cite{OliveiraGadget}. Further simplifications of universal AQC in 2D were presented in \cite{Breuckmann:2014,Gosset:2014rp,Lloyd:2016}, using the space-time circuit model, which we review  in Sec.~\ref{sec:space-time}. The ultimate reduction in spatial dimensionality was accomplished in \cite{Aharonov:2009fi}, who showed that universal AQC is possible with 1D $9$-state particles, as we review in Sec.~\ref{sec:UAQC-1D}. Finally, in Sec.~\ref{sec:gap-amp} we review a construction that allows one to quadratically amplify the gap of any Hamiltonian used in AQC (satisfying a frustration-freeness property), though this requires the computation to take place in an excited state.

\subsection{The circuit model can efficiently simulate AQC}
\label{sec:circuit-to-AQC}

That the circuit model can efficiently simulate the adiabatic model is relatively straightforward and was first shown in \cite{farhi_quantum_2000}.  Assume for simplicity a linear schedule, i.e., an AQC Hamiltonian of the form $H(t) = (1 - \frac{t}{t_f}) H_0 + \frac{t}{t_f} H_1$. The evolution of a quantum system generated by the time-dependent Hamiltonian $H(t)$ is governed by the unitary operator:
\beq
U(t_f,0) = \Texp \left[- i \int_0^{t_f} dt H(t)\right]\ .
\eeq
If $t_f$ satisfies the condition for adiabaticity, $U(t_f,0)$ will map the ground state at $t = 0$ to the ground state at $t_f$. Therefore it suffices to show that the circuit model can simulate $U(t_f,0)$. To do so, we approximate the evolution by a product of unitaries involving time-independent Hamiltonians $H_m' \equiv H( m \Delta t)$:
\beq \label{eqt:approx1}
U(t_f,0) \mapsto U'(t_f,0 ) =  \prod_{m=1}^M U_m' = \prod_{m=1}^M e^{-i \Delta t H_m'}\ ,
\eeq
where $\Delta t = t_f / M$.  The error incurred by this approximation is  \cite{Dam:2001fk}:
\beq
\Vert U(t_f,0) - U'(t_f,0) \Vert \in O \left(\sqrt{t_f \text{poly}(n) / M} \right)\ ,
\label{eq:subdomerr}
\eeq
where we used
\begin{align}
\Vert H(t) - H'_{\lceil m t /t_f\rceil} \Vert &\leq \frac{1}{M} \Vert H_1 - H_0 \Vert \\ \notag
&\in O(\text{poly}(n) / M)\ .
\end{align}
We now wish to approximate each individual term in the product in Eq.~\eqref{eqt:approx1} using the Baker-Campbell-Hausdorff formula \cite{Klarsfeld:89} by:
\beq \label{eqt:CBHDecomp}
U'_m \mapsto U_m'' = e^{-i \Delta t \left( 1 - \frac{m \Delta t}{t_f} \right) H_0} e^{- i \Delta t \frac{m \Delta t}{t_f} H_1} \ ,
\eeq
which incurs an error $ \Vert e^{A + B} - e^A e^B \Vert \in O \left( \Vert A B \Vert \right)$ due to the neglected leading order commutator term $[A,B]/2$, i.e., 
\beq
\Vert U_m' - U_m'' \Vert \in O \left( \frac{t_f^2}{M^2}  \Vert H_0 H_1 \Vert \right)\ .
\eeq
Therefore, accounting for the $M$ terms in the product and observing that the error in Eq.~\eqref{eq:subdomerr} is subdominant, the total error is \cite{Dam:2001fk}:
\beq
\Vert U(t_f,0) - \prod_{m=1}^M U_m'' \Vert \in O \left(\text{poly}(n) t_f^2 / M \right)\ .
\eeq
This means we can approximate $U(t_f,0)$ with a product of $2M$ unitaries provided that $M$ scales as $t_f^2 \text{poly}(n)$.

Depending on the form of $H_0$ and $H_1$, they may need further decomposition in order to write the  terms in Eq.~\eqref{eqt:CBHDecomp} in terms of few-qubit unitaries.  E.g., for the standard initial Hamiltonian $H_0 = -\sum_i \sigma^x_i$, which is a sum of commuting single qubit operators, we can write $e^{-i \Delta t (1 - {t}/{t_f} ) H_0 /K}$ as a product of $n$ one-qubit unitaries. Likewise, assuming that $H_1$ is $2$-local, we can write $e^{-i \frac{m \Delta t}{t_f} \Delta t  H_1/M}$ as a product of up to $n^2$ two-qubit unitaries within the same order of approximation as Eq.~\eqref{eqt:CBHDecomp}.
Thus, $U(t_f, 0)$ can be approximated as a product of unitary operators each of which acts on a few qubits. 
The scaling of $t_f$ required for adiabatic evolution is inherited by the number of few-qubit unitary operators in the associated circuit version of the algorithm.

A more efficient method was proposed in \cite{Boixo:2009aa}, building upon the ideas explained in Sec.~\ref{sec:ATlowerbound}. This ``eigenpath traversal by phase randomization" method applies the Hamiltonian $H(t_j)$ in piecewise continuous manner at random times $t_j$. Each interval $[t_j,t_{j+1}]$ corresponds to a unitary $e^{-i H(t_j)}$, which then needs to be decomposed into one- and two-qubit gates, as above. The randomization introduces an effective eigenstate decoupling in the Hamiltonian eigenbasis (similarly to the effect achieved by projections in the Zeno effect), so that if the initial state is the ground state, the evolution will follow the ground state throughout as required for AQC. The algorithmic cost of this randomization method is defined as the average 
number of times the unitaries are applied, and it can be shown that the cost is $O[L^2/(\varepsilon\Delta)]$, where $\varepsilon$ is the desired maximum error of the final state compared
to the target eigenstate, and $L$ is the path length [Eq.~\eqref{eq:L}]. Since $L\leq\max_s\|\dot{H}(s)\|/\Delta$ as we saw in Sec.~\ref{sec:ATlowerbound}, the worst-case bound on the cost is $\max_s\|\dot{H}(s)\|^2/(\varepsilon\Delta^3)$, up to logarithmic factors.

\subsection{AQC can efficiently simulate the circuit model: history state proof}
\label{sec:history-state}
The goal is, given an arbitrary $n$-qubit quantum circuit, to design an adiabatic computation whose final ground state is the output of the quantum circuit described by a sequence of $L$ one or two-qubit unitary gates, $U_1, U_2, \dots U_L$.  This adiabatic simulation of the circuit should be efficient, i.e., it may incur at most polynomial overhead in the circuit depth $L$. 
In this subsection we review the proof presented in \cite{aharonov_adiabatic_2007}. This was the first complete proof of the universality of AQC, and many of the ideas and techniques introduced therein inspired subsequent proofs, remaining relevant today. 

Let us assume that the $n$-qubit input to the circuit is the $\ket{0\cdots 0}$ state.  After the $\ell$-th gate, the state of the quantum circuit is given by $\ket{\alpha(\ell)}$.  
To proceed, we use the ``circuit-to-Hamiltonian'' construction~\cite{Kitaev:book}, where the final Hamiltonian will have as its ground state the entire history of the quantum computation.  This ``history state" is given by:
\bes
\label{eqt:HistoryState}
\begin{align} 
\ket{\eta}  &= \frac{1}{\sqrt{L+1}}  \sum_{\ell=0}^L \ket{\gamma(\ell)} \\
\ket{\gamma(\ell)}  &\equiv \ket{\alpha(\ell)} \otimes \ket{1^\ell 0^{L-\ell}}_\mathrm{c}  
\end{align}
\ees
where $\ket{1^\ell 0^{L-\ell}}_\mathrm{c}$ denotes the ``Feynman clock" \cite{Feynman:1985ul} register composed of $L+1$ qubits.  The notation means that we have $\ell$ ones followed by $L-\ell$ zeros to denote the time after the $\ell$-th gate.  We wish to construct a Hamiltonian $H_{\mathrm{init}}$ with ground state $\ket{\gamma(0)}$ and a Hamiltonian $H_{\mathrm{final}}$ with ground state $\ket{\eta}$.  Let:
\bes
\begin{align}
H_{\mathrm{init}} & = H_{\mathrm{c-init}} + H_{\mathrm{input}}  + H_{\mathrm{c}} \\
H_{\mathrm{final}} & = \frac{1}{2}H_{\mathrm{circuit}}  + H_{\mathrm{input}}  + H_{\mathrm{c}} \\
H_{\mathrm{circuit}} &= \sum_{\ell=1}^L H_{\ell} \ .
\end{align}
\ees
The full time independent Hamiltonian $H(s)$ is given by~\cite{aharonov_adiabatic_2007}:
\begin{eqnarray}
H(s) &=& (1-s) H_{\mathrm{init}} + s H_{\mathrm{final}}    \\ 
&=& H_{\mathrm{input}}  + H_{\mathrm{c}}+ (1-s) H_{\mathrm{c-init}}  +  \frac{s}{2} H_{\mathrm{circuit}} \nonumber \ .
\end{eqnarray}
The various terms are chosen so that the ground state always has energy $0$:
\begin{itemize}
\item $H_{\mathrm{c}}$: This term should ensure that the clock's state is always of the form $\ket{1^{\ell} 0^{L-\ell}}_{\mathrm{c}}$.  Therefore, we energetically penalize any clock-basis state that has the sequence $01$:
\beq \label{eqt:Hc}
H_{\mathrm{c}} = \sum_{\ell = 1}^{L-1} \ket{0_{\ell} 1_{\ell + 1}}_{\mathrm{c}} \bra{0_{\ell} 1_{\ell + 1}}
\eeq
where $ \ket{0_{\ell} 1_{\ell + 1}}_{\mathrm{c}}$ denotes a $0$ on the $\ell$-th clock qubit and $1$ on the $(\ell+1)$-th clock qubit.  Any illegal clock state will have an energy $\geq 1$.  Any legal clock state will have energy $0$.
\item $H_{\mathrm{c-init}}$: Ensures that the initial clock state is $\ket{0^L}_{\mathrm{c}}$.
\beq \label{eqt:Hcinit}
H_{\mathrm{c-init}} = \ket{1_1}_\mathrm{c} \bra{1_1}
\eeq
Note that we only need to specify the first clock qubit to be in the zero state.  For a legal clock state, Eqs.~\eqref{eqt:Hc} and \eqref{eqt:Hcinit} imply that the rest are in the zero state as well.
\item $H_{\mathrm{input}}$: Ensures that if the clock state is $\ket{0^L}$, then the computation qubits are in the $\ket{0^n}$ state.
\beq
H_{\mathrm{input}} = \sum_{i=1}^n \ket{1_i}\bra{1_i} \otimes \ket{0_1}_\mathrm{c} \bra{0_1}
\eeq
\item $H_{\ell}$: Ensures that the propagation from $\ell -1$ to $\ell$ corresponds to the application of $U_\ell$.  
\bes
\begin{align}
H_1 & = \ident \otimes \ket{0_1 0_2}_{\mathrm{c}} \bra{0_1 0_2} - U_1 \ket{1_1 0_2}_{\mathrm{c}} \bra{0_1 0_2} \nonumber \\
&- U_1^{\dagger} \ket{0_1 0_2}_{\mathrm{c}} \bra{1_1 0_2} + \ident \otimes \ket{1_1 0_2}_{\mathrm{c}} \bra{1_1 0_2} \\
H_{2 \leq \ell\leq L-1} & = \ident \otimes \ket{1_{\ell-1} 0_\ell 0_{\ell+1} }_{\mathrm{c}} \bra{1_{\ell-1} 0_\ell 0_{\ell+1} } \nonumber \\
& - U_\ell \ket{1_{\ell-1} 1_\ell 0_{\ell+1} }_{\mathrm{c}} \bra{1_{\ell-1} 0_\ell 0_{\ell+1} } \nonumber \\
& - U_\ell^{\dagger} \ket{1_{\ell-1} 0_\ell 0_{\ell+1} }_{\mathrm{c}} \bra{1_{\ell-1} 1_\ell 0_{\ell+1}} \nonumber \\
& + \ident \otimes \ket{1_{\ell-1} 1_\ell 0_{\ell+1}}_{\mathrm{c}} \bra{1_{\ell-1} 1_\ell 0_{\ell+1}} \\
H_L & = \ident \otimes \ket{1_{L-1} 0_L}_{\mathrm{c}} \bra{1_{L-1} 0_{L}} \nonumber \\
&- U_L \ket{1_{L-1} 1_{L}}_{\mathrm{c}} \bra{1_{L-1} 0_L} \nonumber \\
& - U_1^{\dagger} \ket{1_{L-1} 0_L} \bra{1_{L-1} 1_L} \nonumber \\
& + \ident \otimes \ket{1_{L-1} 1_L}_{\mathrm{c}} \bra{1_{L-1} 1_L} 
\end{align}
\ees 
Note that the first and last terms leave the state unchanged.  The second term propagates the computational state and clock register forward, while the third term propagates the computational state and clock register backward. 
\end{itemize}

It turns out that the state $\ket{\gamma(0)} = \ket{\alpha(0)} \otimes \ket{0^L}$ is the ground state of $H_{\mathrm{init}}$ with eigenvalue $0$, and $\ket{\eta}$ is the ground state of $H_{\mathrm{final}}$ with eigenvalue $0$.
Let $\mS_0$ be the subspace spanned by $\left\{ \ket{\gamma(\ell)} \right\}_{\ell=0}^L$.  The state $\ket{\alpha(0)}$ is the input to the circuit, so it can be taken to be the $\ket{0\cdots 0}$ state, i.e., the initial ground state is an easily prepared state.  Since the initial state $\ket{\gamma(0)} \in \mS_0$, the dynamics generated by $H(s)$ keep the state in $\mS_0$. It turns out that the ground state is unique for $s \in [0,1]$. By mapping the Hamiltonian within $\mS_0$ to a stochastic matrix, it is possible to find a polynomial lower bound on the gap from the ground state within $\mS_0$:
\beq
 \Delta(H_{\mS_0}) \geq \frac{1}{4} \left( \frac{1}{6L} \right)^2\ .
 \label{eq:gapHS0}
\eeq

It is also possible to bound the global gap (i.e., not restricted to the $\mathcal{S}_0$ subspace) as
\beq
\Delta(H) \geq \Omega(1/L^3)\ .
 \label{eq:gapHL3}
\eeq

A measurement of the final state will find the final outcome of the quantum circuit $\ket{\gamma(L)}$ with probability $\frac{1}{L+1}$.  This can be amplified by inserting identity operators at the end of the circuit, hence causing the history state to include a greater superposition of the final outcome of the circuit.  Together, these results show that there is an efficient 
implementation of any given quantum circuit using the adiabatic algorithm with $H(s)$. Here and elsewhere ``efficient" means up to polynomial overhead, i.e., where $t_f$ scales as a polynomial in $L$.

The proof techniques used to obtain these results are instructive and of independent interest, so we review additional technical details in Appendix~\ref{app:AQC-universality-proof}.

To conclude this section, we briefly mention additional results supporting the equivalence of the circuit and adiabatic approach, in terms of state preparation.  In \cite{AharonovTa-Shma} it was shown (Theorem 2) that any quantum state that can be efficiently generated in the circuit model can also be efficiently generated by an adiabatic approach, and vice versa, for the same initial state.  The proof relies on two important lemmas, the ``sparse Hamiltonian lemma" and the ``jagged adiabatic path" lemma.  The former gives conditions under which a Hamiltonian is efficiently simulatable in the circuit model, and the latter provides conditions under which a sequence of Hamiltonians, defining a path, can have a non-negligible spectral gap. 

\subsection{Fermionic ground state quantum computation}
\label{sec:GSE}

A model of ground state quantum computation (GSQC) using fermions was independently proposed in \cite{Mizel:01} [see also \cite{Mizel:02,Mizel:04} and \cite{Mao:05,Mao:05a}] around the same time as AQC.
In GSQC, one executes a quantum circuit by producing a ground state that spatially encodes the entire temporal
trajectory of the circuit, from input to output. It was shown in \cite{MLM:06} how to adiabatically reach the desired ground state, thus providing an alternative to history state type constructions for universal AQC. One of the differences between the GSQC and history states constructions is that instead of relying on Feynman ``global clock particle'' idea, particles are synchronized \emph{locally} (via CNOT gates), an idea that traces back to \cite{Margolus:90} and was later adopted in some of the space-time circuit-to-Hamiltonian constructions \cite{Breuckmann:2014}.

Consider a quantum circuit with $n$ qubits and depth $L$. We associate $2(L+1)$ fermionic modes with every qubit $q$, via creation operators $a_{q,\ell}^\dagger$ and $b_{q,\ell}^\dagger$, where $\ell=0,\dots,L$. One can view these $2n(L+1)$ modes as the state-space of $n$ spin-$1/2$ fermions, where each
fermion can be localized at sites on a 1D (time)-line of length $L+1$. To illustrate this with a concrete physical system, imagine a two-dimensional array of quantum dots with $L+1$ columns and two rows per qubit, corresponding to the $\ket{0}$ and $\ket{1}$ basis states of that qubit. A total of $n$ electrons are placed in the array. The state of each qubit determines the spin state of the corresponding electron, which in term determines which of the two rows it is in, while the clock of each qubit is represented
by which column the electron is in.

It is convenient to group creation operators into row
vectors $C^\dagger_{q,\ell} = (a_{q,\ell}^\dagger \ b_{q,\ell}^\dagger)$. Then for each single-qubit gate $U^{(1)}_{q,\ell}$ we introduce a term 
\beq
H^{(1)}_{q,\ell}(s) = \left(C^\dagger_{q,\ell} - s C^\dagger_{q,\ell-1} (U^{(1)}_{q,\ell})^\dagger\right)\left(C_{q,\ell} - s C_{q,\ell-1} U^{(1)}_{q,\ell}\right)
\eeq
into the circuit Hamiltonian $H_{\mathrm{circuit}}$. The off-diagonal terms represent hopping or tunneling of the $q$-th electron from site $\ell-1$ to $\ell$ (and v.v.), while $U^{(1)}_{q,\ell}$ acts on the electron's spin. The diagonal terms $C^\dagger_{q,\ell}C_{q,\ell}$ and $C^\dagger_{q,\ell-1}C_{q,\ell-1}$ ensure that $H^{(1)}_{q,\ell}\geq 0$. The parameter $s\in[0,1]$ controls the interpolation from the initial, simple to prepare ground state at $s=0$ when there is no tunneling and every electron is frozen in place, to the full realization of all the gates $U^{(1)}_{q,\ell}$ when $s=1$. 

One can similarly define CNOT Hamiltonian terms between electrons or fermions, whose form can be found in \cite{MLM:06} [see also \cite{Breuckmann:2014}]. These $2$-local terms can be understood as a sum of an identity and NOT term. For such two-qubit gates, the fermions corresponding to the control and target qubits both tunnel forward or backward and the internal spin-state of the target fermion changes depending on the internal state of the control fermion. An important additional ingredient is the addition of a penalty term that imposes an energy penalty on states in which one qubit has gone through the CNOT gate without the other. Instead of the Feynman clock used in the history state construction, there are many local clocks, one per qubit. The synchronization mechanism takes place via the CNOT Hamiltonian. Moreover, the entire construction naturally involves only $2$-local interactions between fermions in 2D. 

While the fermionic GSQC model proposed in \cite{MLM:06} was shown there to be universal for AQC, its gap analysis was incomplete.\footnote{On p.4 of \cite{MLM:06} it was claimed 
that ``$\bra{Z}H\ket{Z} \geq \mathcal{E} O(1/N^2)$" ($N$ is $L$ in our notation), implying a lower bound on the spectral gap of the total Hamiltonian $H$. This claim was based on \cite{Mizel:02}, but was in fact not proven there. Here $\mathcal{E}$ is the energy scale of the CNOT terms, the total Hamiltonian is $H=H_{0}+H_{1}$, where $H_{0}$ contains all of the single qubit terms and $H_{1}$ includes all of the \textsc{CNOT} terms, and $\ket{Z}$ denotes the known ground state of $H_{0}$. As pointed out in \cite{Breuckmann:2014}, the missing step is essentially to exclude zero-energy, invalid time-configurations.}
This was fixed in \cite{Childs:2013ef}, which proved  the ``Nullspace Projection Lemma" that was implicitly assumed in \cite{MLM:06}. This lemma is interesting in its own right, so we reproduce it here:
\begin{lemma}[Nullspace Projection Lemma \cite{Childs:2013ef}]
\label{lem:nullspace-proj}
Let $\Delta(A)$ denote the smallest nonzero eigenvalue of the positive semidefinite operator $A$. Let $H_0$ and $H_1$ be positive semidefinite and assume the nullspace $S$ of $H_0$ is non-empty. Also assume that $\Delta(\left.H_1\right|_S) \geq c >0$ and $\Delta(H_0)\geq d>0$. Then
\beq
\Delta(H_0+H_1) \geq \frac{cd}{c+d+\norm{H_1}}\ .
\eeq
\end{lemma}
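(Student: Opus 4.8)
The plan is to reduce the lemma to a variational estimate and then split the relevant quadratic form along the nullspace of $H_0$. First I would record the structure of the kernels: since $H_0,H_1\succeq 0$, the sum $H_0+H_1$ is positive semidefinite with $\ker(H_0+H_1)=\ker H_0\cap\ker H_1$, and because $\ket{\psi}\in S$ forces $\langle\psi|H_1|\psi\rangle=0\Leftrightarrow H_1\ket{\psi}=0$, this common kernel equals $N:=\ker(H_1|_S)\subseteq S$. Hence proving $\Delta(H_0+H_1)\ge\lambda$ is equivalent to proving $\langle\psi|(H_0+H_1)|\psi\rangle\ge\lambda$ for every unit vector $\ket{\psi}$ orthogonal to $N$, where $\lambda=cd/(c+d+\norm{H_1})$.

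Next I would introduce the projectors $P$ onto $S$ and $Q=\ident-P$, and for a unit $\ket{\psi}\perp N$ set $a=\norm{P\psi}^2$ and $b=\norm{Q\psi}^2=1-a$. The $H_0$ contribution follows from the operator inequality $H_0\succeq dQ$, which is immediate from $\ker H_0=S$ and $\Delta(H_0)\ge d$, so $\langle\psi|H_0|\psi\rangle\ge d\,b$. For $H_1$, note that $P\ket{\psi}\in S$ is still orthogonal to $N=\ker(H_1|_S)$, so $\Delta(H_1|_S)\ge c$ gives $\langle P\psi|H_1|P\psi\rangle\ge c\,a$, while $\langle Q\psi|H_1|Q\psi\rangle\le\norm{H_1}\,b$. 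Writing $u=H_1^{1/2}P\ket{\psi}$ and $v=H_1^{1/2}Q\ket{\psi}$, so that $\langle\psi|H_1|\psi\rangle=\norm{u+v}^2$, and applying the elementary bound $\norm{u+v}^2\ge(1+\delta)^{-1}\norm{u}^2-\delta^{-1}\norm{v}^2$ (valid for every $\delta>0$; expand $\norm{u}^2=\norm{(u+v)-v}^2$ and use $2\,\mathrm{Re}\langle u+v,v\rangle\le\delta\norm{u+v}^2+\delta^{-1}\norm{v}^2$), I obtain $\langle\psi|H_1|\psi\rangle\ge \frac{c\,a}{1+\delta}-\frac{\norm{H_1}\,b}{\delta}$.

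Adding the two estimates gives
\begin{equation}
\langle\psi|(H_0+H_1)|\psi\rangle\ \ge\ \frac{c}{1+\delta}\,a\ +\ \Bigl(d-\frac{\norm{H_1}}{\delta}\Bigr)\,b ,
\end{equation}
which, because $a+b=1$ and $a,b\ge0$, is at least $\min\{\tfrac{c}{1+\delta},\,d-\tfrac{\norm{H_1}}{\delta}\}$. It then remains to pick $\delta$ maximizing this minimum. The choice $\delta=(c-\lambda)/\lambda$ (positive, since $\lambda<c$) makes the first term exactly $\lambda$, and the requirement $d-\norm{H_1}/\delta\ge\lambda$ reduces, after substituting $\lambda(c+d+\norm{H_1})=cd$, to $\lambda^2\ge0$. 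This yields $\langle\psi|(H_0+H_1)|\psi\rangle\ge\lambda$ and finishes the argument.

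The one genuine obstacle is the off-diagonal part of $H_1$ relative to the splitting $\mathcal H=S\oplus S^\perp$: $\langle\psi|H_1|\psi\rangle$ is \emph{not} simply $\langle P\psi|H_1|P\psi\rangle+\langle Q\psi|H_1|Q\psi\rangle$, and it is exactly this cross term that introduces $\norm{H_1}$ into the bound and necessitates the parameter $\delta$ together with its optimization. Apart from that, I would double-check the degenerate cases where $N=\{0\}$ (so ``$\perp N$'' is vacuous) or $S=\mathcal H$ (so the hypothesis $\Delta(H_0)\ge d$ is empty), to make sure the stated bound is interpreted correctly in those limits.
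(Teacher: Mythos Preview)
Your argument is correct. The identification $\ker(H_0+H_1)=N:=\ker(H_1|_S)$ is justified exactly as you say, the splitting $\langle\psi|H_0|\psi\rangle\ge d\,b$ and $\langle P\psi|H_1|P\psi\rangle\ge c\,a$ (using $P\psi\perp N$) are fine, and the inequality $\norm{u+v}^2\ge(1+\delta)^{-1}\norm{u}^2-\delta^{-1}\norm{v}^2$ is a clean way to absorb the cross term. The final algebra with $\delta=(c-\lambda)/\lambda$ reducing the second constraint to $\lambda^2\ge0$ is correct and rather elegant. Your caveats about the degenerate cases are also to the point: if $N=\{0\}$ the minimization is over all unit vectors, and if $S=\mathcal H$ then $b\equiv0$ and the bound collapses to $c\ge\lambda$, which holds.

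Note, however, that the paper does not actually prove this lemma; it only states it and attributes the proof to \cite{Childs:2013ef}. The original Childs--Gosset--Webb argument is essentially the same variational estimate you give: decompose the minimizer along $S\oplus S^\perp$, lower bound the $H_0$ piece by $d\,b$, and control the cross term in $\langle\psi|H_1|\psi\rangle$ via a Cauchy--Schwarz/Young inequality before optimizing. Your parametrization through $\delta$ with $u=H_1^{1/2}P\psi$, $v=H_1^{1/2}Q\psi$ is a tidy repackaging of that step, but the two routes coincide in substance.
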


As shown in \cite{Breuckmann:2014}, the fermionic GSQC model can be unitarily mapped onto the space-time circuit-to-Hamiltonian model for qubits in 2D, where the gap analysis is more convenient. 
Using the same mapping, \cite{Breuckmann:2014} also showed that the fermionic model of \cite{MLM:06} is in fact QMA-complete. We thus proceed to discuss the space-time model next.

\subsection{Space-time Circuit-to-Hamiltonian Construction}
\label{sec:space-time}

Here we briefly review another construction that realizes universal adiabatic quantum computation \cite{Gosset:2014rp,Lloyd:2016}. This builds on the so-called space-time circuit-to-Hamiltonian construction \cite{Breuckmann:2014}, which in turn is based on the Hamiltonian computation construction of \cite{Janzing:2007}.  
We consider the $2n$-qubit quantum circuit with $n^2$ two-qubit gates, arranged as shown in Fig.~\ref{fig:QCJanzing}.  This form is sufficient for universal quantum computation \cite{Janzing:2007}.  An equivalent representation of the circuit is given in Fig.~\ref{fig:QCJanzing2}, where the $n^2$ gates are arranged in a rotated $n \times n$ grid.  Each plaquette $p$ is associated with a gate $U_p$, of which the majority are identity gates.  Only a $k \times k$ subgrid of the $n\times n$ grid with $k = \sqrt{n}/16$ has non-identity gates, with the subgrid located as shown in Fig.~\ref{fig:QCJanzing3}.  This region is referred to as the interaction region.
\begin{figure}[htbp] 
   \centering
   \subfigure[]{\hspace*{0cm} \includegraphics[width=0.25\columnwidth,trim=0 -10cm 0cm 0]{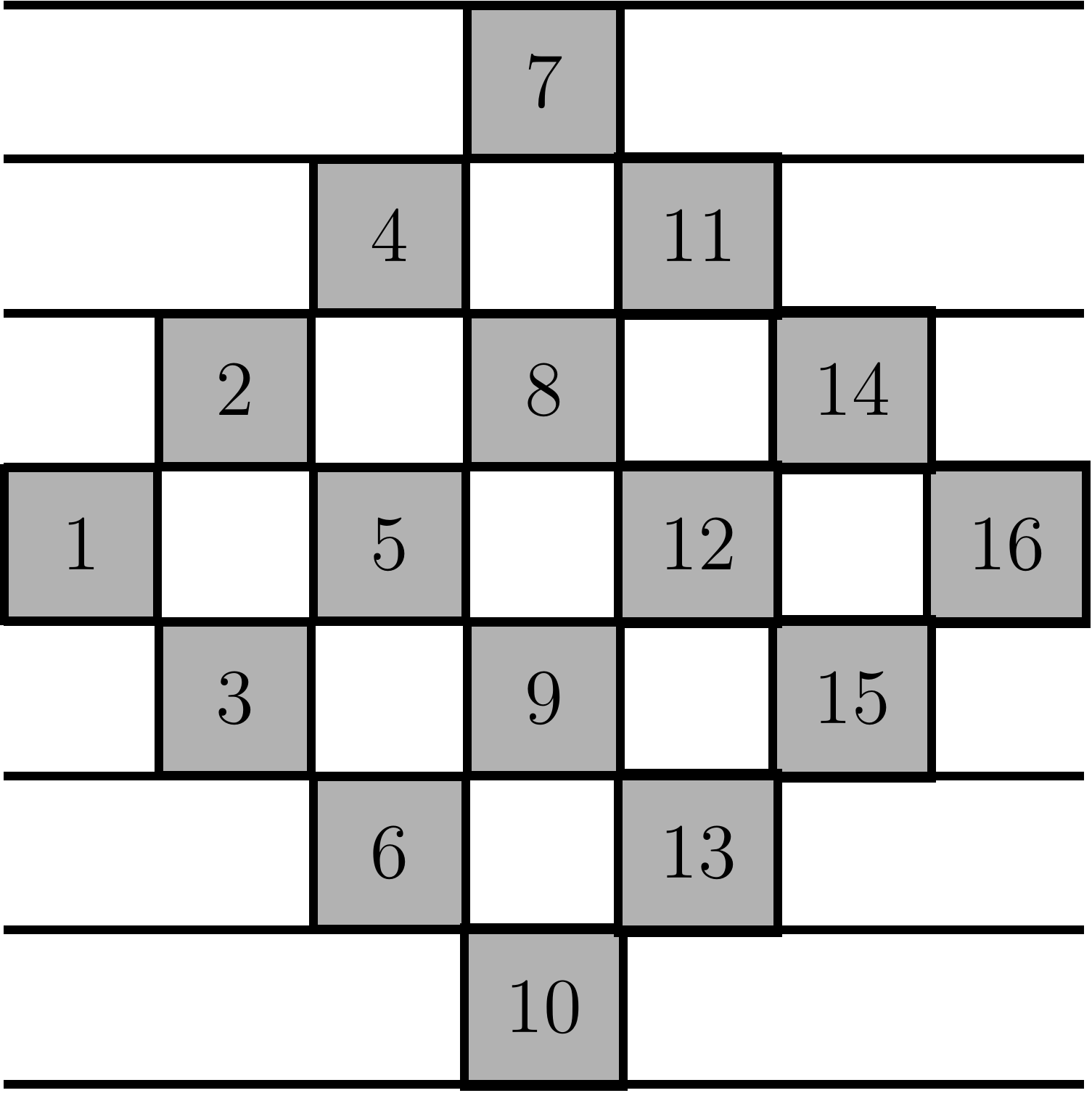} \label{fig:QCJanzing} }  
   \subfigure[]{\hspace*{0cm}\includegraphics[width=0.35\columnwidth]{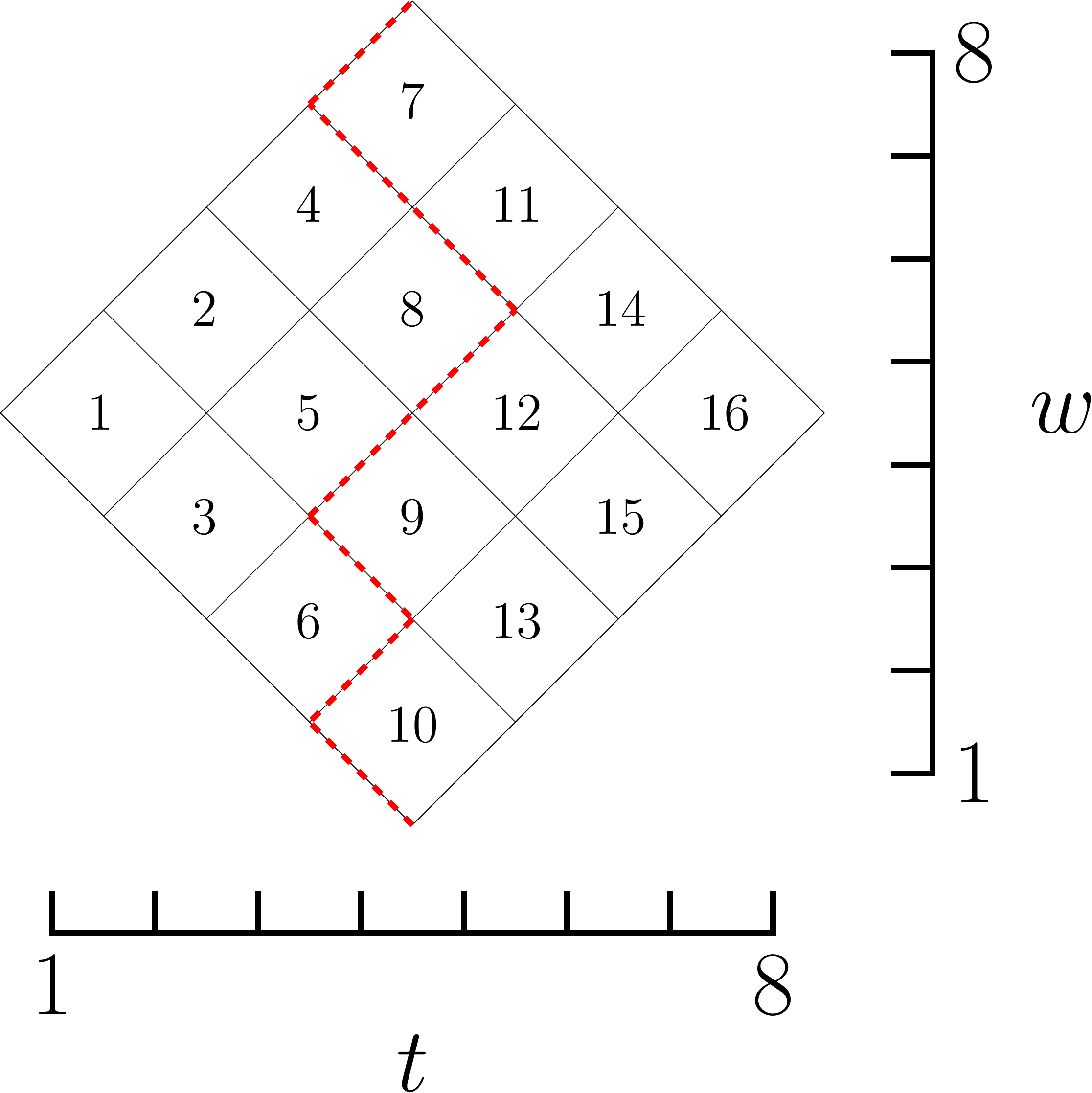} \label{fig:QCJanzing2} } 
   \subfigure[]{\hspace*{0cm}\includegraphics[width=0.29\columnwidth,trim=0 -5cm 0cm 0]{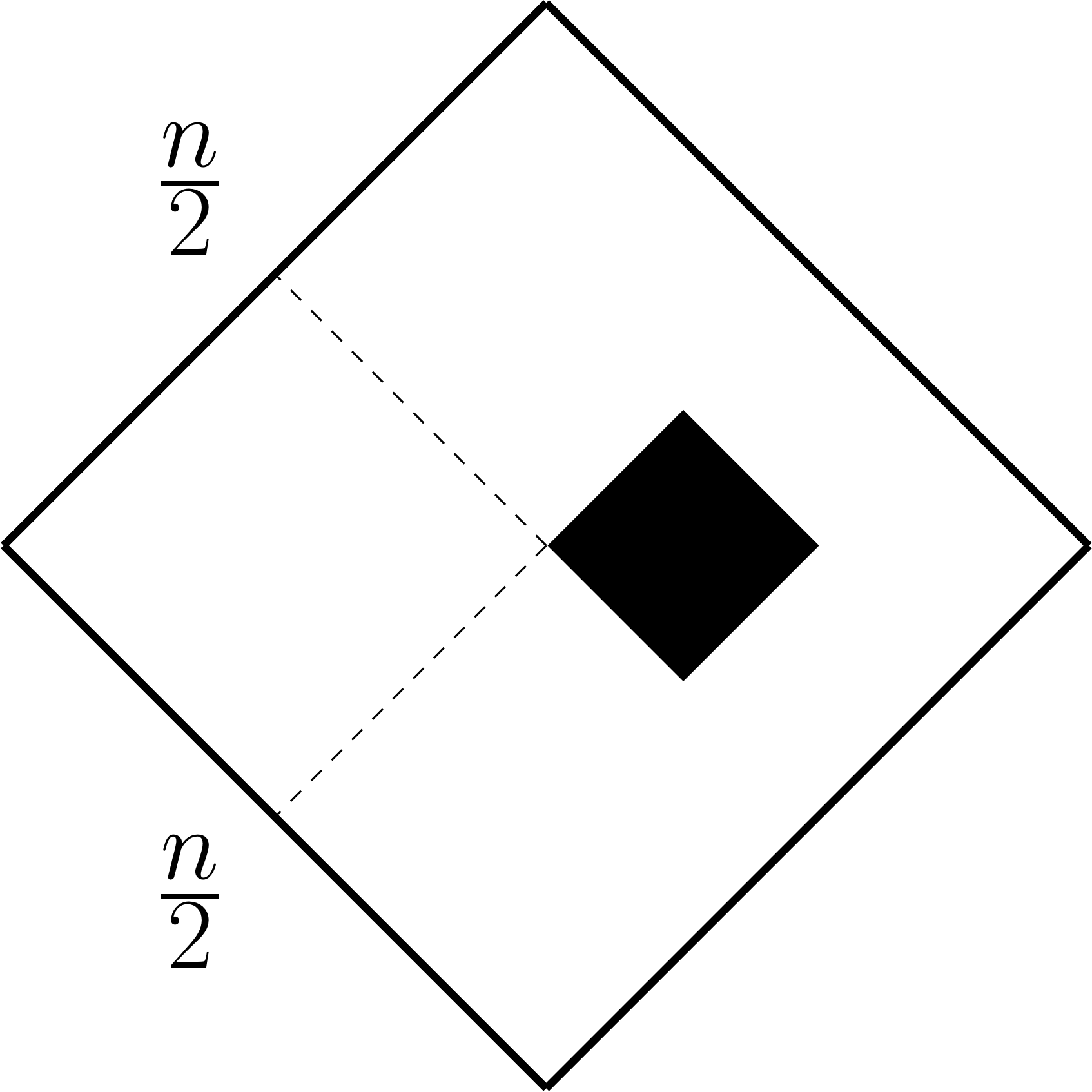} \label{fig:QCJanzing3} }
      \caption{(a) A $2n = 8$ qubit quantum circuit, where each grey square ($n^2 = 16$ in total) corresponds to a 2-qubit gate. (b) An equivalent representation of the quantum circuit in (a) in terms of a rotated grid.  The red dashed line corresponds to an allowed string configuration for the particles.  (c) The circuit is constrained such that the majority of the gates are identity except in a $k \times k$ subgrid (shown in black), located such that its left vertex is at the center of the rotated grid. A successful computation requires the $t$ position of the $2k$ particles with $w$ positions that cross the interaction region, to lie to the right of the interaction region. See also Fig.~1 in \cite{Gosset:2014rp}.}
   \label{fig:example}
\end{figure}

The circuit is mapped to a Hamiltonian $H(\lambda)$, with $\lambda \in [0,1]$.  The Hamiltonian describes the evolution of particles that live on the edges of the rotated $n \times n$ grid.  The positions of the particles are given in terms of the coordinates $(t,w) \in \left\{1, \dots, 2 n \right\}^2$ as shown in Fig.~\ref{fig:QCJanzing2}.  Each particle has two internal degrees of freedom in order to encode the qubits of the circuit.  Let $a_{t,s}[w]$ denote the annihilation operator which annihilates a particle with internal state $s \in \left\{0,1 \right\}$ on the edge $(t,w)$.  The number operator is defined as $n_{t,s}[w] = a^{\dagger}_{t,s}[w] a_{t,s}[w]$, which counts the number of particles (which will be either 0 or 1) at position $(t,w)$ with state $s$.  Let $\mathbf{n}_t [w] = n_{t,0}[w] + n_{t,1}[w]$.

We focus on configurations of particles that form connected segments starting at the top and ending at the bottom [an example is shown in Fig.~\ref{fig:QCJanzing2}], referred to as consistent connected string configurations.  For a fixed $w$ (i.e., a horizontal line on the rotated grid), there is only one occupied edge.  We can describe such configurations in terms of $2n$ bits, denoted by $z$.  Specifically, let the bit value $0$ correspond to an edge going down and left and $1$ correspond to an edge going down and right.  The Hamming weight of such configurations must be $n$, since they start and end in the middle of the grid and so must go left and right an equal number of times.

We are now ready to describe the Hamiltonian:
\beq
H(\lambda) = H_{\mathrm{string}} + H_{\mathrm{circuit}}(\lambda) + H_{\mathrm{input}}\ .
\label{eq:H-spacetime}
\eeq

$\bullet$ $H_{\mathrm{input}}$:  This term ensures that the ground state has the internal state of all particles set to $s = 0$ when the string lies on the left-hand side of the grid by energetically penalizing all states (on the left-hand side) with $s = 1$.  It is given by:
\beq
H_{\mathrm{input}} = \sum_{w = 1}^{2n } \sum_{t \leq n} n_{t,1}[w]\ .
\eeq

$\bullet$ $H_{\mathrm{string}}$: This term ensures that the ground state is in the subspace of connected strings.  Consider a single vertex $v$ in the grid with incident edges labeled by $(t,w), (t+1,w), (t,w+1), (t+1,w+1)$.  We can associate a Hamiltonian $H^v_{\mathrm{string}}$ to each vertex,
\begin{eqnarray} \label{eqt:Hstring}
H^v_{\mathrm{string}} &=& \mathbf n_t[w] + \mathbf n_{t+1}[w] + \mathbf n_t [w+1] + \mathbf n_{t+1}[w+1] \nonumber \\
&& -2 \left( \mathbf n_t[w] + \mathbf n_{t+1}[w] \right) \left( \mathbf n_t[w+1]+\mathbf n_{t+1}[w+1] \right) \nonumber \\
\end{eqnarray}
(for vertices at the boundary of the grid with two or three incident edges, the definition of $H^v_{\mathrm{string}}$ needs to be modified accordingly) such that $H_{\mathrm{string}} = \sum_v H^v_{\mathrm{string}}$.   For connected string configurations, the energy due to this Hamiltonian is zero, while disconnected strings with $L$ string segments have a higher energy $2 L - 2$.

$\bullet$ $H_{\mathrm{circuit}}(\lambda) = \sum_p H_{\mathrm{gate}}^p(\lambda)+ \sqrt{1- \lambda^2} H_{\mathrm{init}}$: 
Define for each plaquette $p$ with borders given by the edges $\{(t,w)$, $(t+1,w)$, $(t,w+1)$, $(t+1,w+1)\}$
\beq
H^p_{\mathrm{gate}}(\lambda) = \mathbf n_t[w] \mathbf n_t[w+1] + \mathbf n_{t+1}[w]\mathbf n_{t+1} [w+1] + \lambda H_{\mathrm{prop}}^p
\eeq
where 
\begin{eqnarray}
H_{\mathrm{prop}}^p &=& - \sum_{\alpha, \beta, \gamma, \delta} \left( \bra{\beta, \delta} U_p \ket{\alpha, \gamma} a_{t+1,\beta}^\dagger a_{t, \alpha} [w] \right. \nonumber \\
&& \left. \times a_{t+1,\delta}^\dagger [w+1] a_{t, \gamma} [w+1] \right) + \mathrm{h.c.}
\end{eqnarray}
The term $H_{\mathrm{prop}}^p$ allows for a pair of particles located on the left (right) edges of a plaquette to hop together such that they both are located on the right (left) edges of a plaquette, with their internal states changed according to $U_p$ ($U_p^{\dagger})$.  Note that this move preserves the connectedness of the string.  Furthermore, the term $\sum_p H_{\mathrm{gate}}^p(0)$ is minimized by a configuration lying either entirely on the left border (corresponding to the bit string $z = 0^n 1^n$) or entirely on the right border ($z = 1^n 0^n$), which in conjunction with  $H_{\mathrm{init}}$, given by
\beq
H_{\mathrm{init}} = \mathbf n_{n+1}[w=1] + \mathbf n_{n+1}[w=2n] \ ,
 \eeq
ensures that the ground state of $H(0)$ is such that all particles lie along the left boundary of the grid.  Including the effect of  $H_{\mathrm{circuit}}(0)$ and $H_{\mathrm{input}}$, the ground state of $H(0)$ is given by $\ket{0^{2n}}\ket{0^n1^n}$ with eigenvalue $1$.  This is an easily prepared ground state.

It can be shown that the ground state of $H(\lambda)$ [Eq.~\eqref{eq:H-spacetime}] along $\lambda \in [0,1]$ is unique and the energy gap above the ground state is lower bounded by $1/\poly(n)$ for all $\lambda \in [0,1]$ [Theorem 1 in Ref.~\cite{Gosset:2014rp}].  To measure the output of the quantum circuit, we measure the $t$ positions of the $2k$ particles for the $w$ values that cross the interaction region (recall that there will always be one particle per horizontal $w$ line).  If we find that all $2k$ particles lie to the right of the interaction region, then their internal states must encode the output of the quantum circuit.  For the choice $k = \sqrt{n}/16$, this occurs with a probability lower bounded by a positive constant.  Together, these properties allow for an efficient (up to polynomial overhead) simulation of the quantum circuit using the adiabatic algorithm generated by $H(\lambda)$.  

However, this implementation requires 4-body interactions [see for example the product term in Eq.~\eqref{eqt:Hstring}].  In~\cite{Lloyd:2016}, improvements to the above construction were presented with only $2$-local interactions using a first order perturbation gadget and a quadratic increase in the number of qubits from the original quantum circuit. The use of only first order perturbation theory is particularly significant, since effective interactions obtained in $k$-th order degenerate perturbation theory with perturbative coupling $g$ and gap $\Delta$ of the unperturbed Hamiltonian scale in strength as $g(g/\Delta)^{k-1}$, leading to a correspondingly small gap of the effective Hamiltonian. In addition, multiple uses of higher-order perturbation theory can increase qubit overhead and complexity.

\subsection{Universal AQC in 1D with $9$-state particles}
\label{sec:UAQC-1D}

The constructions of universal AQC we have reviewed so far are all spatially two-dimensional (2D). It was unclear for some time whether universal AQC is possible in 1D, with some suggestive evidence to the contrary, such as the impressive success of density matrix renormalization group (DMRG) techniques in calculating ground state energies and other properties of a variety of 1D quantum systems \cite{Schollwock:2005qd}. Moreover, classical 1D systems are generally ``easy"; e.g., a 1D restriction of MAX-2-SAT with $p$-state variables can be solved by dynamic programming and hence is in the complexity class P. In addition, the area law implies that 1D systems with a constant spectral gap can be efficiently simulated classically \cite{PhysRevLett.103.050502}. All this implies that adiabatic evolution with 1D Hamiltonians is not useful for universal QC unless certain conditions are met, in particular  a spectral gap that tends to zero. 

This was accomplished in \cite{Aharonov:2009fi}, who proved that it is possible to perform universal AQC using a 1D quantum system of $9$-state particles. 
The striking qualitative difference between the quantum and the
classical 1D versions of the same problem seems surprising. However, 
the $k$-local Hamiltonian problem allows for the encoding of an extra dimension (time), by making the ground state a superposition of states corresponding to different times. This means that the correct analogue of the quantum 1D local Hamiltonian problem is 2D classical MAX-k-SAT, which is NP-complete. 

The proof presented in \cite{Aharonov:2009fi} builds heavily on the history state construction reviewed in Sec.~\ref{sec:history-state}. However, there are a couple of important differences. 
As in the history state construction, the starting point is a quantum circuit $U_x$ acting on $n$ qubits (where 
$x$ is the classical input to the function implemented by the circuit in the universal AQC case). A 1D
$p$-state Hamiltonian is designed which will verify correct propagation according to this circuit. Then  this is used as the final Hamiltonian for the adiabatic evolution.
The problem with directly realizing this in the 1D case is that only the particles nearest to the clock would be able to take advantage of it in order to check correct propagation in time. To overcome this, the circuit $U_x$ is first modified into a new circuit $\tilde{U}_x$ with a distributed clock. The history state construction 
relies on the ability to copy qubits from one column to the next in 
order to move to the next block of gates in the computation, so a new strategy is needed
in 1D. For the modified circuit $\tilde{U}_x$, the qubits are instead placed in a block of
$n$ adjacent particles. One set of gates is performed, and then all of the qubits are moved over
$n$ places to advance time in the original circuit ${U}_x$. More states per particle are needed to accomplish this than in the 2D case. The second main new idea that is needed is related to ensuring that the state of the system had a valid structure. In the 2D case local constraints were used to check that there are no two qubit states in adjacent columns. However, using only local constraints, there is no way to check that there
are exactly $n$ qubit data states in an unknown location in a 1D system, since there are only a constant number of local rules available, which are therefore unable to
count to an arbitrarily large $n$. Instead, it is ensured that, under
the transition rules of the system, any invalid configurations will evolve in polynomial
time into a configuration which can be detected as illegal by local rules. Thus, for every state which is not a valid history state, either the propagation is wrong, which implies
an energy penalty due to the propagation Hamiltonian, or the state evolves to an illegal configuration which is locally detectable, which implies an energy penalty due to the
local check of illegal configurations. For additional technical details required to complete the proof see \cite{Aharonov:2009fi}. A $20$-state translation-invariant modification of the construction from \cite{Aharonov:2009fi} for universal AQC in 1D was given in \cite{Nagaj:2008fv}, improving on a $56$-state construction by \cite{Janzing:2008dz}.

\subsection{Adiabatic gap amplification}
\label{sec:gap-amp}

In all universality constructions the run time of the adiabatic simulation of a quantum circuit depends on the inverse minimum gap of the simulating Hamiltonian. Therefore it is of interest to develop a general technique for amplifying this gap, as was done in \cite{Somma:2013to}. 

Consider a Hamiltonian $H$ with ground state $\ket{\phi}$. The goal is to construct a new Hamiltonian $H'$ that has $\ket{\phi}$ as an eigenstate (not necessarily the ground state) but with a larger spectral gap. 
A quadratic spectral gap amplification is possible when $H$
is frustration-free [see also \cite{Bravyi:2009sp}]:
\begin{definition}
\label{def:FF}
(frustration freeness) A Hamiltonian $H \in \mathbb{C}^N\times\mathbb{C}^N$ is frustration free if it can be written as a sum over positive semi-definite operators: $H = \sum_{k=1}^L a_k \Pi_k$, with $a_k\in[0,1]$ and $L = {\rm polylog}(N)$. Further, if $\ket{\phi}$ is the ground state of $H$ then it is a ground state (i.e., zero eigenvector) of every term in the decomposition of $H$, i.e., $\Pi_k\ket{\phi}=0$ $\forall k$.
\end{definition}
\cite{Somma:2013to} took $\Pi_k$ as projectors.
The quadratic amplification is optimal for frustration-free Hamiltonians in a suitable black-box model, and no spectral gap amplification is
possible, in general, if the frustration-free property is removed. 
An important caveat is that the construction replaces ground state evolution by evolution of a state that lies in the middle of the spectrum; thus it does not fit the strict definition of AQC (Def.~\ref{def:AQC}). We will have another occasion to relax the definition in the same sense, in Sec.~\ref{sec:stoq-QMA-comp}.

We now review the construction in \cite{Somma:2013to} in some detail. To place it in context, note that the universality results we reviewed thus far can be summarized as follows:
Any quantum circuit specified by unitary gates $U_1, \dots, U_Q$ can be simulated by an adiabatic quantum evolution involving frustration-free Hamiltonians: $H(s) =  \sum^L_{k=1} a_k(s) \Pi_k(s)$. 
The ground state of the final Hamiltonian $H(1)$ has large overlap with the output state of the quantum circuit. Moreover, $L$ is polynomial in $Q$, and $\Pi_k$ denotes nearest-neighbor, two-body interactions between spins of corresponding many-body systems in one- or two-dimensional lattices. The inverse minimum gap of $H(s)$ is polynomial in $Q$, and hence so is the duration of the adiabatic simulation.

Now, consider a frustration-free Hamiltonian 
\beq
H(s)=\sum_{k=1}^L a_k \Pi_k(s)\ ,
\eeq 
where each $\Pi_k(s)$ is a projector for all $s\in[0,1]$, and is a local operator. Denote the eigenvalues of this Hamiltonian by $\{\lambda_j\}$, where $\lambda_1=0$ is the ground state energy.
Then, take the Hamiltonian 
\beq
\bar{H}(s)=\sum_{k=1}^L \sqrt{a_k} \Pi_k(s) \otimes (\ketbra{k}{0} +  \ketbra{0}{k})\ ,
\eeq 
where $\ket{0}$ and $\ket{k}$ are ancilla registers defined over one and $\log_2(L)$ qubits, respectively. It can be shown that $\bar{H}(s)$ 
has the desired properties, i.e., if $\ket{\psi(s)}$ was the ground state of $H$, then $\ket{\psi(s)}\ket{10\cdots 0}$ is a (degenerate) zero-eigenvalue eigenstate of $\bar{H}$ and the eigenvalues of $\bar{H}(s)$ are $\{\pm \sqrt{\lambda_j}\}$ [the proof is given in Appendix B of \cite{Somma:2013to}]. Thus, the gap has been  quadratically amplified, and one can evolve with $\bar{H}$ to transform eigenstates at $s=0$ to eigenstates at $s = 1$ and simulate the original quantum circuit with a quadratic speedup over the simulation involving $H$. 

In general $\bar{H}(s)$ will be $\log_2(L)$-local due to the appearance of $\ket{k}$. To avoid these many-body interactions one can represent $\ket{k}$ using a unary encoding, i.e., $\ket{k} \mapsto \ket{0\dots 010\dots 0}$ (with $1$ at the $k$-th position). In this single-particle subspace the new Hamiltonian becomes
\beq
\bar{H}(s)=\sum_{k=1}^L \sqrt{a_k} \Pi_k(s) \otimes (\sigma_{k}^+\sigma_{0}^- +  \sigma_{k}^-\sigma_{0}^+)\ ,
\eeq 
where $\sigma^\pm = (\sigma^x\pm i\sigma^y)/2$ are Pauli raising and lowering operators. Note that since each $\Pi_k(s)$ interacts with the same qubit $0$ of the new register, if the original $H$ was geometrically local, then $\bar{H}$ is not, i.e., it has a central spin geometry. 

One more issue that needs to be dealt with is the degeneracy of the zero eigenvalue. To remove this degeneracy from contributions within the single-particle subspace one can add a penalty term $\frac{1}{4}\sqrt{\Delta}(\ident+\sigma^z_0)$ to $\bar{H}(s)$, which penalizes all states with qubit $0$ in $\ket{0}$; the relevant spectral gap in the single-particle subspace is then still of order $\sqrt{\Delta}$. To remove additional degeneracy from the many-particle subspaces one can add penalties for states that belong to such subspaces. Adding $Z= (L-2)\ident-\sum_{k=0}^L \sigma_k^z$ achieves this since it acts as a penalty that grows with the Hamming weight $a$ of states in the $a$-particle subspace. Thus
\begin{align}
H'(s)=\frac{1}{L^{1/d}}&\left[ \sum_{k=1}^L \sqrt{a_k} \Pi_k(s) \otimes (\ketbra{k}{0} +  \ketbra{0}{k})  \right. \notag \\
&\left. + \frac{1}{4}\sqrt{\Delta}(\ident+\sigma^z_0)\right] + Z \ ,
\label{eq:H'-GAP}
\end{align}
has $\ket{\psi_0}\ket{10\cdots 0}$ as a unique eigenstate of eigenvalue $0$, and all other eigenvalues are at distance at least $\sqrt{\Delta}/L^{1/d}$ if $d\geq 2$.\footnote{The factor $L^{1/d}$ in Eq.~\eqref{eq:H'-GAP} is introduced so that 
the eigenvalues coming from the many-particle subspaces will not mix with the eigenvalues of the single-particle subspace.} This is the desired quadratic gap amplification result.

How far can gap amplification methods go? It was shown in \cite{Schaller:08} that for the one-dimensional transverse-field quantum Ising model, and for the preparation of cluster states \cite{Raussendorf:01}, it is possible to use a series of straight-line interpolations in order to generate a schedule along which the gap is always greater than a constant independent of the system size, thus avoiding the quantum phase transition. However, there exists an efficient method to compute
the ground state expectation values of local operators 
of 2D lattice Hamiltonians undergoing exact adiabatic evolution, and this implies that adiabatic quantum algorithms based on such local Hamiltonians, with unique ground states, can be simulated efficiently if the spectral gap does not scale with the system size \cite{osborne2007simulating}.


\section{Hamiltonian quantum complexity theory and universal AQC}
\label{sec:QMAC}

In this section we review Hamiltonian quantum complexity theory from the perspective of QMA completeness. This theory naturally incorporates decision problems of the type that motivate AQC. Essentially, it concerns a problem involving the ground state of a local Hamiltonian, whose ground state energy is promised to either be below a threshold $a$ or above another threshold $b>a$,
and where $b-a$ is polynomially small in the system size. In some cases this problem is easy, and in other cases it turns out to be so hard that we do not hope to solve it efficiently even on a quantum computer. Characterizing which types of local Hamiltonians fall into the latter category is the subject of QMA-completeness.
  
Hamiltonian quantum complexity theory is an extremely rich subject that is rapidly advancing and has already been reviewed a number of times, so we will only touch upon it and highlight some aspects that are relevant to AQC. Perhaps the most direct connection is the fact that $2$-local Hamiltonians of a form that naturally appears in AQC, are QMA-complete. Additionally, some of the technical tools that played an important role in QMA-completeness locality reductions, such as perturbative gadgets, have also found great use in proofs of the universality of AQC with different Hamiltonians.

The reviews \cite{Aharonov:2002nr,Gharibian:thesis,Gharibian:2015hl}, are excellent resources for additional perspectives and details on Hamiltonian quantum complexity theory.

\subsection{Background}
\label{sec:SAT-background}

\subsubsection{Boolean Satisfiability Problem: $k$-SAT}
%
Consider a Boolean formula $\Phi$ that depends on $n$ literals $x_i\in\{0,1\}$ (with $0$ and $1$ representing False and True, respectively) or their negations.  The problem is to decide whether there exists an assignment of values to the literals that satisfies the Boolean formula, i.e., such that $\Phi = 1$.  If there exists such an assignment then the formula is satisfiable, otherwise it is unsatisfiable. 

The Boolean formula is typically written in conjunctive normal form: it is written in terms of a conjunction (AND - $\land$) of $r$ clauses, where each clause contains the disjunction (OR - $\lor$) of $k$ literals (variables) or their negation (NOT - $\neg$).  A literal and its negation are often referred to as positive and negative literals.  The Boolean formula is written as:
\beq
\Phi = C_1 \land C_2 \land \dots \land C_r
\eeq   
where $C_i = x_{i_1} \lor x_{i_2} \dots \lor x_{i_k}$ and $x_{i_j}$ is the $j$-th positive or negative literal in the $i$-th clause.  The question of Boolean satisfiability, or $k$-SAT, is whether there exists a choice $X = (x_1, \dots, x_n)$ such that $\Phi(X) = 1$.  Note that it only requires $O(k r)$ steps
 to check whether $X$ is a satisfying assignment, yet there are $2^n$ possible choices for $X$.

For $k=3$, the Boolean satisfiability problem, called 3-SAT, is NP-complete.  Let us explain what this means.
%
\subsubsection{NP, NP-complete, and NP-hard}
%
Informally, problems in NP are those whose verification can be done efficiently (e.g., checking whether a Boolean formula is satisfied). An important conjecture, called the Exponential Time Hypothesis \cite{Impagliazzo2001367}, states that there are problems in NP that take exponentially long to solve.

Formally, a decision problem $Q$ is in NP if and only if there is an efficient algorithm $V$, called the verifier, such that for all inputs $\eta$ (e.g., in the case of SAT, this would be the clauses) of the problem:
\begin{itemize}
\item if $Q(\eta) = 1$, then there exists a witness $X$ such that $V(\eta,X) = 1$.
\item if $Q(\eta) = 0$, then for all witnesses $X$ we have $V(\eta, X) = 0$.
\end{itemize}
In both cases, we typically take $|X| = \poly(|\eta|)$, where $|\eta|$ is the number of bits in the binary string associated with the input $\eta$.  The verifier is efficient in the sense that its cost scales as $\poly(|X|)$. In SAT the witness $X$ would be our test assignment.

A decision problem $Q$ is NP-complete if:
\begin{itemize}
\item $Q$ is in NP
\item Every problem in NP is reducible to $Q$ in polynomial time.
\end{itemize}
Here reducibility means that given a problem $A$ in NP and a problem $B$ that is NP-complete, $A$ can be solved using a hypothetical polynomial-time algorithm that solves for $B$.
 A commonly used reduction is the polynomial-time many-to-one reduction \cite{Karp:21-problems}, whereby the inputs of $A$ are mapped into the inputs to $B$ such that the output of $B$ matches the output of $A$.  The hypothetical algorithm then solves $B$ to get the answer to $A$.

A decision problem $Q$ is NP-hard if every problem in NP is reducible to $Q$ in polynomial time. (Note that unlike the NP-complete case, $Q$ does not need to be in NP). 
Clearly, $\text{NP-complete} \subseteq \text{NP-hard}$.

\subsubsection{The $k$-local Hamiltonian Problem}

The history state construction of Sec.~\ref{sec:history-state} relies on a $5$-local Hamiltonian. Such a Hamiltonian belongs to an important class of decision problems known as the $k$-local Hamiltonian Problem, of which a complete complexity classification was given in~\cite{Cubitt:2016vl} subject to restrictions on the set of local terms from which the Hamiltonian can be composed [see also \cite{Bravyi:2014bf}]. Recall that a $k$-local Hamiltonian is a Hermitian matrix that acts non-trivially on at most $k$  $p$-state particles.

The $k$-local Hamiltonian Problem is defined on $n$ qubits, with the following input:
\begin{itemize}
\item A $k$-local Hamiltonian $H = \sum_{i=1}^r H_i$ with $r = \poly(n)$. Each $H_i$ is $k$-local and satisfies $\norm{H_i} = \text{poly}(n)$ and its non-zero entries are specified by $\text{poly}(n)$ bits.
\item Two real numbers $a$ and $b$ specified with $\poly(n)$ bits of precision, such that
\beq
b-a > \frac{1}{\poly(n)}\ .
\label{eq:k-local-gap}
\eeq
\end{itemize}
The output ($0$ or $1$) answers the question: Is the smallest eigenvalue of $H $ smaller than $a$ (output is $1$), or are all eigenvalues larger than $b$ (output is $0$)?  We are promised that the ground state eigenvalue cannot be between $a$ and $b$.\footnote{The quantity $b-a$ is sometimes called the ``promise gap" and is distinct from the spectral gap.} 

We may map 3-SAT to the 3-local Hamiltonian Problem as follows.
For every clause $C_i$ (which involves three literals), we can define a $3$-local projector ${H}_i$ onto all the unsatisfying assignments of $C_i$.  
Because ${H}_i$ is a projector, it has eigenvalues $0$ and $1$, where the $0$ eigenvalue is associated with satisfying assignments and the $1$ eigenvalue with unsatisfying assignments. Therefore:
\beq
H \ket{X} = \sum_{i=1}^r H_i \ket{X} = q \ket{X}
\eeq
where $q$ is the number of unsatisfied assignments by $X$.  Thus 3-SAT is equivalent to the following 3-local Hamiltonian problem: is the smallest eigenvalue of $H$ zero (the 3-SAT problem is satisfiable) or is it at least $1$ (the 3-SAT problem is unsatisfiable)?
%

\subsubsection{Motivation for Adiabatic Quantum Computing}
%
Adiabatic evolution seems well-suited to tackling the $k$-local Hamiltonian problem.  By initializing an $n$-qubit system in an easily prepared ground state, we can (in principle) evolve the system with a time-dependent Hamiltonian whose end point is the $k$-local Hamiltonian.  If the evolution is adiabatic, then we are guaranteed to be in the ground state of the $k$-local Hamiltonian with high probability.  By measuring the state of the system, we can determine the energy eigenvalue of the state (which hopefully is the ground state energy) and hence determine the answer to an NP-complete problem such as 3-SAT. This motivated early work on the quantum adiabatic algorithm \cite{Farhi:01}. 

Another possibility is to try to use AQC as the verifier. However, the quantum algorithm only gives us the answer probabilistically, so we must first define a probabilistic analog of NP and then a quantum version. These new complexity classes are MA and QMA \cite{Kitaev:book}.
\subsection{MA and QMA}
\label{sec:MAandQMA}

Informally, MA can be thought of as a probabilistic analog of NP, allowing for two-sided
errors. Formally,  a decision problem $Q$ is in MA iff there is an efficient probabilistic verifier $V$ such that for all inputs $\eta$ of the problem:
\begin{itemize}
\item if $Q(\eta) = 1$, then there exists a witness $X$ such that $\Pr(V(\eta,X) = 1) \geq \frac{2}{3}$  (completeness).
\item if $Q(\eta) = 0$, then for all witnesses $X$ we have $\Pr(V(\eta,X) = 1) \leq \frac{1}{3}$  (soundness).
\end{itemize}
Again we take $|X| = \poly(|\eta|)$.  MA is typically viewed as an interaction between two parties, Merlin and Arthur.  Merlin provides Arthur with a witness $X$, on which Arthur runs $V$.  If $Q(\eta) = 0$, Merlin should never be able to fool Arthur with a witness $X$ into believing that $Q(\eta) = 1$ with probability $> 1/3$.

Note that there is nothing special about the probabilities $(2/3,1/3)$.  We can generalize our description to MA($c,s$):
\begin{claim}
\label{claim-ampl}
MA($c, c-1/|\eta|^g) \subseteq$ MA$(2/3,1/3) = $MA($1-e^{-|\eta|^g},e^{-|\eta|^g})$, where $g$ is a constant and $c>0$ and $c-1/|\eta|^g < 1$.  
\end{claim}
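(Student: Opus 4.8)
The plan is to run the textbook ``error reduction by repetition'' argument, which works cleanly here precisely because in MA the witness $X$ is a \emph{classical} string, so Arthur is free to make as many copies of it as he likes. Suppose $Q \in \mathrm{MA}(c,\,c-1/|\eta|^g)$ via a probabilistic verifier $V$ with $|X| = \poly(|\eta|)$ running in time $\poly(|\eta|)$; assume the non-degenerate regime $0 \le c - 1/|\eta|^g < c \le 1$ (if $c - 1/|\eta|^g < 0$ the instance is trivially in MA since no verifier accepts with negative probability). I would construct a new verifier $V'$ which, on input $(\eta,X)$, runs $V(\eta,X)$ independently $k$ times with fresh random coins each time, records the number $A$ of accepting runs, and accepts iff $A/k \ge \theta$, where $\theta \equiv c - \tfrac{1}{2|\eta|^g}$ is the midpoint of the completeness and soundness thresholds. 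Note $\theta$ is efficiently computable, since $c$ and $g$ are part of the problem specification. The repetition count $k$ will be chosen as a polynomial in $|\eta|$, so that $V'$ still runs in polynomial time.

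Next I would carry out the two Hoeffding estimates. If $Q(\eta)=1$, fix a witness $X^\ast$ with $\Pr[V(\eta,X^\ast)=1] = p \ge c$; the $k$ run-outcomes are i.i.d.\ Bernoulli$(p)$ with $\mathbb{E}[A] = pk \ge ck$, so $V'$ rejects only if $A - \mathbb{E}[A] < -\tfrac{1}{2|\eta|^g}k$, an event of probability at most $\exp(-k/(2|\eta|^{2g}))$ by Hoeffding's inequality for sums of $[0,1]$-valued variables. If $Q(\eta)=0$, then for \emph{every} witness $X$ we have $p \le c - 1/|\eta|^g$, hence $\mathbb{E}[A] \le (c-1/|\eta|^g)k$, and $V'$ accepts only if $A - \mathbb{E}[A] \ge \tfrac{1}{2|\eta|^g}k$, again of probability at most $\exp(-k/(2|\eta|^{2g}))$. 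Thus $V'$ witnesses $Q \in \mathrm{MA}(1-\delta_k,\,\delta_k)$ with $\delta_k = \exp(-k/(2|\eta|^{2g}))$.

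To finish, choosing $k = 2|\eta|^{2g+g'}$ — a polynomial — gives $\delta_k \le \exp(-|\eta|^{g'})$ for any desired constant $g'$; this is the claimed exponentially small error bound, and since $\exp(-|\eta|^{g'}) < 1/3$ for all but finitely many input lengths (the finitely many short inputs being absorbed into the verifier via a lookup table, as usual), it also yields $Q \in \mathrm{MA}(2/3,1/3)$. The displayed equality $\mathrm{MA}(2/3,1/3) = \mathrm{MA}(1-e^{-|\eta|^g},e^{-|\eta|^g})$ then follows immediately: ``$\supseteq$'' because a $(1-e^{-|\eta|^g},e^{-|\eta|^g})$-verifier is \emph{a fortiori} a $(2/3,1/3)$-verifier for large $|\eta|$, and ``$\subseteq$'' by applying the amplification above with $c = 2/3$, whose constant gap $1/3$ certainly exceeds $1/|\eta|^g$. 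The argument is entirely elementary; the only points needing (minor) care are that reusing a single classical witness across the $k$ sub-runs is legitimate in MA (this is exactly where MA is easier than models with quantum witnesses), that the threshold $\theta$ is computable from the problem data, and the routine treatment of short inputs — none of which is a real obstacle.
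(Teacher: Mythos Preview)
Your proof is correct and follows the same overall strategy as the paper --- independent repetition of the verifier followed by a concentration-inequality tail bound --- so there is no gap. The differences are cosmetic but worth noting: the paper uses a majority-vote threshold and the multiplicative Chernoff bound to establish the equality $\mathrm{MA}(2/3,1/3)=\mathrm{MA}(1-e^{-|\eta|^g},e^{-|\eta|^g})$, and then for the polynomial-gap inclusion it places the acceptance threshold at $Kc$ and invokes the additive Chernoff bound in its Kullback--Leibler form, explicitly treating only the soundness side. Your choice to place the threshold at the midpoint $\theta=c-\tfrac{1}{2|\eta|^g}$ and apply Hoeffding symmetrically handles completeness and soundness in one stroke and avoids the slight awkwardness that a threshold exactly at $c$ would give only $\Pr[A/k\ge c]\approx 1/2$ when $p=c$; in that sense your presentation is a bit cleaner, though the underlying mechanism is identical.
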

The proof of this ``amplification lemma" [see, e.g., \cite{Marriott:2005vn,goldreich2008computational,Nagaj:09}] is interesting since it invokes the Chernoff bound, a widely used tool. We thus present it in Appendix~\ref{app:amplification-lemma} for pedagogical interest. 

The complexity class QMA can be viewed as the quantum analogue of MA. Thus, QMA
is informally the class of problems that can be efficiently checked on a quantum computer given a ``witness" quantum state related to the answer to the problem. 
Formally, define a quantum verifier $V$ (a quantum circuit) that takes $\eta$ and a quantum witness state $\ket{X}\in (\mathbb{C}^2)^{\otimes \text{poly}(|\eta|)}$ as inputs and probabilistically outputs a binary number. 
The decision problem $Q$ is said to be in QMA if and only if there exists an efficient (polynomial time) $V$ for all inputs $\eta$ of the problem that satisfies:
\begin{itemize}
\item if $Q({\eta}) = 1$, then there exists a witness $\ket{X}$ such that $\Pr(V(\eta,\ket{X}) = 1) \geq \frac{2}{3}$ (completeness).
\item if $Q({\eta}) = 0$, then for all witnesses $\ket{X}$ we have $\Pr(V(\eta,\ket{X}) = 1) \leq \frac{1}{3}$ (soundness).
\end{itemize}
The amplification lemma applies here as well. The definition of QMA also allows $V$ to have $\text{poly}(|\eta|)$ ancilla qubits each  initialized in the $\ket{0}$ state  \cite{Gharibian:2015hl}.

One can also define the class QCMA, which is similar to QMA except that $\ket{X}$ is a classical state \cite{Aharonov:2002nr}. Since the quantum verifier can force Merlin to send him a classical witness by measuring the witness before applying the quantum algorithm, we have: 
MA $\subseteq$ QCMA $\subseteq$ QMA.

\subsection{The general relation between QMA completeness and universal AQC}

The class of efficiently solvable problems on a quantum computer is bounded error quantum polynomial time (BQP) \cite{Bernstein:93}, which consists of the class of decision problems solvable by a uniform family of polynomial-size quantum circuits with error probability bounded
below $1/2$. Because of the polynomial equivalence between AQC and the circuit model, BQP is also the class of efficiently solvable problems on a universal adiabatic quantum computer. Its classical analog is the class bounded-error probabilistic polynomial time (BPP), and as expected BPP $\subseteq$ BQP \cite{Bernstein:1997}. In addition, BQP $\subseteq$ QCMA \cite{Aharonov:2002nr}. Another interesting characterization is that  BQP$=$QMA$_{\log}$, where QMA$_{\log}$ is the same as QMA except that the quantum proof has $O(\log |\eta|)$ qubits instead of $\text{poly}(|\eta|)$ \cite{Marriott:2005vn}.

This motivates the study of QMA, and in particular QMA completeness, as a tool for understanding universality. Indeed, it is often the case that whenever adiabatic universality can be proven for some class of Hamiltonians, then the local Hamiltonian problem with (roughly) the same class can be shown to be QMA-complete and vice versa.
Note, however, that there is no formal implication from either of those problems to the other \cite{Aharonov:2009fi}. On the one hand, proving QMA-completeness is in general substantially harder than achieving universal AQC, where we can choose the initial state to be any easily prepared state that will help us solve the problem, so we can choose to work in any convenient subspace that is invariant under the Hamiltonian. Indeed, in the history-state construction, we introduce penalty terms to guard against illegal clock states [recall Eq.~\eqref{eqt:Hc}]. For QMA, the states we work with are chosen adversarially from the \emph{full} Hilbert space, and we must be able to check, using only local Hamiltonian terms, that they are of the correct (clock-state) form. On the other hand, proving adiabatic universality involves analyzing the spectral gap of the continuous sequence of Hamiltonians over the \emph{entire} duration of the computation, whereas QMA-completeness proofs are only concerned with one Hamiltonian.

\subsection{QMA-completeness of the $k$-local Hamiltonian problem and universal AQC} 
\label{sec:BiamonteLove}
To prove that a promise problem is QMA-complete, one needs to prove
that it is contained in QMA and that it is QMA-hard. 
The $k$-local Hamiltonian Problem belongs to QMA for any constant $k$, and in fact even for $k = O \left( \log n \right)$ \cite{Kitaev:book}. For pedagogical proofs see \cite{Aharonov:2002nr,Gharibian:thesis,Gharibian:2015hl}. 

The first example of a QMA-hard problem was the $k$-local Hamiltonian problem for $k\geq 5$ \cite{Kitaev:book}, so that in particular the $5$-local Hamiltonian problem is QMA-complete. This was reduced to $3$-local \cite{Kempe:2003,Nagaj:06} and then to $2$-local \cite{KempeGadget}. Note that the $1$-local Hamiltonian problem is in the complexity class P, since one can simply optimize for each $1$-local term independently. Various simplifications of QMA-completeness for the $2$-local case followed. In order to describe these, we first need to define a class of Hamiltonians:
\begin{align}
\label{eq:H-all-models}
H_1 &= \sum_{(i,j)\in \mathcal{E}} J_{ij}^x X_i X_j +  J_{ij}^y Y_i Y_j + J_{ij}^z Z_i Z_j \notag 
\\
& \qquad + \sum_{i \in \mathcal{V}} h^x_i X_i + h^y_i Y_i + h^z_i Z_i \ ,
\end{align}
where $\mathcal{V}$ and $\mathcal{E}$ are the vertex and edge sets of a graph $\mathcal{G}=(\mathcal{V},\mathcal{E})$, and all local fields $\{h^\alpha_i\}$ and couplings $\{J_{ij}^\alpha\}$ ($\alpha\in\{x,y,z\}$) are real. The Heisenberg model corresponds to $J_{ij}^x=J_{ij}^y=J_{ij}^z$, the $XY$ model to $J_{ij}^x=J_{ij}^y$ and $J_{ij}^z=0$, and the Ising model to $J_{ij}^x=J_{ij}^y=0$. When $J_{ij}^\alpha<0$ ($>0$) the interaction between qubits $i$ and $j$ is ferromagnetic (antiferromagnetic). When we write ``fully" below we mean that all interactions have the same sign. Unless explicitly mentioned otherwise we assume that the local fields are all zero.

Most of the simplifications of QMA-completeness are special cases of Eq.~\eqref{eq:H-all-models}:
\begin{itemize}
\item Geometrical locality: nearest-neighbor interactions with $\mathcal{G}$ being a 2D square lattice \cite{OliveiraGadget} or a triangular lattice \cite{Piddock:2015aa}.
\item Simple interactions in 2D: $ZZXX$ and $ZX$ model \cite{Biamonte:07} [defined in Eqs.~\eqref{eqt:ZZXX} and \eqref{eqt:ZX} below], fully ferromagnetic and fully antiferromagnetic Heisenberg model with local fields \cite{Schuch:2009qq}, antiferromagnetic Heisenberg and $XY$ models without local fields \cite{Piddock:2015aa}.
\end{itemize}
Some of the simplifications of QMA completeness use other types of Hamiltonians:
\begin{itemize}
\item Interacting fermions in 2D and the space-time construction \cite{Breuckmann:2014}.
\item Multi-state particles in 1D \cite{Aharonov:2009fi,Nagaj:2008rc,Hallgren:2013dn}. 
\item Non-translationally invariant 1D systems (all two-particle terms identical but position-dependent one-particle terms) \cite{Kay:2008bs}.
\item Translationally invariant 1D systems for which finding the ground state energy is complete for QMA$_{\text{EXP}}$\footnote{QMA$_{\text{EXP}}$ is the same as QMA but with exponential size (in the input)
witness and verification circuit, whereas both are polynomial for QMA. }
 \cite{Gottesman:13}.  
\end{itemize}

The 1D case is interesting since, as remarked in Sec.~\ref{sec:UAQC-1D}, the 1D restriction of MAX-2-SAT with $p$-state variables is in P, yet \cite{Aharonov:2009fi} showed that for $12$-state particles, the problem of approximating the ground state energy of a 1D system is QMA-complete. This result was improved to $11$-state particles in \cite{Nagaj:2008rc}, and then to $8$-state particles in \cite{Hallgren:2013dn}, who also pointed out a small error in \cite{Aharonov:2009fi} that could be fixed by using $13$-state particles. Whether and at which point a further Hilbert space dimensionality reduction becomes impossible remains an interesting open problem. 

The reduction from $5$-local to $2$-local  is done using perturbative gadgets \cite{KempeGadget,OliveiraGadget,Biamonte:07,Bravyi:2008fk,Jordan:08,PhysRevA.91.012315}.  The goal of the gadget is to approximate some target Hamiltonian $H^{\mathrm{T}}$ of $n$ qubits (e.g., the $5$-local Hamiltonian from the history state construction at any time $s$) by a gadget Hamiltonian $H^{\mathrm{G}}$ acting on the same $n$ qubits as well as an additional $\poly(n)$ ancilla qubits.  The gadget Hamiltonian is typically written as
\beq
H^{\mathrm{G}} = H^{\mathrm{A}}+\lambda V\ ,
\eeq
where $H^{\mathrm{A}}$ is an unperturbed Hamiltonian (also called the penalty Hamiltonian), acting only on the ancilla space,
and where $\lambda V$ is a perturbation that acts between the qubits of $H^{\mathrm{T}}$ and the ancilla qubits.  Using perturbation theory, which we review in Appendix~\ref{sec:pert-gadgets}, one can show that the lowest $2^n$ eigenvalues of $H^{\mathrm{G}}$ differ from those of $H^{\mathrm{T}}$ by at most $\epsilon$ and the corresponding eigenstates have an overlap of at least $1 - \epsilon$.  

Completeness of the $2$-local Hamiltonian problem means that every problem in QMA is reducible to the $2$-local Hamiltonian decision problem in polynomial time.  Since this reduction involves perturbative gadgets that preserve the spectrum of the original $5$-local Hamiltonian, this means that the $2$-local Hamiltonian derived from the $5$-local Hamiltonian appearing in the universality proof of Sec.~\ref{sec:history-state} will also have an energy gap that is an inverse polynomial in the circuit length, and that the computation remains in the ground subspace with illegal clock states gapped away by the (now $2$-local) penalty Hamiltonian.  
In the remainder of this subsection we briefly discuss a particularly simple form of $2$-local Hamiltonians that is universal for AQC.

The QMA completeness of general $2$-local Hamiltonians can be extended to show that a more restricted set of $2$-local Hamiltonians composed of real-valued sums of the following pairwise products of Pauli matrices are QMA-complete \cite{Biamonte:07}:
\begin{eqnarray}
\left\{IX,XI,IZ,ZI,ZX,XZ,ZZ,XX
\right\}\ .
\end{eqnarray}
The basic two steps to do this are: (1) Using the result of \cite{Bernstein:1997} that any quantum circuit can be represented using real-valued unitary gates operating on real-valued wavefunctions in the proof of the QMA-completeness of the $5$-local Hamiltonian of the previous subsection, the Hamiltonian terms are all real-valued.  This therefore extends QMA-completeness to $5$-local real Hamiltonians.  (2) The same gadgets used in \cite{KempeGadget,OliveiraGadget} can be used to reduce the locality from five to two.

This can be further simplified to show that ``$ZZXX$" Hamiltonians" that are linear combinations with real coefficients of only
\beq 
\label{eqt:ZZXX}
\left\{IX,XI,IZ,ZI,ZZ,XX
\right\}
\eeq
are QMA-complete. This is done by showing, using perturbation theory, that such Hamiltonians can be used to approximate the $ \sigma^z \otimes \sigma^x$ and $ \sigma^x \otimes \sigma^z$ terms.   Similarly, perturbation theory can be used to show that ``$ZX$" Hamiltonians" that are linear combinations with real coefficients of only
\beq \label{eqt:ZX}
\left\{IX,XI,IZ,ZI,ZX,XZ
\right\}
\eeq
are QMA-complete \cite{Biamonte:07,Cubitt:2016vl,Bravyi:2014bf}.


\section{Stoquastic Adiabatic Quantum Computation}
\label{sec:QA}

In this section we focus on the special class of ``stoquastic Hamiltonians" [originally introduced in \cite{Bravyi:QIC08}], that often arise in the context of quantum optimization. 

\begin{definition}[stoquastic Hamiltonian \cite{Bravyi:2009sp}]
A  Hamiltonian $H$ is called stoquastic with respect
to a basis $\mathcal{B}$ iff $H$ has real nonpositive off-diagonal matrix elements in the basis $\mathcal{B}$.
\label{def:stoq}
\end{definition}
For example, a Hamiltonian is stoquastic in the computational basis iff
\beq
\bra{x}H\ket{x'} \leq 0 \quad \forall x, x'\in\{0,1\}^n\quad x\neq x' \ .
\eeq
The computational basis is often singled out since it plays the role of the basis in which the final Hamiltonian is measured, which sometimes coincides with the basis in which that Hamiltonian is diagonal. The term ``stoquastic" was introduced due to the similarity to stochastic matrices, such as arise in the theory of classical Markov chains. 

Restricting to any basis still leaves some freedom in the definition.  For example, a Hamiltonian $H = - \sum_i \sigma_i^x + H_Z$, where $H_Z$ is diagonal in the computational basis, is clearly stoquastic.  However, applying a unitary transformation $U = \prod_i \sigma_i^z$ to the Hamiltonian gives $H' = \sum \sigma_i^x + H_Z$, which according to Def.~\ref{def:stoq} is not stoquastic in the computational basis.  Applying a local unitary basis transformation should not change the complexity of the problem.  Therefore, for clarity we fix the basis such that the standard initial Hamiltonian always carries a minus sign, i.e., $-\sum_i \sigma_i^x$. From this point forward, we restrict our discussion of stoquasticity to the standard computational basis. With this in mind, the class of stoquastic Hamiltonians includes the fully ferromagnetic Heisenberg and $XY$ models, and the quantum transverse field Ising model [recall Eq.~\eqref{eq:H-all-models}]. 

Given the restriction of the Hamiltonian, one may ask whether there is a complexity class for which the $k$-local stoquastic Hamiltonian problem is complete.  This led to the introduction of the class StoqMA, for which the $k\geq2$-local stoquastic Hamiltonian is StoqMA-complete \cite{Bravyi:2006}. This can be further refined to the result that the transverse Ising model on degree-3 graphs is StoqMA-complete \cite{Bravyi2017}. Rather than give the formal (and rather involved) definition of StoqMA, we note that the only difference between StoqMA and MA is that a stoquastic verifier in StoqMA is allowed to do the final measurement in the $\left\{ \ket{+}, \ket{-} \right\}$ basis, whereas a classical coherent verifier in MA can only do a measurement in the standard $\left\{ \ket{0}, \ket{1} \right\}$ basis.\footnote{MA has an alternative quantum definition as a restricted version of QMA in which the verifier is a coherent classical computer \cite{Bravyi:QIC08}.}
Unlike MA and QMA, the threshold probabilities in StoqMA have an inverse polynomial rather than constant separation; this prevents amplification of the gap between the threshold probabilities based on repeated measurements with majority voting.  Finally, it is known that MA $\subseteq$ StoqMA $\subseteq$ QMA \cite{Bravyi:2006}. 

To capture the important class of problems that are characterized by stoquastic evolution with the constraint of adiabatic evolution, we first introduce the following definition of a model of computation:

\begin{definition}[StoqAQC]
\label{def:StoqAQC}
Stoquastic adiabatic quantum computation (StoqAQC) is the special case of AQC (Definition~\ref{def:AQC}) restricted to $k$-local ($k$ fixed) stoquastic Hamiltonians. 
\end{definition}

Because we defined StoqAQC as a special case of AQC, the computation must proceed in the ground state. However, recall that the algorithm for the glued trees problem (Sec.~\ref{sec:gluedtrees}) is not subject to this ground state restriction and hence is not in StoqAQC. In Sec.~\ref{sec:stoq-QMA-comp} we consider another model of stoquastic computation that is not subject to the ground state restriction. 

StoqAQC has generated considerable interest since experimental implementations of stoquastic Hamiltonians are quite advanced \cite{Bunyk:2014hb,Weber:2017aa}. To characterize its computational power, we introduce a natural promise problem based on StoqAQC, and modeled after the $k$-local Hamiltonian problem:\footnote{We are indebted to Elizabeth Crosson for her help in formulating the StoqAQCEval problem, the BStoqP class, and working out the relations of BStoqP to other complexity classes.}

\begin{definition}[StoqAQCEval]
\label{def:StoqAQCEval}
The StoqAQCEval problem is defined on n qubits, with the following input:
\begin{itemize}
\item a continuous family of $(k\geq 2)$-local ($k$ fixed) stoquastic Hamiltonians $H(s) = \sum_{i=1}^r H_i(s)$ with $r = \text{poly}(n)$ and parameterized by $s \in [0,1]$.  For all $i$ and all $s$, the non-zero entries of $H_i(s)$ are specified by $\text{poly}(n)$ bits of precision, and $\|H_i(s)\| = \text{poly}(n)$.  The ground state energy gap $\Delta[H(s)]$ satisfies $\Delta[H(s)] \geq 1/\text{poly}(n)$ for all $s$.
\item two real numbers $a$ and $b$ specified with $\text{poly}(n)$ bits of precision, and $b- a > 1/\text{poly}(n)$.
\end{itemize}
The output (0 or 1) answers the question: Is the smallest eigenvalue of $H(s = 1)$ smaller than $a$ (output is $1$), or are all  eigenvalues larger than $b$ (output is $0$)?  Just as in the local Hamiltonian problem, we are promised  that the outcome that the ground state energy is between $a$ and $b$ is not possible.
\end{definition}
This allows us to (informally) define the complexity class that captures StoqAQC:
\begin{definition}[BStoqP]
\label{def:BStoqP}
BStoqP is the set of problems that are polynomial-time reducible to StoqAQCEval.
\end{definition}
The StoqAQCEval problem is clearly in StoqMA, because the $k\geq 2$-local stoquastic Hamiltonian problem is StoqMA-complete \cite{Bravyi:2006}. Hence BStoqP$\subseteq$StoqMA, as depicted in Fig.~\ref{fig:complexityclasses}, which summarizes the relations between many of the complexity classes we have discussed.\footnote{As far as we know the related term StoqP was informally introduced by Stephen Jordan in a talk presented at the AQC 2016 conference \cite{Jordan:AQC2016}, showing that StoqP is not equal to BQP unless BQP is in the third level of the polynomial hierarchy.} NP and MA are unlikely to be subsets of BStoqP, since StoqAQC would not be expected to solve NP-complete problems in polynomial time. The tightest inclusion in a classical complexity class we know of is in AM,\footnote{Like MA, the class AM (Arthur Merlin) is a probabilistic generalization of NP. See \url{https://complexityzoo.uwaterloo.ca/Complexity_Zoo} for definitions of the complexity classes mentioned here.}  since the latter includes StoqMA \cite{Bravyi:QIC08}. It is clear that BPP$\subseteq$BStoqP, since $5$-local StoqAQCEval is BPP-hard (using a a classical reversible circuit for universal AQC, with stoquastic gate terms and a $5$-local stoquastic clock Hamiltonian). Finally, we know that BStoqP$\subseteq$BQP, since StoqEvalAQC is in BQP by using the same proof that the adiabatic model in general can be simulated by the circuit model.  

\begin{figure}[t] 
   \centering
   \includegraphics[width=0.95\columnwidth]{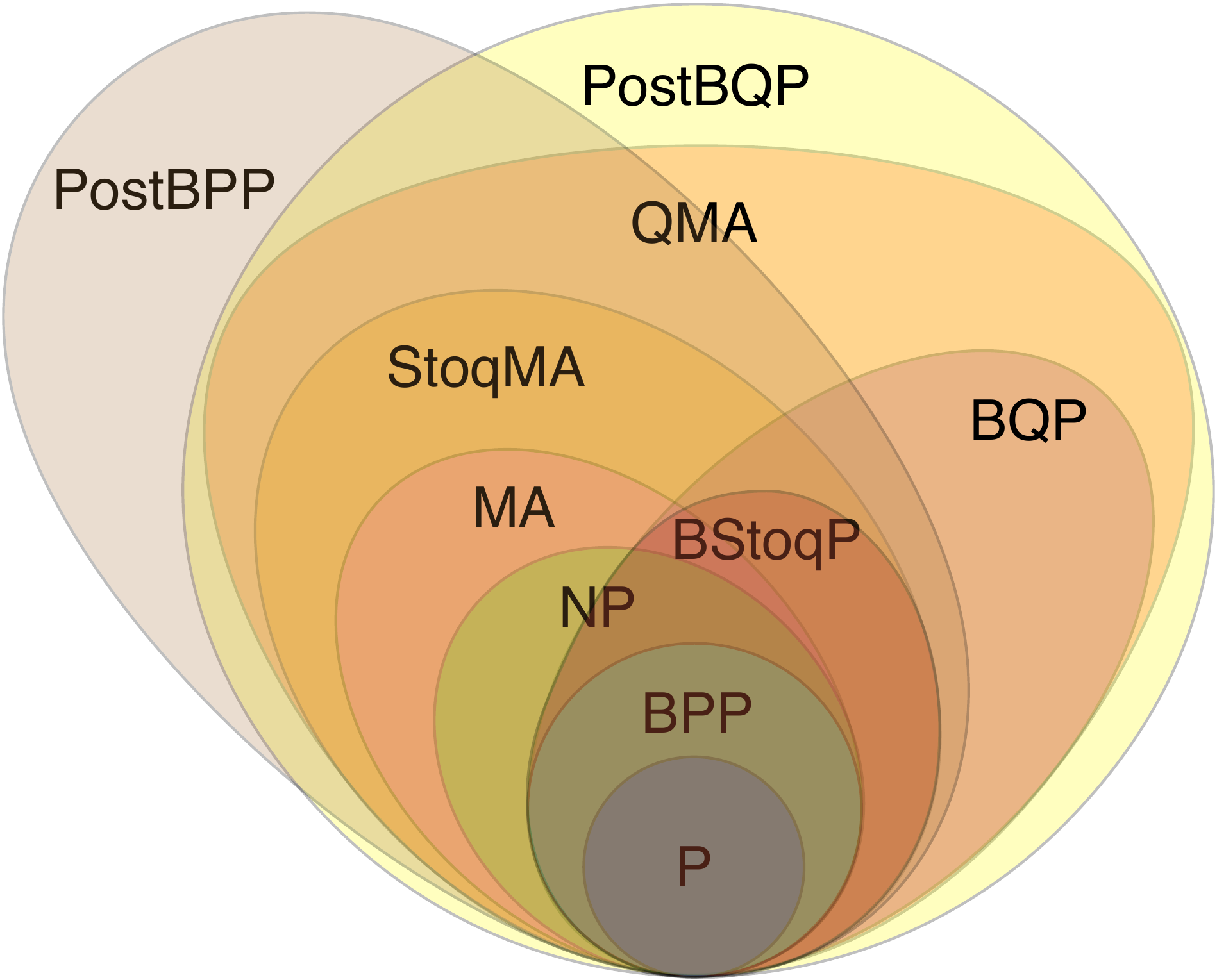} 
   \caption{Known relations between complexity classes relevant for AQC. The BStoqP class defined here (Def.~\ref{def:BStoqP}) lies in the intersection StoqMA and BQP, and includes BPP.}
   \label{fig:complexityclasses}
\end{figure}

\subsection{Why it might be easy to simulate stoquastic Hamiltonians}
\label{sec:stoq-is-easy}

In this subsection we briefly summarize the complexity-theoretic evidence obtained so far that suggests that the StoqAQC setting is less powerful than universal quantum computation. Let us start with a lemma that characterizes the ``classicality" of ground states of stoquastic Hamiltonians.
\begin{lemma}
The ground state $\ket{\psi}$ of a stoquastic Hamiltonian $H$ can always be expressed using only real nonnegative amplitudes: $\ket{\psi} = \sum_{x\in\{0,1\}^n} a_x \ket{x}$, where $a_x \geq 0$ $\forall x$.
\end{lemma}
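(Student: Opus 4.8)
The plan is to combine the variational (Rayleigh-quotient) characterization of the ground state with the sign structure of a stoquastic Hamiltonian. Write an arbitrary ground state as $\ket{\psi}=\sum_{x\in\{0,1\}^n} a_x\ket{x}$ with $a_x\in\mathbb{C}$ and $\sum_x|a_x|^2=1$, and introduce the companion state $\ket{\psi'}=\sum_x |a_x|\ket{x}$, which is also normalized. The key computation is to compare $\bra{\psi'}H\ket{\psi'}$ with $\bra{\psi}H\ket{\psi}$. Because $H$ is Hermitian with real off-diagonal elements in this basis, pairing the term $(x,x')$ with $(x',x)$ shows that the off-diagonal contribution to $\bra{\psi}H\ket{\psi}$ equals $\sum_{x\neq x'}\mathrm{Re}(a_x^* a_{x'})\,\bra{x}H\ket{x'}$. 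Now use $\mathrm{Re}(a_x^* a_{x'})\leq |a_x|\,|a_{x'}|$ together with the stoquasticity condition $\bra{x}H\ket{x'}\leq 0$ for $x\neq x'$ — multiplying an inequality by a nonpositive number reverses it — to conclude
\beq
\bra{\psi'}H\ket{\psi'}=\sum_x |a_x|^2\bra{x}H\ket{x}+\sum_{x\neq x'}|a_x|\,|a_{x'}|\,\bra{x}H\ket{x'}\leq \bra{\psi}H\ket{\psi}\ .
\eeq

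Since $\ket{\psi}$ minimizes $\bra{\phi}H\ket{\phi}$ over all normalized $\ket{\phi}$, and $\ket{\psi'}$ is normalized, the displayed inequality must in fact be an equality; hence $\ket{\psi'}$ is itself a ground state, and by construction all of its amplitudes are real and nonnegative. This proves the claim. Note that the degenerate case needs no extra work: $\ket{\psi'}$ is a genuine ground state whether or not it coincides with $\ket{\psi}$, and when the ground state is unique up to phase, $\ket{\psi'}$ is simply its nonnegative representative (equivalently, one absorbs a global phase so that the amplitude of largest modulus is positive and the argument fixes the rest).

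I would also remark that this is the Perron–Frobenius theorem in disguise, which is worth stating as the conceptual reason: picking any constant $c>\max_x\bra{x}H\ket{x}$, the operator $K\equiv cI-H$ has all nonnegative entries in the computational basis (nonnegative diagonal by the choice of $c$, nonnegative off-diagonal by stoquasticity), its Perron eigenvalue $\rho(K)$ corresponds to the smallest eigenvalue of $H$, and Perron–Frobenius guarantees a corresponding eigenvector with nonnegative entries — precisely the ground state of $H$. (Irreducibility of $K$ would additionally force this eigenvector to be strictly positive and the ground state to be unique, but the lemma as stated needs only the nonnegativity, which holds for any nonnegative matrix.)

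The only step requiring genuine care is the manipulation of the cross terms: one must invoke Hermiticity to reduce the complex double sum to its real part \emph{before} applying $\mathrm{Re}(a_x^* a_{x'})\leq |a_x|\,|a_{x'}|$, and one must keep track of the fact that multiplication by the nonpositive number $\bra{x}H\ket{x'}$ reverses the inequality. Everything else is routine. I am implicitly assuming $\mathcal{H}$ is finite-dimensional, as is the case throughout this review; in infinite dimensions one needs supplementary hypotheses guaranteeing that the ground state exists and that the Perron–Frobenius-type argument applies.
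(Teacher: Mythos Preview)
Your proof is correct. The paper takes the Perron--Frobenius route directly: it observes that for sufficiently small $\beta>0$ the matrix $\ident-\beta H$ has only nonnegative entries in the computational basis (off-diagonal entries $-\beta\bra{x}H\ket{x'}\geq 0$ by stoquasticity, diagonal entries positive once $\beta$ is small enough), so its top eigenvector---which is the ground state of $H$---can be chosen nonnegative by Perron--Frobenius. This is exactly your second remark with $K=c\,\ident-H$, up to rescaling. Your primary argument, by contrast, is a direct variational comparison of $\bra{\psi'}H\ket{\psi'}$ with $\bra{\psi}H\ket{\psi}$ that does not quote Perron--Frobenius at all; in effect you have reproved the nonnegativity part of that theorem in this setting. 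The variational route is more self-contained and yields a small bonus: the entrywise-modulus map $\ket{\psi}\mapsto\ket{\psi'}$ sends \emph{every} ground state to a nonnegative ground state, which in the degenerate case says something about the structure of the ground space. The paper's route is terser if one is content to cite the theorem, and it points the reader to the standard nonnegative-matrix toolkit (e.g., irreducibility of $K$ forcing uniqueness and strict positivity), which is useful context.
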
 
\begin{proof}
It follows directly from the stoquastic property that the corresponding Gibbs density matrix $\rho = \exp(-\beta H)/\Tr[\exp(-\beta H)]$ has non-negative matrix elements in the computational basis for any $\beta >0$. In particular, if $H$ is stoquastic then for sufficiently small $\beta$, $\ident - \beta H$ has only non-negative matrix elements. The largest eigenvalue of $\ident - \beta H$ corresponds to the ground state energy of $H$. Thus, by the Perron-Frobenius theorem (see Sec.~\ref{sec:Googlematrix}) the ground state of $H$ can be chosen to have non-negative amplitudes.
\end{proof}
Consequently, if the Hamiltonian is stoquastic, 
a classical probability distribution can be associated with the ground state.
This raises the question as to whether StoqAQC is a model that is capable of quantum speedup over classical  algorithms. Following is the evidence regarding this question.

\begin{enumerate}
\item The ground state energy of the fully ferromagnetic transverse field Ising model can be found to a given additive error in polynomial time with a classical algorithm on any graph, with or without a transverse magnetic field \cite{Bravyi:2016ab}.

\item In \cite{Bravyi:QIC08} it was shown that for any fixed $k$, stoquastic $k$-local Hamiltonian is contained in the complexity class AM. Thus, unless QMA$\subseteq$AM (which is believed to be unlikely), stoquastic $k$-local Hamiltonian is not QMA-complete.
\item It was also shown in \cite{Bravyi:QIC08} that gapped StoqAQC can be simulated in PostBPP, the complexity class described by a polynomial-time classical randomized
computer with the ability to post-select on some subset of the bits after the algorithm is run.\footnote{See also \cite{FarhiHarrow-QAOA}, where gapped StoqAQC was called stoquastic gapped adiabatic evolution, QADI-SG.} I.e., it suffices to call an oracle for problems in PostBPP a polynomial number of times to efficiently sample from the ground state of a gapped stoquastic Hamiltonian.\footnote{PostBPP, also known as BPP$_{\text{path}}$, contains NP. For example,
consider the Grover problem with two registers, a bit-string $x$ for the input and a second register where $f(x)$ is stored. Now if we pick $x$ at random and post-select on the second
register being $1$, we find a marked item. PostBPP is known to be contained in the third level of the polynomial hierarchy \cite{Han:1993fe}.} 

\noindent Suppose that StoqAQC could be used to perform universal quantum computation.
Since gapped StoqAQC can be simulated in PostBPP, this would imply that SampBQP
$\subseteq$ SampPostBPP.%
\footnote{%
In sampling problems we are given an input $x \in \{0, 1\}^n$, and the goal is to sample (exactly or approximately) from some probability distribution over poly$(n)$-bit strings. 
SampBQP and SampPostBPP are the classes of sampling problems solvable on quantum computers and probabilistic classical computers with post-selection, respectively, to within $\eps$ error in total variation (or trace-norm) distance, in time polynomial in $n$ and $1/\eps$ \cite{Aaronson:2010aa}.} In other words, this would imply that
polynomial time quantum algorithms can be simulated classically in polynomial
time using post-selection.
This would then imply that PostBPP=PostBQP
which in turn would collapse the polynomial hierarchy. Thus it is unlikely that StoqAQC is universal for AQC.
\item In \cite{Bravyi:2009sp} it was shown that adiabatic evolution along a path composed entirely of stoquastic frustration-free Hamiltonians (recall Definition~\ref{def:FF}) may be simulated by a sequence of classical random walks, i.e., is contained in BPP.
\end{enumerate}

With this evidence for the potential limitations of stoquastic Hamiltonians, the question arises if they are worthy of pursuit, either theoretically or experimentally. However, it is important to remember that the weakness of stoquastic Hamiltonians arises when one assumes that they generate an evolution that occurs in the ground state. Indeed, we will see in Sec.~\ref{sec:stoq-QMA-comp} that \emph{excited state} stoquastic evolution can be as powerful as universal AQC. 
Moreover, in the next subsection we briefly review counterexamples to the claim that stoquastic Hamiltonians are necessarily easy to simulate using heuristic classical algorithms.

\subsection{Why it might be hard to simulate stoquastic Hamiltonians}
\label{sec:stoq-is-hard}

There does not exist a general theorem that rules out a quantum speedup of StoqAQC over all possible classical algorithms. However, it is often stated that Monte Carlo simulations of StoqAQC do not suffer from the sign problem and will therefore simulate StoqAQC without a slowdown.
Specifically, the conjecture is that if the Monte Carlo simulation starts at $s=0$ in the equilibrium state, and if $s$ changes by a small amount $\epsilon$ from one step to the next, where $\epsilon$ is polynomially small in the system size $n$, the inverse temperature $\beta$ and the spectral gap $\Delta$, then the Monte Carlo simulation stays close to the equilibrium state along the path.  For sufficiently large $\beta$, this would correspond to following the instantaneous ground state.
In this subsection we briefly review theoretical evidence that such a conjecture is not always true.
We focus on two of the most direct classical competitors to StoqAQC: path integral quantum Monte Carlo (PI-QMC), and diffusion quantum Monte Carlo (D-QMC). 

\subsubsection{Topological obstructions}
In \cite{Hastings:2013kk} examples were given of StoqAQC with a polynomially small eigenvalue gap, but where PI-QMC take exponential time to converge. Loosely, the failure of convergence was due to topological obstructions around which the worldlines (trajectories in imaginary time) can get tangled. 

The simplest of the examples can be understood intuitively as follows. A sombrero-like potential is constructed for a single particle with a deep circular minimum of radius $r$. The worldline of the particle in PI-QMC with closed boundary conditions is some closed path that follows this circle in imaginary time.  Because of the depth of the potential at the minimum the distribution of trajectories has very small probability to include any point with radius larger than $r$ and it takes an exponential time in the winding number to transition from one winding number sector to another. Therefore, if an appropriate dimensionless combination of the radius $r$ or the mass of the particle is changed sufficiently fast then PI-QMC fails to equilibrate. At the same time it can be shown that for this example the gap closes polynomially and so one expects that adiabatic evolution requires only polynomial time to find the ground state.

While this example uses winding numbers to construct a protocol for which PI-QMC takes exponential time to equilibrate, PI-QMC can still find the ground state.  To observe a more dramatic effect where not only equilibration is hampered but also the probability of finding the ground state is low, one can introduce stronger topological effects and additionally exploit the discrepancy between $L_1$ and $L_2$-normalized wavefunctions. This was first done in the ``bouquet of circles" example introduced in \cite{Hastings:2013kk}, which shows that PI-QMC can fail to converge even when
using open boundary conditions. The example was designed so that the majority of
the amplitude $\psi$ lies within an expander graph, although the majority of the
probability $|\psi|^2$ does not. Because the endpoints of the wordlines are distributed
according to $\psi$ and not $|\psi|^2$, this effectively ``pins" them to the expander
graph. This pinning means that even though the worldline is in principle open,
the worldline is nevertheless prevented from changing its topological sector
within the bouquet of circles. This then causes failure of convergence. 

A more general method using perturbative gadgets is explained in \cite{Hastings:2013kk}, that allows one to map between continuous variables and spins and applies to all the examples given there.

\subsubsection{Non-topological obstructions}

Diffusion Monte Carlo algorithms should not be affected by topological obstructions that depend on closed boundary conditions, since they do not exhibit periodicity in the imaginary time direction. Rather than use topological obstructions, it is possible to rely entirely on the discrepancy between $L_1$ and $L_2$-normalization to design examples where Monte Carlo methods have differing convergence from AQC. This discrepancy was used in \cite{Jarret:16} to ensure that the walkers in a DQMC algorithm never ``learn" about a potential well that contains the solution, causing DQMC to take exponential time to converge. Since the gap for the adiabatic process is large, QA takes only polynomial time. 

Let $H(s)$ be some stoquastic Hamiltonian acting on a Hilbert space whose basis
states can be equated with the vertices $V$ of some graph. Let $\psi_s(x):V\mapsto \mathbb{C}$ denote the ground state of $H(s)$. Define probability distributions $p_s^{(1)}(x) = \frac{\psi_s(x)}{\sum_{y\in V} \psi_s(y)}$ and $p_s^{(2)}(x) = {\psi^2_s(x)}$. The stoquasticity of $H(s)$ ensures that $\psi_s(x) \geq 0$, so that $p_s^{(1)}(x)$ is a valid probability distribution.

D-QMC algorithms perform random walks designed to ensure that a population of random walkers converges to $p_s^{(1)}(x)$. However, in exponentially large Hilbert spaces there can be vertices such that the distribution associated with the $L_2$-normalized
wavefunction $p_s^{(2)}(x)$ is polynomial, but the distribution associated with the $L_1$-normalized
wavefunction $p_s^{(1)}(x)$ is exponentially small. The idea behind the examples in \cite{Jarret:16} is to exploit this discrepancy to design polynomial-time stoquastic adiabatic processes that the corresponding D-QMC simulations will fail to efficiently simulate.

The main example given in \cite{Jarret:16} is the stoquastic Hamiltonian $H(s) = \frac{1}{n}[L+b(s)W]-c(s)P$, where $L$ is the graph Laplacian of the $n$-bit hypercube, $W$ is the Hamming weight operator (i.e., $W\ket{x} = |x|\ket{x}$ where $|x|$ is the Hamming weight of the bit-string $x$), $P = \ketbra{0\cdots 0}{0\cdots 0}$. In terms of Pauli matrices this Hamiltonian can be written, up to an overall constant, as
\beq
H(s) = -\frac{1}{n}\sum_{j=1}^n \left(X_j + \frac{1}{2}  b(s) Z_j \right)-c(s)P .
\eeq
The schedules $b(s)$ and $c(s)$ are:
\begin{align}
b(s) & = \left\{ \begin{array}{lr}
2sb  &   \\
b  &  
\end{array} \right.
c(s)  = \left\{ \begin{array}{ll}
0  & s\in[0,1/2)  \\
(2s-1)c \  &  s\in[1/2,1]
\end{array} \right. \ . 
\end{align}
For $s\in[0,1/2)$ this is a Hamiltonian of $n$ non-interacting qubits whose gap is easily seen to be $\frac{2}{n}\sqrt{1+(sb/2)^2}$, minimized at $s=0$ where it equals $2/n$. The ground state is given by $\ket{\psi(\theta)}^{\otimes n}$, where $\ket{\psi(\theta)} = \cos(\theta/2)\ket{0}+\sin(\theta/2)\ket{1}$ and $\theta = \tan^{-1}[2/(sb)]$. For $s\in[1/2,1]$ it can be shown that the minimum gap is attained at $s=1/2$, where it equals $1/\sqrt{2n} + O(n^{-3/2})$. Thus the overall minimum gap is polynomial ($2/n$) and the StoqAQC process converges to the ground state $\ket{0\cdots 0}$ in polynomial time. By choosing $b$ so that at $s=1$ we have $\cos(\theta/2) = 1-1/(4n)$,
it is easy to show from the analysis of the non-interacting problem that the probability of ending up in the ground state is $p^{(2)}_{s=1}(0\cdots 0) = \cos^{2n}(\theta_{s=1}/2) 
\to e^{-1/2}$ in the limit $n\to \infty$. 

On the other hand, for the non-interacting problem (when $s\in[0,1/2)$) the D-QMC process\footnote{Here D-QMC refers to the ``Substochastic Monte Carlo (SSMC)" algorithm introduced in \cite{Jarret:16}.} samples from the distribution $p^{(1)}_{s}(x) = \sin(\theta/2)^{|x|} \cos(\theta/2)^{n-|x|}/Z_s$, where $Z_s = \sum_{x\in\{0,1\}^n} \sin(\theta/2)^{|x|} \cos(\theta/2)^{n-|x|} = [\sin(\theta/2)+\cos(\theta/2)]^n$, so that for large $n$ we have $Z_1 \approx (1+1/\sqrt{2n})^n \to e^{\sqrt{n/2}}$ for the same choice of $b$. Thus, for D-QMC the probability of being in the ground state at $s=1/2$ is $p^{(1)}_{s=1/2}(0\cdots 0) = \cos^{n}(\theta_{s=1}/2)/Z_{s=1} \to e^{-1/4}e^{-\sqrt{n/2}}$. Since at $s=1/2$ the random walkers that diffuse in the D-QMC process have a probability to be at the all-zeros string that is of order $e^{-\sqrt{n/2}}$, with high likelihood, no walkers will land on the all-zeros string until the number of time-steps times the number of walkers approaches $e^{\sqrt{n/2}}$. Until this happens it is impossible for the distribution of walkers to be affected by the change in the potential at the all-zeros string that is occurring from $s = 1/2$ to $s = 1$; no walkers have landed there, and the D-QMC algorithm has therefore never queried the value of the potential at that site. Only after allowing for this exponential cost, and by appropriately choosing $c$, does the D-QMC algorithm find the ground state with high probability.\footnote{For $c=2$ one finds that $p^{(1)}_{s=1}(0\cdots 0) = 1/2+O(n^{-1/2})$.}

\subsection{QMA-complete problems and universal AQC using stoquastic Hamiltonians with excited states}
\label{sec:stoq-QMA-comp}

Our definition of StoqAQC (Definition~\ref{def:StoqAQC}) stipulates that the computation must proceed in the ground state. It turns out that if this condition is relaxed, computation with stoquastic Hamiltonians is as powerful as AQC, i.e., it is universal. Here we review a construction by \cite{Jordan:2010fk} of a $3$-local stoquastic Hamiltonian that, by allowing for excited state evolution, is both QMA-complete and universal for AQC.

We start with the QMA-complete Hamiltonian introduced in Sec.~\ref{sec:BiamonteLove}, that can be written as:
\beq
H_{ZZXX} = \sum_i d_i X_i +  h_i Z_i + \sum_{i\leq j} J^x_{ij} X_i X_j + J^z_{ij} Z_i Z_j\ ,
\eeq
where $d_i$, $h_i$, $J^x_{ij}$ and $J^z_{ij}$ are arbitrary real coefficients.  The key idea is to eliminate the negative matrix elements in each term. Toward this end the Hamiltonian is written as:
\beq
H_{ZZXX} = - \sum_k \alpha_k T_k\ ,
\eeq
where $T_k \in \left\{\pm X_i, \pm Z_i, \pm X_i X_j, \pm Z_i Z_j \right\}$ and such that $\alpha_k > 0$.  For an $n$-qubit system, the operators $T_k$ are represented by $2^n \times 2^n$ symmetric matrices with entries taking value $+1,-1,0$.  We use the regular representation of the $Z_2$ group to make the replacement
\beq \label{eqt:Z2substitution}
1 \to \left(\begin{array}{cc}
1 & 0 \\
0 & 1
\end{array} \right)  , \  -1 \to \left(\begin{array}{cc}
0 & 1 \\
1 & 0
\end{array} \right) , \  0 \to \left(\begin{array}{cc}
0 & 0 \\
0 & 0
\end{array} \right)
\eeq
in $T_k$ to define a new operator $\tilde{T}_k$.  The matrix representation of $\tilde{T}_k$ is of size $2^{n+1}\times 2^{n+1}$, and since the original $T_k$ was either $1$-local or $2$-local acting on $n$ qubits, we can interpret $\tilde{T}_k$ as being $2$-local or $3$-local acting on $n+1$ qubits.  Note furthermore that $T_k$ is such that it only has one non-zero entry per row and column, hence with the substitution in Eq.~\eqref{eqt:Z2substitution}, the $\tilde{T}_k$'s are permutation matrices.  We can write the following Hamiltonian acting on $n+1$ qubits:
\beq
\tilde{H}_{ZZXX} = - \sum_k \alpha_k \tilde{T}_k
\eeq
which is a linear combination of permutation matrices with negative coefficients.  This makes $\tilde{H}_{ZZXX}$ a (3-local) stoquastic Hamiltonian.  We can write it as:
\beq
\tilde{H}_{ZZXX} = H_{ZZXX} \otimes \ketbra{-}{-} + \bar{H}_{ZZXX} \otimes \ketbra{+}{+} \ ,
\eeq
where $\bar{H}_{ZZXX} =- \sum_k \alpha_k |T_k|$ and $|T_k|$ is the entrywise absolute value of $T_k$.  To see why this is the case, first consider a positive element $(T_k)_{ij}$.  Then:
\begin{equation*}
- \alpha_k (T_k)_{ij} \otimes \ketbra{-}{-} - \alpha_k (T_k)_{ij} \otimes \ketbra{+}{+} = - \alpha_k (T_k)_{ij} \otimes \ident
\end{equation*}
corresponding to the first replacement in Eq.~\eqref{eqt:Z2substitution}.  For a negative element, we have:
\begin{equation*}
- \alpha_k (T_k)_{ij} \otimes \ketbra{-}{-} + \alpha_k (T_k)_{ij} \otimes \ketbra{+}{+} = - \alpha_k (T_k)_{ij} \otimes \sigma^x
\end{equation*}
corresponding to the second replacement in Eq.~\eqref{eqt:Z2substitution}.  The spectrum of $\tilde{H}_{ZZXX}$ separates into two sectors $\mL_\pm$.  The sector $\mL_-$ is spanned by $\ket{\veps_j} \otimes \ket{-}$, where $\ket{\veps_j}$ are the eigenstates of ${H}_{ZZXX}$, while the sector $\mL_+$ is spanned by $\ket{\bar{\veps}_j} \otimes \ket{+}$, where $\ket{\bar{\veps}_j}$ are the eigenstates of $\bar{H}_{ZZXX}$.  Because the Hamiltonian does not couple the two sectors (there are no interactions that take the ancilla qubit from $\ket{\pm}$ to $\ket{\mp}$), a closed system evolution initialized in the $\mL_-$ sector will remain in that sector.  

Because the spectrum in the $\mL_-$ sector is identical to that of $H_{ZZXX}$, which is capable of universal adiabatic quantum computation, then universal adiabatic quantum computation can be performed in the $\mL_-$ sector.  However, the lowest energy state in $\mL_-$ may not necessarily be the ground state of $\tilde{H}_{ZZXX}$.  Therefore, this establishes universal AQC using a stoquastic Hamiltonian only if we do not restrict ourselves to the ground state of the Hamiltonian.  Attempting to make the lowest energy state in $\mL_-$ be the ground state requires introducing a sufficiently large term proportional to $\ident \otimes \ketbra{+}{+}$ to the Hamiltonian $\tilde{H}_{ZZXX}$, but such a term would make the new Hamiltonian non-stoquastic since it would introduce positive off-diagonal elements.  Therefore this method does not establish universal adiabatic quantum computation using the ground state of a stoquastic Hamiltonian.  

\subsection{Examples of slowdown by StoqAQC}
\label{sec:exp-small-gaps}

It should not come as a surprise that AQC with an arbitrary final Hamiltonian, which is essentially a black box approach, does not guarantee quantum speedups. It can be vulnerable to the same sorts of locality traps confronted by heuristic classical algorithms such as simulated annealing.

A slowdown, or failure of AQC to provide a speedup, is a scenario wherein a more efficient classical algorithm is known. All the known examples that fall into this category arise when the gap closes ``too fast" in the problem size. However, it is important to note that adiabatic theorems provide only upper bounds on run time, not lower bounds. Thus, e.g., an exponentially
small gap does not strictly imply an exponentially long run time. The inverse gap is often treated as a proxy for run time, but a claim such as an equal scaling of the inverse gap and the run time does not hold as a general theorem.

With this caveat in mind, in this subsection we review such ``small-gap" examples in increasing order of generality or difficulty of analysis, which all arise in the StoqAQC context. However, we must first note another important caveat. Namely, in some of the examples we present numerical evidence that is based, in necessity, on finite size calculations. One is then often tempted to extrapolate such evidence to the asymptotic scaling. Of course, any such extrapolations based purely on numerics are conjectures. For example, a claim of exponential scaling can never be proven based on numerics alone, as any finite set of data points can always be perfectly fit by a polynomial of sufficiently high degree. Nevertheless, numerics-driven conjectures about scaling can be quite useful, especially if supported by other, analytical arguments.

Subsequently, we will see in Sec.~\ref{sec:fix-AQC} that there are various methods for circumventing  slowdowns, e.g., via the introduction of non-stoquastic terms. 

\subsubsection{Perturbed Hamming Weight Problems with Exponentially Small Overlaps}
\label{sec:PHW}

The plain Hamming weight problem is described by 
\beq
H_{\mathrm{HW}}(s) = (1-s) \frac{1}{2} \sum_i \left( 1- \sigma_i^x \right)  + s \sum_x |x| \ketbra{x}{x} \ .
\label{eq:118}
\eeq
Its cost function is simply the Hamming weight $|x|$ of the binary bit-string $x$, which is trivially minimized at $x=0^n$. Consider the following perturbation of the plain Hamming weight problem \cite{Dam:2001fk}:
\beq 
h(x) = \left\{ \begin{array}{lr}
|x| & \text{if} \  |x| < n \\
-1 & \text{if} \  |x| = n
\end{array} \right. \ .
\eeq
This is a toy problem that 
is designed to be hard for classical algorithms based on local search: its global optimum lies in a narrow basin, while there is a local optimum with a much larger basin.  An algorithm such as simulated annealing with single spin updates would require exponential time to find the global minimum.

Let us write the corresponding StoqAQC Hamiltonian to make the perturbation explicit:
\beq \label{eqt:vanDamVaziraniPHW}
H(s) = H_{\mathrm{HW}}(s) - s (n+1) \ketbra{1^n}{1^n} \ ,
\eeq
where $\ket{1^n}$ is the all-one state. Denote the instantaneous eigenstates of $H_{\mathrm{HW}}(s)$ by $\left\{ \ket{v_i(s)} \right\}$ ($v_{0}$ denotes the ground state).  Note that the overlap of the all-one-state with the instantaneous ground state of the plain Hamming Weight algorithm is always exponentially small:
\beq
\braket{1^n}{v_0(s)} \leq \frac{1}{\sqrt{2^n}}\ .
\eeq
We will show that this fact causes the adiabatic algorithm as defined in Eq.~\eqref{eqt:vanDamVaziraniPHW} to take exponential time because it leads to an exponentially small gap.

Define a matrix $A(s)$ with elements:
\beq
A_{ij} = \bra{v_i(s)}H(s) \ket{v_j(s)} \ .
\eeq
Note that $A(0)$ is diagonal and $A_{00}(0)=0$, equal to the ground state eigenvalue.  Also $A(1)$ is diagonal, but now $A_{2^n-1,2^n-1}(1)=-1$ is equal to the ground state eigenvalue.  Define a matrix $B$ in the same basis as:
\beq
B_{ij}(s) = \left\{ \begin{array}{ll}
A_{00}(s) & i=j=0 \\
0 & i = 0, j>0 \\
0 & i>0, j=0 \\
A_{ij}(s) & \text{otherwise}
\end{array} \right.
\eeq
The matrix $B$ always has $A_{00}$ as an eigenvalue.  By construction, we know that at $s=1$, the matrix $B$ has $-1$ as its ground state eigenvalue (located in the $2^{n-1} \times 2^{n-1}$ sub-matrix).  Because the matrix transforms continuously between these two extremes, there cannot be a jump in the ground state eigenvalue, so there must be a critical value of $s$, which we denote by $s_c$, where $B$ has a vanishing gap.

The optimal matching distance between $A$ and $B$ expresses how close their eigenvalue spectra are:
\beq
d(A,B) = \min_\pi \max_{1 \leq j \leq 2^n} | \lambda_j - \mu_{\pi(j)} | \ ,
\eeq
where $\pi$ denotes a permutation. Since $A$ and $B$ are Hermitian, this is upper-bounded by $\Vert A - B \Vert_2$ \cite{Bhatia:book}.  The matrix $A-B$ only has non-zero entries $(A-B)_{0,j>0} = A_{0,j>0}$ and $(A-B)_{j>0,0} = A_{j>0,0} = A_{0,j>0}^\ast$, with $A_{0,j>0} = -s(n+1) \braket{v_1(s)}{1^n} \braket{1^n}{v_j(s)}$.  Therefore:
\begin{align}
& \Vert A - B \Vert_2 =  \\
& \quad s (n+1) | \braket{v_1(s)}{1^n}| \sqrt{ \sum_{j=1}^{2^n-1} \braket{1^n}{v_j(s)}\braket{v_j(s)}{1^n}} \notag \\
& \quad =  s (n+1) | \braket{v_1(s)}{1^n}| \sqrt{  1 - |\braket{1^n}{v_0(s)}|^2 } \notag \\
& \quad \leq s(n+1)   | \braket{v_1(s)}{1^n}|  \leq \frac{s(n+1)}{\sqrt{2^n}} \ . \notag
\end{align}
Thus, the gap of $A$ [and hence of $H(s)$] is always upper-bounded by the gap of $B$ plus twice $\Vert A - B \Vert_2$.  Since at $s = s_c$ the gap of $B$ is zero, it follows that the gap of $H(s_c)$ is $\leq s_c (n+1)/\sqrt{2^{n-2}}$  \cite{Dam:2001fk}.  Therefore, the exponentially small overlap between the unperturbed instantaneous ground state and the perturbed final ground state results in the adiabatic algorithm requiring exponential time to reach the final ground state. Informally, this can also be viewed as the inability of local quantum search (fluctuations  induced by the local initial Hamiltonian) to explore the entire (non-local) energy landscape effectively.

\subsubsection{2-SAT on a Ring}
\label{sec:2SAT-ring}
%
In this subsection we review the ``2-SAT on a Ring" problem introduced in the seminal work \cite{farhi_quantum_2000}, which launched the field of AQC. This example is instructive because of its use of the Jordan-Wigner and Fourier transformation techniques, and is also of historical interest. It also serves to illustrate that even a polynomially small gap does not guarantee a quantum speedup. We thus review it in detail.

Consider an $n$-bit SAT problem with $n$ clauses.  Each clause only acts on adjacent bits, i.e., the clause $C_j$ only acts on bits $j$ and $j+1$, where we identify bit $n+1$ with bit 1.  Let each clause be of only two forms: ``agree'' clauses where $00$ and $11$ are satisfying assignments, and ``disagree'' clauses where $01$ and $10$ are satisfying assignments.  Since an odd number of satisfied disagree clauses means that the first bit of the first disagree clause is the opposite of the second bit of the last disagree clause, yet bits $1$ and $n+1$ must agree, there must be an even number of disagree clauses in order for a satisfying assignment to exist. The classical computational cost of finding a satisfying assignment is at most $n$: given the list of clauses, a satisfying assignment is found (assuming an even number of disagree clauses) simply by going around the ring and satisfying each clause one at a time. Note that if $\left\{ w_i \right\}_{i=1}^n$ is a satisfying assignment then so is $\left\{ \neg w_i \right\}_{i=1}^n$.

Let us now define the final Hamiltonian $H_1 = \sum_{i=1}^n H_{C_i}$ associated with the SAT problem, where each clause is represented by:
\begin{align}
H_{C_i} &=
\frac{1}{2} \left( 1 - (-1)^{x_i} \sigma_i^z \sigma_{i+1}^z \right) \\
&  x_i=0\ (1) \text{ if } C_i \text{ is an agree (disagree) clause}  \ . \notag
\end{align}
The ground states of $H_\mathrm{P}$ are then given by $\ket{0}_1\otimes_{i=2}^n \ket{w_i}_i$ and $\otimes_{i=1}^n \ket{ \neg w_i}_i$, where $w_i = \bigoplus_{j=1}^{i-1} x_j$ ($i \geq 2$ and addition modulo $2$). It is possible to gauge away all the disagree clauses. To see this, let $U$ be the unitary transformation defined such that
\beq
U \ket{z_i} = \left\{ \begin{array} {lr} 
\ket{\neg z_i} \ ,& \mathrm{if} \ w_i = 1 \\
\ket{z_i} \ , & \mathrm{if} \  w_j = 0
 \end{array} \right. \ .
\eeq
Under this unitary transformation we have:
\beq
H_1' = U H_1 U^{\dagger} = \sum_{i} \frac{1}{2} \left( \ident_i - \sigma_i^z \sigma_{i+1}^z \right) \ ,
\eeq
i.e., the new final Hamiltonian is a sum of just agree clauses. Note that this unitary transformation requires us to know the ground state, but $H_1'$ and $H_1$ are isospectral, so we can use it for convenience in our gap analysis. The adiabatic computation procedure will be governed by the following time-dependent Hamiltonian:
\beq
H(s) = (1-s) H_0 + s H_1' \ , \quad 0 \leq s \leq 1 \ ,
\eeq
with the initial Hamiltonian $H_0 = \sum_i \ident_i - \sigma_i^x$.
We wish to diagonalize $H(s)$ in order to find its ground state gap. First, define the negation operator $G = \prod_{i=1}^n \sigma_i^x$ such that:
\beq
 G \left( \otimes_{i=1}^n \ket{z_i} \right) = \otimes_{i=1}^n \ket{\neg z_i}\ , 
\eeq
which clearly commutes with $H(s)$. The uniform superposition state, which is the ground state of $H(0)$, is invariant under $G$, i.e., it has eigenvalue $+1$ under $G$.  Therefore, the unitary dynamics will keep the state in the sector with $G=+1$ if it starts in the ground state of $H(0)$.  Let us then write $H(s)$ purely in the $G=+1$ sector. Second, define the Jordan-Wigner transformation
\bes
\begin{align}
b_j & = \sigma_1^x \sigma_2^x \dots \sigma_{j-1}^x \sigma_{j}^-  \\
b_j^{\dagger} & = \sigma_1^x \sigma_2^x \dots \sigma_{j-1}^x \sigma_{j}^+ 
\end{align}
\ees
where $\sigma_j^{\pm} = \frac{1}{2} \left(\sigma_j^z \pm i \sigma_j^y \right)$.  These are fermionic operators that satisfy:
\bes
\begin{align}
\left\{ b_j, b_k \right\} &= 0 \quad \text{(amounts to } \left\{ \sigma_j^- , \sigma_k^- \right\} = 0 \text{)}   \\
\{ b_j, b_k^{\dagger} \} &= \delta_{jk} \quad \text{(amounts to } \left\{ \sigma_j^- , \sigma_k^+ \right\} = \delta_{jk} \text{)}  \ .
\end{align}
\ees
Note that 
\bes
\begin{align}
& b_{j}^{\dagger} b_j  = \frac{1}{2} \left(\ident_j - \sigma_j^x \right) \ , \ j=1,\dots, n \\
& \left(b_j^{\dagger} - b_j \right) \left( b_{j+1}^{\dagger}  + b_{j+1} \right)  = \sigma_j^z \sigma_{j+1}^z \ , \ j = 1, \dots, n-1  \label{eqt:b2} \\
& \left(b_n^{\dagger} - b_n \right) \left( b_{1}^{\dagger}  + b_{1} \right)  = - G \sigma_n^z \sigma_1^z \ .
\label{eqt:b3}
\end{align}
\ees
In order to make Eqs.~\eqref{eqt:b2} and \eqref{eqt:b3} consistent in the $G = +1$ sector, we take $b_{n+1} \equiv - b_1$.  Using this, we have:
\bea
\left. H(s) \right|_{G=+1} &=& \sum_{j=1}^n \left[2 (1-s) b_j^{\dagger} b_j \right. \\
&& \left. + \frac{s}{2} \left( \ident_j - \left(b_j^{\dagger} - b_j \right) \left( b_{j+1}^{\dagger} + b_{j+1} \right) \right) \right] \nonumber
\eea
Third, since this Hamiltonian is invariant under translations $j \mapsto j+1$, define Fourier operators $\beta_p$:
\beq
\beta_p = \frac{1}{\sqrt{n}} \sum_{j=1}^n e^{i \pi p j /n } b_j \ , \quad p = \pm 1, \pm 3, \dots,\pm (n-1) \ ,
\eeq
where for simplicity it is assumed that $n$ is even. Equivalently:
\beq \label{eqt:b's}
b_j = \frac{1}{\sqrt{n}} \sum_{p = \pm 1, \dots} e^{-i \pi p j / n} \beta_p \ ,
\eeq
where we used the fact that $\sum_{p=\pm 1, \dots} e^{i \pi p (j-j')/ n} = n \delta_{j,j'}$. 
Furthermore, note that:
\bes
\begin{align}
\left\{ \beta_{2a-1} , \beta_{2b-1} \right\} & =  \frac{1}{n} \sum_{j,j'} e^{i \pi ((2a-1) j + (2b-1) j')/n} \left\{ b_j, b_{j'} \right\}\nonumber  \\
& = 0 \\
\left\{ \beta_{2a-1} , \beta_{2b-1}^{\dagger} \right\} &= \frac{1}{n} \sum_{j,j'} e^{i \pi ((2a-1) j - (2b-1) j')/n} \left\{ b_j, b_{j'}^{\dagger} \right\} \nonumber \\
& = \frac{1}{n} \sum_{j=1}^n e^{2 \pi i (a - b)/n} = \delta_{a ,b}
\end{align}
\ees
so the set $\left\{ \beta_p \right\}$ comprises valid fermionic operators.  Writing the Hamiltonian in terms of this set, we have:
\bes
\begin{align}
H(s) = & \sum_{p=1,3,\dots} \left[ 2(1-s) \left( \beta_p^{\dagger} \beta_p + \beta_{-p}^{\dagger} \beta_{- p} \right) \right. \nonumber \\  
&  + s \left( \ident - \cos \left( \frac{\pi p}{n} \right) \left( \beta_p^{\dagger} \beta_p  - \beta_{-p} \beta_{-p}^{\dagger} \right)  \right.  \nonumber \\
& \left. \left. + i \sin \left( \frac{ \pi p}{n} \right) \left( \beta_{-p}^{\dagger} \beta_p^{\dagger} - \beta_p \beta_{-p} \right) \right)\right] \\
\equiv & \sum_{p=1, 3, \dots} A_p(s)
\end{align}
\ees
Now that $H(s)$ has finally been written as sum of commuting operators ($ \left[ A_p , A_{p'} \right] = 0$ for $p \neq p'$), we can diagonalize each summand separately.  For a given $p$, let us denote by $\ket{\Omega_p}$ the state that is annihilated by $\beta_p$ and $\beta_{-p}$, i.e., $\beta_p \ket{\Omega_p} =  \beta_{-p}  \ket{\Omega_p} = 0$.  Note that $A_p(s=0) \ket{\Omega_p} = 0$, so $\ket{\Omega_p}$ is the ground state of $A_p$ at $s = 0$ (recall that we already knew that the ground state energy at $s = 0$ was zero).  Let  $\ket{\Sigma_p} = \beta_{-p}^{\dagger} \beta_p^{\dagger} \ket{\Omega_p}$.  $A_p(s)$ keeps states in the subspace spanned by $\ket{\Omega_p}$ and $\ket{\Sigma_p}$ in the same subspace; the initial state is in this subspace, so we can restrict our attention to it.  Let us write $A_p(s)$ in the $\{\ket{\Omega_p},\ket{\Sigma_p}\}$ basis:
\beq
A_p(s) = \left( \begin{array}{cc}
s + s \cos \left( \frac{\pi p}{n} \right) & i s \sin \left( \frac{\pi p}{n} \right) \\
- i s \sin  \left( \frac{\pi p}{n} \right) & 4 - 3 s - s \cos  \left( \frac{\pi p}{n} \right)
\end{array} \right) \ .
\eeq
Diagonalizing this, we find for the energies:
\beq
E_p^{\pm}(s) = 2 - s \pm \left[ (2-3s)^2 + 4 s (1-s)( 1- \cos \left( \frac{\pi p}{n} \right) \right]^{1/2}\ .
\eeq
The instantaneous ground state energy of $H(s)$ is thus given by $\sum_{p=1,3,\dots} E_p^{-}(s)$.  The first excited state energy is given by $E_1^+(s) + \sum_{p=3, \dots} E_p^-(s)$.  The energy gap $\Delta(s)$ is therefore given by:
\bea
\Delta(s) &=& E_1^+(s) - E_1^-(s) \\
&=& 2  \left[ (2-3s)^2 + 4 s (1-s)( 1- \cos \left( \frac{\pi p}{n} \right) \right]^{1/2}\ . \nonumber
\eea
The minimum occurs at $s^{\ast}   = \frac{ 2 \left( 2 + \cos \frac{\pi}{n} \right)}{5 + 4 \cos \frac{\pi}{n} } \to 2/3$ as $n\to\infty$.  Therefore, the minimum gap is given by:
\beq
\Delta(s^{\ast}) = 4 \left| \sin \frac{\pi}{n} \right| \frac{1}{\sqrt{5 + 4 \cos \frac{\pi}{n}}} \to \frac{4 \pi}{3 n} \ ,
\eeq
which implies a polynomial run time for the adiabatic algorithm. As mentioned, the classical computational cost of finding a satisfying assignment is at most $n$. Therefore, despite the polynomially small gap in this example, there is no quantum speedup. This illustrates that a StoqAQC slowdown need not necessarily be associated with an exponentially small gap.

\subsubsection{Weighted 2-SAT on a chain with periodicity}
\label{sec:weighted2SAT}
%
We now discuss another problem, proposed in \cite{Reichardt:2004}, that combines 2-SAT with an exponential slowdown of StoqAQC. It can thus be viewed as exhibiting aspects of the two previous problems we discussed.
 
Consider a weighted 2-SAT problem on a chain with ``agree'' clauses between bits $i, {i+1}$ for $i = 1, \dots, N-1$ with weights:
\beq
J_i = \left\{\begin{array}{rl}
w & \text{if $\lceil \frac{i}{n} \rceil$ is odd} \ , \\
1 & \text{if $\lceil \frac{i}{n} \rceil$ is even}
\end{array} \right.
\eeq
where $n$ is the period and $w > 1$.  As for the previous 2-SAT problem, 
we can map this to a spin-chain with ferromagnetic couplings with strength given by $J_i$.  The adiabatic Hamiltonian is given by:
\beq
H(s) = -(1-s) \sum_{i=1}^{N} \sigma_i^x - s \sum_{i=1}^{N-1} J_i \sigma_i^z \sigma_{i+1}^z \ .
\eeq
This chain has coefficients that alternate between $w$ and $1$ in sectors of size $n$ each, with the $b+1$ odd-numbered sectors being ``heavy" ($J_i=w>1$), $b$ even-numbered sectors being ``light" ($J_i=1$), and where the total number of sectors is $(N-1)/n=2b+1$. Since the chain is ferromagnetic, the ground state of $H(1)$ is trivially the all-$0$ or all-$1$ computational-basis state. The problem is thus classically easy and can be solved by inspection or in time $O(N)$ by a heuristic classical algorithm such as simulated annealing, by simply traversing the chain and updating one spin at a time.

Note that at $s = 0$ there is a unique ground state, while as we just noted, at $s = 1$ the ground state is doubly degenerate.  Therefore, the relevant quantum ground state gap $\Delta$ is not the gap to the first excited state (since at the end of the evolution, this merges with the ground state), but to the second excited state. 

It turns out this gap is exponentially small in the sector size $n$ across a constant range $s\in (1/(1+w),1/2)$. Moreover, there are exponentially many (in $\sqrt{N}$) exponentially small excitations above the ground state for $n\sim \sqrt{N}$. 

More precisely, let $\mu_w = sw/(1-s)$. Theorem 4 in \cite{Reichardt:2004} states that:
\begin{enumerate}
\item For any fixed $s>1/(1+w)$, i.e., $\mu_w<1$, $H(s)$ has one eigenvalue only $O(\mu_w^{n})$ above the ground state energy. This means that the gap $\Delta$ is exponentially decreasing with the sector size $n$. 
\item For $s\in (1/(1+w), 1/(1+\sqrt{w})]$ (i.e., again $\mu_w<1$), $H(s)$ has $2^{b+1}-1$ eigenvalues only $O(b \mu_w^{n})$ above the ground state energy. This means that there are exponentially many (in the number of odd sectors $b+1$) excited states, that likewise have an exponentially small (in $n$) gap from the ground state. Note that $b = [(N-1)/n+1]/2$, so $b\sim\sqrt{N}$ when $n\sim\sqrt{N}$. 
\item For $s\in [1/(1+\sqrt{w}),1/2)$, where $\mu_1=s/(1-s)>1$, $H(s)$ has $2^{b+1}-1$ eigenvalues $O(b \mu_1^{-n})$ above the ground state energy. This again means an exponentially large number (in $b$) of excited states with an exponentially small (in $n$) gap.
\end{enumerate}
The proof uses a Jordan-Wigner transformation to diagonalize $H(s)$, similarly to the technique in Sec.~\ref{sec:2SAT-ring}. The spectral gaps of the $2^N \times 2^N$ matrix $H(s)$ are the square roots of the eigenvalues of an $N \times N$ symmetric, tridiagonal matrix. The Sturm sequence of the principal leading minors of this matrix is then analyzed to bound the eigenvalue gaps of $H(s)$.

Why might we expect this problem to be hard for the adiabatic algorithm?  
Within any given light or heavy sector, the problem (at fixed $s$) is that of a uniform transverse field Ising chain.  Consider the thermodynamic limit $N\gg 1$, and also let $n\gg 1$. In this limit the transverse field Ising chain encounters a phase transition separating the disordered phase and the ordered phase when $1-s=s J_i$ \cite{Sachdev:book}, i.e., at $s=1/(1+J_i)$.  (The boundary of the chain only adds $O(1)$ energy, so it does not impact this intuitive argument in the thermodynamic limit.)  This means that the heavy sectors encounter the phase transition at $s = \frac{1}{1+w}$ whereas the light sectors encounter the phase transition at $s = 1/2$, i.e., the light sectors order after the heavy ones.  At $s=\frac{1}{1+w}$ each heavy sector orders in either the all-$0$ or all-$1$ state, and different heavy sectors are separated by light sectors that have not ordered yet. 
Since the initial Hamiltonian generates only local spin flips, the algorithm is likely to get stuck in a local minimum with a domain wall in one or more disordered sectors, if run for less than exponential time in $n$. This mechanism  in which large local regions order before the whole is well-known in 
disordered, geometrically local optimization problems, giving rise to a Griffiths phase \cite{Fisher:1995zl}.

\subsubsection{Topological slowdown in a dimer model or local Ising ladder}

Another interesting example of a local spin model that leads StoqAQC astray was given in \cite{Laumann:2012hs}. They showed that a translation invariant quasi-1D transverse field Ising model with nearest-neighbor interactions only, the ground state of
which is readily found by inspection, results in exponentially long run times for StoqAQC. The model can be understood as either a dimer model on a two-leg ladder of even length $L$, or, using a duality transformation, a two-leg
frustrated Ising ladder of the same length in a uniform magnetic field,  the ground states of which map onto
the dimer states. The frustrated Ising ladder Hamiltonian is:
\beq
H_1 = -\sum_{\langle i,j \rangle} J_{ij}\sigma_i^z \sigma_j^z - K\sum_{i\in\text{ upper}} \sigma_i^z + \frac{1}{2}U \sum_{i\in\text{ lower}} \sigma_i^z\ ,
\eeq
where upper and lower refer to the legs of the ladder, $J_{ij}=-K$ for the upper-leg couplings and $J_{ij}=K$ for all other (lower-leg and rungs) couplings. 

Quantum dynamics is introduced via a standard initial Hamiltonian $-\Gamma(t) \sum_i \sigma_i^x$, where $\Gamma(0)\gg \|H_1\|$ and $\Gamma(t_f)=0$. The dimer model exhibits a first order quantum phase transition with an exponentially small gap when $K\gg U$, which is inherited by the frustrated Ising ladder model. Namely, the system prefers the sector with exponentially many ground states, while
any degeneracy-lifting interaction favors another containing only $O(1)$ states. StoqAQC selects the wrong sector, tunneling out of which becomes exponentially slow as $\Gamma$ is reduced. 

More specifically, for $K\gg U$, the Hilbert space is spanned by an orthonormal basis of hardcore dimer coverings (``perfect matchings") of the ladder. These fall into three sectors which
are topological in that they are not connected by any local rearrangement of the dimers. The sectors are labeled by a winding number $w$, the difference between the number of dimers on the top and bottom rows (on any fixed plaquette). The model assigns extensive energy $\propto L$ to every state in the $w=0$ sector while leaving the two staggered states $w=\pm 1$ as ground states with energy $0$.   On the other hand, at large $\Gamma$ (strong transverse field) the $w=0$ sector is favored.  Intuitively, slowly turning the transverse field off by reducing $\Gamma$ does not help change the topological sector since any off-diagonal term in the dimer Hilbert space involving only a finite number of rungs in the ladder leaves the winding number $w$ invariant.  This is depicted in Fig.~\ref{fig:ladder}.
Numerical analysis of the Ising ladder confirms this picture by revealing that the gap is exponentially small in $L$ when $K>U$. The critical point is found to be at $\Gamma_c \approx U/b + U^2/(4Kb^3)$, where $b\approx 0.6$ (from exact diagonalization numerics).
\begin{figure}[htbp] 
   \centering
   \includegraphics[width=0.95\columnwidth]{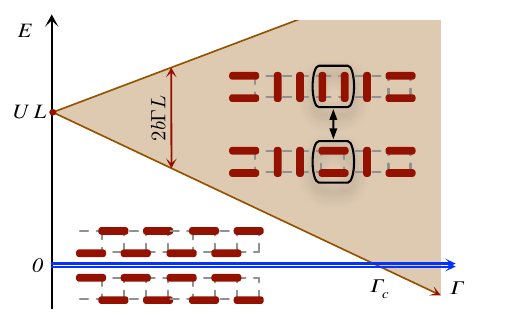} 
   \caption{Energy spectrum of the dimer model on an even length periodic ladder, with the dimer configurations illustrated. The $w = \pm 1$ states are at energy $E = 0$, while the $w = 0$ sector splits into a band for $\Gamma>0$.  For sufficiently large $\Gamma$, the $w = 0$ sector contains the ground state of the system. An unavoided level crossing (first order quantum phase transition) occurs at $\Gamma=\Gamma_c$, which is responsible for the quantum slowdown. From \cite{Laumann:2012hs}.}
   \label{fig:ladder}
\end{figure}

\subsubsection{Ferromagnetic Mean-field Models}
The quantum ferromagnetic $p$-spin model is given by:
\beq
H = -\frac{1}{n^{p-1}}\left( \sum_{i=1}^n \sigma^z_i \right)^p - \Gamma \sum_{i=1}^n \sigma_i^x \ .
\eeq
By inspection it is clear that when $p$ is even, the ground state at $\Gamma = 0$ is either of the two fully-aligned ferromagnetic states, while when $p$ is odd, the unique ground state at $\Gamma = 0$ is the fully-aligned spin-down state.  As $\Gamma$ is tuned from a large value towards zero, the system encounters a first-order phase transition for $p>2$.  This can be readily shown by employing the Suzuki-Trotter decomposition and the static approximation  \cite{Chayes2008,PhysRevB.78.134428,Jorg:2010qa,Suzuki-book} to calculate the partition function $Z$ in the large $n$ limit, where
\beq
Z = \int d m\ e^{-\beta n F(\beta, \Gamma, m)} \ .
\eeq
Here $\beta$ is the inverse temperature, $m$ is the Hubbard-Stratonovich field \cite{Hubbard:1959ix}, and $F$ is the free energy density given by:
\beq
F= (p-1) m^p - \frac{1}{\beta} \log \left[ 2 \cosh \left( \beta \sqrt{\Gamma^2 + p^2 m^{2p-2}} \right) \right] \ .
\eeq
The dominant contribution to $F$ comes from the saddle-point of the partition function $Z$, which provide consistency equations for the field $m$:
\beq
m = p m^{p-1} \frac{\tanh \left( \beta \sqrt{ \Gamma^2 + p^2 m^{2 (p - 1)}} \right)}{ \sqrt{ \Gamma^2 + p^2 m^{2 (p - 1)}}}\ .
\eeq
Solving this equation numerically for $p>2$ reveals a discontinuity in the value of $m$ that minimizes the free energy as $\Gamma$ is tuned through the phase transition point.  At this critical point, the free energy exhibits a degenerate double-well potential, and an instantonic calculation on this potential gives an exponentially small energy gap with system size \cite{Jorg:2010qa}.

\subsubsection{3-Regular 3-XORSAT}
\label{sec:3XORSAT}
All the problems we discussed so far were amenable to a classical solution ``by inspection" (i.e., the solution is obvious from the form of the cost function). Some 
problems were even easy for classical heuristic algorithms performing local search. We now discuss a problem that is non-trivial in this respect, i.e., classically only yields in polynomial time to a tailored approach.

In 3-XORSAT, each clause involves three bits, and there are $M$ clauses and $n$ bits in total.  A clause is satisfied if the sum of the three bits (mod $2$) is a specified value; it can be $0$ or $1$ depending on the clause.  For 3-regular 3-XORSAT, every bit is in exactly three clauses and $M = n$.  This problem is associated with a spin glass phase but is ``glassy without being hard to solve" \cite{Franz:2001,Ricci-Tersenghi1639}: the problem of finding a satisfying assignment can be solved in polynomial time using Gaussian elimination because the problem involves only linear constraints (mod $2$) \cite{haanpaa2006hard}.  

A final Hamiltonian involving $n$ spins can be be written such that each satisfied clause gives energy $0$ and each unsatisfied clause gives energy $1$:
\beq
H_1 = \sum_{c=1}^n \left( \frac{\ident - J_c \sigma^z_{i_1,c} \sigma^z_{i_2,c} \sigma^z_{i_3,c}}{2} \right)\ .
\eeq
Here the index $(i_k, c)$ denotes the three bits associated with clause $c$, and $J_c \in \left\{\pm 1 \right\}$ depending on whether the clause is satisfied if the sum of its bits (mod $2$) is $0$ or $1$.  The StoqAQC Hamiltonian is then given, as usual, by $H(s) = -(1-s) \sum_{i=1}^n \sigma_i^x + s H_1$.
The median minimum gap for random 3-regular 3-XORSAT has numerically been shown to be exponentially small in the system size up to $n=24$ \cite{Jorg:2010a} and $n=40$ \cite{farhi:12} [both using the quantum cavity method \cite{Laumann:2008ys,PhysRevB.78.134428} and QMC simulations], with a first order quantum phase transition at $s = 1/2$. Thus, the numerical evidence suggests that StoqAQC takes exponential time to solve this problem. The same is true for classical heuristic local search algorithms such as WalkSAT \cite{Guidetti:2011}.

We note that since the Hamiltonian gap is not a thermodynamic quantity, one must be careful not to automatically associate a first order quantum phase transition with an exponentially small gap. While the examples presented in this review agree with this rule [for additional examples see \cite{Dusuel:2005,Knysh:06,Jorg:2008,Jorg:2010qa,Bapst:2012}], counterexamples wherein a first order quantum phase transition is associated with a polynomially small gap are known \cite{Cabrera:1987tw,Laumann:2012hs,Tsuda:2013br,Laumann:2015sw}.

\subsubsection{Sherrington-Kirkpatrick and Two-Pattern Gaussian Hopfield Models}

\label{sec:Hopfield}
The Sherrington-Kirkpatrick (SK) model, the prototypical spin glass model, is NP-hard, yet its quantum transverse field Ising model version \cite{Ishii:1985,Usadel:1986,PhysRevB.39.11828,Das:2005aa} exhibits a second order phase transition separating the paramagnetic phase from the spin glass phase \cite{Miller:1993,Ye:1993}. The model is defined via the final Hamiltonian
\beq
H_1 = \sum_{i_1<i_2} J_{i_1i_2} \sigma_{i_1}^z \sigma_{i_2}^z
\eeq 
where the couplings $J_{i_1i_2}$ are zero-mean, independent and identically distributed random variables (e.g., Gaussian, or bimodal, i.e., $J_{i_1i_2}=\pm 1$) and every spin is coupled to every other spin. The adiabatic computation proceeds via 
\beq
H(t) = H_1 - \Gamma(t) \sum_{i=1}^n \sigma_i^x\ ,
\label{eq:H(Gamma)}
\eeq 
where $\Gamma$ is adiabatically reduced to zero.

The polynomial closing of the gap at this phase transition appears promising for AQC. However, a spin glass is dominated by a rough free energy landscape with many local minima forming bottlenecks for classical heuristic local-search algorithms \cite{Parisi:book,Nishimori-book}. 

To gain insight into this phenomenon, and in particular its impact on StoqAQC, \cite{Knysh:2016iq} studied another fully connected model with a vanishing classical gap: the Gaussian Hopfield model, defined generally via 
\beq
J_{i_1i_2\cdots i_p} = \frac{1}{n^{p-1}} \sum_{\mu=1}^r \xi_{i_1}^{(\mu)}\cdots\xi_{i_p}^{(\mu)}
\label{eq:Hopfield}
\eeq 
(Hebb rule), where $\xi_i^{(\mu)}$ are zero-mean i.i.d. random variables of unit variance. By focusing on the analytically more tractable Hopfield model, \cite{Knysh:2016iq}  rigorously analyzed for $r=2$ (the two-pattern case) and $p=2$ (two-local interactions) the properties of local minima away from the global minimum. 

The main insight gained from the theoretical analysis of \cite{Knysh:2016iq} is that the complexity of the model is not determined by the phase transition, but rather by the existence of small-gap bottlenecks in the spin glass phase. Namely, after the occurrence of the polynomially closing gap associated with the second order phase transition separating the paramagnetic and glass phases, there are $O(\log n)$ additional gap minima in the spin glass phase appearing in an approximate geometric progression, a phenomenon that can be attributed to the self-similar properties of the free energy landscape in a $\Gamma$ interval bounded by the appearance of the spin-glass phase.  At these bottlenecks, the gaps scale as a stretched exponential $e^{-c \Gamma_m^{3/4} n^{3/4}}$, where $\Gamma_m$ is the location of the $m$-th minimum.  This is illustrated in Fig.~\ref{fig:Knysh2016}.
Nevertheless this means that StoqAQC suffers a (stretched) exponential slowdown, since the two pattern Gaussian Hopfield model admits an efficient classical solution based on angle sorting and exhaustive search, that scales as $O[n\log(n)]$ \cite{Knysh:2016iq}. Thus, this is another case where a StoqAQC algorithm is too generic to exploit problem structure, and consequently a tailored classical algorithm has exponentially better scaling.

\begin{figure}[htbp] 
   \centering
   \includegraphics[width=0.95\columnwidth]{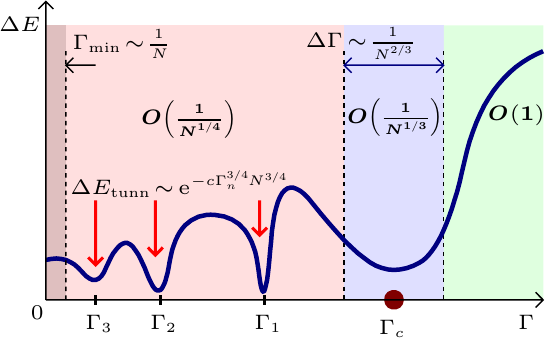} 
   \caption{Illustration showing the gap behavior in the $r =2$, $p=2$ Gaussian Hopfield model.  The paramagetic-spin-glass transition occurs at $\Gamma_c$, with $\Gamma < \Gamma_c$ denoting the spin-glass phase.  The typical gap is denoted using big-O notation.  The spin-glass phase contains $\log n$ additional minima in the gap (indicated by red arrows).  $\Gamma_{\mathrm{min}}$ corresponds to the lowest energy scale of the classical Hamiltonian, which in this case scales as $1/N$, where $N$ was used to represent the variable we denote by $n$.  From \cite{Knysh:2016iq}.}
   \label{fig:Knysh2016}
\end{figure}

\subsection{StoqAQC algorithms with a scaling advantage over simulated annealing}

A substantial effort is underway to develop problems that may exhibit any form of a quantum speedup (recall the classification given in Sec.~\ref{sec:algorithms}).  One approach has been to develop ``tunneling gadgets", i.e., small toy Hamiltonians that exhibit tunneling \cite{Boixo:2016}, and use these gadgets to construct larger problems \cite{PhysRevX.6.031015}.   
An alternative approach has been to develop instances that are believed to exhibit ``small-and-thin" energy barriers in their classical energy landscape  \cite{Katzgraber:2015gf}  in the hope that such barriers persist in the quantum energy landscape where tunneling occurs.
These approaches have been used primarily to assess the performance of the D-Wave devices and are based on numerical analysis, which makes extrapolation and conclusions about asymptotic scaling rather challenging \cite{2016arXiv160401746M,Brady:16}. 

In this subsection we consider several examples of StoqAQC with a demonstrable quantum scaling advantage over simulated annealing (SA).  While none of the examples are demonstrations of an unqualified quantum speedup, these examples are illustrative in that they reveal important qualitative differences between SA, where thermal fluctuations are used to explore the energy landscape, and StoqAQC, where quantum fluctuations are used to explore a different energy landscape. Still, these results are based on a comparison with SA that uses only single-spin updates. SA-like algorithms with cluster-spin updates can be significantly more efficient \cite{Swendsen87,PhysRevLett.62.361,Zintchenko:2015zl,PhysRevLett.115.077201,Houdayer:2001,2016arXiv160401746M},
and their performance relative to StoqAQC is largely an open question. The same is true for parallel tempering (aka exchange Monte Carlo) \cite{swendsen:86,Earl:2005pd,Hukushima:1996,Marinari:1992,katzgraber:06a}.

\ignore{
Before listing the examples, we first mention some general arguments that have been put forward to advocate that StoqAQC may have an advantage over SA.  If one assumes that Eq.~\eqref{eq:AT-folklore2} is a sufficient condition for adiabaticity, or equivalently:
\beq
\delta \equiv \frac{1}{\left(\veps_j(t)- \veps_0(t) \right)^2} \left| \bra{ \veps_j(t)} {\partial_t H(t)} \ket{\veps_0(t)} \right| \ll 1
\eeq
($j>0$), then it can be shown that for an adiabatic algorithm of the form of Eq.~\eqref{eq:H(Gamma)}, a sufficient condition to achieve adiabaticity and hence reach the ground state with high probability is to take [Theorem 2.2 of~\cite{morita:125210}]:
\beq
\Gamma(t) = a \left( \delta \cdot  t + c \right)^{-1/(2n-1)}
\eeq
for $t >0$, and where $a,c$ are constants of $O(n^0)$.  This is to be contrasted with a sufficient condition for SA at large $n$ to reach the thermal state with high probability \cite{Geman:1984} [see also Theorem 3.2 of~\cite{morita:125210}]:
\beq
T(t) = \frac{p n}{\log \left[ \alpha(n) t + 1 \right]}\ ,
\eeq
where $T(t)$ is the temperature schedule [analogous to the annealing schedule $\Gamma(t)$], $p = \max_j \|h_j\|$ if we write $H_1 = \sum_j h_j$, and $\alpha\sim e^{-n}$.  

However, while the polynomial scaling of $\Gamma(t)$ with $t$, in contrast to the inverse $\log$ scaling of $T(t)$ with $t$, may suggest an advantage for the quantum algorithm, these results should be interpreted carefully.  First, both results are simply sufficient conditions, and hence may not necessarily represent a generic feature of either algorithm.  In both cases, the sufficient condition amounts to an exponential run time to reach the ground state.  To see this, we take the case where $\Gamma(t) = \epsilon$ and $T(t) = \epsilon'$ are sufficiently small that the ground state of $H_1$ is reached.  In the quantum case, this means that $t \propto \epsilon^{1 - 2n}$, whereas for the classical case we have $t \propto \alpha^{-1}(n) e^{- \epsilon'/(p n)}$.  In both cases, the scaling is exponential in $n$.

Another suggestive result is the mapping between the transition matrix of a classical Markovian Pauli master equation and a stoquastic Hamiltonian  \cite{Castelnovo2005316,Somma:2007,Nishimori:2015dp}.  This mapping preserves the eigenspectrum,\footnote{However, an important caveat is that the classical-to-quantum mapping with a time-dependent temperature, as in classical annealing, only preserves the spectrum in the quasistatic limit, i.e. when the temperature change is sufficiently slow.} but a local (few-body) stoquastic Hamiltonian generally does not map to a local transition matrix, suggesting that the stoquastic Hamiltonian has a reduced locality and hence is more efficient in its representation \cite{Nishimori:2015dp}. Of course, this does not imply a more efficient scaling of the time to solution with problem size. 
}
\subsubsection{Spike-like Perturbed Hamming Weight Problems}
\label{sec:spike}
%
We start with a problem for which there is no (limited) quantum speedup, in order to set up the more interesting problems that follow. Consider a cost function $f(x)$ to be minimized with $x \in \left\{ 0 , 1 \right\}^n$ an $n$-bit string.  The final Hamiltonian can generically be written as:
\beq
H_1 = \sum_{x} f(x) \ketbra{x}{x}\ .
\eeq
We first consider the cost function of the ``plain" Hamming weight problem:
\beq
f(x) = |x|
\eeq
where $|x|$ denotes the Hamming weight of the $n$-bit string $x$ [as in Eq.~\eqref{eq:118}].  This problem is equivalent to a system of $n$ non-interacting spins in a global (longitudinal) field, which is of course a trivial problem that can be solved in time $O(1)$, e.g., by parallelized SA running with a a single thread for each spin. The scaling of the time needed by the quantum algorithm is ${O}(n^{1/2})$, and the full cost of the quantum algorithm is $O(n^{3/2})$ according to Eq.~\eqref{eq:cost}, since it requires $O(n)$ single-qubit terms in the Hamiltonian. 
A fairer comparison is to an SA algorithm that is ignorant of the structure of the problem. In this case one can show that the cost for single-spin update SA with random spin selection is lower bounded by $\mathcal{O}(n \log n)$ \cite{Muthukrishnan:2015ff}. 

Next we consider a more interesting problem, referred to as the ``spike", first studied in~\cite{Farhi-spike-problem}.  The cost function is given by:
\beq
f(x) = \left\{ \begin{array}{ll} 
n & \mathrm{if} \  |x| = n/4 \\
|x| & \mathrm{otherwise}
\end{array} \right.
\eeq
Since the barrier scales with $n$, we can expect that single-spin-update SA will take $\exp(n)$ time to traverse the barrier.  However, it can be shown that the quantum gap scales as $\Omega(n^{-1/2})$ \cite{Farhi-spike-problem,2015arXiv151106991K}, so the adiabatic algorithm only takes polynomial time.  

This type of ``perturbed" Hamming weight problem can be generalized, while still retaining an advantage over single-spin-update SA.  For cost functions of the form
\beq
\label{eq:148}
f(x) = \left\{ \begin{array}{ll} 
|x| + h(n) & \mathrm{if} \  l(n) < |x| < u(n) \\
|x| & \mathrm{otherwise}
\end{array} \right.
\eeq
satisfying $h[(u-l)/\sqrt{l}] = o(1)$, the minimum gap of the quantum algorithm is lower bounded by a constant \cite{Reichardt:2004} [see Appendix~A of Ref.~\cite{Muthukrishnan:2015ff} for a pedagogical review of the proof].  The SA run time, on the other hand, scales exponentially in $\max_n h(n)$.  

Similarly, consider barriers with width proportional to $n^{\alpha}$ and height proportional to $n^{\beta}$, i.e.
\beq
f(x) = \left\{ \begin{array}{ll} 
|x| + n^{\alpha} & \mathrm{if} \  \frac{n}{4} - \frac{1}{2}  n^{\beta} < |x| < \frac{n}{4} + \frac{1}{2} n^{\beta} \\
|x| & \mathrm{otherwise}
\end{array} \right. \ .
\eeq
When $\alpha$ and $\beta$ satisfy $\alpha + \beta \geq 1/2$, $\alpha < 1/2$, and $2 \alpha + \beta \leq 1$, the minimum gap scales polynomially as $n^{1/2 - \alpha - \beta}$ \cite{Brady:2015rc,Brady:2016}, while the SA run time scales exponentially in $n^{\alpha}$.

\subsubsection{Large plateaus}

The above examples have relied on energy barriers in the classical cost that scale with problem size to foil single-spin-update SA. This agrees with the intuition that a StoqAQC advantage over SA is associated with tall and thin barriers \cite{PhysRevB.39.11828,RevModPhys.80.1061}. However, somewhat counterintuitively, it is also possible to foil SA by having very large plateaus in the classical cost function. Specifically, consider:
\beq
f(x) = \left\{ \begin{array}{ll} 
u -1 & \mathrm{if} \  l < |x| < u\\
|x| & \mathrm{otherwise}
\end{array} \right.
\eeq
where $l, u = \mathcal{O}(1)$ [a special case of Eq.~\eqref{eq:148}].  SA with single-spin-updates and random spin selection has run time $O(n^{u-l-1})$, where $u-l-1$ is the plateau width \cite{Muthukrishnan:2015ff}.  This polynomial scaling arises because the energy landscape provides no preferred direction and SA then behaves as a random walker on the plateau.  Numerical diagonalization shows that the quantum minimum gap is constant and the adiabatic run time is only $O(n^{1/2})$, where the scaling with $n$ arises from the numerator of the adiabatic condition  \cite{Muthukrishnan:2015ff}. Thus StoqAQC has a polynomial scaling advantage over SA in this case.

A natural question is whether these potential quantum speedup results for StoqAQC relative to SA survive when other algorithms are considered. The answer is negative.  \cite{Muthukrishnan:2015ff} showed that a diabatic evolution is more efficient than the adiabatic evolution to solve these problems, and a similar efficiency is achieved using classical spin-vector dynamics. There is also a growing body of numerical \cite{2014arXiv1410.8484C,Brady:2015rc,Muthukrishnan:2015ff,PhysRevX.6.031015} and analytical \cite{2015arXiv151008057I,Crosson-Harrow:2016} research that shows that quantum Monte Carlo methods exhibit similar or even identical advantages over SA for many spike-like perturbed Hamming weight problems.

\subsection{StoqAQC algorithms with undetermined speedup}

In this subsection we focus on examples where it is currently unknown whether there is a quantum speedup or slowdown for StoqAQC.

\subsubsection{Number partitioning}
The number partitioning problem is a canonical NP-complete problem \cite{Garey:book} that is defined as follows: given a set of $n$ positive numbers $\left\{ a_i\right\}_{i=1}^n$, the objective is to find a partition $\mP$ of this set that minimizes the partition residue $E$ defined as:
\beq
E = \left| \sum_{j \in \mP}a_j - \sum_{j \notin \mP} a_j\right|\ .
\eeq
The problem exhibits an easy-hard phase transition at the critical value $b/n=1$, where $b$ is the number of bits used to represent the set $\{a_i\}$ \cite{Mertens:1998qd,Borgs:2001rc}. In the hard phase it roughly corresponds to finding the minimum in a set of $2^n$ numbers \cite{Mertens:2001qq}. To translate it into Ising spin variables let $s_j = 1$ when $j \in \mP$ and $s_j = -1$ otherwise, so that
\beq
E = \left| \sum_{j=1}^n a_j s_j \right|\ ,
\eeq
which can then be turned into a Mattis-like Ising Hamiltonian whose ground state is the minimizing partition:
\beq
H_1 = \sum_{i,j=1}^n a_i a_j s_i s_j\ .
\label{eq:H-number-part}
\eeq
The energy landscape of this final Hamiltonian is known to be extremely rugged in the hard phase \cite{Smelyanskiy:2002,Stadler:2003}, and the asymptotic behavior can already be seen for small sizes $n$.  While SA effectively requires the searching of all possible bit configurations with a run time $\propto 2^{0.98n}$ \cite{Stadler:2003}, numerical simulations of StoqAQC exhibit a slightly better run time $\propto 2^{0.8n}$ \cite{PhysRevX.6.031015}.  State-of-the-art classical algorithms have scalings as low as $2^{0.291 n}$ \cite{Becker2011}. 

It should be noted that number partitioning is known as the ``easiest hard problem" \cite{Hayes:02} due to the existence of efficient approximation algorithms that apply in most (though of course not all) cases, e.g., a polynomial time approximation algorithm known as the differencing method \cite{Karmarkar:1982la}. It should further be noted that if all the $a_j$'s are bounded by a polynomial in $n$, then
integer partitioning can be solved in polynomial time by dynamic programming
\cite{Mertens:2003aa}. The NP-hardness of the number partitioning problem
requires input numbers of size exponentially large in $n$ or, after division by the maximal input number, of exponentially high precision. This is problematic since the $\{a_j\}$ are used as coupling coefficients
in the adiabatic Hamiltonian~\eqref{eq:H-number-part}, and suggests that a different encoding will be needed in order to allow AQC to meaningfully address number partitioning.

\subsubsection{Exact Cover and its generalizations}
\label{sec:exact-cover}
We briefly review the adiabatic algorithm for Exact Cover, which initiated and sparked the tremendous interest in the power of AQC when it was first studied in \cite{Farhi:01}. While the optimistic claim made in that paper, that ``the quantum adiabatic algorithm worked well, providing evidence that quantum computers (if large ones can be built) may be able to outperform ordinary computers on hard sets of instances of NP-complete problems" turned out to be premature, the historical impact of this study was large, and it led to the avalanche of work that forms the core of this review.

The Exact Cover $3$ (EC3) problem is an NP-complete problem that is a particular formulation of 3-SAT [recall Sec.~\ref{sec:SAT-background}] whereby each clause $C$ (composed of three bits $x_{C_1}, x_{C_2}, x_{C_3}$ that are taken from the set of variables $\left\{ x_i\in\{0,1\} \right\}_{i=1}^n$) is satisfied if $x_{C_1} + x_{C_2} + x_{C_3} = 1$.  There are only three satisfying assignments: (1,0,0), (0,1,0), and (0,0,1).  A $3$-local Hamiltonian $H_{C}$ can be associated with each clause, that assigns an energy penalty to the unsatisfying assignments \cite{Farhi:01,Latorre:04}:
\begin{eqnarray}
H_C &=& \frac{1}{8} \left[ \left(1 + \sigma_{C_1}^z \right) \left(1 + \sigma_{C_2}^z \right) \left(1 + \sigma_{C_3}^z \right) \right. \nonumber \\
&& + \left(1 - \sigma_{C_1}^z \right) \left(1 - \sigma_{C_2}^z \right) \left(1 - \sigma_{C_3}^z \right) \nonumber \\
&& + \left(1 - \sigma_{C_1}^z \right) \left(1 - \sigma_{C_2}^z \right) \left(1 + \sigma_{C_3}^z \right) \nonumber \\
&& + \left(1 - \sigma_{C_1}^z \right) \left(1 + \sigma_{C_2}^z \right) \left(1 - \sigma_{C_3}^z \right)  \nonumber \\
&& \left. + \left(1 + \sigma_{C_1}^z \right) \left(1 - \sigma_{C_2}^z \right) \left(1 - \sigma_{C_3}^z \right) \right]\ .
\label{eq:EC-1}
\end{eqnarray}
A $2$-local alternative is \cite{Young:2010}:
\beq
H_{C} = \frac{1}{4} \left( \sigma_{C_1}^z + \sigma_{C_2}^z + \sigma_{C_3}^z - 1 \right)^2\ .
\label{eq:EC-2}
\eeq
The final Hamiltonian is then given by $H_1 = \sum_{C} H_{\mathrm{C}}$.  If the ground state energy is $0$, then an assignment exists that satisfies all clauses.  The adiabatic algorithm is given as usual by $H(s) = (1-s) H_0 + s H_1$, with $H_0 = \sum_{i=1}^n \frac{1}{2} \left( 1 - \sigma_i^x \right)$.  

For instances with a unique satisfying assignment, while the initial (small $n$) scaling of the typical minimum gap (median) is consistent with polynomial \cite{Farhi:01,Latorre:04}, the true (large $n$) scaling is exponential and can be associated with a first order phase transition \cite{Young2008,Young:2010} occurring at intermediate $s=s_c>0$.  The fraction of instances with this behavior increases with increasing problem size \cite{Young:2010}. This illustrates the perils of extrapolating the asymptotic scaling from studies based on small problem sizes.

A natural generalization of the Exact Cover problem is to have the sum of $K$ variables sum to 1 for a clause to be satisfied, which defines the problem known as ``1-in-$K$ SAT".  Another is to have the clause satisfied unless all the variables are equal, which defines the problem ``$K$-Not-All-Equal-SAT".  Both of these are NP-complete and have been shown analytically to exhibit a first order phase transition for sufficiently large $K$ \cite{Smelyanskiy:2004}.  Numerical results for locked 1-in-3 SAT and locked 1-in-4 SAT --- where ``locked' is the additional requirement that every variable is in at least two clauses and that one cannot get from one satisfying assignment to another by flipping a single variable \cite{PhysRevLett.101.078702,1742-5468-2008-12-P12004} --- have been shown to exhibit an exponentially small gap at the satisfiability transition \cite{hen:11}. 

Since all these problems are NP-complete, there is no polynomial-time classical algorithm known for their worst-case instances. Using StoqAQC has, in all cases that have been studied to date, resulted in exponentially small gaps. Thus, whether these problems can be sped up (even polynomially) is at this time still an open problem.

\subsubsection{3-Regular MAXCUT}

For 3-regular MAXCUT, the problem is to find the assignment that gives the maximum number of satisfied clauses, where each bit appears in exactly three clauses.  Each clause involves only two bits and is satisfied if and only if the sum of the two bits (mod $2$) is $1$.  The number of clauses is $M = 3n/2$.  The final Hamiltonian can be written as:
\beq
H_1 = \sum_{c=1}^{3n/2} \left( \frac{\ident + \sigma^z_{i_1,c} \sigma^z_{i_2,c} }{2} \right)\ ,
\eeq
where the index $(i_k, c)$ denotes the two bits associated with clause $c$. This model can also be viewed as an antiferromagnet on a 3-regular random graph. Because the random graph in general has loops of odd length, it is not possible to satisfy all of the
clauses. This problem is NP hard.
\begin{figure}[b] 
   \centering
   \includegraphics[width=0.95\columnwidth]{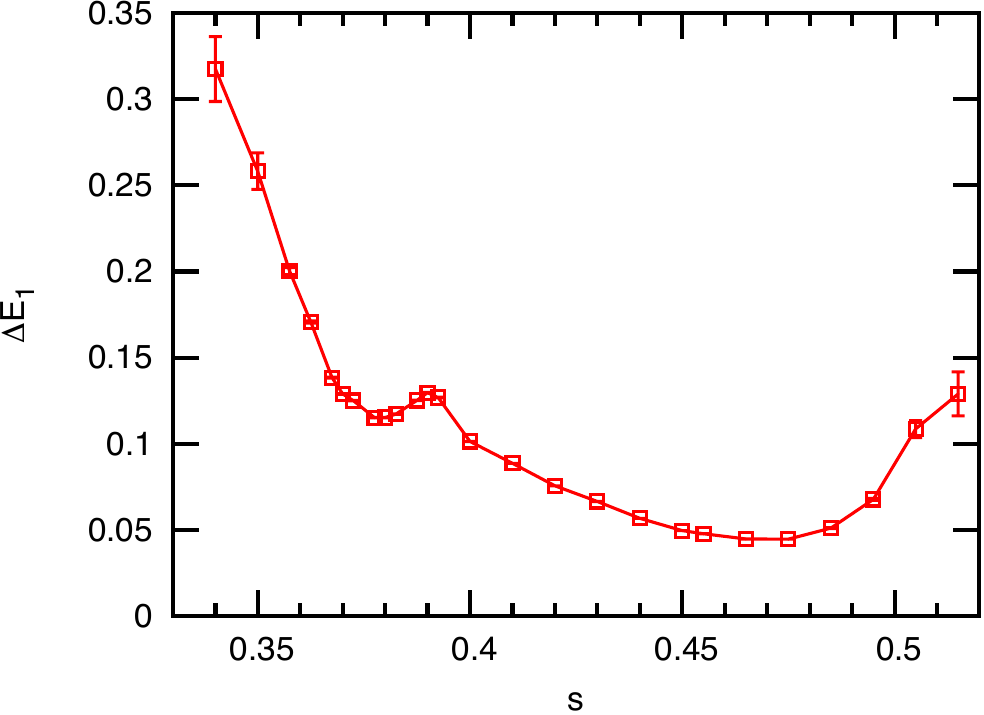} 
   \caption{The gap to the first even excited state for an instance of size $n=128$, exhibiting two minima.  The lower minimum occurs well within the spin-glass phase, while the higher minimum is associated with the second order phase transition.  From \cite{farhi:12}.}
   \label{fig:Farhi2012Fig11}
\end{figure}

For random instances of this problem, where there is a doubly degenerate ground state (the smallest possible because of the $Z_2$ symmetry) and with a specified energy of $n/8$, the standard adiabatic Hamiltonian $H(s) = -(1-s) \sum_{i=1}^n \sigma_i^x + s H_1$ exhibits, for sufficiently large sizes of up to $n=160$, two minima in the energy gap \cite{farhi:12} (see Fig.~\ref{fig:Farhi2012Fig11} for an example).  The first minimum, at $s\approx 0.36$, is associated with a second-order phase transition from paramagnetic to glassy, and the gap closes polynomially with system size. The second minimum occurs inside the spin-glass phase, with a gap that closes exponentially (or possibly a stretched exponential).  Therefore, while the first minimum does not pose a problem for the adiabatic algorithm (although it has been shown that the quantum algorithm with a linear interpolating schedule does not pass through the associated glass phase transition faster than SA \cite{Liu:2015}), the second minimum implies an exponential run time.

\subsubsection{Ramsey numbers}

An adiabatic algorithm for the calculation of the Ramsey numbers $R(k,l)$ was proposed in \cite{Gaitan:2012qz}. $R(k,l)$ is the smallest integer $r$ such that every graph on $r$ or more vertices contains either a $k$-clique or an $l$-independent set.\footnote{A $k$-clique is a subset of $k$ vertices such that every two distinct vertices are adjacent. Equivalently, the subgraph induced by the clique is a complete graph. An independent set is a subset of the vertices no two of which are adjacent. $R(k,l)$ can be phrased as the ``party problem": What is the smallest number of guests one can invite to a party such that there is always either a group of $k$ guests that all know each other, or a group of $l$ guests, none of whom know each other? Such a threshold number always exists \cite{Ramsey:1930yf}.}  Computing them by brute force is doubly exponential in $N=\max\{k,l\}$ [note that $R(k,l)=R(l,k)$] using graph coloring techniques, as follows: Try every one of the $2^{N(N-1)/2}$ colorings of the edges of the complete graph $K_N$ with the colors blue and red. For every coloring, check whether or not there is an induced subgraph on $k$ vertices with only blue edges, or an induced subgraph on $l$ vertices with only red edges. If every coloring contains at least one of the desired subgraphs, we are done. Otherwise, increment $N$ by $1$ and repeat. Except for certain special values of $k$ and $l$, no better algorithm is currently known.

The idea in \cite{Gaitan:2012qz} is to construct a cost function $h(G)$ for a graph $G$ where
\beq
h(G) = \mathcal{C}(G) + \mathcal{I}(G)
\eeq
where $\mathcal{C}(G)$ counts the number of $m$-cliques in the graph $G$ and $\mathcal{I}(G)$ counts the number of $l$-independent sets in the graph $G$.  The cost $h(G)$ equals zero only if there does not exist an $k$-clique or an $l$-independent set.  This will only occur if $R(k,l) > N$.  The algorithm then proceeds as follows.  By mapping $h(G)$ over $K_N$ to a final Hamiltonian $H_1$, the adiabatic algorithm $H(s) = -(1-s) \sum_{i=1}^n \sigma_i^x + s H_1$ is performed and the final energy of the state is measured.  If $h(G) = 0$, then $N$ is incremented by $1$ and the experiment is repeated.  This process continues until the first occurrence of $h(G)>0$, in which case $N = R(k,l)$. Thus the algorithm is essentially an adiabatic version of the graph coloring method described above. It is unknown whether its StoqAQC version improves upon the classical brute force $2^{N(N-1)/2}$ scaling. The adiabatic quantum algorithm was simulated in \cite{Gaitan:2012qz} and shown to correctly determines the Ramsey numbers $R(3,3)$ and $R(2,s)$ for $5\leq s \leq 7$. It was also shown there that Ramsey number computation is in QMA.

An adiabatic algorithm for generalized Ramsey numbers 
(where the induced subgraphs are trees rather than complete graphs) 
was presented in \cite{Ranjbar:2016ty}. Whether this results in a quantum speedup is also unknown. We also remark that Ising formulations for many NP-complete and NP-hard problems, including all of Karp's $21$ NP-complete problems \cite{Karp:21-problems}, are known \cite{2013arXiv1302.5843L}, but it is unknown whether they are amenable to a quantum speedup.

\subsubsection{Finding largest cliques in random graphs}
\label{sec:cliques}

The fastest algorithm known to date for the NP-hard problem of finding a largest clique in a graph runs in time $O(2^{0.249n})$ for a graph with $n$ vertices \cite{Robson:clique}.\footnote{As stated, this is actually an algorithm for the complementary maximum independent set (MIS) problem, but this is sufficient since $\text{MIS}(G) = \text{max-clique}(\bar{G})$ for any graph $G$ and its complement $\bar{G}$, and the algorithm applies for arbitrary $G$.}  For random graphs, a super-polynomial time is required to
find cliques larger than $\log n$ using the Metropolis algorithm, while the maximum clique is likely to be of size very close to $2 \log n$ \cite{Jerrum:1992ve}. One of the earliest papers on the quantum adiabatic algorithm was concerned with the largest clique problem for random
graphs \cite{Childs:clique}, though the algorithm presented there works for general graphs. The results were numerical and showed, by fixing the desired success probability, that the median time required by the
adiabatic algorithm to find the largest clique in a random graph are consistent with quadratic growth for
graphs of up to $18$ vertices. These results on small graphs probably do not capture the asymptotic behavior of the algorithm (the coefficients grow rapidly and have alternating sign), which is likely to be dominated by exponentially small gaps [however, to the extent that these are due to perturbative crossings, they can be avoided by techniques we discuss in Sec.~\ref{sec:PerturbativeCrossings}, in particular as related to the maximum independent set problem \cite{Choi15022011}].

\subsubsection{Graph isomorphism}

In the graph isomorphism problem, two $N$-vertex graphs $G$ and $G'$ are given, and the task is to determine whether there exists a permutation of the vertices of $G$ such that it preserves the adjacency and transforms $G$ to $G'$, in which case the graphs are said to be isomorphic.  If and only if the graphs are isomorphic does there exist a permutation matrix $\sigma$ that satisfies
\beq
A' = \sigma A \sigma^T \ ,
\eeq
where $A$ and $A'$ are the adjacency matrices of $G$ and $G'$ respectively.
An adiabatic algorithm to determine whether a pair of graphs are isomorphic was first proposed in \cite{hen:12b}, mostly for strongly regular graphs, and generalized to arbitrary graphs in \cite{Gaitan:2014ix}, which also showed how to determine
the permutation(s) that connect a pair of isomorphic graphs, and the automorphism group of a given graph. The final Hamiltonian formulated in \cite{Gaitan:2014ix} is such that when the ground state energy vanishes the graphs are isomorphic and the bit-string $s = (s_0, \dots, s_{N-1})$ associated with the ground state gives an $N \times N$ permutation matrix $\sigma(s)$ to perform the transformation:
\beq
\sigma(s)_{ij} = \left\{ \begin{array}{lr}
0  & \mathrm{if} \ s_j > N-1 \\
\delta_{i,s_j} & \mathrm{if} \ 0 \leq s_j \leq N-1
\end{array} \right. \ .
\eeq
The computational complexity of these adiabatic algorithm is currently unknown. However, a recent breakthrough gave a quasipolynomial ($\exp[(\log n)^{O(1)}]$) time classical algorithm for graph isomorphism \cite{Babai:graph-isomorphism}. It seems unlikely that this can be improved upon by using StoqAQC without deeply exploiting problem structure.

\subsubsection{Machine learning}

Quantum machine learning is currently an exciting and rapidly moving frontier in the context of the circuit model \cite{Rebentrost:2014uq,Lloyd:2014fk,Wiebe:2014eu}, though it must be evaluated carefully \cite{Aaronson:2015lh}. One StoqAQC approach is to find a quantum version of the classical method of boosting, wherein multiple weak classifiers (or features) are combined to create a single strong classifier \cite{Meir,Freund}. The task is to find the optimal set of weights of the weak classifiers so as to minimize the training error of the strong classifier on a training data set. After this training step, the strong classifier is then applied to a test data set. This optimization problem can be mapped to a quadratic unconstrained binary optimization (QUBO) problem, which can then be trivially turned into an Ising spin Hamiltonian suitable for adiabatic quantum optimization, where the binary variables represent the weights. 
This idea was implemented in \cite{Neven1,Neven,Neven2,Denchev:2012hc,Pudenz:2013kx,Babbush:2014ij}, where the ground states found by the adiabatic algorithm encode the solution for the weights. 

Another idea is to learn the weights of a Boltzmann machine or, after the introduction of a hidden layer, a reduced Boltzmann machine \cite{Hinton:2006fv}. The latter forms the basis for various modern methods of deep learning. StoqAQC approaches for this problem were developed in \cite{Adachi:2015qe,Amin:2016,Benedetti:2016bs}

Neither the classical nor the quantum computational complexity is known in this case, but scaling of the solution time with problem size is not the only relevant criterion: classification accuracy on the test data set is clearly another crucial metric. It is possible, though at this point entirely speculative, that the quantum method will lead to better classification performance. This can come about in the case of ground state degeneracy, if the weights are reconstructed via ground state solutions and if quantum and classical heuristics for solving the QUBO problem  find different ground states \cite{Matsuda:2009uq,Azinovic:2016kx,2016arXiv160607146M,Zhang:2017}.

\subsection{Speedup mechanisms?}

While the universality of AQC suggests that similar speedup mechanisms are at play as in the circuit model of quantum computing, the situation is less clear regarding StoqAQC. Here we discuss two potential mechanisms, tunneling and entanglement, that might be thought to endow StoqAQC with an advantage over classical algorithms. 

\subsubsection{The role of tunneling}
\label{sec:role-of-tunneling}
It is often stated that an advantage of StoqAQC over classical heuristic local-search algorithms is that the quantum system has the ability to tunnel through energy barriers, which can provide an advantage over classical algorithms such as simulated annealing that only allow probabilistic hopping over the same barriers. Indeed, such a qualitative picture motivated some of the early research on quantum annealing [e.g., \cite{finnila_quantum_1994}]. However, this statement requires a careful interpretation as it has the potential to be misleading. Whereas only the final cost function --- which generates the energy landscape that the classical random walker explores --- matters for the classical algorithm, this energy landscape does not become relevant for the quantum evolution until the end. Therefore, tunneling does not occur on the energy landscape defined by the final cost function alone, if it occurs at all. A different notion of tunneling is at work, which we now explain.

\begin{figure*}[ht] 
   \centering
   \subfigure[]{\includegraphics[width=2in]{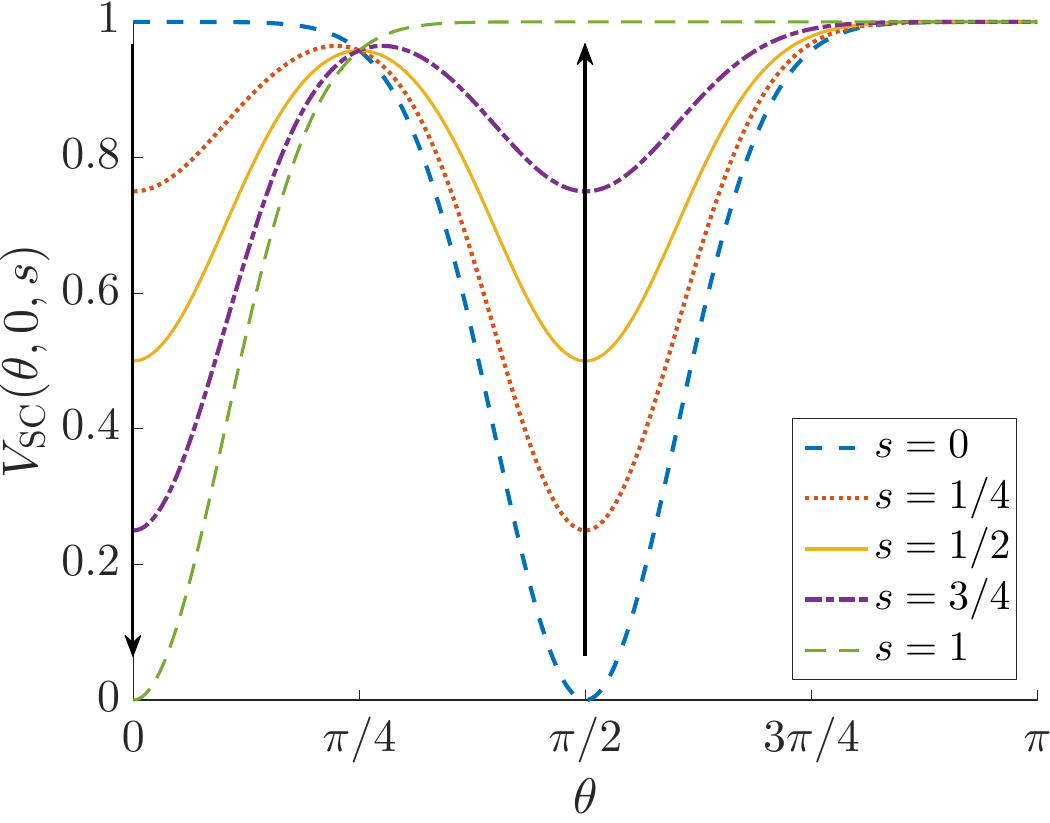} }
   \subfigure[]{\includegraphics[width=2in]{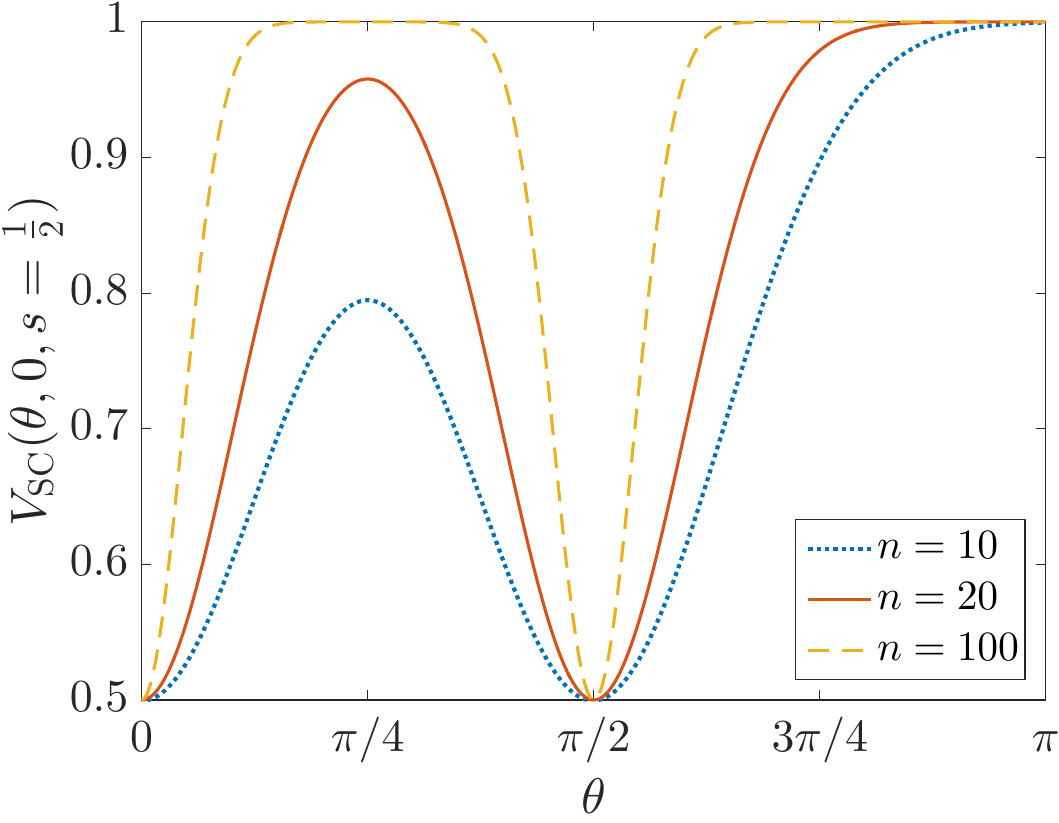} }
   \subfigure[]{\includegraphics[width=2in]{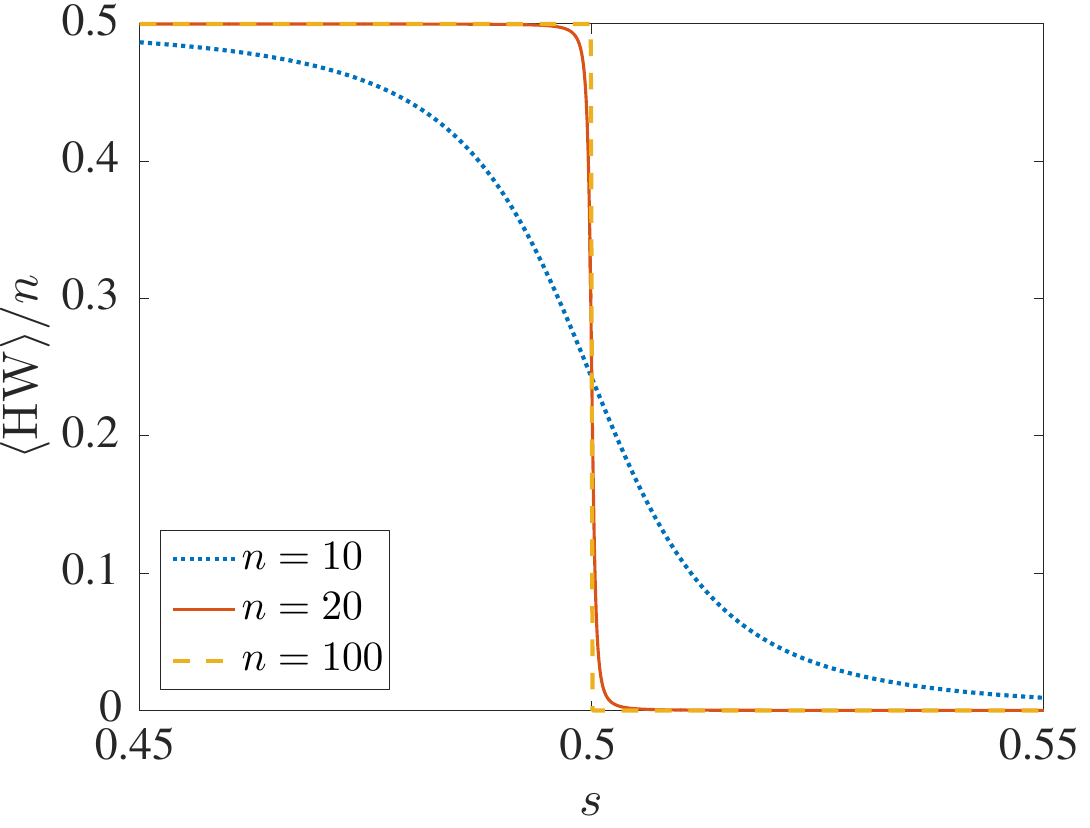} \label{fig:GroverHW}}
   \caption{Analysis of tunneling in the Grover problem. (a) The semiclassical potential for $n=20$ at different dimensionless times $s$.  The arrows indicate the behavior of the local minima as $s$ increases.  There is a discrete jump in the position of the global minimum at $s=1/2$, where it changes from being at $\theta \approx \pi/2$ to $\theta \approx 0$, corresponding to a first order quantum phase transition.  (b) The behavior of the potential when the two minima are degenerate at $s=1/2$.  As $n$ grows, both the barrier height grows (and saturates at $1$) and the curvature of the local minima grows.  (c) The expectation value of the Hamming Weight operator [defined in Eq.~\ref{eqt:HW}] of the instantaneous ground state as $n$ grows.  This is to be interpreted as the system requiring $O(n)$ spins to tunnel in order to follow the instantaneous ground state as the system crosses the minimum gap at $s=1/2$.}
   \label{fig:Grover}
\end{figure*}

The standard notion of tunneling from single-particle quantum mechanics involves a semiclassical potential where classically allowed and classically forbidden regions can be defined.  Starting from a many-body Hamiltonian, there is no unique way to take the semiclassical limit.  Consider one such limit, based on the spin-coherent path integral formalism \cite{klauder1979path}: 
\beq
\!\! \bra{\Omega(t_f)} \Texp[-i \int_0^{t_f} d \tau H(\tau)] \ket{\Omega(0)} = \int \mathcal{D} \Omega(t) e^{\frac{i}{\hbar} S[\Omega(t)]}\ ,
\eeq
where the action $S[\Omega(t)]$ is given by:
\beq
\!\! S[\Omega(t)] = \int_0^{t_f} d t \left( i \hbar \bra{\Omega(t)} \partial_t \ket{\Omega(t)} - \bra{\Omega(t)} H(t) \ket{\Omega(t)} \right)\ ,
\eeq
and 
\begin{align}
\label{eqt:SpinCoherent}
\ket{\Omega} &\equiv \ket{\theta, \varphi}  \\
&\equiv \otimes_{j=1}^n \left[ \cos (\theta_j/2) \ket{0}_j + e^{i \varphi_j} \sin (\theta_j/2) \ket{1}_j \right]\ . \notag
\end{align}
is the spin-coherent state \cite{arecchi1972atomic}.

Despite the absence of a true kinetic term, we can identify the semiclassical potential as:
\beq \label{eqt:VSC}
V_{\mathrm{SC}}(\{\theta_j\}, \{\varphi_j\}, t) =  \bra{\Omega} H(t) \ket{\Omega} 
\eeq
This form for $V_{\mathrm{SC}}$ has been used \cite{Farhi-spike-problem,Schaller:2010,Boixo:2016,Muthukrishnan:2015ff} to capture many of the relevant features of StoqAQC problems endowed with qubit-permutation symmetry; this symmetry often allows for analytical and numerical progress.\footnote{Note that by using a product-state ansatz via the symmetric spin-coherent state, the semiclassical approach  implicitly
takes advantage of the bit-symmetry of the problem. This is
inaccessible to an algorithm that has only black-box access to $f$, thus limiting the generality of this approach.}

We illustrate this approach with the Grover Hamiltonian [Eq.~\eqref{eqt:GroverH}].  Recall that the final Hamiltonian is $H_1 = \ident - \ketbra{m}{m}$, where $\ket{m}$ is the marked state associated with the marked item. As a cost function, this is the antithesis of the ``tall and narrow" potential that is often associated with a classical speedup: $\bra{x}H_1\ket{x} = 1-\delta_{x,m}$, i.e., the potential is flat everywhere, except for a well of constant depth at the marked state. Nevertheless, we now show that following the instantaneous ground state will involve the tunneling of $O(n)$ qubits.

Without loss of generality we may assume that the ``marked" state is the all-zero bit string.
%
%
Setting $\theta_j\equiv\theta$ and $\varphi_j\equiv\varphi$ $\forall j$ in Eq.~\eqref{eqt:SpinCoherent}, the Hamiltonian can be written succinctly as:
\bea
H(s) &=& (1-s) \left( \ident - \ketbra{\Omega(\pi/2, 0)}{\Omega(\pi/2, 0)} \right) \nonumber \\
&&+ s \left(\ident - \ketbra{\Omega(0, 0)}{\Omega(0, 0)} \right)\ .
\eea
The semiclassical potential for the Grover problem is then:
\bea \label{eqt:GroverSC}
V_{\mathrm{SC}}(\theta,0) &=& \left(1-s \right) \left( 1 - \frac{1}{2^n}  \left( 1 + \sin \theta  \right)^n \right) \nonumber \\
&& + s \left( 1- \frac{1}{2^n} \left(1 +  \cos\theta \right)^n \right) \ .
\eea
The locations of the two degenerate minima at $s=1/2$ are given by the pair of transcendental equation:
\bes
\begin{align}
\frac{1-\cos \theta + \sin \theta}{1+\cos \theta - \sin \theta} & = \left( \frac{1+\sin\theta}{1+\cos\theta} \right)^n \ , \\
\frac{1+\cos \theta - \sin \theta}{1-\cos \theta + \sin \theta} & = \left( \frac{1+\cos\theta}{1+\sin\theta} \right)^n \ ,
\end{align}
\ees
which in the limit of $n \to \infty$ have solutions $0$ and $\pi/2$ respectively.  This equation is invariant under $\theta \to \pi/2 -\theta$, which corresponds to the two minima.  Since the semiclassical potential in Eq.~\eqref{eqt:GroverSC} at $s=1/2$ is also invariant under $\theta \to \pi/2 -\theta$, the local minima have identical structure.  Using the Hamming Weight operator defined as:
\beq \label{eqt:HW}
\mathrm{HW} = \frac{1}{2} \sum_{i=1}^n \left( \ident - \sigma_i^z \right)
\eeq
this potential suggests that in the large $n$ limit, we can expect that $n/2$ spins need to be flipped in order to move from the $\theta \approx \pi/2$ minimum to the $\theta \approx 0$ minimum, i.e.,
\beq
\bra{\Omega(\pi/2,0)} \mathrm{HW} \ket{\Omega(\pi/2,0)} - \bra{\Omega(0,0)} \mathrm{HW} \ket{\Omega(0,0)} = n/2\ .
\eeq
The instantaneous ground state, as it passes through the minimum gap at $s=1/2$, indeed exhibits this behavior, as shown in Fig.~\ref{fig:Grover}.  

However, the more general role of tunneling in providing quantum speedups is not by any means evident. This topic was studied in detail in \cite{Muthukrishnan:2015ff}, which showed that tunneling is neither necessary nor sufficient for speedups in the class of perturbed Hamming weight optimization problems with qubit permutation symmetry. 

Our discussion here has been restricted to coherent tunneling, and compelling arguments have been presented in \cite{Boixo:2016,PhysRevX.6.031015,Andriyash:2017} that incoherent, thermally assisted tunneling plays a computational role in quantum annealing. However, this mechanism is in the open-system setting, which is outside the scope of this review.  
Moreover, its role in \cite{Boixo:2016,PhysRevX.6.031015} is limited to a prefactor, and does not translate into a scaling advantage, i.e., it does not qualify as a speedup according to the classification of \cite{speedup}. 

\subsubsection{The role of entanglement}
\label{sec:ent-role}

The role that entanglement plays in quantum computation with pure states in the circuit model depends on the entanglement measure used. On the one hand, it is well known that for any circuit-model quantum algorithm operating on pure states, the presence of multi-partite entanglement quantified via the Schmidt-rank (with a number of parties that increases unboundedly with input size), is necessary if the quantum algorithm is to offer an exponential speedup over classical computation \cite{Jozsa:2003fk}. On the other hand, universal quantum computation can be achieved in the standard pure-state circuit model
while the entanglement entropy (or any other suitably continuous entanglement
measure) of every bipartition is small in each step of the computation \cite{Van-den-Nest:2013aa}.
The corresponding role of entanglement in the computational efficiency of AQC remains an open question. Partly this is because the connection between entanglement and spectral gaps is not yet very well understood, and partly this is because even if entanglement is present, its computational role in AQC is unclear. 

{The area law asserts that for any subset $S$ of particles, the entanglement entropy between $S$ and its complement is bounded by the surface area of $S$ rather than the trivial bound of the volume of $S$. While generic quantum states do not obey an area law \cite{Hayden:2006aa}, and there are 1D systems for which there is exponentially more entanglement than suggested by the area law \cite{Movassagh:2016},} a sweeping conjecture in condensed matter physics is that in a \emph{gapped} system the entanglement spreads only over a finite length, which leads to  area laws for the entanglement entropy \cite{Eisert:2010ao}.\footnote{Here ``gapped" means $O(1)$, whereas in AQC ``gapped" usually means $O[1/\text{poly}(n)]$.} E.g., the area law for {gapped} 1D systems, proved in \cite{Hastings:2007lr}, states that for the ground state, the entanglement of any interval is upper bounded by a constant independent of the size of the interval.  While this leaves open the question of the general dependence of the upper bound on the spectral gap $\Delta$, this means that the ground state of such systems is accurately described by polynomial-size matrix product states (MPSs) \cite{White:1992,White:1992b,Ostlund:1995}. In \cite{GottesmanHastings:09} it was shown that for certain 1D system the entanglement entropy in some regions can be as high as $\text{poly}(1/\Delta)$. This demonstrates that the entanglement entropy can become large as the gap becomes small. Two other important recent results are the existence of a polynomial time algorithm for the ground state of 1D gapped local Hamiltonians with constant ground-state energy \cite{Landau:2015fu,Huang:2014aa}, and the fact that 1D quantum many-body states satisfying exponential decay of correlations always fulfill an area law \cite{Brandao:2013kl}.  

However, the connection between entanglement entropy and gaps is not nearly as clear in higher dimensional systems, even though entanglement close to quantum phase transitions is a well developed subject \cite{Osborne:02,Osterloh:02,Vidal:03,WuSarandyLidar:04,Amico:2008uq}.

It is not surprising that entanglement is necessary for the computation to succeed if the intermediate ground states that the system must follow are entangled.  This was verified explicitly in \cite{Bauer:15}, where the quantum state was represented by an MPS and projected entangled-pair states (PEPS) \cite{Verstraete:2004,Verstraete:2008}. This work showed that the probability of finding the ground state of an Ising spin glass on either a planar or non-planar two-dimensional graph increases with the amount of entanglement in the MPS state or PEPS state.  Furthermore, even a small amount of entanglement gives improved success probability over a mean-field model. However, this does not resolve the role entanglement plays in generating a speedup.

In an attempt to address this, the entanglement entropy for the adiabatic Grover algorithm was studied, and it was found to be bounded ($\leq 1$) throughout the evolution \cite{Orus:2004oz}. This was also observed numerically for systems with $10$ qubits \cite{WEN:2008vl}. In an effort to check whether more entanglement may help the Grover speedup, \cite{WEN:2009fu} considered adding an additional term to the Hamiltonian to make the ground state more entangled, to reach an $O(1)$ scaling in a Grover search task.  However, since it is impossible to achieve a better-than-quadratic speedup in the Grover search problem without introducing an explicit dependence on the marked state \cite{Bennett:1997lh}, this result is not conclusive in linking entanglement with enhanced computational efficiency.  Furthermore, a two-dimensional path for the Grover problem using the quantum adiabatic brachistochrone approach [see Sec.~\ref{sec:QAB}] that gives a higher success probability for the same evolution time relative to the standard one-dimensional path for the Grover problem, in fact has \emph{less} entanglement (negativity) \cite{PhysRevLett.103.080502}. 

The entanglement entropy in the adiabatic algorithm for the Exact Cover problem, where no speedup is known (recall Sec.~\ref{sec:exact-cover}), scales linearly with problem size for $n\leq 20$ \cite{Orus:2004oz,Latorre:04}. 

Further studies have also shown this lack of correlation between performance and the amount of entanglement entropy.  In \cite{Hauke:2015cr} simulations of adiabatic quantum optimization were performed of a trapped ion Hamiltonian with $n=16$ of the form:
\beq
H_1 = J \sum_{i \neq j}^n \frac{\sigma_i^z \sigma_j^z}{|i-j|} + \sum_i h_i^z \sigma_i^z + V \sum_{i \neq j}^n \sigma_i^z \sigma_j^z\ ,
\eeq
with $100$ disorder realizations of $h_i^z$. It was found that a large entanglement entropy has little significance for the success probability of the optimization task.  

Overall, these results indicate that the connection between entanglement and algorithmic efficiency in AQC is currently wide open and deserves further study.


\section{Circumventing slowdown mechanisms for AQC}
\label{sec:fix-AQC}

In this section we collect several insights into mechanisms that explain slowdowns in the performance of adiabatic algorithms. We also discuss mechanisms for circumventing such slowdowns. Several important ideas will be reviewed: avoiding the use certain initial and final Hamiltonians, modifying the adiabatic schedule, avoiding quantum phase transitions, and avoiding perturbative energy level crossings.

\subsection{Avoiding poor choices for the initial and final Hamiltonians}

We first show that if one chooses the initial Hamiltonian to be the one-dimensional projector onto the uniform superposition state $\ket{\phi}$, and uses a linear interpolation, then an improvement beyond a Grover-like quadratic speedup is impossible as long as the final Hamiltonian $H_1$ is diagonal in the computational basis.  Specifically, for an adiabatic algorithm of the form
\beq
H(t) = \left( 1 - \frac{t}{t_f} \right) E \left( \ident - \ketbra{\phi}{\phi} \right) + \frac{t}{t_f} H_1\ ,
\eeq
the run time $t_f$ for measuring the ground state of $H_1$ with probability $p$ is lower bounded by [Theorem 1 of \cite{Farhi:05}; see also \cite{Znidaric:2006ph}]:
\beq \label{eqt:FarhiBoundT}
t_f \geq \frac{2}{E} \left( 1 - \sqrt{1-p} \right) \sqrt{\frac{N}{k}}  - 2 \frac{\sqrt{p}}{E}\ ,
\eeq
where $N = 2^n$ and $k$ is the degeneracy of the ground state of $H_1$. To see this, define an operator $V_x$ for $x = 0, \dots, N-1$ that is diagonal in the computational basis:
\beq
\bra{z} V_x \ket{z} = e^{2 \pi i z x / N}\ ,
\eeq
and let $\ket{x} = V_x \ket{\phi} = \frac{1}{\sqrt{N}} \sum_{z = 0}^{N-1} e^{2 \pi i z x / N} \ket{z}$. Now define the modified adiabatic algorithm:
\beq \label{eqt:Hx}
H_x(t) = \left( 1 - \frac{t}{t_f} \right) E \left( \ident - \ketbra{x}{x} \right) + \frac{t}{t_f} H_1 \ .
\eeq
Note that $\ket{x = 0} = \ket{\phi}$ implies that $H_0(t) = H(t)$.  For each $x$, the final state is given by $\ket{\psi_x} = U_x(t_f,0) \ket{x}$, with success probability $p_x = \bra{\psi_x} P \ket{\psi_x}$, where $P$ is the projector onto the ground subspace of $H_1$.  Using $H_x(t) = V_x H_0 V_x^{\dagger}$, we have $U_x(t,0) = V_x U_0(t,0) V_x^{\dagger}$, and hence $p_x = p, \forall x$ since $V_x$ commutes with $P$.  We should already see a potential problem for having $t_f$ scale better than $\sqrt{N}$, since if we were to run the algorithm backward, we would find the state $\ket{x}$, which would be solving the Grover problem (note that the initial Hamiltonian~\eqref{eqt:Hx} is the Grover Hamiltonian in a rotated basis).

Now define an evolution according to an $x$-independent Hamiltonian:
\beq
H_\mathrm{R}(t)  = \left( 1 - \frac{t}{t_f} \right) E  \ident+ \frac{t}{t_f} H_1 \ ,
\eeq
and let $\ket{g_x} = \frac{1}{\sqrt{p}} P \ket{\psi_x}$.  Consider the difference in the reverse-evolutions associated with $H_{\mathrm{R}}(t)$ and $H_x(t)$ from $\ket{g_x}$:
\beq
S(t) = \sum_x \Vert \left(U_x^{\dagger}(t_f,t) - U_{\mathrm{R}}^{\dagger}(t_f,t) \right)\ket{g_x} \Vert^2 \ .
\eeq
We can write $\ket{g_x} = \sqrt{p} \ket{\psi_x} + \sqrt{ 1 - p } \ket{\psi_x^{\perp}}$, where $\ket{\psi_x^{\perp}}$ is orthogonal to $\ket{\psi_x}$.  Using $U_x^{\dagger}(t_f, 0) \ket{\psi_x} = \ket{x}$ and defining $\ket{\mathrm{R}_x} = U_{\mathrm{R}}^{\dagger}(t_f,0) \ket{g_x}$, we have:
\bes
\begin{align}
S(0) & = \sum_x \Vert\sqrt{p}\ket{x} + \sqrt{1-p} \ket{x^{\perp}} - \ket{\mathrm{R}_x} \Vert^2 \\
& = 2 N - \sum_x \left[\sqrt{p} \braket{x}{\mathrm{R}_x} + \sqrt{1-p} \braket{x^{\perp}}{\mathrm{R}_x} + \mathrm{c.c.} \right] \\
& \geq 2 N - 2\sqrt{p} \sum_x | \braket{x}{\mathrm{R}_x} |  - 2 N  \sqrt{ 1 - p} \ .
\end{align}
\ees
Since $H_{\mathrm{R}}$ commutes with $H_1$, the state $\ket{\mathrm{R}_x}$ is an element of the $k$-dimensional ground subspace of $H_1$.  Choosing a basis $\left\{ \ket{G_i} \right\}_{i=1}^k$ for this subspace, and writing $\ket{\mathrm{R}_x} = \sum_{i=1}^k \alpha_{x,i} \ket{G_i}$, we have:
\begin{align}
\label{eqt:trick}
\sum_x | \braket{x}{\mathrm{R}_x} |  & \leq \sum_{x,i} | \alpha_{x,i} | \cdot  |\braket{x}{G_i}| \\
& \leq \sqrt{\sum_{x,i} | \alpha_{x,i} | \sum_{x',i'}  |\braket{x'}{G_{i'}}|} = \sqrt{N k} \ . \notag
\end{align}
Therefore, we have:
\beq
S(0) \geq 2 N \left( 1 - \sqrt{1-p} \right) - 2 \sqrt{N k p}\ .
\eeq
In order to upper-bound $S(0)$, we use $S(t_f) - S(0) \leq \int_0^{t_f} | \frac{d}{dt} S(t) | dt$ with $S(t_f) = 0$.  The derivative can be computed using the Schr\"odinger equation:
\begin{align}
 \frac{d}{dt}S(t) & =  - i \sum_x \bra{g_x} U_x(t_f,t) \left[ H_x(t) - H_{\mathrm{R}}(t) \right] U_{\mathrm{R}}^{\dagger}(t_f,t) \ket{g_x} \notag \\
 &\qquad + \mathrm{c.c.} \notag  \\
& = -2 \Im \sum_x \left( 1 - \frac{t}{t_f} \right) E \bra{g_x} U_x(t_f,t)\ket{x}  \times \notag \\
&\qquad \bra{x} U_{\mathrm{R}}^{\dagger}(t_f,t) \ket{g_x} \ .
\end{align}
Thus:
\bes
\begin{align}
\left| \frac{d}{dt}S(t)  \right| & \leq 2 E \left( 1 - \frac{t}{t_f} \right) \sum_x \left| \bra{x} U_{\mathrm{R}}^{\dagger}(t_f,t) \ket{g_x} \right| \notag \\
 & \leq 2 E \left( 1 - \frac{t}{t_f} \right) \sqrt{N k}\ , \label{eqt180b}
 \end{align}
 \ees
where in Eq.~\eqref{eqt180b} we used the same trick as in Eq.~\eqref{eqt:trick}.  Therefore, $\int_0^{t_f} | \frac{d}{dt} S(t) | dt \leq E t_f \sqrt{N k}$.  Putting the upper and lower bound for $S(0)$ together, we have:
\beq
E t_f \sqrt{N k} \geq 2 N \left( 1 - \sqrt{1 -  p}\right) - 2 \sqrt{N kp}\ ,
\eeq
which yields Eq.~\eqref{eqt:FarhiBoundT}.

As an example of the relevance of this result, consider the trivial case of $n$ decoupled spins in a global magnetic field. For an initial Hamiltonian that reflects the bit-structure of the problem, e.g., the standard $H_0 = -\sum_i \sigma_i^x $, the run time of the adiabatic algorithm scales as $\sqrt{n}$ \cite{Muthukrishnan:2015ff,PhysRevA.95.032335}.  If, however, we were to choose instead the projector initial Hamiltonian, the result above shows that we would find a dramatically poor scaling despite the simplicity of the final Hamiltonian.

A similar result is found if all structure is removed from the final Hamiltonian.  Namely, if $H_1 = \sum_z h(z) \ketbra{z}{z}$, we can define a permutation $\pi$ over the $N$ computational basis states such that $h^{[\pi]}(z) = h(\pi^{-1}(z))$. Assume that the initial Hamiltonian is $\pi$-independent and that $c(t)$ satisfies $|c(t)| \leq 1$. Then, for the permuted Hamiltonian $H_{1,\pi} = \sum_z h(z) \ketbra{\pi(z)}{\pi(z)}$, one can show that if the adiabatic algorithm
\beq \label{eqt:permutationAlgorithm}
H_\pi(t) = H_0 + c(t) H_{1,\pi}
\eeq
succeeds with probability $p$ for a set of $\epsilon N!$ permutations, then [Theorem 2 of \cite{Farhi:05}]:
\beq
t_f \geq \frac{\epsilon^2 p}{16 h^{\ast}} \sqrt{N-1} - \frac{\epsilon \sqrt{\epsilon/2}}{4 h^{\ast}} \ ,
\eeq
where $h^{\ast} = \sqrt{\sum_z h(z)^2 / N-1}$.  This result means that no algorithm of the form of Eq.~\eqref{eqt:permutationAlgorithm} can find the minimum of $H_{1,\pi}$ with a constant probability for even a fraction of all permutations if $t_f$ is $o(\sqrt{N})$.

The lesson from this analysis is \emph{what not to do} when designing quantum adiabatic algorithms: avoid choosing the initial Hamiltonian to be the one-dimensional projector onto the uniform superposition state if a better-than-quadratic speedup is hoped for, and avoid removing structure from the final Hamiltonian.

\subsection{Quantum Adiabatic Brachistochrone}
\label{sec:QAB}
%
Modifying the adiabatic schedule adaptively so that it slows down as the gap decreases is an approach that is essential for obtaining a quadratic speedup using the adiabatic Grover algorithm [recall Sec.~\ref{sec:Grover}]. 
Here we discuss how such ideas, including the condition for the locally optimized schedule [Eq.~\eqref{eqt:GroverRate}] can be understood as arising from a variational time-optimal strategy for determining the interpolating Hamiltonian between $H_0$ and $H_{1}$ \cite{PhysRevLett.103.080502}.  By time-optimal, we mean a strategy that gives rise to the shortest total evolution time $t_f$ while guaranteeing that the final evolved state $\ket{\psi(t_f)}$ is close to the desired final ground state $\ket{\veps_0(t_f)}$.  The success of the strategy is judged by the trade-off between $t_f$ and the fidelity $F(t_f) = \left| \braket{\psi(t_f)}{\veps_0(t_f)} \right|^2$.
We first discuss this method generally and then show how it applies to the adiabatic Grover case.

The interpolating Hamiltonian's time-dependence comes from a set of control parameters $\vec{x}(t) = \left( x^1(t), \dots, x^M(t) \right)$, i.e., $H(t) = H[\vec{x}(t)]$.  We can parameterize $\vec{x}(t)$ in terms of a dimensionless time parameter $s(t)$ with $s(0) = 0$ and $s(t_f) = 1$, where $v = \frac{d s}{dt}$ characterizes the speed with of motion along the control trajectory $\vec{x}[s(t)]$.  The total evolution time is then given by:
\beq
t_f = \int_0^1 \frac{ds}{v(s)}\ .
\label{eq:t_f-from-s}
\eeq
Motivated by the form of the adiabatic condition, let us define the following Lagrangian 
\bea \label{eqt:adiabaticL}
\mathcal{L}[\vec{x}(s),\dot{\vec{x}}(s)] & \equiv& \frac{\Vert \partial_s H(s) \Vert_{\mathrm{HS}}^2}{\Delta^p(s)}  \\
&=&  \sum_{i,j} \frac{\Tr\left( \partial_{x^i}H(s) \partial_{x^j} H(s)\right)}{\Delta^p(s)}  \partial_s x^i(s) \partial_s x^j(s)  \nonumber
\eea
($p>0$), and adiabatic-time functional
\bea \label{eqt:adiabaticfunctional}
\mathcal{T}[\vec{x}(s)] = \int_0^1 ds \mathcal{L}[\vec{x}(s),\dot{\vec{x}}(s)] \ ,
\eea
where $\Vert A \Vert_{\mathrm{HS}} \equiv \sqrt{\Tr \left( A^{\dagger} A \right)}$ is the Hilbert-Schmidt norm, chosen to ensure analyticity (this choice is not unique, but other choices may not induce a corresponding Riemannian geometry). The time-optimal curve $\vec{x}_{\rm QAB}(s)$ is the quantum adiabatic brachistochrone (QAB), and is the solution of the variational equation $\delta {\mathcal T[\vec{x}(s)]}/\delta \vec{x}(s) = 0$.

Alternatively, the problem can be thought of in geometrical terms. The integral in Eq.~\eqref{eqt:adiabaticfunctional} is of the form $\int ds \sum_{i,j} g_{ij}(\vec{x}) \partial_s x^i \partial_s x^j$, which defines a reparametrization-invariant object.  Therefore, using results from differential geometry, the Euler-Lagrange equations derived from extremizing Eq.~\eqref{eqt:adiabaticfunctional} are simply the geodesic equations associated with the metric $g_{ij}$ appearing in $\mathcal{L}[\vec{x}(s),\dot{\vec{x}}(s)]  = g_{ij}(\vec{x})\dot{x}^i\dot{x}^j$ (Einstein summation convention):
\beq
\label{eq:EL-QAB}
\partial_s^2 x^k + \Gamma_{ij}^k \partial_s x^i \partial_s x^j = 0\ ,
\eeq
where $\Gamma_{ij}^k = \frac{1}{2} g^{kl} \left( \partial_j g_{li} + \partial_i g_{lj} - \partial_l g_{ij} \right)$ are the Christoffel symbols (connection coefficients) and 
\beq
g_{ij}(\vec{x}) = \frac{\Tr[\partial_i H(\vec{x})\partial_j H(\vec{x})]}{\Delta^p(\vec{x})} \ .
\eeq  
To find the variational time-optimal strategy associated with minimizing Eq.~\eqref{eqt:adiabaticfunctional}, the procedure is thus as follows: (a) solve Eq.~\eqref{eq:EL-QAB} to find the optimal path $\vec{x}_{\rm QAB}(s)$; (b) compute the adiabatic error using the Schr\"odinger equation along this optimal path (or multi-parameter schedule). Note that to compute the metric requires knowledge of the gap, or at least an estimate thereof. 

The optimal path is a geodesic in the parameter manifold endowed with the Riemannian metric $g$. This metric gives rise to a curvature tensor $\mathbf{R}$, which can be computed from the metric tensor and the connection using standard methods \cite{Nak}. Namely, $\Gamma \sim g^{-1}\partial g \sim \Delta^{-1} \partial \Delta$, and $\mathbf{R} \sim \partial^2 g + g \Gamma^2 \sim \Delta^{-p-2}$. Thus, the smaller the gap, the higher the curvature.

Let us illustrate with a simple example.  Consider the following Hamiltonian with \emph{a single} control parameter $x^1(s)$:
\beq
H(s) = \left( 1 - x^1(s) \right) P_a^{\perp} + x^1(s) P_b^{\perp} \ ,
\eeq
where we have defined the projector $P_{a}^{\perp} = \ident - \ketbra{a}{a}$ and similarly for $P_{b}^{\perp}$. This includes the Grover problem as the special case where $\ket{a}$ is the uniform superposition and $\ket{b}$ is the marked state.  We can always find a state $\ket{a^{\perp}}$ such that $\ket{b} = \alpha_0 \ket{a} + \alpha_1 \ket{a^{\perp}}$, where $\alpha_0 = \braket{a}{b}$.
Therefore, the evolution according to $H(s)$ occurs in a two dimensional subspace spanned by $\ket{a}$ and $\ket{a^\perp}$, and:
\bes
\begin{align}
\partial_{x^1} H(s) & =  - P_a^{\perp} + P_b^{\perp} \\
 \Tr \left(\partial_{x^1} H(s)  \partial_{x^1} H(s)  \right) & = 2 \left( 1 - |\alpha_0|^2 \right) \\
& \hspace{-3cm} \Delta(s)   =   \sqrt{1 - 4 \left( 1- |\alpha_0|^2 \right) x^1(s) (1-x^1(s))} \\ 
g_{11} & =  \frac{2(1-|\alpha_0|^2)}{\Delta(s)^3} \ .
\end{align}
\ees
The geodesic equation is then given by:
\bea
\frac{d^2}{ds^2} x^1(s) &&  \\
&& \hspace{-1cm} + \frac{p \left( 1- 2 x^1(s) \right) \left( 1- |\alpha_0|^2 \right)}{1 - 4 x^1(s) \left( 1- x^1(s) \right) \left( 1- |\alpha_0|^2 \right)} \left( \frac{d}{ds} x^1(s) \right)^2 = 0 \ . \nonumber
\eea
In the case of $p = 4$, we can solve this equation analytically, and the solution with the boundary conditions $x^1(0) = 1- x^1(1) = 0$ is given by:
\beq
x^1(s) = \frac{1}{2} + \frac{|\alpha_0|}{2 \sqrt{1 - |\alpha_0|^2}} \tan \left[ \cos^{-1} \left(|\alpha_0| \right) \left( 2 s - 1 \right) \right]
\label{eq:QAB1D-Grover}
\eeq
(note that $\cos^{-1} (|\alpha_0|) = \tan^{-1} \left( \frac{\sqrt{ 1 - |\alpha_0|^2}}{|\alpha_0|}   \right)$). Remarkably, this is equivalent to the expression we found for the Grover problem [Eq.~\eqref{eqt:GroverOptimalSchedule}] if we take $\alpha_0 = 1/\sqrt{N}$, despite the different choice of norm and value of $p$.  This shows that the optimal schedule for the Grover problem has a deep differential geometric origin.

We can extend the analysis to \emph{two} control parameters such that the time-dependent Hamiltonian is given by:
\beq
H(s) = x^1(s)  P_a^{\perp} + x^2(s) P_b^{\perp}\ .
\eeq
The associated QAB (or geodesic) path can be found numerically, and it turns out that it is not of the form $x^2(s)=1-x^1(s)$, i.e., it is different from the \cite{Roland:2002ul} path given by Eq.~\eqref{eq:QAB1D-Grover}.
The optimal two-parameter path reduces the adiabatic error relative to the latter [see Fig.~\ref{Rezakhani09a}], but can of course not reduce the (already optimal) $\sqrt{N}$ scaling. The two-parameter QAB also has lower curvature than the Roland-Cerf path [see Fig.~\ref{Rezakhani09b}], which implies that it follows a path with a larger gap and less entanglement than the latter \cite{PhysRevLett.103.080502}, as mentioned in Sec.~\ref{sec:ent-role}. 

The differential geometric approach to AQC was further explored in \cite{PhysRevA.82.012321}, where its connections to quantum phase transitions were elucidated, within a unifying information-geometric framework. See also \cite{Zulkowski:2015ij}.
\begin{figure}[htbp] 
   \centering
   \subfigure[]{\includegraphics[width=2in]{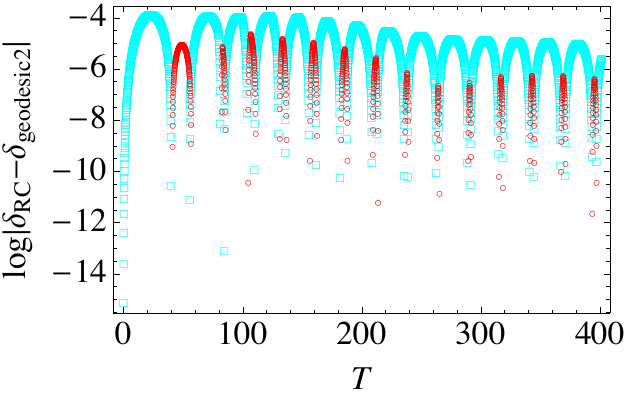} \label{Rezakhani09a}}
    \subfigure[]{\includegraphics[width=2in]{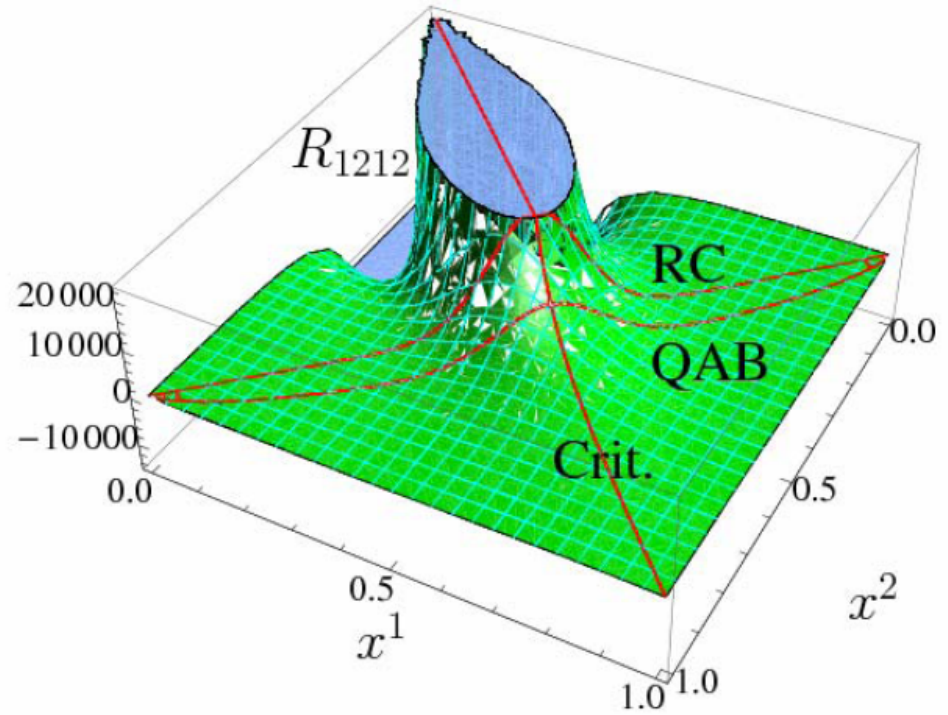} \label{Rezakhani09b}}
   \caption{(a) Final-time error $\delta(T) = \sqrt{1 - F(T)}$ ($T=T_f$ in our notation) for the single-control (denoted RC for Roland-Cerf) and two-parameter control (denoted geodesic2) geodesic paths for the Grover problem with $n = 6$. Squares (cyan) indicate where the two-parameter geodesic path outperforms (i.e. has a lower error than) the single-parameter path; circles (red) correspond to the opposite case. (b) The curvature tensor component $R_{1212}$ for $n=3$.  The curves on the curvature surface show the case of the standard linear interpolation $x_2 = 1-x_1$ (denoted Crit.), the path followed by the one-parameter geodesic (denoted RC), and the path followed by the two-parameter geodesic (denoted QAB).  From \cite{PhysRevLett.103.080502}.}
   \label{Rezakhani09}
\end{figure}

\subsection{Modifying the initial Hamiltonian}
\label{sec:mod-H_0}

Rather than modifying the adiabatic interpolation, one may modify the initial Hamiltonian. Such a strategy was pursued in \cite{Farhi:2011:QAA:2011395.2011396} and tried on a particular set of 3-SAT instances, where the clauses are picked randomly subject to satisfying two disparate planted solutions and then penalizing one of them with a single additional clause. This was done in order to generate instances with an avoided crossing at the final time $s=1$, reproducing the type of obstacle to AQC envisioned in \cite{Altshuler2010}.

It was then shown that in this case, by picking a random initial Hamiltonian of the form
\beq
H_{0} = \frac{1}{2} \sum_{i=1}^n c_i \left( \ident - \sigma_i^x \right)\ ,
\eeq
where $c_i$ is a random variable taking value $1/2$ or $3/2$ with equal probability, it is possible to remove the small gap encountered by the standard adiabatic algorithm with high probability. Since this strategy does not rely on information about the specific instance, it appears to be quite general. Therefore, if the algorithm is to be run on a single instance of some optimization problem, the adiabatic algorithm should be run repeatedly with different initial Hamiltonians  \cite{Farhi:2011:QAA:2011395.2011396}. 

An alternative approach based on modifying the initial Hamiltonian, with a different goal, was proposed in \cite{Perdomo-Ortiz:2011fh}, whereby an initial guess for the solution (a computational basis state) is used as the initial state of the adiabatic algorithm.  An initial Hamiltonian is used with this state as its ground state. Evolution to the final Hamiltonian then proceeds according to a standard schedule. If the final state that is measured is not the ground state of the final Hamiltonian (due to diabatic transitions), the algorithm can be repeated with the measured state as the new initial state.  Such ``warm start" repetitions of the algorithm exhibited improved performance compared to the standard approach for 3-SAT problems, although the results were limited to relatively small system sizes of $6$ and $7$ qubits.

 \subsection{Modifying the final Hamiltonian}
 \label{sec:modified-H_1}

The same problem can be specified by two or more different final Hamiltonians, as we saw, e.g., in the case of Exact Cover $3$ (EC3), in terms of Eqs.~\eqref{eq:EC-1} and \eqref{eq:EC-2}. It was claimed in \cite{Altshuler2010} that adiabatic quantum optimization fails for random instances of EC3 because of
Anderson localization. The claim, which we discuss in more detail in Sec.~\ref{sec:PerturbativeCrossings}, was based on the form given in Eq.~\eqref{eq:EC-2}. However, as argued in \cite{Choi15022011}, it is possible to reformulate the final Hamiltonian for EC3 such that the argument in \cite{Altshuler2010} may not apply. Namely, for any pair of binary variables $x_{C_i},x_{C_j}$ in the same clause $C$, add a term $D_{ij}x_{C_i}x_{C_j}$ with $D_{ij}>0$; this is permissible since in order for a clause to be satisfied, exactly one variable must take value
$1$, whereas the other two are $0$. Numerical evidence for up to $15$ bits suggests that the addition of the new set of arbitrary parameters $D_{ij}$ may avoid the Anderson localization issue \cite{Choi15022011}. This example illustrates a general principle, that it can be incorrect to conclude from the failure
of one specific choice of the final Hamiltonian that all quantum adiabatic algorithms fail for the same problem. 

\subsection{Adding a catalyst Hamiltonian} 
\label{sec:catalyst}

We define a ``catalyst" as a term that (1) vanishes at the initial and final times, but is present at intermediate times, (2) is a sum of local terms with the same qubit-interaction graph as the final Hamiltonian $H_1$, (3) does not use any other information specific to the particular instance.

Consider, e.g.:
\beq
{H}(s) = (1-s) H_{0} + s (1-s) H_{\mathrm{C}} + s H_{1} \ .
\eeq
The specific form of $H_{\mathrm{C}}$ is of course important, but even a randomly chosen catalyst can help \cite{FarhiAQC:02,Farhi:2011:QAA:2011395.2011396,Zeng:2016bs}.  We illustrate how $H_{\mathrm{C}}$ can turn a slowdown (exponential run time) into a success (at worst polynomial run time) for a specific problem with a specific $H_{\mathrm{C}}$ that is analytically tractable.  Consider a final Hamiltonian of the form
\beq
H_{1} = \sum_{z} h(z) \ketbra{z}{z}\ ,
\eeq
where $z$ denotes an $n$-bit string, and $h(z) = \sum_{i<j<k} h_3(z_i, z_j,z_k)$ with
\beq
h_3(z_1,z_2,z_3) = \left\{ \begin{array}{lr}  
0,& z_1 + z_2 + z_3 = 0 \\
3,& z_1 + z_2 + z_3 = 1 \\
1,& z_1 + z_2 + z_3 = 2 \\
1,& z_1 + z_2 + z_3 = 3 \\
\end{array} \right. \ .
\eeq
The all-zero bit string minimizes the final Hamiltonian with energy $0$.

The cost function $h(z)$ is bit-permutation symmetric and only depends on the Hamming weight $|z|$, which facilitates the analysis. Specifically \cite{Farhi-spike-problem}:
\begin{eqnarray}
h(z)  &=& \frac{3}{2} |z| \left( n - |z| \right) \left( n - |z| - 1 \right) + \frac{1}{2} |z| \left( |z| - 1 \right) \left(n - |z| \right) \nonumber \\
&& + \frac{1}{6} |z| \left( |z| - 1 \right) \left( |z| - 2 \right)\ .
\end{eqnarray}
The final Hamiltonian can then be written in terms of the total spin operators $S^{\alpha} = \frac{1}{2} \sum_{i=1}^n \sigma^\alpha_i$ by using the mapping $|z| \mapsto \frac{n}{2} - S^z$.  The initial Hamiltonian is taken to be 
\begin{align}
H_0 = \frac{(n-1)(n-2)}{2} \left( \frac{n}{2} \ident - S^x \right) \ ,
\end{align}
[the unconventional normalization is to ensure that both $H_{1}$ and $H_{0}$ scale similarly with $n$ \cite{farhi_quantum_2000}].  $H_{\mathrm{C}}$ is taken to be identical for all combinations of three bits in order to preserve the permutation symmetry:
\beq
H_{\mathrm{C}} = - 2 n \left(S^x S^z + S^z S^x \right) \ .
\eeq
Note that this catalyst is non-stoquastic. A useful way to characterize the change due to the introduction of $H_{\mathrm{C}}$ is to study the semi-classical potential associated with the Hamiltonian:
\beq
V(s, \theta, \varphi)  = \bra{\theta, \varphi}H(s) \ket{\theta, \varphi}\ ,
\eeq
where $\ket{\theta, \varphi}$ is the spin-coherent state defined in Eq.~\eqref{eqt:SpinCoherent}. 
In the large $n$ limit we have \cite{FarhiAQC:02}:
\begin{eqnarray}
\lim_{n \to \infty} V / (2/n)^3 &=& 2 ( 1- s)( 1- \sin \theta \cos \varphi ) \nonumber \\
&& + \frac{1}{6}s \left( 13 + 3 \cos \theta - 9 \cos ^2 \theta - 7 \cos^3 \theta \right) \nonumber \\
&&- 8 s(1-s) \cos \theta \sin \theta \cos \varphi\ ,
\end{eqnarray}
where the three terms arise from the initial, final and catalyst Hamiltonians, respectively.  We display the behavior of this potential in Fig.~\ref{fig:FarhiPotential}.  In the absence of $H_{\mathrm{C}}$, there is a value of $s$ where the potential has degenerate minima, and the system must tunnel from the right well to the left well in order to follow the global minimum.  This point is associated with an exponentially small gap \cite{Farhi-spike-problem}, i.e., the algorithm requires exponential time to follow the global minimum.  However, in the presence of $H_{\mathrm{C}}$ the potential never exhibits such an obstacle; there is always a single global minimum that the system can follow from $s = 0$ to $s = 1$ with polynomial run time.  

Using this method of introducing a catalyst Hamiltonian, improvements were generally observed on a large number of MAX 2-SAT instances of size $n=20$ (by directly solving the Schr\"{o}dinger equation) \cite{crosson2014different}. Both stoquastic and non-stoquastic $H_{\mathrm{C}}$ were tried and improved the success rate, but the difference between stoquastic and non-stoquastic was not decisive. 

A similar study was performed in \cite{Hormozi:2016} on fully-connected Ising instances, 
$H_{1} = \sum_{i=1}^n h_i \sigma^z_i + \sum_{i< j}^n J_{ij} \sigma_i^z\sigma_j^z$, of size $n \leq 17$, where the $J_{ij}$'s and $h_i$'s were picked from a continuous Gaussian distribution with zero mean and unit variance.  The authors observed that a stoquastic catalyst generally improves the performance of easy instances by boosting the minimum gap and reducing the number of anticrossings.  The fraction of instances affected tends to grow with increasing problem size.  This is in stark contrast to non-stoquastic catalysts that tend to improve the performance of the very hard instances, but the fraction of improved instances remains constant with increasing problem size. Furthermore, the gap does not generically increase with the addition of this catalyst, and the number of anticrossings grows.  This latter feature leads to the increased success probability as population lost from the ground state at one anticrossing can be recovered at a later anticrossing.
\begin{figure}[th] 
   \centering
   \subfigure[]{\includegraphics[width=0.45\columnwidth]{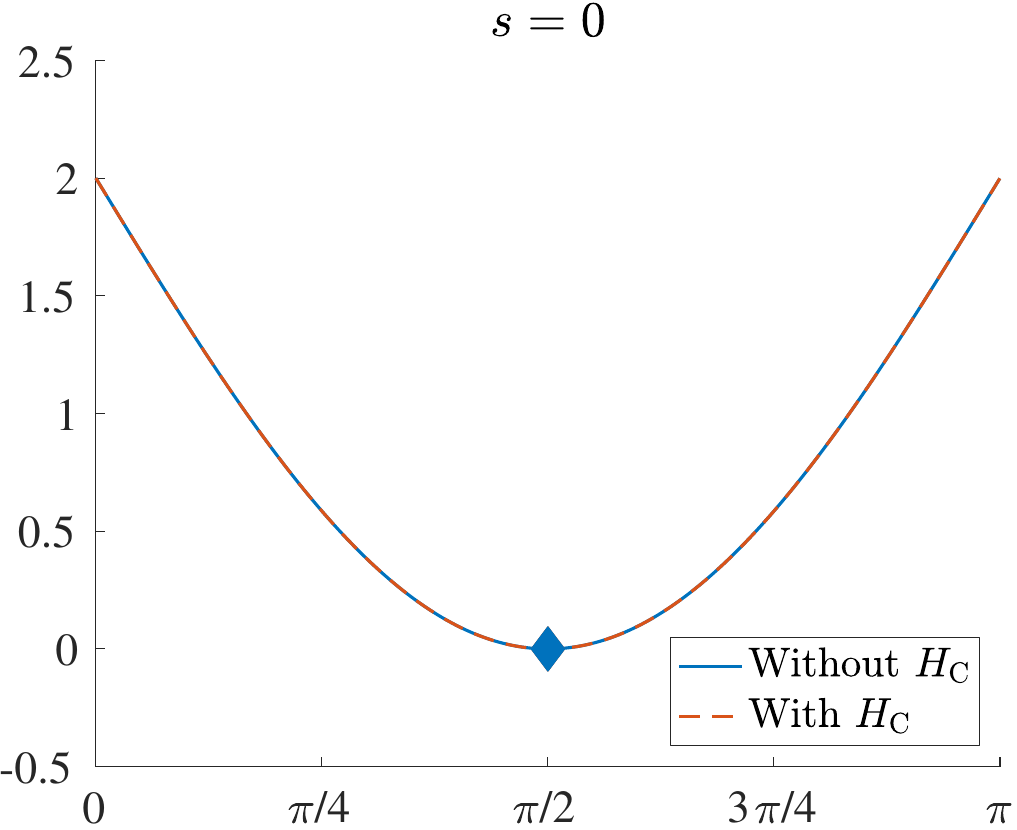}} 
   \subfigure[]{\includegraphics[width=0.45\columnwidth]{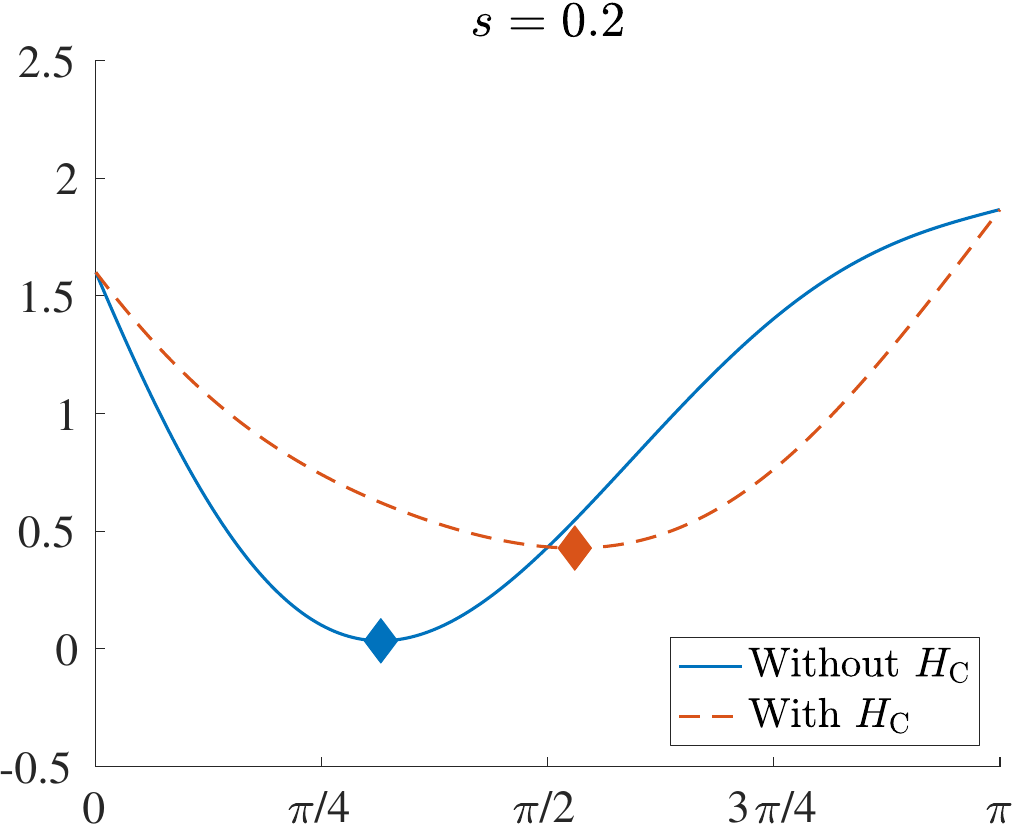}} 
   \subfigure[]{\includegraphics[width=0.45\columnwidth]{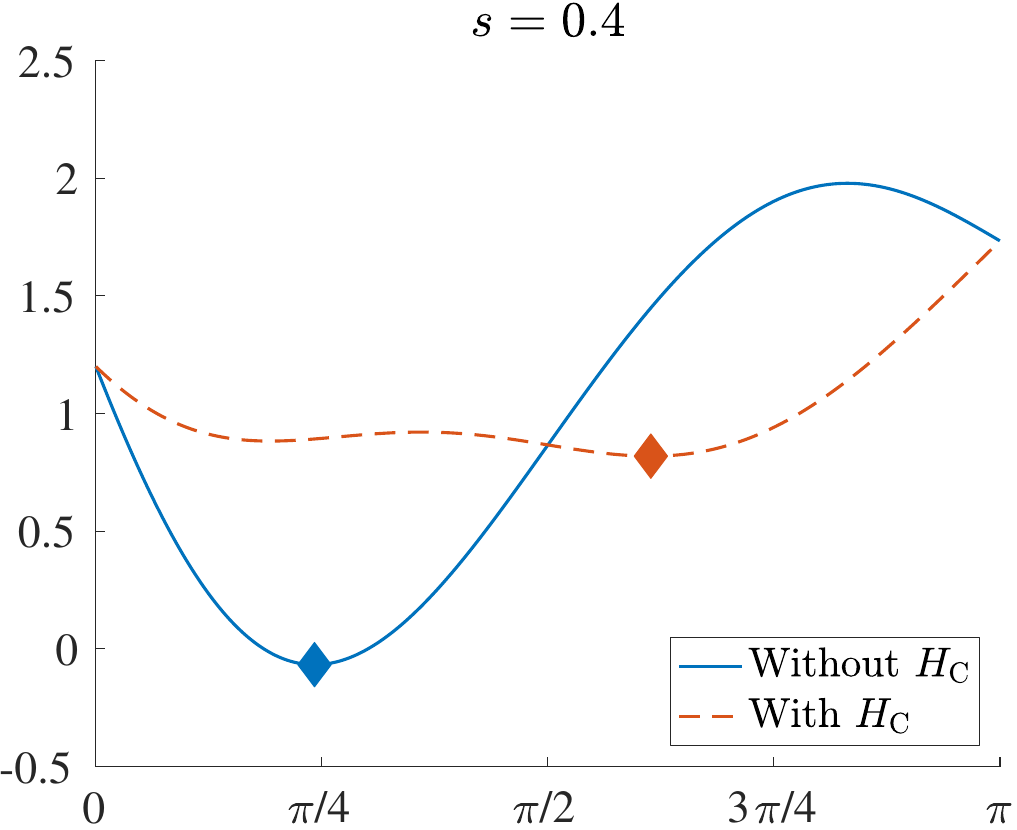}} 
   \subfigure[]{\includegraphics[width=0.45\columnwidth]{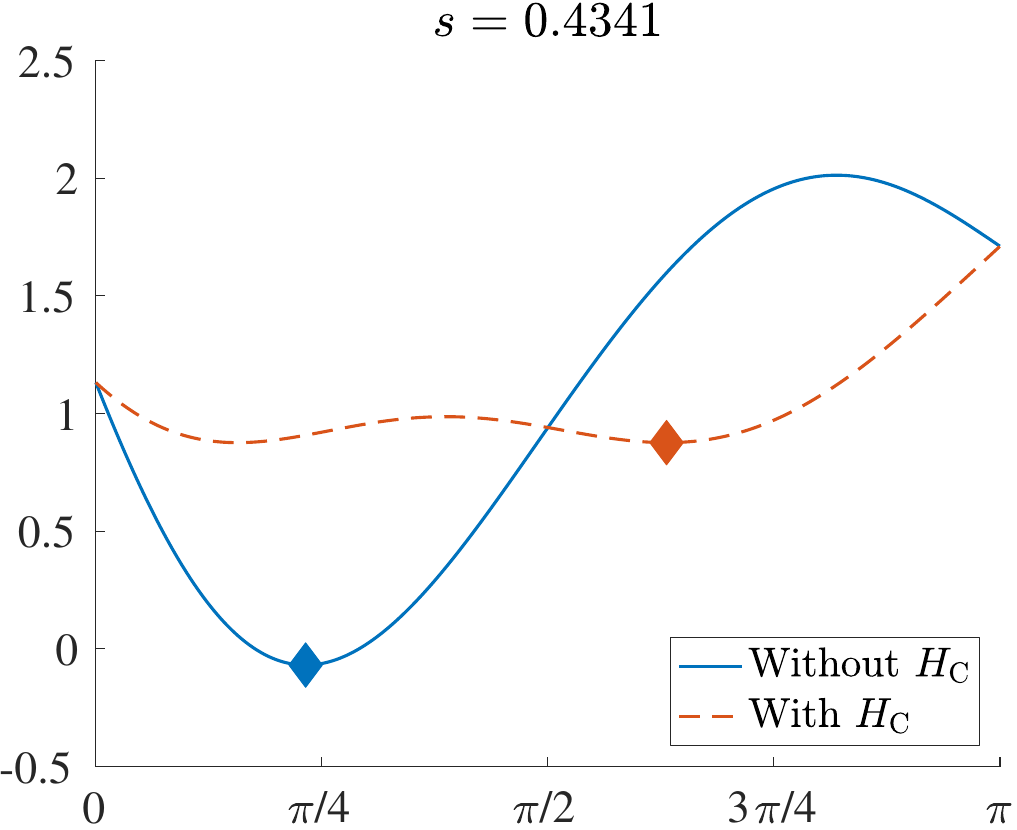}}
    \subfigure[]{\includegraphics[width=0.45\columnwidth]{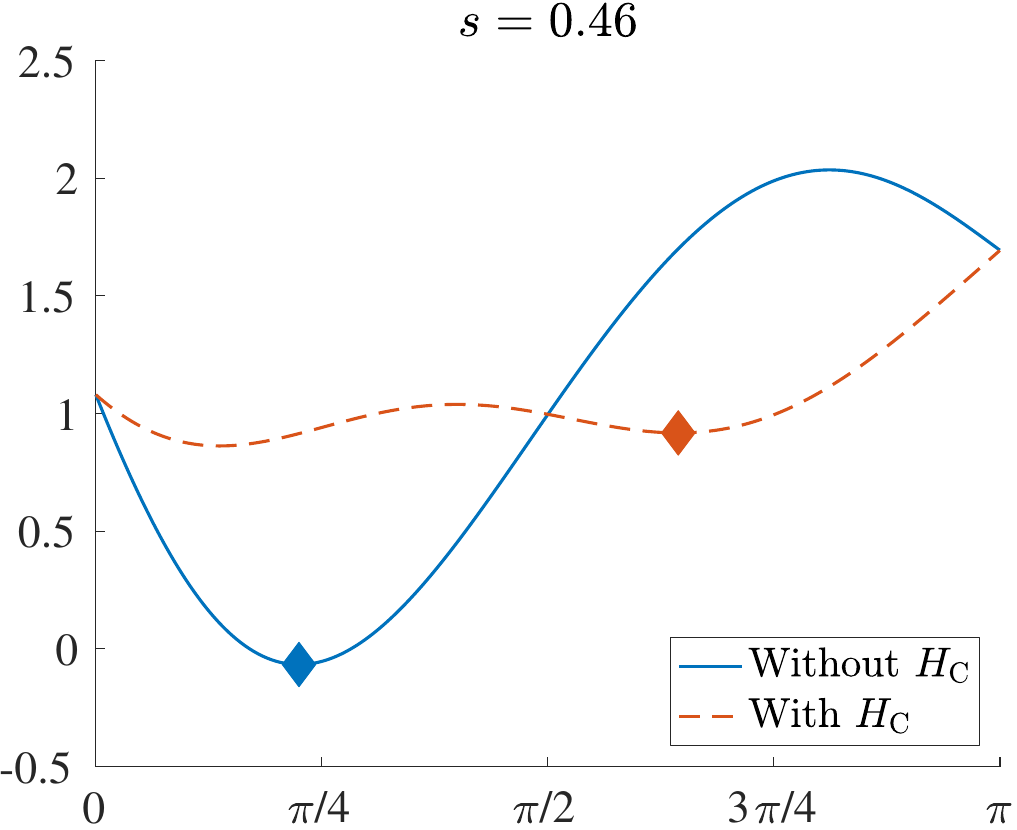}}
    \subfigure[]{\includegraphics[width=0.45\columnwidth]{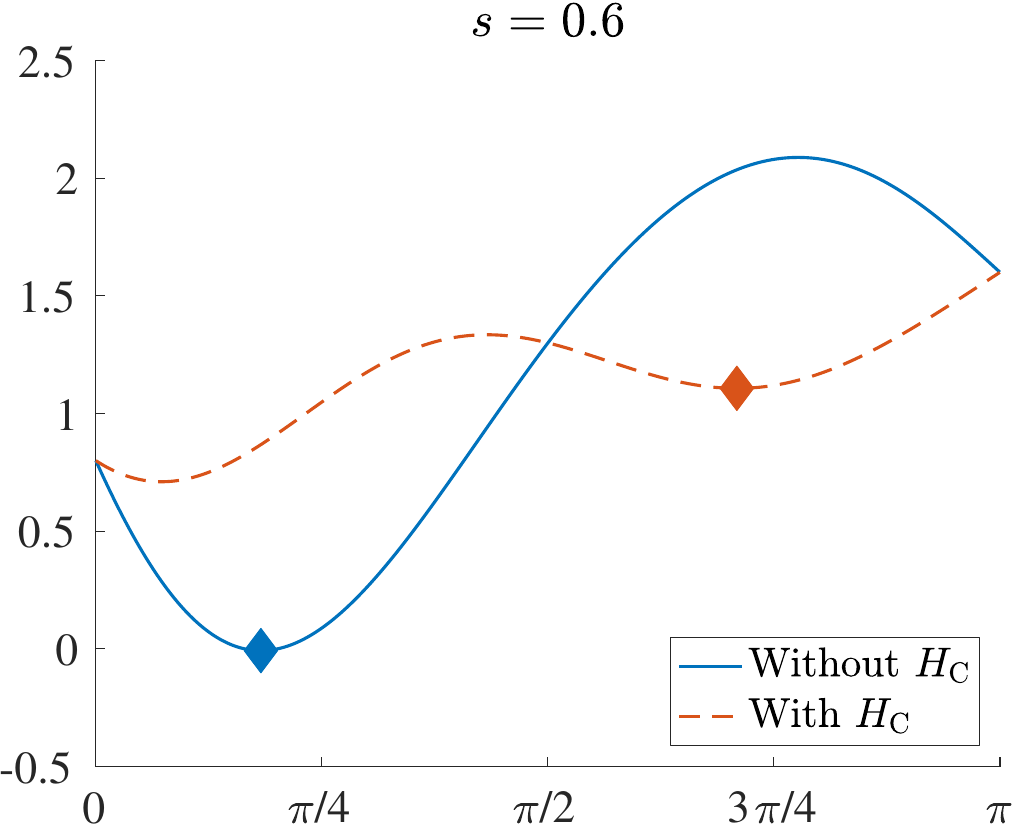}}
    \subfigure[]{\includegraphics[width=0.45\columnwidth]{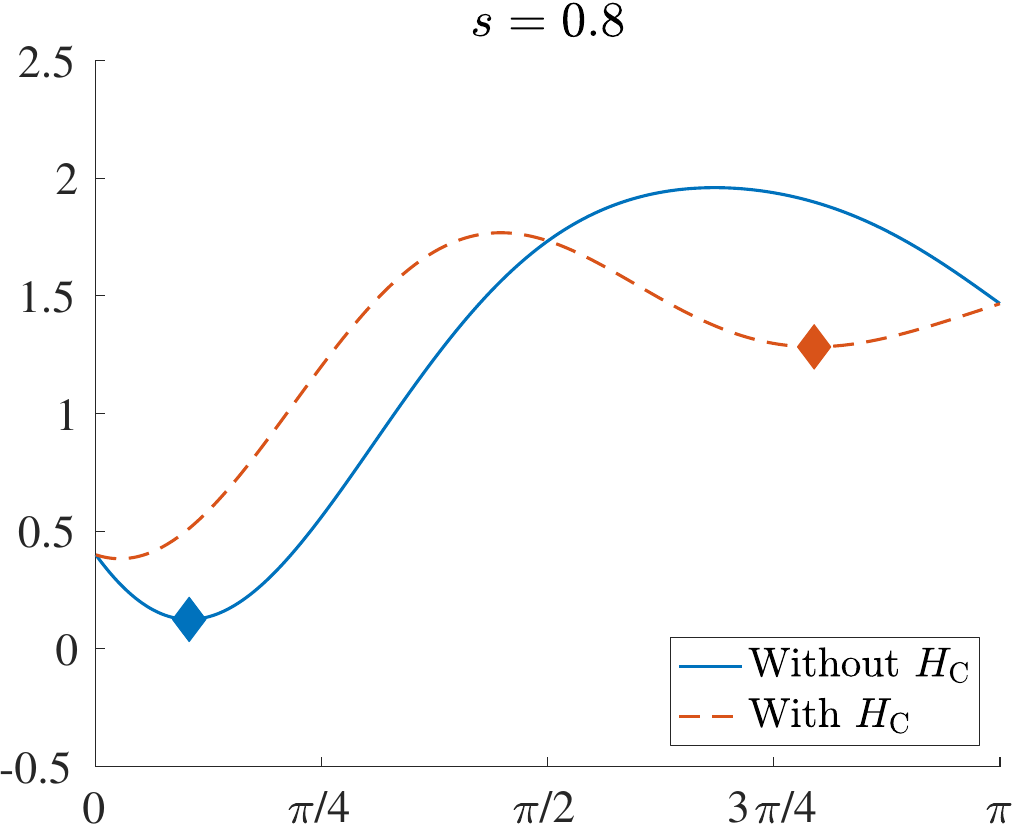}}
    \subfigure[]{\includegraphics[width=0.45\columnwidth]{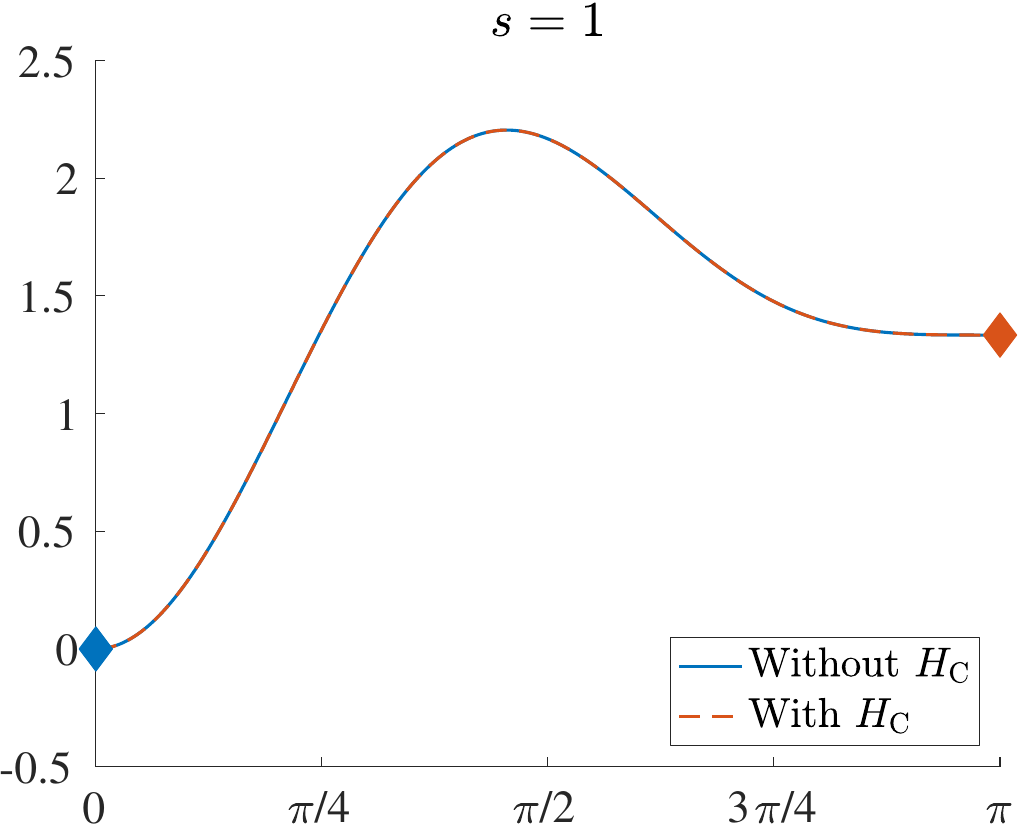}}
   \caption{The diamonds represent the minima followed by a polynomial run time.  In the case with $H_{\mathrm{C}}$, the potential can follow the global minimum polynomial time.  In the case without $H_{\mathrm{C}}$, there is an $s$ value where the potential has a degenerate minimum, and the algorithm cannot tunnel to the new global minimum in polynomial time.}
   \label{fig:FarhiPotential}
\end{figure}

\subsection{Addition of non-stoquastic terms}

The addition of non-stoquastic terms was already considered numerically in the previous subsection; here we focus on analytical results obtained for certain mean field models.

Quantum statistical-mechanical techniques (Trotter-Suzuki decomposition, replica method under the replica-symmetric ansatz, and the static approximation) were used in \cite{Seki:2012,Seoane:2012,Seki:2015,Nishimori:2016yq} to analyze infinite-range Ising models with ferromagnetic as well as random interactions. These studies concluded that non-stoquastic terms can sometimes modify first-order quantum phase transitions (with an exponentially small gap) in the stoquastic Hamiltonian to second-order transitions (with a polynomially small gap) in the modified, non-stoquastic Hamiltonian. 

The Hamiltonian is of the form
\beq
H(s,\lambda) = (1-s)H_0 - s\left(\lambda H_{1,z}^{(p)} + (1-\lambda)H_{1,x}^{(k)} \right)
\eeq
where $H_0 = -\sum_{i=1}^n \sigma_i^x$ is a standard initial Hamiltonian, and 
\beq 
H_{1,\alpha}^{(q)} =  n\left(\frac{1}{n}\sum_{i=1}^n \sigma_i^\alpha \right)^q\ , \quad \alpha\in\{x,z\}\ , \quad q\in\{p,k\}\ ,
\eeq
where $\lambda\in[0,1]$ controls the strength of the non-stoquastic term $H_{1,x}^{(k)}$, and both $p$ and $k$ are integers $\geq 2$ that determine the locality of the model. The parameter $\lambda$ is increased to $1$ along with $s$, so that the final Hamiltonian is the infinite-range $p$-body ferromagnetic Ising model $H(1,1) = -H_{1,z}^{(p)}$. Also the $r$-pattern Hopfield model was studied, where $\lambda H_{1,z}^{(p)}$ is replaced by $-\sum_{1\leq i_1<\cdots<i_p\leq n}J_{{i_1}\cdots {i_p}}\sigma^z_{i_1}\cdots\sigma^z_{i_p}$, where $J_{{i_1}\cdots {i_p}}$ is given in Eq.~\eqref{eq:Hopfield}, with $\xi_{i_p}$ being $\pm 1$ with equal probability.

In the ferromagnetic case \cite{Seki:2012,Seoane:2012} showed that for $p\geq 4$, a two-local non-stoquastic $XX$ term \footnote{This addition does not result in a truly non-stoquastic Hamiltonian; there exists a local unitary transformation that makes the Hamiltonian stoquastic.  Specifically, rotate $\sigma^z$ to $\sigma^x$.  In this new basis, the Hamiltonian is stoquastic.} changes the first order phase transition to a second order one, for an appropriately chosen path in the $(\lambda,s)$ plane, starting from $(\lambda_0,0)$ (with arbitrary $\lambda_0$) and ending at $(1,1)$. The situation in the Hopfield model case is identical to the ferromagnetic case, for an extensive number of patterns $r\propto n$. For a fixed number of patterns $p\geq 5$ is sufficient and $p>3$ is necessary in order to avoid first order phase transitions \cite{Seki:2015}.

\subsection{Avoiding perturbative crossings} 
\label{sec:PerturbativeCrossings}

An important slowdown mechanism we already alluded to in Sec.~\ref{sec:mod-H_0} is due to anti-crossings very close to the end of the evolution, that can result in an extremely small minimum gap.  These crossings are often referred to as perturbative, because a perturbative expansion back in time from the final Hamiltonian [e.g., perturbation theory in $\Gamma$ for Eq.~\eqref{eq:H(Gamma)}] yields perturbed states that cross in energy very close to where the exact eigenstates anti-cross, with a gap that is exponentially small in the Hamming weight of the unperturbed crossing states \cite{AminChoi:2009} [shown there in the context of the weighted maximum independent set problem; see also \cite{Foini:2010,Farhi:2011:QAA:2011395.2011396}]. This problem of perturbative crossings was demonstrated for the NP-complete Exact Cover problem [recall Sec.~\ref{sec:exact-cover}] in \cite{Altshuler2010}, who related the mechanism  of exponentially small spectral gaps to Anderson localization of the eigenfunctions
of $H(s)$ in the space of the solutions. They showed that the Hamming weight between such states can be $\Theta(n)$, which is clearly problematic for the adiabatic algorithm. It was also claimed in \cite{Altshuler2010} that these anti-crossings appear with high probability as the transverse field goes to zero; however the latter claim did not survive a more accurate analysis that took into account the extreme value statistics of the energy levels: the exponential degeneracy of the ground state, which is a distinguishing feature of random NP-complete problems with a discrete spectrum (such as Exact Cover), dooms the proposed mechanism \cite{Knysh:2010bf,Knysh:2016iq}.

Nevertheless, this does not rule out the occurrence of exponentially small gaps close to the end of the evolution. Furthermore, it is plausible that the mechanism for avoided level crossings presented in \cite{Altshuler2010} may not necessarily be restricted to the end of the evolution, but may occur throughout a many-body-localized phase \cite{Laumann:2015sw}. In light of this we now discuss a rather general way to circumvent such perturbative crossings, that differs from the random initial Hamiltonian approach presented in Sec.~\ref{sec:mod-H_0}.

Using the NP-hard maximum independent set problem, it was shown that this problem occurs only for one particular implementation of the adiabatic algorithm, and different choices can avoid the problem \cite{Choi:2010rz}. In fact, \cite{Dickson:2011ad} showed that there is always some choice of the initial and final Hamiltonians that avoids such perturbative crossings (note that this does not include non-perturbative crossings).  Furthermore, this choice can be made efficiently, i.e., in polynomial time, space and energy \cite{Dickson:2011lo}, as we now summarize.

The idea of \cite{Dickson:2011lo} is to cause the ground state to diverge from all other states by changing the degeneracy of the spectrum of the final Hamiltonian, such that the ground state is the most degenerate, the first excited state less degenerate, up to the highest excited state, which will be the least degenerate.  Consider an $n$-qubit Ising Hamiltonian of the form 
\beq
H_{1} = \sum_{i\in M} h_i \sigma^z_i + \sum_{\{i,j\}\in M} J_{ij} \sigma_i^z\sigma_j^z\ ,
\label{eq:IsingH}
\eeq
where $h_i, J_{ij} \in \left\{ 0, \pm 1 \right\}$ and $M$ specifies the non-zero terms, of which there are $m$.  In order to simplify the analysis, assume that there are no single bit-flip degeneracies, meaning that there are no degenerate states that are Hamming distance $1$ from each other.  For each non-zero $h_i$ term that the ground state satisfies, i.e., $h_i \sigma^z_i = -1$, add $a\geq 1$ ancilla qubits with an interaction of the form:
\beq
H_h = \sum_{k=1}^a b \left(h_i \sigma_i^z + 1 \right) \left( \sigma^z_{i_k} +1 \right) / 2\ ,
\eeq
where $\{i_1,\dots,i_a\}\in M_h$. This term vanishes when the term $h_i$ is satisfied, regardless of the orientation of the $a$ ancillas, whereas otherwise it gives an energy $b n_1$, where $n_1$ is the number of ancillas pointing up.  Note that when the term is unsatisfied, when all ancilla spins point down the energy cost is zero.  This is important because we do not want to change the energy of the ground state configuration.

Similarly, for each (non-zero) $J_{ij}$, also add $a$ ancillas with the following interaction term:
\beq
H_J = \sum_{k=1}^a \left( J_{ij} \sigma^z_i \sigma^z_j + 1 \right) \left( \sigma^z_{(ij)_k} + 1  \right) /2\ ,
\eeq
where $\{ij_1,\dots,ij_a\}\in M_J$. This introduces $3$-local terms; it is possible to use $2$-local terms to achieve the same result at the expense of introducing an additional ancilla for each term [see the Appendix of \cite{Dickson:2011lo} for details].  

The spectrum of the new Hamiltonian (with $m a $ additional ancilla qubits) is the original spectrum when all ancilla qubits point down, and all the new energy states correspond to flips of the ancilla qubits, with increased energy. Note that this means that no new local minima were introduced.  Now consider the following adiabatic algorithm Hamiltonian:
\beq
H = \lambda H_0+ H_1'  \ , \quad H_1' = H_1+H_h+H_J\ ,
\eeq
with $\lambda$ decreased from $\infty$ (proportional to $abm$ suffices) to $0$, and where the initial Hamiltonian includes transverse fields on the ancilla qubits:
\beq
 H_0= -  \sum_{i\in M\cup M_h\cup M_J} \sigma^x_i  \ .
\eeq
Consider a non-degenerate classical state ${\alpha}$ with energy $E_{\alpha}$ under the action of $H_1$.  It becomes degenerate under the action of $H_1'$.  Let $\ket{\alpha}$ denote  the uniform superposition over all these degenerate states with energy $E_{\alpha}$.  Introducing $\lambda > 0$ breaks the degeneracy, and from first order degenerate perturbation theory (see Appendix~\ref{sec:perturbationTheory}) the state $\ket{\alpha}$ is the new lowest energy eigenstate within the subspace spanned by the unperturbed degenerate states with energy $E_{\alpha}$. The correction to its energy is $E_{\ket{\alpha}} = E_{\alpha} + \lambda E_{\ket{\alpha}}^{(1)} + \dots$, where 
\begin{align}
E_{\alpha} = &-(\# \text{ of terms in $H_1'$ satisfied by }\alpha) \notag \\
&+ (\# \text{ of terms in $H_1'$ unsatisfied by }\alpha)  \\
=& -2(\# \text{ of terms in $H_1'$ satisfied by }\alpha) + m \notag
\end{align}
(recall that $h_i,J_{ij}\in\{0,\pm 1\}$), and
\begin{align}
E_{\ket{\alpha}}^{(1)} &= \bra{\alpha}  H_0\ket{\alpha} \\
&=  - a \left( \text{\# of terms in $H_1'$ satisfied by ${\alpha}$}\right) \notag \\
&= \frac{a}{2} \left( E_{\alpha} - m \right)\ .
\notag 
\end{align}
Note that in $E_{\alpha}$ the contribution is entirely due to $H_1$, while in $E_{\ket{\alpha}}^{(1)}$ the contribution is entirely due to $H_h+H_J$.

Taking $a = b = n^2$, it can be shown that higher order corrections do not depend on $a$, and hence the first order correction dominates the behavior.  Therefore, it is clear that a state $\ket{\alpha}$ with a lower (final) energy than a state $\ket{\beta}$ has a larger negative slope (first-order perturbation energy correction).  Therefore, the states $\ket{\alpha}$ and $\ket{\beta}$ grow farther apart for $\lambda > 0$ according to first order perturbation theory.  This means that the perturbative crossing is avoided.

This method works in general for the problem of finding the ground
state of an arbitrarily-connected Ising model with local fields, and is fully stoquastic. Thus, all NP-complete problems can be attacked
using StoqAQC without encountering perturbative crossings.
Of course, this does not prove that StoqAQC can solve NP-complete problems
in polynomial time. However, it does mean that proving otherwise requires identifying some effect other than perturbative crossings that unavoidably results in exponentially long adiabatic run times.

\subsection{Evolving non-adiabatically}

Our discussion so far has been restricted to adiabatic evolutions, where the minimum gap controls the efficiency of the quantum algorithm.  
However, as we have seen with the glued-trees problem in Sec.~\ref{sec:gluedtrees}, the quantum evolution can take advantage of the presence of two avoided-level crossings (and their associated exponentially small gaps) to leave and return to the ground state with high probability in polynomial time, whereas an adiabatic evolution would have required exponential time. Setting aside the fascinating and intricate field of open-system AQC where relaxation can play a beneficial role in returning the computation to the ground state [the subject of a separate review \cite{Albash-Lidar:RMP-colloq}], this is one among several cases where non-adiabatic, i.e., diabatic evolution enhances the performance of a quantum algorithm based on Hamiltonian computation. Another example is~\cite{crosson2014different} [see also \cite{Hormozi:2016}] where it was observed that evolving rapidly (as well as starting from excited states) increased the success probability on the hardest instances of randomly generated $n=20$ MAX-2-SAT instances with a unique ground state.  When evolving rapidly, population leaks into the first excited state before the avoided-level crossing and then returns to the ground state after the avoided-level crossing.  An instance of this behavior is shown in Fig.~\ref{fig:Crosson2014Fig}.
\begin{figure}[htbp] 
   \centering
   \includegraphics[width=0.95\columnwidth]{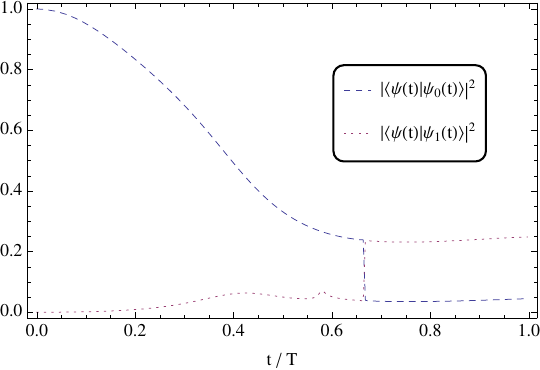}
   \caption{Overlap squared of the evolved wavefunction $\ket{\psi(t)}$ and the instantaneous ground state $\ket{\psi_0(t)}$ and first excited state $\ket{\psi_1(t)}$ for an instance of MAX-2-SAT with $n=20$ and with a total time $T = 10$.  Because of the rapid evolution, population leaks out of the ground state and hence the decrease in the ground state population.  There is an avoided level crossing at approximately $t/T = 0.65$, where the population between the ground state and first excited state are effectively swapped.  Therefore, if more substantial leaking into the first excited state occurs, this will lead to an increase in probability of finding the ground state at the end of the evolution.  From \cite{crosson2014different}.}
   \label{fig:Crosson2014Fig}
\end{figure}

A similar result was observed in~\cite{Muthukrishnan:2015ff} for a large class of Perturbed Hamming Weight problems (recall Sec.~\ref{sec:spike}), but with the difference that the rapid evolution diabatically pushes population to higher excited states and then returns to the ground state through a series of avoided-level crossings, a phenomenon called ``diabatic cascade".

These results raise the question of whether adiabatic evolution is in fact the most efficient choice for running a quantum adiabatic algorithm. After all, the goal is to find the ground state once, with the highest probability and in the shortest amount of time. Therefore, rather than maximizing the probability by increasing the evolution time $t_f$, we can instead use many rapid repetitions of the algorithm to simultaneously shorten $t_f$ and increase the success probability. Let $p_{\mathrm{S}}(t_f)$ denote the single-run success probability of the algorithm with evolution time $t_f$. The probability of failing to find the ground state after $R$ independent repetitions is $(1-p_\mathrm{S})^{R}$, so the probability of succeeding at least once is $1-(1-p_\mathrm{S})^{R}$, which we set equal to the desired probability $p_\mathrm{d}$. The trade-off between success probability and run time is therefore well captured by the time-to-solution (TTS) metric, which measures the time required to find the ground state at least once with probability $p_\mathrm{d}$ (typically taken to be 99\%):
\beq
\mathrm{TTS}(t_f) = t_f \frac{\ln (1 - p_\mathrm{d})}{\ln [1 - p_\mathrm{S}(t_f)]} \ .
\eeq
Other metrics exist that quantify this tradeoff, e.g., without insisting on finding the ground state \cite{King:2015cs}, or that make use of optimal stopping theory and assign a cost to each run \cite{Vinci:2016tg}.

For $p_\mathrm{S}\lesssim 1$ (close to the adiabatic limit), only a single (or few) repetitions of the algorithm are necessary and the TTS  scales linearly with $t_f$.  As $t_f$ is lowered, the success probability typically decreases and more repetitions are necessary, but the TTS may in fact be lower because of the smaller $t_f$ value.  The optimal $t_f$ for the algorithm minimizes the TTS, and is defined as:
\beq
\mathrm{TTS}_{\mathrm{opt}} = \min_{t_f>0} \mathrm{TTS}(t_f) \ .
\eeq
Benchmarking of algorithms then proceeds as follows.  For a specific class of problem instances of varying sizes $n$, TTS$_{\mathrm{opt}}$ is calculated for each size $n$.  The scaling of the algorithm with $n$ is then determined from the scaling with $n$ of TTS$_{\mathrm{opt}}$, as, e.g., in \cite{q108}.  

One benefit of this approach is in obtaining a quantum scaling advantage over specific classical algorithms.. 
For example, the constant gap perturbed Hamming weight oracular problems \cite{Reichardt:2004} [Sec.~\ref{sec:PHW}] and the ``spike" problem of \cite{Farhi-spike-problem} [Sec.~\ref{sec:spike}] with a polynomially closing quantum gap, can be solved in ${O}(1)$ time using a classical algorithm. However, QA exhibits a scaling advantage over SA for these problems in the sense that QA offers a TTS that scales better than SA with single-spin updates \cite{Muthukrishnan:2015ff}.


\section{Outlook and Challenges}
\label{sec:outlook}

Adiabatic quantum computing has blossomed from a speculative alternative approach for solving optimization problems, to a formidable alternative to other universal models of quantum computing, with deep connections to both classical and quantum complexity theory, and condensed matter physics. 

In this review we have given an account of most of the major theoretical developments in the field. Of course, some omissions were inevitable. For example, a potentially promising application of AQC is in quantum chemistry, where the calculation of molecular energies can be formulated in terms of a second-quantized fermionic Hamiltonian that is mapped, via a generalized Jordan-Wigner transformation \cite{Jordan:28,Bravyi:00}, to a non-stoquastic qubit Hamiltonian \cite{Aspuru-Guzik:05,Seeley:2012}. This mapping generates $k$-local interactions, but perturbative gadgets can be used to reduce the problem to only $2$-local interactions \cite{Babbush:2014}. The ground state of the mapped Hamiltonian can then be prepared using adiabatic evolution followed by appropriate measurements to determine the energy spectrum. However, the scaling of the minimum gap for such a preparation procedure is not known, and hence this is an example of AQC with a non-stoquastic Hamiltonian for which it is unknown whether a quantum speedup is possible. A variety of other interesting AQC results with an unknown speedup, and which we did not have the space to review here in detail, can be found in \cite{Rajak:2014qa,OGorman:2015qf,Kurihara:2014fu,Sato:2014dz,Hashizume:2015hc,Rosenberg:2016tg,Goto:2016kl,Inack:2015oq,Santra:2016cr,Durkin:2016dq,Behrman:2016bh,III:2016qf,Karimi:2016ve,Raymond:2016ly,Miyahara:2016zr,Chancellor:2016ys,Karimi:2016vn,Cao:2016fk}.

Moreover, to make the review comprehensive and detailed enough to be self-contained, we focused only on the closed-system setting, thus completely ignoring the important problem of AQC in open systems, with the associated questions of error correction and fault tolerance. We also left out the experimental work on AQC and quantum annealing. These important topics will be the subject of a separate review \cite{Albash-Lidar:RMP-colloq}.

Due to the prominence of stoquastic Hamiltonians in the body of work on AQC, we coined a new term, StoqAQC, which is roughly what was meant when the term ``quantum adiabatic algorithm" was first introduced. Correspondingly, we devoted a substantial part of this  review to StoqAQC, despite the fact that there are indications that this model of computation may not be more powerful than classical computing. Its prominence is explained by the fact that it is easier to analyze than universal AQC, which requires non-stoquastic terms, and by the fact that it is easier to implement experimentally [see, e.g., \cite{Bunyk:2014hb,Weber:2017aa}]. The relatively short history of AQC has witnessed a fascinating battle of sorts between attempts to show that StoqAQC fails to deliver quantum speedups, and corresponding refutations by clever tweaks. To put this and other results we have discussed in the proper perspective, we conclude with a list of $10$ key theoretical challenges for the field of AQC:
\begin{enumerate}
\item Prove or disprove that StoqAQC is classically efficiently simulatable.
\item Find an NP-hard optimization for which AQC gives a quantum speedup in the worst case.
\item Find a class of non-oracular, physically realizable optimization problems for which AQC gives a quantum speedup.  
\item Identify a subset of non-stoquastic Hamiltonians for which ground state preparation can be done efficiently using adiabatic evolution.
\item Formulate every quantum algorithm that gives a speedup in the circuit model natively as an AQC algorithm (i.e., directly, without using perturbative gadgets).
\item Find a problem that can be solved with a quantum speedup using AQC, that was not previously known from other models of quantum computing.
\item Give a way to decide whether adiabatic evolution gives rise to a stronger or weaker speedup than non-adiabatic (diabatic) evolution for a given problem.
\item Predict the optimal adiabatic schedule for a given problem without \textit{a priori} knowledge of the size and/or position of its spectral gap.
\item Prove or disprove that tunneling can generate a (scaling, not prefactor) quantum speedup in AQC. 
\item Establish the relation between entanglement and quantum speedups using AQC.
\end{enumerate}

Solving these problems will likely keep researchers busy for years to come, require interdisciplinary collaborations, and will significantly advance our understanding of AQC. We hope that this review will
catalyze new and productive approaches, enhancing our repertoire of algorithms that give rise to quantum speedups from the unique perspective of AQC.

\acknowledgments
We are grateful to Vicky Choi, Elizabeth Crosson, Itay Hen, Milad Marvian, Siddharth Muthukrishnan, Hidetoshi Nishimori, and Rolando Somma for useful discussions. This work was supported under ARO Grant No. W911NF-12-1-0523, ARO MURI Grants No. W911NF-11-1-0268 and No. W911NF-15-1-0582, and NSF Grant No. INSPIRE-1551064. The research is based upon work (partially) supported by the Office of the Director of National Intelligence (ODNI), Intelligence Advanced Research Projects Activity (IARPA), via the U.S. Army Research Office contract W911NF-17-C-0050. The views and conclusions contained herein are those of the authors and should not be interpreted as necessarily representing the official policies or endorsements, either expressed or implied, of the ODNI, IARPA, or the U.S. Government. The U.S. Government is authorized to reproduce and distribute reprints for Governmental purposes notwithstanding any copyright annotation thereon.
\newpage

\appendix

\section{Technical calculations}

\subsection{Upper bound on the adiabatic path length $L$}
\label{app:boundproof}

Right below Eq.~\eqref{eq:14} we claimed that an upper bound on $L$ is $\max_s\|\dot{H}(s)\|/\Delta$.
To see this differentiate the eigenstate equation $H\ket{\veps_a}=\veps_a\ket{\veps_a}$ for the normalized instantaneous eigenstate $\ket{\veps_a}$ and inner-multiply by $\bra{\veps_b}$, with $b\neq a$, to get $(\veps_a-\veps_b)\braket{\veps_b}{\dot{\veps}_a} = \bra{\veps_b}\dot{H}\ket{\veps_a}$. 
Let $\Delta_{ba}=\veps_b-\veps_a$ and $\Delta_a = \min_{b}\min_s\Delta_{ba}(s)$.
Using our phase choice: 
\begin{align}
| \dot{\varepsilon}_a \rangle &= \sum_b | \varepsilon_b \rangle \langle \varepsilon_b | \dot{\varepsilon}_a \rangle = \sum_{b \neq a} | \varepsilon_b \rangle \langle \varepsilon_b | \dot{\varepsilon}_a \rangle \notag \\
&=   - \sum_{b \neq a}  | \varepsilon_b \rangle \langle \varepsilon_b | \dot{H} | \varepsilon_a \rangle / \Delta_{ba}\ .
\end{align}
Thus 
\begin{align}
\Vert | \dot{\varepsilon}_a \rangle \Vert &\leq  \frac{1}{\Delta_a} \Vert \sum_{b \neq a} | \varepsilon_b \rangle \langle \varepsilon_b | \dot{H} | \varepsilon_a \rangle \Vert  \notag \\
&  \leq   \frac{1}{\Delta_a} \Vert  \sum_{b \neq a} | \varepsilon_b \rangle \langle \varepsilon_b | \Vert \Vert  \dot{H} | \varepsilon_a \rangle  \Vert \leq  \frac{1}{\Delta_a}\Vert  \dot{H} | \Vert \ ,
\end{align}
where in the last equality we used the definition of the operator norm and the fact that $\sum_{b \neq a} | \varepsilon_b \rangle \langle \varepsilon_b|$ is a projector. Integration just multiplies by $1$.

\subsection{Proof of the inequality given in Eq.~\eqref{eq:10c}}
\label{app:eq:10c}
Note that 
\bes
\begin{align}
\norm{\partial_s H[A(s)]} &= |\partial_s{A}|\norm{H_1-H_0} \leq 2 |\partial_s{A}|\\
\norm{\partial^2_s H[A(s)]} &\leq 2 |\partial^2_s{A}|
\end{align}
\ees
for the interpolating Hamiltonian \eqref{eqt:GroverH}. Also note that 
\bes
\begin{align}
\partial^2_s A(s) &= c p \Delta^{p-1} [A(s)] \frac{d\Delta}{d A} \partial_s A(s)  \\
&= c^2 p \frac{d\Delta}{d A} \Delta^{2p-1}[A(s)]\ .
\end{align}
\ees
Thus:
\bes
\begin{align}
&\int_0^1 \left(\frac{\norm{\partial^2_s H[A(s)]}}{\Delta^{2}[A(s)]} + \frac{\norm{\partial_s H[A(s)]}^2}{\Delta^{3}[A(s)]} \right)ds \notag \\
&\quad \leq 2 \int_0^1 \left(\frac{|\partial^2_s A|}{\Delta^{2}[A(s)]} +  \frac{2|\partial_s A|^2}{\Delta^{3}[A(s)]} \right)ds \\
& \quad = \int_0^1 2c^2 \left(p \frac{d\Delta}{d A}+2\right) \Delta^{2p-3}[A(s)] ds \\
\label{eq:appA3}
&\quad = 2c  \int_0^1 \left(p \frac{d\Delta}{d u}+2\right) \Delta^{p-3}(u) du\\
\label{eq:appA4}
&\quad = 4c\int_0^1 \Delta^{p-3}(u) du  \ ,
\end{align}
\ees
where in line~\eqref{eq:appA3} we used the change of variables $u=A(s)$, so that $ds=du/\partial_s A = du/[c \Delta^p(u)]$, and in line~\eqref{eq:appA4} we used $B(p)\equiv 2c  \int_0^1 \Delta^{p-3}d\Delta = 0$, since $B(2)=2c\ln[\Delta(1)/\Delta(0)]=0$ and $B(p\neq 2)= \frac{2c}{p-2}\left(\Delta^{p-2}(1)-\Delta^{p-2}(0)\right)=0$, due to the boundary conditions $\Delta(0)=\Delta(1)=1$ [Eq.~\eqref{eq:gap-Grover}].

\section{A lower bound for the adiabatic Grover search problem}
\label{app:lower-bound}

Here we show that there is no schedule that gives a better scaling for the adiabatic Grover problem than the one discussed in Sec.~\ref{sec:quadspeedup}, resulting in a quadratic quantum speedup. The argument is due to \cite{Roland:2002ul}, which in turn is based on the general Hamiltonian quantum computation argument by \cite{FarhiAnalog}. 

To show this, consider two different searches, one for $m$ and another for $m'$. We do not allow the schedule to depend on $m$, i.e., the same schedule must apply to all marked states. Let us denote the states for each at the end of the algorithm by $\ket{\psi_m(t_f)}$ and $\ket{\psi_{m'}(t_f)}$.  In order to be able to distinguish if the search gave $m$ or $m'$, we must require that $\ket{\psi_m(t_f)}$ and $\ket{\psi_{m'}(t_f)}$ are sufficiently different. Let us define the distance (or infidelity)
\beq
D_{mm'}(t) \equiv 1 - | \braket{\psi_m(t)}{\psi_{m'}(t)} |^2\ ,
\eeq
[note that $D_{mm'}(0)=D_{mm}(t)=0$] and demand that:
\beq
D_{mm'}(t_f)  \geq \eps \ , \quad m \neq m' \ .
\eeq
First, we have a lower bound on the sum:
\begin{align}
\label{eq:lowerboundsum}
\sum_{m,m'}  D_{mm'}(t_f) &=  \sum_{m \neq m'} D_{mm'}(t_f) \notag \\
&\geq  \sum_{m \neq m'} \epsilon = N(N-1) \epsilon \ .
\end{align}

Next, let us find an upper bound on the sum. We write the Hamiltonian \eqref{eqt:GroverH} explicitly as $H(t) = \ident - [1-A(t)] \ketbra{\phi}{\phi} + H_{1m}(t)$ where $H_{1m}(t) = -A(t) \ketbra{m}{m}$.%
\footnote{Optimality applies to arbitrary driving
Hamiltonians. Hence the lower bound
holds more generally, and does in fact not require the initial Hamiltonian to be a projector onto the
uniform superposition as we have done here for simplicity.} 
Then:
%
\begin{align}
\label{eqt:Grover2Terms}
\frac{d}{dt} D_{mm'}(t) & = 2 \Im \left[ \bra{\psi_{m}}(H_{1m} - H_{1m'}) \ket{\psi_{m'
}} \braket{\psi_{m'}}{\psi_{m}} \right] \notag \\
& \leq 2 \left|  \bra{\psi_{m}}(H_{1m} - H_{1m'}) \ket{\psi_{m'
}} \braket{\psi_{m'}}{\psi_{m}} \right| \notag \\
&  \leq 2 |  \bra{\psi_{m}} H_{1m}  \ket{\psi_{m'}} | + 2 |  \bra{\psi_{m}} H_{1m'}  \ket{\psi_{m'}} |  \ .
\end{align}
Let us now sum over all $m$ and $m'$:
\begin{align}
& \sum_{m,m'} \frac{d}{dt} D_{mm'}(t) \leq 4 \sum_{m,m'}  | \bra{\psi_{m}} H_{1m}  \ket{\psi_{m'}} |  \\
& \quad  \leq 4 \sum_{m,m'} \Vert H_{1m} \ket{\psi_{m'}} \Vert \Vert \ket{\psi_{m}} \Vert = 4 \sum_{m,m'} \Vert H_{1m} \ket{\psi_{m'}} \Vert \ , \notag
\end{align}
where we first used the fact that under the sum the two terms in the last line of Eq.~\eqref{eqt:Grover2Terms} are identical, and then we used the Cauchy-Schwartz inequality ($| \langle x|y \rangle | \leq \Vert x \Vert \Vert y \Vert$).  Now we note that:
\begin{eqnarray}
\sum_{m} \norm{ H_{1m} \ket{\psi_{m'}}}^2 &=& \sum_m \bra{\psi_{m'}} H_{1m} H_{1m} \ket{\psi_{m'}}  \\
&=& A^2(t) \sum_m \braket{\psi_{m'}}{m}\braket{m}{\psi_{m'}} = A^2(t) \ ,\notag
\end{eqnarray}
so that
\beq
\left( \sum_m \Vert H_{1m} \ket{\psi_{m'}} \Vert \right)^2 = \left( \vec{x} \cdot \vec{y} \right)^2 \leq (\vec{x} \cdot \vec{x}) (\vec{y} \cdot \vec{y}) = N A^2(t)\ ,
\eeq
where $\vec{x} = \left(\Vert H_{11} \ket{\psi_{m'}} \Vert , \Vert H_{12} \ket{\psi_{m'}} \Vert , \dots,  \Vert H_{1N} \ket{\psi_{m'}} \Vert  \right)$ and $\vec{y} = \left(1, 1, \dots, 1 \right)$.  Therefore, we have:
\begin{align}
 \sum_{m,m'} \frac{d}{dt} D_{mm'}(t) &\leq 4 \sum_{m,m'} \Vert H_{1m} \ket{\psi_{m'}} \Vert \nonumber \\
&  \leq 4 \sum_{m'} \sqrt{N} A(t) = 4 N \sqrt{N} A(t)\ .
\end{align}
If we integrate both sides we have:
\begin{align}
\sum_{m,m'} D_{mm'}(t_f) &\leq 4 N \sqrt{N} \int_0^{t_f} A(t) dt \leq 4 N \sqrt{N} t_f  \ .
\end{align}
Combining this with Eq.~\eqref{eq:lowerboundsum}, we have $N (N-1) \epsilon \leq 4 N \sqrt{N} t_f $, and hence:
\beq
t_f \geq \frac{\epsilon}{4} \frac{N-1}{\sqrt{N}}\ ,
\eeq
so that the computation must last a minimum time of $O(\sqrt{N})$ if the schedule is to be agnostic about the identity of the marked state.  
Therefore, the solution using the locally optimized schedule is asymptotically optimal.

\section{Technical details for the proof of the universality of AQC using the History State construction}
\label{app:AQC-universality-proof}

We add details to the proof sketch given in Sec.~\ref{sec:history-state}.

\subsection{$\ket{\gamma(0)}$ is the ground state of $H_{\mathrm{init}}$}
Let us first check that $\ket{\gamma(0)}$ is the ground state of $H_{\mathrm{init}}$ with eigenvalue $0$.  Note that $H_{\mathrm{init}}$ is a sum of projectors, so it is positive semi-definite.
Therefore if we find a state with energy $0$, then it is definitely a ground state.  The all-zero clock state is annihilated by $H_{\mathrm{c}}$, $H_{\mathrm{input}}$, and $H_{\mathrm{c-init}}$, so we have $H_{\mathrm{init}} \ket{\gamma(0)} = 0$, i.e., $\ket{\gamma(0)}$ is a ground state of $H_{\mathrm{input}}$.  We shall show later that it is a unique ground state.

\subsection{$\ket{\eta}$ is a ground state of $H_{\mathrm{final}}$}
Next we check that $\ket{\eta}$ is a ground state of $H_{\mathrm{final}}$ with eigenvalue $0$.  First, we wish to show that $H_{\mathrm{final}}$ is positive semi-definite.  We already know that $H_{\mathrm{input}}$ and $H_{\mathrm{c}}$ are positive semi-definite, so we only need to show this to be the case for the $H_\ell$'s.  This follows since it is easily checked that 
$H_{\ell} = H_{\ell}^\dagger = \frac{1}{2}H_{\ell}^2$, so that 
$\bra{X} H_{\ell} \ket{X} =  \frac{1}{2} \bra{X} H_{\ell}^{\dagger} H_{\ell} \ket{X} 
= \frac{1}{2} \Vert H_{\ell} \ket{X} \Vert^2 \geq 0$. 
Thus $H(s)$ is positive semi-definite, since it is a sum of positive semi-definite terms.

Let us now check that $H_{\mathrm{final}}$ annihilates $\ket{\eta}$.  First, because $\ket{\eta}$ only involves legal clock states, it is annihilated by $H_{\mathrm{c}}$.  Next,
\beq
H_{\mathrm{input}} \ket{\eta} = H_{\mathrm{input}} \frac{1}{\sqrt{L+1}} \ket{\alpha(0)} \otimes \ket{0^L}_{\mathrm{c}} = 0\ .
\eeq
Finally, note that the only non-zero terms in $\sum_{\ell=0}^L H_{\ell} \ket{\eta}$ are of the form:
\bes
\begin{align}
& H_{\ell} \ket{\alpha(\ell-1)} \otimes \ket{1^{\ell-1} 0^{L-\ell+1}}_{\mathrm{c}} =  \\ 
&\ket{\alpha(\ell-1)} \otimes \ket{1^{\ell-1} 0^{L-\ell+1}}_{\mathrm{c}} 
- \ket{\alpha(\ell)} \otimes \ket{1^{\ell} 0^{L-\ell}}_{\mathrm{c}} \nonumber \\
& H_{\ell} \ket{\alpha(\ell)} \otimes \ket{1^{\ell} 0^{L-\ell}}_{\mathrm{c}}  = \\
&-\ket{\alpha(\ell-1)} \otimes \ket{1^{\ell-1} 0^{L-\ell+1}}_{\mathrm{c}} 
+ \ket{\alpha(\ell)} \otimes \ket{1^{\ell} 0^{L-\ell}}_{\mathrm{c}} \nonumber \ ,
\end{align}
\ees
which cancel.  Therefore, $\ket{\eta}$ has eigenvalue $0$ and is a ground state of $H_{\mathrm{final}}$.

\subsection{Gap bound in the space spanned by $\left\{ \ket{\gamma(\ell)} \right\}_{\ell=0}^L$}
Let $\mS_0$ be the space spanned by $\left\{ \ket{\gamma(\ell)} \right\}_{\ell=0}^L$.  Let us show that $H(s)$ acting on any state in $\mS_0$ keeps it in this subspace:
\bes \label{eqt:HinS0}
\begin{align}
H_{\mathrm{c}} \ket{\gamma(\ell)} & = 0  \ , \\
H_{\mathrm{input}} \ket{\gamma(\ell)} & = 0  \ , \\
H_{\mathrm{c-init}} \ket{\gamma(\ell)} & = \left\{
 \begin{array}{lr}
0 \ ,  & \ell  = 0 \\
 \ket{\gamma(\ell)} \ , & \ell \neq 0  
 \end{array} \right. \\
 H_{\ell} \ket{\gamma(\ell')} & = \delta_{\ell',\ell} \left( \ket{\gamma(\ell-1)} - \ket{\gamma(\ell)} \right) \nonumber \\
 & + \delta_{\ell', \ell} \left( \ket{\gamma(\ell)} - \ket{\gamma(\ell-1)} \right)\ .
\end{align}
\ees
Since the initial state $\ket{\gamma(0)} \in \mS_0$, the dynamics generated by $H(s)$ keep the state in $\mS_0$. Thus, it is sufficient to bound the gap in this subspace.  In the basis given by $\left\{ \ket{\gamma(\ell)} \right\}_{\ell=0}^L$, we can write an $(L+1) \times (L+1)$ matrix representation of the Hamiltonian in the $\mS_0$ subspace, which, using Eq.~\eqref{eqt:HinS0}, is:
\begin{eqnarray}
H_{\mS_0}(s) &=& (1-s)\left( \begin{array}{rrrrrr}
0 & 0 & 0 &\dots &&\\
0 & 1 &  0 \\
& & 1 \\
&&&\ddots\\
&&&&1&0\\
&&&&&1
\end{array} \right)  \\
&&  + s  \left( \begin{array}{rrrrrr}
\frac{1}{2} & - \frac{1}{2} & 0 & 0 & \dots & 0\\
-\frac{1}{2} & 1 &  - \frac{1}{2} & \\
0 & -\frac{1}{2} & 1 & -\frac{1}{2} & \\
\vdots&& \ddots & \ddots & \ddots \\
&&&-\frac{1}{2}&1& -\frac{1}{2} \\
0 &&&& -\frac{1}{2} & \frac{1}{2}
\end{array} \right)\ . \notag
\end{eqnarray}

\subsubsection{Bound for $s < 1/3$}
Let us first bound the gap for $s < 1/3$.  The Gerschgorin circle theorem states \cite{Gershgorin}:\\ \emph{Let $A$ be any matrix with entries $a_{ij}$.  Consider the disk $D_i$ (for $1 \leq i \leq n$) in the complex plane defined as: $D_i = \left\{ z \ \Big|\  |z-a_{ii} | \leq \sum_{j \neq i} |a_{ij}| \right\}$.  Then the eigenvalues of $A$ are contained in $\cup_i D_i$ and any connected component of $\cup_i D_i$ contains as many eigenvalues of $A$ as the number of disks that form the component.} 

Consider the cases $i=1$, $i=L+1$, and $i\neq1,L+1$ separately.  Note that when $s < 1/3$:
\bes
\begin{align}
\left[H_{\mS_0}(s) \right]_{11} & = \frac{1}{2} s < \frac{1}{6} \ , \quad \sum_{j \neq 1} | a_{1j}|  = \frac{1}{2} s < \frac{1}{6}\ , \\
\left[H_{\mS_0}(s) \right]_{L+1,L+1} & = 1 - \frac{1}{2} s > 5/6 \notag \\
&\qquad \sum_{j \neq L+1} | a_{L+1,j}|  = s < \frac{1}{3}\ , \\
\left[H_{\mS_0}(s) \right]_{ii} & = 1 \ , \quad i \neq 1, L+1 \notag \\
& \qquad \sum_{j \neq i} | a_{1j}|  = \frac{1}{2} s < \frac{1}{3} \ .
\end{align}
\ees
Therefore, we can identify a disk $D_1$ centered at $z \leq \frac{1}{6}$ on the real line with radius $\leq \frac{1}{6}$. The closest possible disk to it which does not overlap it, is the disk $D_{L+1}$ centered at $z \geq 5/6$ with a radius $\leq 1/3$.  Therefore, since no disks intersect $D_1$, it covers the smallest values on the real line, and it follows that the ground state for $s < 1/3$ is unique.  Furthermore, we have learned that the minimum gap is a constant of at least $1/6$, since that is the closest distance between $D_1$ and $D_{L+1}$.  (This also proves that $\ket{\gamma(0)}$ is the unique ground state at $s = 0$.)

\subsubsection{Bound for $s \geq 1/3$}
Now let $s \geq 1/3$ and consider the matrix representation of $G(s) \equiv \ident - H_{\mS_0}(s)$ in the same basis:
\beq
G(s) = \left( \begin{array} {rrrrr}
1-\frac{1}{2}s & \frac{1}{2}s \\
\frac{1}{2} s & 0 & \frac{1}{2} s\\
& \ddots & \ddots & \ddots \\
&&&0 & \frac{1}{2}s \\
&&& \frac{1}{2}s & \frac{1}{2}s
\end{array} \right)\ .
\eeq
This matrix is Hermitian and has all non-negative real entries for $0<s\leq 1$.   Note that increasing powers of $G(s)$ fill more of the matrix, and $G(s)^{L+1}$ has all positive entries for $0<s\leq 1$.  We can thus invoke Perron's theorem:\\ 
\emph{Let $G$ be a Hermitian matrix with real non-negative entries.  If there exists a finite $k$ such that all entries of $G^k$ are positive, then $G$'s largest eigenvalue is positive and all other eigenvalues are strictly smaller in absolute value.  The eigenvector corresponding to the largest eigenvalue is unique, and all its entries are positive.}  

Therefore, by Perron's theorem, $G(s)$'s largest eigenvalue $\mu$ must be positive, and the associated unique eigenvector $\vec{\alpha} = \left(\alpha_1, \dots, \alpha_{L+1} \right)$ has $\alpha_i > 0$.  Let us use this to define a matrix $P$ with entries $P_{ij} = \frac{\alpha_j}{\mu \alpha_i} G_{ij} \geq 0$, such that 
\beq
\sum_j P_{ij} = \frac{1}{\mu \alpha_i} \sum_j G_{ij} \alpha_j  = 1 \ , 
\eeq
where we used that $\vec{\alpha}$ is an eigenvector of $G$ with eigenvalue $\mu$.  Thus $P$ is a stochastic matrix (it has only non-negative entries and its rows sum to $1$).  Now note that if $\left(\alpha_1 v_1, \dots \alpha_{L+1} v_{L+1} \right)$ is a left eigenvector of $P$ with eigenvalue $\nu/\mu$, then $\left( v_1, \dots v_{L+1} \right)$ is an eigenvector of $G$ with eigenvalue $\nu$:
\begin{eqnarray}
\frac{\nu}{\mu} \alpha_j v_j &=& \sum_i \alpha_i v_i P_{ij} = \sum_i v_i \frac{\alpha_j}{\mu} G_{ij} = \frac{\alpha_j}{\mu} \sum_i G_{ji} v_i  \nonumber \\
&\Longrightarrow& \nu v_j = \sum_i G_{ji} v_i \ .
\end{eqnarray}
It is straightforward to check that the reverse also holds: if $\left( v_1, \dots v_{L+1} \right)$ is an eigenvector of $G$ with eigenvalue $\nu$, then $\left(\alpha_1 v_1, \dots \alpha_{L+1} v_{L+1} \right)$ is a left eigenvector of $P$ with eigenvalue $\nu/\mu$.
By taking $\vec{v} = \vec{\alpha}$, which corresponds to the largest eigenvalue ($\nu=\mu$) of $G$, it then follows that  $\vec{\alpha^2} = \left( \alpha_1^2, \dots \alpha_{L+1}^2 \right)$ is a left eigenvector of $P$ with the maximal eigenvalue $1$.
If we normalize $\vec{\alpha^2}$, i.e., define 
\beq
\label{eq:Pi}
\vec{\Pi} =  \frac{1}{Z}\left( \alpha_1^2, \dots \alpha_{L+1}^2 \right)\ , \quad Z = \sum_i \alpha_i^2\ ,
\eeq
then $\vec{\Pi}$ is the limiting distribution of $P$, i.e., $P \vec{\Pi} = \vec{\Pi}$.  We can then relate the energy gap between the highest and second highest eigenvalue (let us denote it by $\delta/\mu$) of $P$ to the energy gap between the ground state energy of $H$ (given by $1-\mu = \lambda$) and the first excited state (given by $1-\delta$)
\beq
\Delta_{\mathrm{largest}}(P) = 1 - \frac{\delta}{\mu} = \frac{\mu - \delta}{\mu} = \frac{\Delta(H_{\mS_0})}{1- \lambda} \ .
\eeq
where ``largest" denotes the gap from the largest eigenvalue of $P$ to the next largest eigenvalue.
We wish to bound the gap of $P$ and hence of $H(s)$.  Let us define a non-empty set $\mB \subseteq \left\{1, 2, \dots, L+1 \right\}$ satisfying $\sum_{i \in \mB} \Pi_i \leq \frac{1}{2}$, where $\Pi_i$ are the entries of $\vec{\Pi}$.  Then the conductance of $P$, $\varphi(P)$ is defined as:
\beq \label{eqt:conductance}
\varphi(P) = \min_{\mB } \frac{F(\mB)}{\Pi(\mB)}\ ,
\eeq
where
\bes
\begin{align}
F(\mB) & = \sum_{i \in \mB} \sum_{j \notin \mB} \Pi_i P_{ij}  \label{eqt:F(B)}\ ,\\
\Pi(\mB) & = \sum_{i \in \mB} \Pi_i\ .
\end{align}
\ees
The Conductance bound \cite{Sinclair198993} then states that
\beq \label{eqt:conductancebound}
\Delta_{\mathrm{largest}}(P) \geq \frac{1}{2} \varphi(P)^2\ .
\eeq

To use the result of the Conductance bound, we would like to show that the ground state of $H(s)$ [and hence the eigenstate associated with the largest eigenvalue of $G(s)$] is monotone, i.e., that $\alpha_1 \geq \alpha_2 \geq \dots \geq \alpha_{L+1} \geq 0$.  The case $s=0$ is obvious, so consider $s>0$. First note that $G(s)$ applied to a monotone vector generates a monotone vector, i.e., $G(s)$ preserves monotonicity.  To see, this consider $G(s) \vec{v} = \vec{w}$ with $\vec{v}$ monotone.  The components of $\vec{w}$ are given by:
\bes
\begin{align}
w_1 &= \left( 1- \frac{1}{2}s \right) v_1 + \frac{1}{2} s v_2 \\
w_k & = \frac{1}{2} s v_{k-1} + \frac{1}{2} s v_{k+1} \ , \quad 2 \leq k \leq L \\
w_{L+1} & = \frac{1}{2} s v_L + \frac{1}{2} s v_{L+1}
\end{align}
\ees
Therefore we have:
\bes
\begin{align}
w_1 - w_2 &= (1 -  s) v_1 + \frac{1}{2}s \left( v_2 - v_3 \right) \\
w_k - w_{k+1} & = \frac{1}{2} s \left( v_1 - v_2 + v_3 - v_4 \right) \ , \quad 2 \leq k \leq L-1 \\
w_{L} - w_{L+1} & = \frac{1}{2} s \left( v_{L-1} - v_L \right)\ ,
\end{align}
\ees
which clearly are all $\geq 0$ by the monotonicity of $\vec{v}$ and $s\leq 1$.  Therefore $\vec{w}$ is also monotone.

Recall that $G(s)$ is Hermitian, so it has an orthonormal set of eigenvectors $\left\{ \ket{v_i} \right\}_{i=1}^{L+1}$ with eigenvalues $\mu_i$, where $\ket{v_1} = \vec{\alpha}$ and $\mu_1  = \mu$.\footnote{Note that we abuse notation and mix kets with standard vectors here, and also do not use transpose notation to distinguish column from row vectors.} Because these eigenvectors form a basis we can always find a set of coefficients  $\left\{c_i \right\}_{i=1}^{L+1}$ such that:
\beq
\sum_i c_i \ket{v_i} = \left(1, \dots, 1 \right) = \vec{1}\ .
\eeq 
Then:
\bes
\begin{align}
\label{eq:115a}
&\left( \frac{1}{\mu_1} G(s) \right)^{k} \sum_i c_i \ket{v_i} = \sum_i \left( \frac{\mu_i}{\mu_1} \right)^{k} c_i \ket{v_i}  \\
&\qquad \Longrightarrow \left( \frac{1}{\mu_1} G(s) \right)^{k} \vec{1}^{\mathrm{T}} = \sum_i \left( \frac{\mu_i}{\mu_1} \right)^{k}  \vec{1}^{\mathrm{T}}\ .
\end{align}
\ees
Using $| \mu_i | < \mu_1$ by Perron's theorem, we have from Eq.~\eqref{eq:115a} that:
\beq
\lim_{k \to \infty} \left( \frac{1}{\mu_1} G(s) \right)^{k} \sum_i c_i \ket{v_i} = c_1 \ket{v_1} \ ,
\eeq
Since the quantities $\left( \frac{1}{\mu_1} G(s) \right)^{k}$ (for $k\geq L+1$), $\sum_i c_i \ket{v_i}=\vec{1}$, and $\ket{v_1} = \vec{\alpha}$ are all positive, it follows that also $c_1 > 0$.  Since $\vec{1}$ is monotone and $G(s)$ preserves monotonicity, we have finally established that $\ket{v_1} = \vec{\alpha}$ is monotone.  This then implies that $\vec{\Pi}$ [Eq.~\eqref{eq:Pi}] is monotone.

We are ready to calculate the conductance $\varphi(P)$. 
First, consider the case where the first index (of $\vec{\Pi}$) is in the set $\mB$, i.e., $1 \in \mB$.  Let $k$ be the smallest index such that $k \in \mB$ but $k+1 \notin \mB$.  (Note that from the form of $P$, only $P_{11}$, $P_{j,j+1}$, $P_{L+1,L+1}$ are nonzero.)  Then we have for $F(\mB)$:
\begin{eqnarray}
F(\mB) &=& \sum_{i\in \mB, {i\neq k}} \sum_{j \notin \mB} {\Pi_i P_{ij}}+ \Pi_k P_{k,k+1}  \geq \Pi_k P_{k,k+1} \nonumber \\
&=& \Pi_k \frac{\sqrt{\Pi_{k+1}}}{\mu \sqrt{\Pi_k}} \left[G(s)\right]_{k,k+1}= \frac{\sqrt{\Pi_k \Pi_{k+1}}}{1-\lambda} \left[G(s)\right]_{k,k+1} \nonumber \\
&\geq& \frac{\Pi_{k+1}}{1-\lambda} \left[G(s)\right]_{k,k+1}\ ,
\end{eqnarray}
where the last inequality follows from the monotonicity of $\vec{\Pi}$. Because $0 < 1 - \lambda \leq 1$, and $\left[ G(s) \right]_{k,k+1} = \frac{1}{2} s \geq 1/6$ for $s \geq 1/3$, it follows that:
\beq
F(\mB = \left\{1, \mathrm{others}, k \right\} ) \geq \frac{\Pi_{k+1}}{6}\ .
\eeq
Since by definition $\Pi(\mB) \leq 1/2$, then $\Pi(\bar{\mB}) \geq 1/2$ where $\bar{\mB}$ is the complement of ${\mB}$, but since the largest possible size of $\bar{\mB}$ is $L$ (recall that $1\in\mB$), it follows that $\Pi(\bar{\mB}) \leq L \Pi_{k+1}$, so that $\Pi_{k+1} \geq 1/(2L)$, and hence:
\beq \label{eqt:ratioFPi}
\frac{F(\mB = \left\{1, \mathrm{others}, k \right\} )}{\Pi(\mB)} \geq \frac{1}{6L}\ .
\eeq
Next consider the case where $1 \notin \mB$.  Now let $k$ be the smallest index such that $k \notin \mB$ but $k+1 \in \mB$. Then:
\bes
\begin{align}
F(\mB) & = \sum_{i \in \mB, i \neq k+1} \sum_{j \notin \mB} \Pi_i P_{ij} + \Pi_{k+1} P_{k+1,k} \\
& \geq \Pi_{k+1} P_{k+1,k} \geq \frac{\Pi_{k+1}}{6}\ .
\end{align}
\ees
In this case, since the maximum size of $\mB$ is $L$ but it excludes the index $1$, we have $\Pi(\mB) \leq L \Pi_{k+1}$, so that $F(\mB) \geq \Pi(\mB)/(6L)$.  Therefore, we again find the condition~\eqref{eqt:ratioFPi}.  Thus, by the conductance bound [Eq.~\eqref{eqt:conductancebound}]:
\beq
\Delta(P) = \frac{\Delta(H_{\mS_0})}{1- \lambda} \geq \frac{1}{2} \left( \frac{1}{6L} \right)^2\ .
\eeq
Now since $\lambda$ is the ground state of $H_{\mS_0}$, for any state in $\ket{v}\in\mS_0$, we must have $\bra{v} H_{\mS_0} \ket{v} \geq \lambda$.  In particular:
\beq
\bra{\gamma(0)} H_{\mS_0} \ket{\gamma(0)} = \frac{1}{2} s \geq \lambda \ ,
\eeq
i.e., $\lambda \leq 1/2$. This finally yields Eq.~\eqref{eq:gapHS0}.

\subsection{Gap bound in the entire Hilbert space}
Let us now go a step further and show how the global gap (i.e., not restricted to the $\mS_0$ subspace) scales with $L$.  Let $\mS$ denote the subspace spanned by all legal clock states.  The dimensions of this subspace will be $\dim(\mS) = (L+1) 2^n$, since we have $L+1$ legal clock states and $2^n$ computational states.  $H(s)$ acting on any state in $\mS$ does not generate any illegal clock states, so for any $\ket{v} \in \mS$ we have $H(s) \ket {v} \in \mS$.  Similarly, for any state in the orthogonal subspace $\mS^{\perp}$, i.e., the subspace of illegal clock states, for any state $\ket{v^{\perp}} \in \mS^{\perp}$, we have $H(s) \ket{v^{\perp}} \in \mS^{\perp}$.  Therefore, the eigenstates of $H(s)$ below either to $\mS$ or to $\mS^{\perp}$, and $H(s)$ is block diagonal with blocks $H_{\mS}(s)$ and $H_{\mS^{\perp}}(s)$ that can be diagonalized independently.

Let us first restrict to $H_{\mS^{\perp}}(s)$.  $H_c$ penalizes all illegal clock states by at least one unit of energy and acts as the identity on the computational qubits.  Therefore, it shifts the entire spectrum of $\mS^{\perp}$ by at least one unit of energy.  Since the remaining terms are positive semi-definite, they cannot lower the energy.  Therefore, regardless of the form of the ground state in the subspace, it has an energy of at least one unit.    

Let us now restrict to $H_{\mS}(s)$ and define:
\beq \label{eqt:gammaj}
\ket{\gamma_j(\ell)} = \ket{\alpha_j(\ell)} \otimes \ket{1^{\ell} 0^{L-\ell}}_\mathrm{c}\ ,
\eeq
where $\ket{\alpha_j(\ell)}$ is the state of the circuit at time $\ell$ had the input state been given by the binary representation of $j$ (e.g., if $j = 4$, the input configuration of the circuit would have been $\ket{0^{n-3}1_30_20_1}$).  Note that $\ket{\gamma_0(\ell)} = \ket{\gamma(\ell)}$.  Let $\mS_j$ denote the space spanned by $\left\{ \ket{\gamma_j(\ell)} \right\}_{\ell=0}^L $.  Since $H_\mS(s)$ cannot mix states with different $j$ subindices (it can only propagate forward or backward in $\ell$), $H_{\mS}(s)$ is block diagonal in the subspaces $\mS_j$.  Therefore, we only need to find the minimum ground state energy of $H_{\mS_{j>0}}(s)$ to determine the minimum gap from $H_{\mS_0}(s)$.  

In order to determine the ground state energy of $H_{\mS_j}$, we notice that we can write:
\beq
H_{\mS_j}(s) = H_{0}(s) + H_{\mS_j, \mathrm{input}} \ , \quad j > 0
\eeq
where $H_0(s)$ has exactly the same spectral properties as $H_{\mS_0}$ except in the $\mS_j$ subspace.  The reason for this decomposition is because $H_{\mathrm{input}}$ is zero in $\mS_0$ and hence is absent from $H_{\mS_0}(s)$.  Note that:
\beq
H_{\mS_j, \mathrm{input}} \ket{\gamma_j(\ell)} = \left\{ \begin{array}{lr}
k \ket{\gamma_j(0)} , & \ell = 0 \\
0 \ , & \ell > 0
\end{array} \right.
\eeq
(recall that  $H_{\mathrm{input}}$ projects onto the $0$ clock state, which is why for $\ell > 0$ we have zero.)  Therefore, in the basis $\left\{ \ket{\gamma_j(\ell)} \right\}_{\ell=0}^L$, we can write the matrix representation of $H_{\mS_j, \mathrm{input}}$ as:
\beq \label{eqt:HSj}
H_{\mS_j, \mathrm{input}} = \left( \begin{array}{rrrrr}
k & & & 0 \\
& 0 \\
& & \ddots \\
0&&& 0
\end{array} \right)\ ,
\eeq
where $k \geq 1$ denotes the number of 1's in the binary representation of $j > 0$.    In particular, note that it is diagonal in this basis.  We now use the Geometrical Lemma  
\cite{Kitaev:book,Aharonov:2002nr}:\\
\begin{lemma}[Geometrical Lemma] 
Let $H_1$ and $H_2$ be two Hamiltonians with ground state energies $g_1$ and $g_2$ respectively.  Both Hamiltonians have a ground state energy gap to the first excited state that is larger than $\Lambda$.  Let the angle between the two ground subspaces be $\theta$.\footnote{The angle $\theta$ is defined via $\cos \theta = \max_{v_1,v_2} |\braket{v_1}{v_2}|$, where $\ket{v_i}$ belongs to space $i$.}.  Then the ground state energy ($g_0$) of $H_0 = H_1 + H_2$ is at least $g_1 + g_2 +  \Lambda (1- \cos \theta)$.
\end{lemma}

Let $H_1 = H_{0}$ and $H_2 = H_{\mS_1, \mathrm{input}}$.  We saw that the ground state gap of $H_1$ is $\Omega(1/L^2)$ and that of $H_2$ is $1$, so we can take $\Lambda = \Omega(1/L^2)$.  The ground state energy of $H_2$ is $g_2 = 0$.  Therefore, using the Geometrical Lemma, we have $g_0 - g_1 \geq \Lambda( 1- \cos \theta)$. It remains to bound the angle between the two ground spaces.  From Eq.~\eqref{eqt:HSj}, it is clear that the (degenerate) ground state of $H_2$ can be written as a linear combination of $\left\{ \ket{\gamma_j(\ell)} \right\}_{\ell=1}^L$, whereas the (unique) ground state of $H_1$ can be written as a monotone vector in $\left\{ \ket{\gamma_j(\ell)} \right\}_{\ell=0}^L$.  Therefore:
\begin{align}
\cos \theta  &= \max_{\left\{ c_{\ell'} \right\}} \left| \sum_{\ell = 0}^L \alpha_\ell \bra{\gamma_j (\ell)} \sum_{\ell'=1}^L c_{\ell'} \ket{\gamma_j(\ell')} \right| \notag \\
& =  \max_{\left\{ c_{\ell'} \right\}} \left| \sum_{\ell = 1}^L \alpha_\ell c_\ell \braket{\gamma_j (\ell)} {\gamma_j(\ell)} \right|  \\
&\leq \max_{\left\{ c_{\ell'} \right\}}  \sum_{\ell = 1}^L \alpha_\ell \left| c_\ell \right|  \leq \sqrt{\frac{L}{L+1}}  \leq 1 - \frac{1}{2 L}  \ , \notag
\end{align}
where we have used that $\vec{\alpha}$ is monotone so 
that $c_\ell = \alpha_\ell \sqrt{\frac{L+1}{L}}$ maximizes the sum.  Therefore, the global gap can be bounded from below by $\Omega(1/L^3)$, which is Eq.~\eqref{eq:gapHL3}.
\section{Proof of the Amplification Lemma (Claim~\ref{claim-ampl})}
\label{app:amplification-lemma}

To prove Claim~\ref{claim-ampl}, define a new verifier $V^{\ast}(\eta,X)$ which amounts to repeating $V(\eta,X)$ $K$ times, where $K=\text{poly}(|\eta|)$ to keep the verifier efficient, and taking a majority vote on the output, i.e., $\Pr(V^{\ast}(\eta,X) = 1) = \Pr \left(\sum_{i=1}^K V_i > K/2 \right)$, where $V_i\in\{0,1\}$ is the random number  associated with the $i$-th run of $V(\eta,X)$.

Now recall the multiplicative Chernoff bound:
\begin{align}
\Pr \left( \sum_{i=1}^K Y_i \leq \left( 1- \beta \right) K p \right) & \leq e^{ - \beta^2 K p /2 } \ ,\,\,\, 0 < \beta < 1 \notag \\
\Pr  \left( \sum_{i=1}^K Y_i \geq \left( 1+ \beta \right) K p \right) & \leq e^{  - \beta^2 K p /(2+\beta)  }\ , \,\,\, 0 < \beta \ ,
\end{align}
for $p = \mathbb{E}(Y)$ where $Y\in\{0,1\}$ is a random variable.
Consider first take the case where $Q(\eta) = 1$.  In that case, $p \geq 2/3$.  If we now pick $\beta = 1- 1/(2p)$ (i.e., $1/4\leq \beta \leq 1/2$) in the Chernoff bound, then:
\begin{align}
\Pr \left(\sum_{i=1}^K V_i > \frac{K}{2} \right) &= 1 -  \Pr \left(\sum_{i=1}^K V_i \leq \frac{K}{2} \right)  \\
 &\geq 1 - e^{  - \frac{ (p-1/2)^2 K}{2 p} } \geq  1 - e^{  - \frac{ (2/3-1/2)^2 K}{4/3} } \ .\notag
\end{align}
For the case where $Q(\eta) = 0$, $p \leq 1/3$, take $\beta = 1/(2p) -1 > 0$, so  that:
\begin{align}
 \Pr \left(\sum_{i=1}^K V_i > \frac{K}{2} \right)  & \leq \Pr \left(\sum_{i=1}^K V_i   \geq \frac{K}{2} \right) \\
& \leq e^{  -\frac{(p-1/2)^2 K }{p(p + 1/2)}}  \leq  e^{-\frac{(1/3-1/2)^2 K }{(1/3 + 1/2)/3}} \ . \notag
\end{align}
This shows that MA$(2/3,1/3) =$ MA($1-e^{-|\eta|^g},e^{-|\eta|^g})$, since $K=\text{poly}(|\eta|)$.

To show that MA($c, c-1/|\eta|^g$) $\subseteq$ MA(2/3,1/3), it is sufficient 
to show that when $Q(\eta) = 0$ it is exponentially unlikely that Merlin is able to fool Arthur that $Q(\eta) = 1$.  Therefore consider the probability of fooling Arthur, i.e., $\Pr \left( V^{\ast}(\eta, X) = 1 \right) > c$ when $Q(\eta)=0$.  Take $p = \Pr \left(V(\eta,X) = 1 \right) = c- 1/|\eta|^g$.  Then:
\begin{eqnarray}
\Pr \left( \text{Arthur fooled} \right) &=& \Pr \left( \sum_{i=1}^K V_i \geq K c \right)  \\
&=& \Pr \left( \frac{1}{K} \sum_{i=1}^K V_i \geq  p + \epsilon \right)\ ,\notag
\end{eqnarray}
where we take $\epsilon = 1 / |\eta|^g$.  Recall the additive Chernoff bound:
\beq
\Pr \left(\frac{1}{K} \sum_{i=1}^{K} Y_i \geq p + \epsilon \right)  \leq e^{-K D(p+\epsilon \Vert p )}\ ,
\eeq
where 
\beq
D(x \Vert y) = x \ln \frac{x}{y} + (1-x) \ln \frac{1-x}{1-y}
\eeq 
is the Kullback-Leibler divergence.  Expanding $D(p+\epsilon \Vert p ) = \frac{\epsilon^2}{2 p(1-p)} + O(\epsilon^3)$, we see that if $K = \epsilon^{-2-\veps}$, where $0< \veps \ll 1$, then we can exponentially suppress the probability that Arthur is fooled by Merlin while keeping $K=\text{poly}(|\eta|)$. 

\section{Perturbative Gadgets}

\label{sec:pert-gadgets}

In this section we review the subject of perturbative gadgets, which have played an important role in the reduction of the locality of interactions in the proofs of QMA completeness and the universality of AQC. These tools are generally useful. Our discussion is based primarily on ~\cite{Jordan:08} [see also \cite{Bravyi:2008fk}]. To set up the appropriate tools we first briefly review degenerate perturbation theory.

\subsection{Degenerate Perturbation Theory \`a la  \cite{Bloch1958329}} 
\label{sec:perturbationTheory}

Consider $H  = H_0 + \lambda V$ where $H_0$ has a $d$-dimensional degenerate ground subspace $\mE_0$ with energy 0.  Let $\ket{\psi_1}, \dots \ket{\psi_d}$ be the lowest $d$ energy eigenstates of $H$ with energies $E_1, \dots, E_d$, and let their span define the subspace $\mE$.  The goal is to define a perturbative expansion (in $\lambda$) for the effective Hamiltonian $H_{\mathrm{eff}}$ of $H$, defined as:
\beq
H_{\mathrm{eff}}(H,d) = \sum_{j=1}^d E_j \ket{\psi_j}\bra{\psi_j} \ .
\label{eq:Heff}
\eeq
We will show that this expansion converges provided $\lambda$ satisfies
\beq \label{eqt:perturbationCondition}
\Vert \lambda V \Vert < \gamma /4 \ ,
\eeq
where $\gamma$ is the gap to the first excited state of $H_0$.   We first show how to construct this effective Hamiltonian in terms of other, more convenient operators.

Let $P_0$ be the projection onto $\mE_0$, and define:
\beq
\ket{\alpha_j} = P_0 \ket{\psi_j} \ , \quad j = 1, \dots, d\ .
\eeq
For $\lambda$ sufficiently small [this will amount to satisfying Eq.~\eqref{eqt:perturbationCondition}], the states $\left\{ \ket{\alpha_j} \right\}_{j = 1}^d$ are linearly independent since the states $\left\{ \ket{\psi_j} \right\}_{j = 1}^d$ are only slightly perturbed from the eigenstates of $H_0$.  Note that the states $\left\{\ket{\alpha_j}\right\}_{j = 1}^d$ are not necessarily orthogonal or normalized.  There exists an operator $\mU$ 
such that:
\bes
\begin{align}
\mU \ket{\alpha_j} & = \ket{\psi_j} \ , \quad j = 1, \dots, d \\
\mU \ket{\phi} & = 0 \ , \quad \forall \ket{\phi} \in \mE_0^{\perp}\ .
\end{align}
\ees
This means that:
\begin{eqnarray} \label{eqt:P0U}
P_0 \mU \ket{\alpha_j} &=& P_0 \ket{\psi_j} = \ket{\alpha_j} \nonumber \\
& \Longrightarrow& P_0^2 \mU \ket{\alpha_j} = P_0 \mU \ket{\alpha_j} = P_0 \ket{\alpha_j} \nonumber \\
&\Longrightarrow& P_0 \mU = P_0 \ .
\end{eqnarray}
Let $\tilde{\mU}$ be the operator satisfying:
\bes
\begin{align}
\tilde{\mU} \ket{\psi_j} & = \ket{\alpha_j} \ , \quad j = 1, \dots, d \\
\tilde{\mU} \ket{\phi} & = 0 \ , \quad \forall \ket{\phi} \in \mE^{\perp} \ .\label{eqt:notP0}
\end{align}
\ees
Note that $\tilde{\mU}$ is not the inverse of $\mU$ because $\mU$ is not invertible on the entire Hilbert space.  Also, $\tilde{\mU}$ is not $P_0$ because of Eq.~\eqref{eqt:notP0} (it annihilates all states outside of $\mE$).  Note that:
\beq
\mU P_0 \tilde{\mU} \ket{\psi_j} = \ket{\psi_j} \ .
\label{eq:UP0Utilde}
\eeq

Now define:
\beq \label{eqt:perturbationA}
\mA = \lambda P_0 V \mU\ .
\eeq
Note that the states $\left\{ \ket{\alpha_i} \right\}_{i=1}^d$ are right eigenvectors of $\mA$ with eigenvalues $E_1, \dots, E_d$ respectively:
\begin{align}
\mA \ket{\alpha_j} &= \lambda P_0 V \ket{\psi_j} = P_0 \left( H_0 + \lambda V \right) \ket{\psi_j} = P_0 E_j \ket{\psi_j} \notag \\
&= E_j \ket{\alpha_j}\ ,
\end{align}
where we used that $P_0 H_0 = 0$ because the eigenvalue of the ground subspace of $H_0$ is zero.  The effective Hamiltonian associated with $H$ can now be constructed using $\mU, \tilde{\mU}, \mA$:
\beq \label{eqt:HeffA}
H_{\mathrm{eff}}(H, d) = \mU \mA \tilde{\mU}\ .
\eeq
To see this note that:
\bes
\begin{align}
\mU \mA \tilde{\mU} \ket{\phi} & = 0 \ , \quad \forall \ket{\phi} \in \mE^{\perp} \\
\mU \mA \tilde{\mU} \ket{\psi_j} & = \mU \mA  \ket{\alpha_j}  = E_j \mU \ket{\alpha_j} = E_j \ket{\psi_j} \ ,
\end{align}
\ees
which is identical to the action of $H_{\mathrm{eff}}$ on a complete set of vectors.  The strategy is now to find a perturbative expansion for $\mU$ (we will not need the explicit expansion of $\tilde{\mU}$, so we do not provide it here), construct $\mA$ using Eq.~\eqref{eqt:perturbationA}, find $\left\{\ket{\alpha_j}\right\}_{j=1}^d$ and $\left\{ E_j \right\}_{j=1}^d$ as, respectively, the right eigenvectors and eigenvalues of $\mA$, and apply $\mU$ to $\ket{\alpha_j}$ to get a perturbative expansion for $\ket{\psi_j}$.

It can be shown that the desired perturbative expansion of $\mU$ and $\mA$ is given by:
\bes  \label{eqt:SeriesU}
\begin{align} 
\mU & = P_0 + \sum_{m=1}^\infty \mU_m  \\
\mA & = P_0 V  \sum_{m=1}^{\infty} \mU_m = \sum_{m=1}^{\infty} \mA_m\  ,
 \label{eqt:seriesA}
\end{align}
\ees
where
\bes
\begin{align}
\mU_m &= \!\!\!\!\!\sum_{\substack{\ell_1 \geq 1, \ell_2 \geq 0, \dots, \ell_m \geq 0 \\ \ell_1 + \dots + \ell_m = m \\ \ell_1 + \dots + \ell_p \geq p, \,\,1 \leq p\leq m-1}} \!\!\!\!\!\left( S_{\ell_1} \lambda V \right) \left( S_{\ell_2} \lambda V \right) \dots ( S_{\ell_m} \lambda V ) P_0\ , \\
\label{eqt:Sell}
S_{\ell} &= \left\{ \begin{array}{rl}
 \frac{1}{\left(-H_0\right)^{\ell}}(\ident-P_0) \ , & \ell > 0 \\
- P_0 \ , & \ell = 0
\end{array} \right. \ .
\end{align}
\ees

The series in Eq.~\eqref{eqt:SeriesU} converges for $\Vert \lambda V \Vert < \gamma /4$.  To see this note that:
 \begin{align}
 \Vert \mU \Vert  & = \Vert \mU_0 + \sum_{m=1}^{\infty} \mU_m \Vert \leq \Vert \mU_0 \Vert + \sum_{m=1}^{\infty} \Vert \mU_m \Vert \notag \\
 & \leq 1 + \sum_{m=1}^{\infty} \lambda^m \sum\nolimits ' \Vert S_{\ell_1} V S_{\ell_2} \dots S_{\ell_m} V P_0 \Vert \\
 & \leq 1 + \sum_{m=1}^{\infty} \lambda^m \sum\nolimits'  \Vert S_{\ell_1} \Vert   \dots \Vert S_{\ell_m}\Vert \Vert V \Vert \ , \notag
\end{align}
where the sum $\sum\nolimits'$  involves summing all the different ways to add up to $m$ while satisfying convexity, i.e., $\ell_1 + \ell_2 + \dots \ell_p \geq p$.  
Because of the form of $S_\ell$ [Eq.~\eqref{eqt:Sell}], we have:
\beq
\Vert S_{\ell} \Vert = \left(\frac{1}{E_1^{(0)} } \right)^{\ell}= \frac{1}{\gamma^{\ell}}\ ,
\eeq
where $E_1^{(0)}$ is the energy of the first excited state of $H_0$ (corresponds to the state that minimizes $H_0 Q_0$ to calculate the operator norm).  Therefore, we have:
\beq
\Vert \mU \Vert \leq 1 + \sum_{m=1}^{\infty} \lambda^m \sum\nolimits' \frac{ \Vert V \Vert^m}{\gamma^m} \ .
\eeq
The sum $\sum\nolimits'$ is less than the number of ways to add up to $m$ using $m$ non-negative integers, which is given by $2m-1 \choose m$.  However, since $\sum_{j=0}^{2m-1} {2m-1 \choose j }= 2^{2m-1}$, it is clear that ${2m-1 \choose m} \leq 2^{2m-1}$.  Therefore, we can upper bound the sum with this value:
\beq
\Vert \mU \Vert \leq 1 + \sum_{m=1}^{\infty} 2^{2m-1} \frac{ \Vert \lambda V \Vert^m}{\gamma^m} \ .
\eeq
This series converges if the condition for $\lambda$ in Eq.~\eqref{eqt:perturbationCondition} is satisfied.
\subsection{Perturbative Gadgets}
\label{sec:gadgets}

For a $k$-local target Hamiltonian $H^{\mathrm{T}}$, the goal is to construct a $2$-local ``gadget" Hamiltonian $H^{\mathrm{G}}$, whose low energy spectrum (captured by an effective Hamiltonian $H_{\mathrm{eff}}$) approximates the spectrum of $H^{\mathrm{T}}$.  In order to do so, we shall use the expression in Eq.~\eqref{eqt:HeffA} for the effective Hamiltonian in terms of the operators $\mU$ and $\mA$ and use their perturbative expansion from the previous subsection.  We shall show that for our gadget Hamiltonian, the perturbative expansion of the effective Hamiltonian matches that of the target Hamiltonian.

The perturbative gadget we review here uses a strongly bound set of ancillas, coupled to the target qubits via
weaker interactions, where the latter are treated as a perturbation. $H^{\mathrm{T}}$ is then generated in
 low order perturbation theory of the combined system consisting of both ancilla and target qubits. Such
gadgets first appeared in the proof of QMA-completeness of the $2$-local Hamiltonian problem via a reduction from
$3$-local Hamiltonian, where they were used to construct effective $3$-body interactions from $2$-body ones \cite{KempeGadget}.

Let $H_s$ denote a $k$-local term. For the $i$-th qubit in $H_s$, we associate an arbitrary direction in $\mathbb{R}^3$ denoted by $\hat{n}_{s,i}$.  
A general $k$-local target Hamiltonian acting on $n$ qubits can then be expressed as:
\beq \label{eqt:Hcomp}
H^{\mathrm{T}} = \sum_{s=1}^r c_s H_s\ ,
\eeq
with $H_s = \sigma_{s,1} \sigma_{s,2} \dots \sigma_{s,k}$ where $\sigma_{s,j} = \hat{n}_{s,j} \cdot \vec{\sigma}_{s,j}$.  The goal is to simulate $H^{\mathrm{T}}$ using only 2-local interactions.  Toward this end,  introduce $k$ ancilla qubits for each $H_s$, for a total of $r k$ ancilla qubits.  Define
\bes 
\label{eqt:gadget}
\begin{align} 
H^{\mathrm{G}} & = H^{\mathrm{A}} + \lambda V = \sum_{s=1}^r H^{\mathrm{A}}_s + \lambda \sum_{s=1}^r V_s\\
H^{\mathrm{A}}_s  &  = \sum_{i<j}^k  \frac{1}{2} \left( \ident - Z_{s,i} Z_{s,j} \right) \\
V_s & = \sum_{j=1}^k c_{s,j} \sigma_{s,j} \otimes X_{s,j} \\
c_{s,j} & = \left\{ \begin{array}{lr}
c_s \ , & j = 1 \\
1 \ , & j \neq 1
\end{array} \right.\ ,
\end{align}
\ees
where $X_{s,j}, Z_{s,j}$ are the Pauli-($x,z$) operators on the $j$-th ancilla qubit of $H_s$.  Note that the ground state of $H_s^{\mathrm{A}}$ is given by the span of $\left\{ \ket{0_1 \dots 0_k}_s^{\mathrm{A}}, \ket{1_1 \dots 1_k}_s^{\mathrm{A}} \right\}$.

Consider the $k$-local ancilla operator $X_s \equiv X_{s,1} \otimes X_{s,2} \otimes \dots \otimes X_{s,k}$.  This operator clearly commutes with $H_{\mathrm{G}}$.
Therefore $H_{\mathrm{G}}$ and the set of operators $\left\{ X_{s} \right\}_{s=1}^r$ share a set of eigenstates.  The operator $X_s$ has eigenvalues $\pm 1$, each with degeneracy $2^{k-1}$ (to see this simply write $X_s$ in the basis $\left\{ \ket{\pm}_{s,1} \otimes \ket{\pm}_{s,2} \otimes \dots \otimes \ket{\pm}_{s,k}\right\}$).  Therefore, $H^{\mathrm{G}}$ can be block diagonalized into $2^r$ blocks, where each block corresponds to a fixed $X_{s} = \pm 1$ for $s = 1, \dots, r$ with dimension $2^n 2^{r(k-1)}$.  Let $H^{\mathrm{G}}_{+}$ denote the block with $X_s = 1 \ , \forall s$.  

Note that since $H^{\mathrm{G}}_+$ will be used to approximate $H^{\mathrm{G}}$, the system will need to be initialized to have $X_s = 1 \ , \forall s$.  The eigenstate of $X_s$ with eigenvalue $1$ is given by
\beq \label{eqt:plusState}
\ket{+}_s = \frac{1}{\sqrt{2}} \left( \ket{0_1 \dots 0_k}_s + \ket{1_1 \dots 1_k}_s \right)\ ,
\eeq
so the ancilla qubits must be initialized to be in the state $\bigotimes_{r=1}^s \ket{+}_s$.

We wish to show that the low energy spectrum of $H^{\mathrm{G}}_+$ approximates the spectrum of $H^{\mathrm{T}}$.  Our task is to calculate $H_{\mathrm{eff}}(H^{\mathrm{G}}_+,2^n)$ [in the notation of Eq.~\eqref{eq:Heff}] perturbatively to $k$-th order in $\lambda$.  $\lambda V$ will perturb the ground subspace of $H^{\mathrm{A}}$ in two ways:
\begin{enumerate}
\item It shifts the energy of the entire subspace;
\item It splits the degeneracy of the ground subspace beginning at $k$-th order in perturbation theory.  It is this splitting that will allow us to mimic the spectrum of $H_{\mathrm{T}}$.
\end{enumerate}
We analyze the shift and splitting separately.  To do this, define:
\beq
\tilde{H}_{\mathrm{eff}}(H,d,\Delta) \equiv H_{\mathrm{eff}}(H,d) - \Delta \Pi\ ,
\eeq
where $\Pi$ is the projection onto the space spanned by $\left\{ \ket{E_j} \right\}_{j=1}^d$.  Note that the eigenstates of $\tilde{H}_{\mathrm{eff}}$ and $H_{\mathrm{eff}}$ are identical, and the energy gaps between energy levels are identical too.

Let us start with the case where $r=1$, i.e., $H^{\mathrm{T}} = \sigma_1 \sigma_2 \dots \sigma_k$, so that 
$H^{\mathrm{A}} = \sum_{i=1}^k \sum_{j=i+1}^k \frac{1}{2} \left( 1- Z_i Z_j \right)$,  and $V = \sum_{j=1}^k \sigma_j \otimes X_j$.
We first wish to construct $\mA$ [Eq.~\eqref{eqt:perturbationA}] for $H^{\mathrm{G}}_{+}$.  Note that $H^{\mathrm{A}}$ has a ground state of zero energy (corresponds to all qubits with $Z_i = 1$ or all qubits with $Z_i = -1$), and the first excited state has energy $\gamma = k-1$ (let $Z_1 = -1$, all the rest are $+1$).  Furthermore:
\beq
\Vert V \Vert = \Vert \sum_{j=1}^{k} \sigma_j \otimes X_j \Vert 
\leq \sum_{j=1}^{k} \Vert \sigma_j \otimes X_j \Vert = k \ .
\label{eq:normV}
\eeq
Therefore, by Eq.~\eqref{eqt:perturbationCondition}, the perturbative expansion will converge if $\lambda < \frac{k-1}{4k}$.  Because of the form of $\mA$ [Eq.~\eqref{eqt:seriesA}], all $\mA_m$ terms are sandwiched between $P_0$ operators.  Thus, all non-zero terms in $\mA$ must take states in $\mE_0$ and return them to states in $\mE_0$.  Since we have restricted  to the $X = \otimes_{i=1}^k X_i = + 1$ sector, $\mE_0$ is restricted to have the ancilla qubits in the $\ket{+}$ state [Eq.~\eqref{eqt:plusState}].  Therefore, we can write:
\beq
P_0 = \ident \otimes P_+\ ,
\eeq
where $P_+$ is the projection onto the $\ket{+}$ ancilla state.

Each term in $V = \sum_{j=1}^k \sigma_j \otimes X_j$ only flips a single ancilla qubit.  Therefore, in order for $\mA$ take a state out of $\mE_0$ and return it, the power of $V$ must either flip all ancilla qubits or flip some and return them back to their original value.  The former process (flipping all qubits) first happens at $k$-th order in perturbation theory.  The latter process (flipping and returning) can happen at lower orders than $k$, but $\mA$ is then  proportional to $P_0$ since the product of $V$'s effectively cancel.  To see how this works, consider $\mA$ up to second order for $k>2$.  From the perturbation expansion [Eq.~\eqref{eqt:seriesA}] we have:
\beq
\mA_{\leq 2} = \lambda P_0 V P_0 + \lambda^2 P_0 V S_1 V P_0\ ,
\eeq
but $P_0 V P_0 = 0$ since $V\ket{+}$ is orthogonal to $\ket{+}$.  On the other hand, $V P_0$ takes the system to a state with energy $k-1$ for $H^{\mathrm{A}}$, so $S_1 V P_0 = - VP_0/(k-1)$.  Therefore: 
\beq
\mA_{\leq 2} = - \frac{\lambda^2}{k-1} P_0 V^2 P_0\ .
\eeq 
Now note that:
\beq
V^2 = \sum_i \left( \sigma_i \otimes X_i \right)^2 + \sum_{i\neq j} \left( \sigma_i \otimes X_i \right) \left( \sigma_j \otimes X_j \right)\ .
\eeq
The cross-terms are annihilated by $P_0 \cdot P_0$ since they take the state out of $\mE_0$.  The diagonal term is proportional to the identity on the ancilla qubits, so we have:
\beq
\mA_{\leq 2} = - \frac{\lambda^2}{k-1} \Omega P_0\ ,
\eeq
where $\Omega$ is an operator that depends on the particular orientation of the $\sigma_j$'s.  This argument extends to order $k-1$ so that:
\beq
\mA_{\leq k-1} = \sum_{m \ \mathrm{even}} \lambda^m \Omega_m P_0 \ .
\eeq
At order $k$, something new happens.  There are now cross-term that involve all $X_i$'s once, i.e., 
\[
\lambda^k P_0 \left( \sigma_1 \otimes X_1 \right) S_1 \left( \sigma_2 \otimes X_2 \right) S_1 \dots S_1 \left( \sigma_k \otimes X_k \right) P_0\ .
\]
The $S_1$ operator measures the successive change in energy of the system to give an overall constant of:
\begin{eqnarray}
\left( - \frac{1}{k-1} \right) \left( - \frac{1}{2 (k-2)} \right) \dots ( - \frac{ 1}{(k-1)1} ) \nonumber \\
= \prod_{j=1}^{k-1} \left( -\frac{1}{j(k-j)} \right) = \frac{(-1)^{k-1}}{(k-1)!^2}\ .
\end{eqnarray}
Therefore, the cross-terms (of which there are $k!$, since the operators can be multiplied in any order) then take the form:
\beq
- \frac{(-\lambda)^k k!}{(k-1)!^2}P_0 \left( \sigma_1 \dots \sigma_k \otimes X \right) P_0\ .
\eeq
Thus:
\beq
\mA_{\leq k} = f(\lambda) P_0 - \frac{k (-\lambda)^k}{(k-1)!} P_0 \left( H^\mathrm{T} \otimes X \right) P_0\ ,
\eeq
where $f(\lambda)$ is some $k$-th order polynomial in $\lambda$, with coefficients that depend on $H^\mathrm{T}$.  Using the form of $H_{\mathrm{eff}}$ in Eq.~\eqref{eqt:HeffA}, we have:
\begin{eqnarray}
\label{eq:212}
H_{\mathrm{eff}} (H^\mathrm{G}_-, 2^n )  &=& f(\lambda) \mU P_0 \tilde{\mU}   \\
&& \hspace{-2cm} - \mU \left(  \frac{k (-\lambda)^k}{(k-1)!} P_0 \left( H^\mathrm{T} \otimes X \right) P_0 + O(\lambda^{k+1}) \right) \tilde{\mU} \ .\notag
\end{eqnarray}
Recall that $\mU P_0 \tilde{\mU} \ket{\psi_j} = \ket{\psi_j}$ [Eq.~\ref{eq:UP0Utilde}]
so $\mU P_0 \tilde{\mU}$ acts as the identity in $\mE$ so that the first term in Eq.~\eqref{eq:212} can be dropped.  Furthermore, we can replace $\mU$ and $\tilde{\mU}$ in the second term by their $\lambda^0$ counterpart since we are only keeping terms to order $k$ and the term in the parenthesis is already of order $k$.  Therefore:
\begin{align}
\tilde{H}_{\mathrm{eff}}  (H^\mathrm{G}_+, 2^n , f(\lambda) ) & = - \frac{k (-\lambda)^k}{(k-1)!} P_0 \left( H^\mathrm{T} \otimes X \right) P_0 \notag  \\
&\quad + O(\lambda^{k+1})
\label{eq:Htilde-eff} \\
& = - \frac{k (-\lambda)^k}{(k-1)!}  \left( H^\mathrm{T} \otimes P_+ \right) + O(\lambda^{k+1})\ . \notag
\end{align}
This shows that the target Hamiltonian $H^\mathrm{T}$ appears as the leading order term in the effective Hamiltonian that describes the $2^n$-Hilbert space of the $n$ target qubits, albeit with a diminished magnitude of order $\lambda^k/(k-1)!$.

Let us now consider the general $r$ case, i.e., the Hamiltonian in Eq.~\eqref{eqt:Hcomp}.  We note that just as in the $r=1$ case, $H^{\mathrm{A}}$ again has an energy gap of $k-1$. Generalizing from Eq.~\eqref{eq:normV}, the perturbative expansion then converges for:
\beq
\lambda < \frac{k-1}{4 \Vert V \Vert}\ .
\label{eq:lambda-cond}
\eeq
In the sector where $X_s = +1$, $H^{\mathrm{A}}$ has the state $\otimes_{s=1}^r \ket{+}_s$ as a ground state.  Since $H^{\mathrm{A}}$ acts as the identity on the computational qubits, the ground state is $2^n$-fold degenerate.

In the perturbation expansion for $\mA$, products of $V$ again appear.  Each $V_s$ acts on a different ancilla register.  Therefore, at order $k$, cross-terms of different $V_s$'s cannot flip all $k$ ancilla qubits in a register, so they are annihilated by $P_0 \cdot P_0$.  The only cross-terms that contribute are $k$ products of a given $s$ where each ancilla qubit appears once.  Therefore, the natural generalization of the previous result is recovered, namely, Eq.~\eqref{eq:Htilde-eff} continues to hold with $H^\mathrm{T}$ replaced by the sum over $r$ terms as in Eq.~\eqref{eqt:Hcomp}, 
where again $f(\lambda)$ is some polynomial in $\lambda$ of order $k$ with coefficients that depend on $c_s H_s$, and where $P_+$ is the projector onto $\otimes_s \ket{+}_s$.

Note that the convergence condition~\eqref{eq:lambda-cond} 
requires the interaction term $V$ to be stronger than the effective interaction it generates, which scales as $\lambda^k$ [as can be seen from Eq.~\eqref{eq:Htilde-eff}]. This may pose implementation difficulties, since a practical device is likely to have only a limited range of interaction strengths.  Weaker gadgets can be implemented that circumvent this problem, albeit at the cost of a larger overhead of ancillary qubits \cite{Cao:2014}. The idea is to replace strong interactions by repetition of interactions with ``classical" ancillas.
Additional gadgets simplifications and resource reductions were proposed in \cite{PhysRevA.91.012315}.

%

\end{document}